\DeclareMathOperator*{\bigtimes}{\vartimes}
\renewcommand{\epsilon}{\varepsilon}
\def\evidence{\noindent{\bf Evidence:}\ }
\newtheorem{thm}{Theorem}[section]
\newtheorem*{thm*}{Theorem}
\newtheorem{lem}[thm]{Lemma}
\newtheorem{cor}[thm]{Corollary}
\newtheorem{ex}[thm]{Example}
\newtheorem{defi}[thm]{Definition}
\newtheorem{prop}[thm]{Proposition}
\newtheorem{conj}[thm]{Conjecture}
\newtheorem*{conj*}{Conjecture}
\newtheorem{remark}[thm]{Remark}
\newtheorem{question}[thm]{Question}
\title[Inclusion constants for free spectrahedra and applications]{Inclusion constants for free spectrahedra with applications to quantum incompatibility}
\date{\today}
\author{Andreas Bluhm}
\address{Univ.\ Grenoble Alpes, CNRS, Grenoble INP, LIG, 38000 Grenoble, France}
\email{andreas.bluhm@univ-grenoble-alpes.fr}
\author{Eric Evert}
\address{Department of Mathematics, University of Florida, Gainesville, FL, United States}
\email{ericevert@ufl.edu}
\author{Igor Klep}
\address{University of Ljubljana, Faculty of Mathematics and Physics \& University of Primorska, Famnit, Koper
\& Institute of Mathematics, Physics and Mechanics, Ljubljana, Slovenia}
\email{igor.klep@fmf.uni-lj.si}
\author{victor Magron}
\address{LAAS CNRS \& Institute of Mathematics, Toulouse, France}
\email{vmagron@laas.fr}
\author{Ion Nechita}
\address{Univ.~Toulouse, CNRS, Laboratoire de Physique Th\'eorique, France}
\email{nechita@irsamc.ups-tlse.fr}
\newcommand{\jewel}[0]{\text{\faDiamond}}
\newcommand{\cD}{ {\mathcal D} }
\newcommand{\cH}{ {\mathcal H} }
\newcommand{\cW}{ {\mathcal W} }
\newcommand{\cWmin}{\mathcal{W}^{\operatorname{min}}}
\newcommand{\cWmax}{\mathcal{W}^{\operatorname{max}}}
\newcommand{\R}{ {\mathbb R} }
\newcommand{\C}{ {\mathbb C} }
\newcommand{\deleuc}{ {\partial^{\operatorname{Euc}}} } 
\newcommand{\delfree}{ {\partial^{\operatorname{free}}} } 
\newcommand{\delmat}{ {\partial^{\operatorname{mat}}} }
\newcommand{\matco}{\operatorname{mat}^{\operatorname{co}}}
\newcommand{\MinBallSDP}[2]{ {\mathfrak{D}_{#1,#2}} }
\newcommand{\gamk}[1]{\frac{2 #1}{#1-1+\sqrt{1+#1}}}
\begin{document}
\begin{abstract}
Building on the matrix cube problem, inclusions of free spectrahedra have been used successfully to obtain relaxations of hard spectrahedral inclusion problems. The quality of such a relaxation is quantified by the inclusion constant associated with each free spectrahedron. While optimal values of inclusion constants were known in certain highly symmetric cases, no general method for computing them was available. In this work, we show that inclusion constants for Cartesian products of free simplices can be computed using methods from non-commutative polynomial optimization, together with a detailed analysis of the extreme points of the associated free spectrahedra. This analysis also yields new closed-form analytic expressions for these constants. As an application to quantum information theory, we prove new bounds on the amount of white noise that incompatible measurements can tolerate before they become compatible. In particular, we study the case of one dichotomic and one $k$-outcome measurement, as well as the case of four dichotomic qubit measurements.    
\end{abstract}

\maketitle

\tableofcontents

\section{Introduction}

Spectrahedra are the solution sets of linear matrix inequalities (LMIs). Under mild conditions,  a spectrahedron can be written as
\begin{equation} \label{eq:intro-spectrahedron}
    \mathcal D_A(1) :=\left\{x \in \mathbb R^g: I_d - \sum_{i=1}^g x_i A_i \succeq 0 \right\} \,,
\end{equation}
where $(A_1, \ldots, A_g)$ is a $g$-tuple of Hermitian matrices of size $d \in \mathbb{N}$ and $I_d$ is the identity matrix. A particular spectrahedron is the cube $[-1, 1]^g$. The matrix cube problem of Ben-Tal and Nemirovski \cite{Nemirovski2007} is to determine, given a spectrahedron $\mathcal D_A(1)$, whether 
\begin{equation*}
    [-1, 1]^g \subseteq \mathcal D_A(1) \, .
\end{equation*}
The matrix cube problem arises in robust control when checking system stability under uncertainty. Moreover, it includes maximizing a positive definite quadratic form over a unit cube which is a central problem in combinatorial optimization. Unfortunately, this problem is NP-hard in general \cite{ben-tal2002tractable, kellner2013containment}. In \cite{ben-tal2002tractable}, the authors gave a computable relaxation of the original problem, including a quantification of the error made in this relaxation. 
In \cite{helton_matricial_2013}, it was proposed that, to verify whether the spectrahedron $\mathcal D_A(1)$ is contained in the spectrahedron $\mathcal D_B(1)$, one should instead consider the inclusion of the corresponding free spectrahedra.
Given a $g$-tuple of Hermitian matrices $(A_1, \ldots, A_g)$ of size $d$, the corresponding free spectrahedron is defined as $\mathcal D_A := \bigsqcup_{n \in \mathbb N} \mathcal D_A(n)$, where
\begin{equation} \label{eq:intro-free-spectrahedron}
    \mathcal D_A(n) :=\left\{X \in SM_n(\mathbb C)^g: I_{dn} - \sum_{i=1}^g A_i \otimes X_i \succeq 0 \right\} \,.
\end{equation}
Here, by $SM_n(\mathbb C)$ we denote the Hermitian matrices of size $n$ and the tensor product is the usual Kronecker tensor product of matrices. Thus, where the spectrahedron contains all vectors $x$ that fulfill the LMI defined by $A$, the free spectrahedron contains all $g$-tuples of Hermitian matrices of arbitrary size that fulfill the same LMI. Thus, for $n=1$, equation \eqref{eq:intro-free-spectrahedron} coincides with equation \eqref{eq:intro-spectrahedron}, which explains our notation. 
Free spectrahedra are important examples within the more general class of matrix convex sets \cite{Effros1997, kriel2019intro}.

We say that $\mathcal D_A \subseteq \mathcal D_B$ if the inclusion holds on all levels, i.e., if $\mathcal D_A(n) \subseteq \mathcal D_B(n)$ for all $n \in \mathbb N$. Thus, clearly the inclusion of free spectrahedra $\mathcal D_A \subseteq \mathcal D_B$ implies the inclusion of the corresponding spectrahedra $\mathcal D_A(1) \subseteq \mathcal D_B(1)$. It was shown in \cite{helton_matricial_2013} that checking the inclusion of free spectrahedra is equivalent to a feasibility semidefinite program (SDP), which is efficiently solvable under mild assumptions. 
SDPs are a central class of convex optimization problems that can model a wide range of tasks in control theory, combinatorial optimization, and beyond (see, e.g., \cite{wolkowicz2012handbook}).

If $\mathcal D_A$ is not included in $\mathcal D_B$, we cannot conclude in general that $\mathcal D_A(1)$ is not contained in $\mathcal D_B(1)$. However, there exist so-called \emph{inclusion constants} $s$ for which $s\cdot \mathcal D_A \not \subseteq \mathcal D_B \implies \mathcal D_A(1) \not \subseteq \mathcal D_B(1)$. More precisely, $s\in[0,1]$ is an \emph{inclusion constant at level $m$} if the implication $\mathcal D_A(1) \not \subseteq \mathcal D_B(1) \implies s\cdot \mathcal D_A \subseteq \mathcal D_B$ holds for all $g$-tuples $B$ of Hermitian matrices of dimension $m$. In some cases, bounds on the inclusion constants are known \cite{helton2019dilations, davidson2016dilations}, in particular for free spectrahedra coming from unit balls of norms  
\cite{passer2018minimal}. 
However, a general method for computing these inclusion constants has so far been lacking. 
Our article takes a step toward closing this gap by showing that methods from non-commutative polynomial optimization 
\cite{burgdorf2016optimization,wang2021exploiting}
can be used to compute bounds on inclusion constants for free spectrahedra, such as those arising in quantum information theory.

As an application of our theory, we consider the incompatibility of quantum measurements, a basic notion in quantum information theory. 
A set of measurements is compatible if they can all be replaced by a single measurement. It has been known since the early days of quantum mechanics that there exist incompatible measurements, with position and momentum of a particle arguably the best known example \cite{Heisenberg1927, Bohr1928}. We refer the reader to \cite{Heinosaari2016} for an introduction to measurement incompatibility. In this work, we model measurements in quantum mechanics by positive operator-valued measures (POVMs). Incompatibility is of relevance for the development of quantum technologies, since compatible measurements can never violate Bell inequalities \cite{Fine1982}, which is needed for many quantum technologies such as device-independent quantum key distribution \cite{pironio2009device}. See \cite{Brunner2014} for more applications of Bell inequality violations. It has been shown in \cite{bluhm2018joint, bluhm2020compatibility} that the compatibility of measurements can be formulated as an inclusion problem of free spectrahedra: for dichotomic measurements, i.e., measurements with two outcomes, compatibility can be formulated as the inclusion problem of a universal free spectrahedron, called the matrix diamond (see \cite{davidson2016dilations} and Example \ref{ex:MatrixDiamond}), inside a free spectrahedron defined by the observables. The inclusion constants of the matrix diamond quantify how much incompatibility there is maximally in a setup with fixed dimension of the quantum system, a fixed number of measurements, and a fixed number of outcomes. The matrix diamond is dual to the matrix cube in the sense that while the latter is a matricial relaxation of the $\ell_\infty$ unit ball, the former is a matricial relaxation of the $\ell_1$ unit ball. Thus, knowing the inclusion constants of the matrix diamond and its generalization, the matrix jewel (see \cite{bluhm2020compatibility} and Example \ref{ex:MatrixJewel})
 helps us understand how noisy quantum measurement devices can be before any measurement incompatibility vanishes. 

To solve the convex optimization problems we encounter, it is important to know the extreme points of the convex sets we are optimizing over, as for example the set of optimizers of a linear functional over a convex set contains at least one of its extreme points. For a free spectrahedron $\mathcal D_A$, each level $\mathcal D_A(n)$ is a convex set, so it can be described by its extreme points, which are called \emph{Euclidean extreme points} to distinguish them from other types of extreme points we shall discuss next.  However, the various levels of the free spectrahedron are closely related and we can generalize the notion of a convex combination to allow for matrix tuples of different sizes. This leads to two further notions of extreme points which are called \emph{matrix extreme} and \emph{free extreme points}, respectively, see, e.g.,  \cite{evert2018extreme} and Section \ref{sec:extreme-points} of the present paper. Intuitively, a tuple $X$ is a matrix extreme point of a matrix convex set $K$ if it cannot be written as a nontrivial matrix convex combination of elements of $K$ \emph{with size less than or equal} to that of $X$. On the other hand,  $X$ is free extreme if it cannot be written as a nontrivial matrix convex combination of \emph{any} other elements of $K$. Thus, these different sets of extreme points are related as follows:
\begin{equation*}
    \mathrm{free~extreme}\; \subseteq\; \mathrm{matrix~extreme} \;\subseteq\; \mathrm{Euclidean~extreme}\,.
\end{equation*}
Extreme points have been studied extensively in the literature of matrix convex sets, e.g., see \cite{farenick2000extremal, kleski2014boundary,Davidson2019Noncommutative,klep2022facial}. We refer the reader to the recent survey of matrix convex sets and their extreme points \cite{evert2024extreme}.

\subsection{Guide to the paper}
The article is organized as follows: In Section \ref{sec:main-results}, we discuss the main results of this article, before reviewing in Section \ref{sec:preliminaries} relevant results from matrix convex sets, measurement incompatibility, and polynomial optimization. In Section \ref{sec:extreme-points}, we focus on extreme points of matrix convex sets. In Section \ref{sec:CartesianExtreme}, we investigate the relation between extreme points of two free spectrahedra and the extreme points of their Cartesian product. In Section \ref{sec:optimizers-direct-sums}, we use the results of the previous section in order to characterize the optimizers for some optimization problems inspired by measurement incompatibility which involve Cartesian products of free simplices. In Section \ref{sec:opti-problems-from-qit}, we formulate  the task of computing the maximal noise robustness of incompatible measurements as a polynomial optimization problem, which we can simplify using the characterization of optimizers derived in the previous section. Finally, in Section  \ref{sec:conclusion}, we provide some numerics and discuss the results of this article, concluding with some open questions.

\subsection{Main results} \label{sec:main-results}

The main contribution of this article is the computation of optimal inclusion constants of free spectrahedra that arise as direct sums of free simplices. For free spectrahedra $\cD_A$ and $\cD_B$, their direct sum is defined to be the free spectrahedron $\cD_{(A \otimes I, I \otimes B)}$. A \emph{free polytope} is a free spectrahedron $\cD_A$ as in equation \eqref{eq:intro-free-spectrahedron} where the matrices $A_i$ defining $\cD_A$ are diagonal in the same basis. A \emph{free simplex} is a bounded free polytope where the $A_i$ have dimension $g+1$. The \emph{matrix cube} in $g$-variables is the free polytope obtained by taking the union over $n$ of all $g$-tuples $(X_1,\dots,X_g)$ of $n \times n$ self-adjoint matrices that satisfy $X_i^2 \preceq I_n$ for all $i=1,\dots,g$. In the case $g=2$, we commonly call the matrix cube the matrix square. If $g=1$, then we call the matrix cube the matrix interval. Our results follow from our study of the different kinds of extreme points of such free spectrahedra and from using tools from non-commutative polynomial optimization. The new optimal inclusion constants shed light on the noise robustness of measurement incompatibility in quantum information theory.

Our main contribution is the optimal inclusion constant
at level two of the direct sum of two real free simplices, namely the free line (the free simplex in $1$ variable) and a particular free simplex in $k$ variables which is of interest in quantum information. The following is a restatement of Theorem \ref{thm:kSimplexPlusLineOptimum}:
\begin{thm*}
Fix an integer $k \geq 1$ and define 
\begin{equation} 
B_{j}(k) = I_2 \otimes \big(I_{k+1}-(k+1) E_{j}(k)\big) \qquad \mathrm{for} \qquad j=1,\dots,k
\end{equation}
and
\begin{equation} \label{eq:line-plus-simplex_dual-FS-2}
B_{k+1}(k) = \operatorname{diag} (1,-1) \otimes I_{k+1},
\end{equation}
so that $\mathcal D_{B(k)}$ is the direct sum of the real matrix interval and a real free simplex in $k$ variables. Let $s_A(2)$ be its optimal inclusion constant at level $2$. Then,
\begin{equation}
    s_A(2) = \frac{k-1+\sqrt{1+k}}{2k} \, . \label{eq:line+simplex-main}
\end{equation}
\end{thm*}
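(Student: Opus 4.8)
The plan is to turn the computation of $s_A(2)$ into a finite, highly symmetric bilinear optimization problem, collapse it to a single scalar variable using the extreme-point analysis of the earlier sections, and certify the closed form via non-commutative polynomial optimization. Since $\mathcal D_{B(k)}$ is the free polytope $\mathcal W^{\max}(\mathcal D_{B(k)}(1))$, a dimension-$2$ tuple $C$ with $\mathcal D_{B(k)}(1)\subseteq\mathcal D_C(1)$ satisfies $s\,\mathcal D_{B(k)}\subseteq\mathcal D_C$ if and only if $s\,\lambda_{\max}\big(\sum_{j}C_j\otimes X_j\big)\le 1$ for every $X\in\mathcal D_{B(k)}(2)$; here the restriction to level $2$ is no loss, because compressing any $X\in\mathcal D_{B(k)}(n)$ to a $2$-dimensional subspace lands in $\mathcal D_{B(k)}(2)$ and already realizes the worst case. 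Hence
\[
s_A(2)^{-1}=\sup\Big\{\lambda_{\max}\Big(\sum_{j=1}^{k+1}C_j\otimes X_j\Big)\ :\ \mathcal D_{B(k)}(1)\subseteq\mathcal D_C(1),\ X\in\mathcal D_{B(k)}(2)\Big\},
\]
the supremum running over $(k+1)$-tuples $C$ of Hermitian $2\times 2$ matrices and over $X\in SM_2(\mathbb C)^{k+1}$. The objective is convex in $C$ for fixed $X$ and convex in $X$ for fixed $C$, so the supremum over each of the two blocks is attained at an extreme point of the corresponding spectrahedron; in particular one may take $X$ among the extreme points of $\mathcal D_{B(k)}(2)$ and $C$ among those of $\{\,C:\mathcal D_{B(k)}(1)\subseteq\mathcal D_C(1)\,\}$.

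To enumerate the relevant extreme points of $\mathcal D_{B(k)}(2)$ I would invoke Sections~\ref{sec:extreme-points}--\ref{sec:CartesianExtreme}. Because $\mathcal D_{B(k)}$ is the direct sum of the free line and the free simplex in $k$ variables, its level-$2$ extreme points are, up to unitary conjugation, built blockwise from (i) a $2\times 2$ reflection $R$ with $R^2=I_2$, $R\neq\pm I_2$, coming from the free-line factor, and (ii) an extreme point of the level-$2$ set of the free simplex in $k$ variables — an extreme $(k+1)$-outcome POVM on $\mathbb C^2$ in the standard POVM picture of free simplices — glued together by an irreducibility condition. Using the symmetries of the problem (joint conjugation by $U(2)$, the symmetric group permuting the first $k$ coordinates, and the reflection $x_{k+1}\mapsto -x_{k+1}$) I would normalize $X$, its reflection block, and the target $C$ simultaneously, reducing the whole configuration to a single angular parameter $\theta$ recording the relative position of the line block and the simplex block. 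This produces the one-parameter family $X(\theta)$ together with the semidefinite feasibility problem $\mathfrak D_{\gamma,X(\theta)}$ of Section~\ref{sec:opti-problems-from-qit}.

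With $\theta$ the only remaining variable, the rest is explicit: feasibility of $\mathfrak D_{\gamma,X(\theta)}$ reduces to a concrete eigenvalue inequality, whose critical value $\gamma(\theta)$ I would compute in closed form, so that $s_A(2)^{-1}=\max_\theta\gamma(\theta)$; imposing $\gamma'(\theta)=0$ and solving the resulting algebraic equation gives the maximizer $\theta^\star$ with $\gamma(\theta^\star)=\frac{2k}{k-1+\sqrt{1+k}}$, hence $s_A(2)=\frac{k-1+\sqrt{1+k}}{2k}$. The configuration at $\theta^\star$ supplies the upper bound on $s_A(2)$ by exhibiting an explicit $2\times 2$ tuple $C$ and a point $X\in\mathcal D_{B(k)}(2)$ that violate the scaled inclusion for larger $s$, while the matching lower bound — that $s\,\mathcal D_{B(k)}\subseteq\mathcal D_C$ for every admissible dimension-$2$ tuple $C$ once $s\le\frac{k-1+\sqrt{1+k}}{2k}$ — follows because the symmetry and extreme-point reduction identify the supremum above as exactly $\max_\theta\gamma(\theta)$, which can additionally be confirmed by the non-commutative Positivstellensatz and SDP-duality certificates from the polynomial optimization framework reviewed in Section~\ref{sec:preliminaries}.

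I expect the reduction to the one-parameter family $X(\theta)$ to be the main obstacle: ruling out asymmetric configurations and genuinely higher-rank POVM data for the simplex factor requires the full classification of free extreme points of direct sums of free simplices at level $2$ from Section~\ref{sec:CartesianExtreme}, together with a careful simultaneous symmetry normalization of the witness $X$ and the target $C$. A secondary point needing justification is the compression argument collapsing the a priori infinite-level inclusion to level $2$, which is precisely what makes a closed-form evaluation possible.
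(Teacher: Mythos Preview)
Your high-level strategy matches the paper's: reformulate $s_A(2)^{-1}$ as the bilinear $\lambda_{\max}$ problem of Theorem~\ref{thm:MaxEigIsMinBallInclusionConst}, reduce to extreme points, collapse via symmetry to a one-parameter family $X(\theta)$, and then certify the value. But two of your key steps are misstated in ways that would break the argument.

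First, you apply the product-type extreme point description to $\mathcal D_{B(k)}$, which is the \emph{direct sum} of the interval and a simplex. The classification of Section~\ref{sec:CartesianExtreme} is for \emph{Cartesian products}; the direct sum has no such blockwise structure (cf.\ the remark after Proposition~\ref{prop:CartesianPolyDualIsDirect}). The paper therefore passes to the dual polytope $\mathcal D_{A(k)}=\mathcal D_{B(k)}^{\square}$, which \emph{is} the Cartesian product of a free simplex and the interval, and runs the extreme-point analysis there; by the corollary to Theorem~\ref{thm:MaxEigIsMinBallInclusionConst} this changes nothing about the inclusion constant. Second, you work over $\mathbb C$, but the reduction to the family $X(\theta)$ hinges on Theorem~\ref{thm:SimplexXIntervalRealMatEx}, which says that \emph{real} matrix extreme points at level~$2$ force the simplex factor to be Arveson extreme (hence, up to a unitary, a diagonal pair of level-$1$ vertices). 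Over $\mathbb C$ this fails: Theorem~\ref{thm:SimpleXIntervalComplexMatEx} exhibits complex matrix extreme points whose simplex factor is merely Euclidean extreme, so your ``built blockwise'' description and the ensuing one-parameter reduction would not cover all cases. The statement you are proving is genuinely about $\mathbb R$.

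For the lower bound, the paper does \emph{not} compute a closed form for the function $\theta\mapsto\gamma(\theta)$ and then differentiate. Instead it fixes $\gamma(k)=\frac{2k}{k-1+\sqrt{1+k}}$ and, for every $\theta\in[0,\pi/2]$, explicitly writes down a feasible point of the SDP $\mathfrak D_{\gamma(k),X(\theta)}$ (Theorems~\ref{thm:SDPSolutions} and~\ref{thm:kSDPFeasible}, with the $k=2$ case handled by a lengthy construction in the appendix and $k\ge 3$ by a cleaner symmetric ansatz). The non-commutative polynomial optimization machinery is used only for numerical evidence toward the unrestricted-level conjecture, not to certify the level-$2$ result.
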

The free spectrahedron $\cD_{B(k)}$ is closely related to the matrix jewel \cite{bluhm2020compatibility}, see Example \ref{ex:MatrixJewel}. 

To prove the above theorem, we show in Theorem \ref{thm:MaxEigIsMinBallInclusionConst} that for general free polytopes, an upper bound for the corresponding inclusion constant can be found via an optimization problem involving both the free polytope and its dual free polytope, which is the maximal matrix convex set for the polar dual of $\mathcal D_A(1)$. We then construct explicit elements of the free polytopes that achieve $s_A(2)$ in this optimization problem. On the other hand, we show that $s_A(2)$ is a lower bound for the inclusion constant by constructing an exact solution to the corresponding  feasibility SDP of \cite{helton_matricial_2013}. 

En route to solving this feasibility SDP, we study in detail the extreme points of Cartesian products of a pair of free spectrahedra, with a particular focus on the case of Cartesian products of free simplices, such as the dual free polytope for the matrix jewel. In Section \ref{sec:CartesianExtreme}, we specialize known kernel-based classifications of extreme points of free spectrahedra to the Cartesian product setting. This leads to new insights into matrix extreme points of free spectrahedra that are not free extreme, points which were known to exist, but were considered difficult to construct \cite{epperly2024matex}. In particular, in Proposition \ref{prop:MatrixCartesianEuclidean} we give a simple construction that for a Cartesian product of free spectrahedra yields a tuple that is matrix extreme but not free extreme. 

The fact that we can prove the optimal form of $s_A(2)$ in the real setting is due to differences in the structure of the extreme points between the real and the complex setting. The following result is an immediate consequence of Theorem~\ref{thm:SimplexXIntervalRealMatEx} for the first assertion and Theorem~\ref{thm:SimpleXIntervalComplexMatEx} for the second:\looseness=-1
\begin{thm*}
    Let $\mathcal D_A=\mathcal D_{S_1} \times \mathcal D_{S_2}$ be the Cartesian product of two free simplices and let $(X,Y)$ be a real matrix extreme point in level two, i.e., in $\cD_A(2)$. Then, $X$ and $Y$ are free extreme in $\mathcal D_{S_1}$ and $\mathcal D_{S_2}$, respectively. However, in the complex setting there exist matrix extreme points at level two of a Cartesian product of two free simplices that are not free extreme.
\end{thm*}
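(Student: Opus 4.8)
The plan is to derive both halves from the kernel-theoretic description of extreme points recalled in Section~\ref{sec:extreme-points} and sharpened for Cartesian products in Section~\ref{sec:CartesianExtreme}; this is exactly what Theorems~\ref{thm:SimplexXIntervalRealMatEx} and~\ref{thm:SimpleXIntervalComplexMatEx} do, and I sketch their architecture here. The first observation is that for $\cD_A=\cD_{S_1}\times\cD_{S_2}$ the defining pencil is block diagonal, so for $(X,Y)\in\cD_A(2)$ one has $\ker L_A(X,Y)=\ker L_{S_1}(X)\oplus\ker L_{S_2}(Y)$, and the Euclidean-extreme condition on $(X,Y)$ splits as ``$X$ is Euclidean extreme in $\cD_{S_1}$ and $Y$ is Euclidean extreme in $\cD_{S_2}$'' (a fact belonging to the Cartesian-product analysis of Section~\ref{sec:CartesianExtreme}). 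Since every matrix extreme point is Euclidean extreme and irreducible, a matrix extreme $(X,Y)\in\cD_A(2)$ therefore consists of two Euclidean extreme points of the factors, coupled only through the joint irreducibility of the pair.

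For the real statement the task is to upgrade ``Euclidean extreme'' to ``free extreme'' in each factor, and the leverage is a dimension count available only over $\R$. At level two $SM_2(\R)$ is three-dimensional, and a free simplex in $g$ variables is cut out by exactly $g+1$ facet inequalities $\ell_j(X)\preceq I$ whose normals obey a single positive linear dependence $\sum_j\mu_j\ell_j=0$; writing the realization of a level-two point over the joint eigenbasis of the defining tuple and using that the $g+1$ vertices of a simplex span $\R^g$, one finds that the only level-two points of a real free simplex whose coordinates carry a nontrivial complex structure (i.e.\ lie in the copy of $\C$ inside $M_2(\R)$) are the scalar points. Hence an irreducible level-two point of a real free simplex generates all of $M_2(\R)$, and comparing the kernel conditions that separate matrix extreme from free extreme one checks that, once $X$ (resp.\ $Y$) is Euclidean extreme and $(X,Y)$ is matrix extreme, no room is left for a proper matrix convex dilation of $X$ (resp.\ $Y$) inside $\cD_{S_1}$ (resp.\ $\cD_{S_2}$); thus $X$ and $Y$ are free extreme. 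I expect this upgrade to be the main obstacle: it is precisely the point where the notoriously delicate gap between matrix and free extreme points \cite{epperly2024matex} is closed by hand, and it requires careful bookkeeping of which kernel inclusions can occur at level two over $\R$.

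For the complex statement it suffices to exhibit a single example, and here the above count fails: $SM_2(\C)$ is four-dimensional, and the extra direction absent over $\R$ is exactly what supplies a proper matrix convex dilation of one factor while keeping the pair matrix extreme. Concretely one applies the general construction of Proposition~\ref{prop:MatrixCartesianEuclidean}, which for a Cartesian product of free spectrahedra produces a tuple that is matrix extreme but not free extreme; specializing the two factors to free simplices yields a matrix extreme $(X,Y)\in\cD_{S_1}(2)\times\cD_{S_2}(2)$ for which one factor, say $X$, is not free extreme in $\cD_{S_1}$, whence $(X,Y)$ is not free extreme in $\cD_A$ either (this last implication again coming from Section~\ref{sec:CartesianExtreme}). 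The only nontrivial verification is that the constructed pair is genuinely matrix extreme, rather than merely Euclidean extreme and irreducible; this is the content of the kernel computation behind Proposition~\ref{prop:MatrixCartesianEuclidean}.
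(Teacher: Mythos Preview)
Your outline for the real half is roughly aligned with the paper's argument: the paper also reduces to the kernel criterion of Lemma~\ref{lem:ExtremeOfCartesian}\eqref{it:MatExCartesian} and exploits that $\dim SM_2(\R)=3$. However, the precise mechanism is not what you describe. The key step (Lemma~\ref{lem:exists-W}) is: if $X\in\cD_S(2)$ is a real Euclidean extreme point that is \emph{not} Arveson extreme, then the extreme-ray decomposition of $(I,X)$ in the homogeneous cone $\cH_{(I,-S)}$ involves rank-one blocks $W_0^\ell$ whose span is forced, by irreducibility of $X$, to be all of $SM_2(\R)$; hence for \emph{any} prescribed $H\in SM_2(\R)$ one can solve $\ker L_S(X)\subseteq\ker\Lambda_{(I,-S)}(H,W)$. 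This is then played against a nontrivial $(H,Z)$ witnessing non-matrix-extremality of $Y$ in $\cD_{S_2}$ (such $(H,Z)$ exists because simplices have no matrix extreme points at level two). Your phrase ``no room is left for a proper matrix convex dilation'' inverts the logic: the point is that there is \emph{too much} room on the $X$-side, so any bad $(H,Z)$ from the $Y$-side can be extended to a bad $(H,W,Z)$ for the pair.

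Your approach to the complex half has a genuine gap. You propose to invoke Proposition~\ref{prop:MatrixCartesianEuclidean}, which produces a matrix extreme $(X,Y)$ from an Euclidean extreme $X$ and a \emph{matrix extreme} $Y$. But a free simplex has no matrix extreme points at level two (Remark~\ref{rem:extreme-points-free-simplex}), so when both factors are simplices that proposition cannot yield any level-two example. The paper's construction (Theorem~\ref{thm:SimpleXIntervalComplexMatEx}\eqref{it:RealXComplexIsMatEx}) is entirely different: take $\cD_{S_2}=\cD_C$ the matrix interval, $X\in\cD_{S_1}(2)$ a real Euclidean extreme point that is not Arveson extreme, and $Y\in\cD_C(2)$ an Arveson extreme point with $\operatorname{Im}Y\neq0$. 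Lemma~\ref{lem:SimplexHomogeneousSolutions} shows that every kernel solution $(H,W)$ on the simplex side has $H$ real, while Lemma~\ref{lem:IntervalHomogeneousSolutions} shows that every nontrivial trace-zero solution $(H,Z)$ on the interval side has $\operatorname{Im}H\neq0$. These are incompatible unless $H=\alpha I$, and then Euclidean extremality of $X$ and $Y$ forces $(H,W,Z)=\alpha(I,X,Y)$. Thus $(X,Y)$ is matrix extreme but not Arveson extreme (since $X$ is not), hence not free extreme. The example is genuinely complex: the argument collapses over $\R$ precisely because $\operatorname{Im}Y=0$ there.
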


In addition to the direct sum of a real line and a real $k$-simplex, we also study the inclusion constant (over the complexes) for level two of the matrix cube in four variables. The following is an informal restatement of Proposition \ref{prop:extreme-points-four-lines}:
\begin{thm*}
Let $\mathcal D_B$ be the direct sum of $4$ real lines and let $s_{B}(2)$ be its optimal inclusion constant at level $2$. Then,
\begin{equation*}
    s_{B}(2) \leq \frac{2}{\sqrt{13}} \,.
\end{equation*}
\end{thm*}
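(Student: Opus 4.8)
The plan is to follow the upper-bound half of the strategy used for Theorem~\ref{thm:kSimplexPlusLineOptimum}: apply Theorem~\ref{thm:MaxEigIsMinBallInclusionConst} to reduce $s_B(2)$ to an optimization over $\cD_B$ and its dual free polytope, and then exhibit explicit extreme points realizing the bound. Write $\cD_B$ for the direct sum of four real lines; then $\cD_B$ is the matrix diamond in four variables and $\cD_B(1)$ is the $\ell_1$-ball (cross-polytope) in $\R^4$, whose polar dual is the cube $[-1,1]^4$. Hence the dual free polytope of $\cD_B$ is $\cWmax([-1,1]^4)$, the matrix cube in four variables, which by Section~\ref{sec:CartesianExtreme} is exactly the Cartesian product of four free lines (free $1$-simplices); its points at level two are $4$-tuples $C=(C_1,\dots,C_4)$ of self-adjoint $2\times 2$ contractions. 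Theorem~\ref{thm:MaxEigIsMinBallInclusionConst} bounds $s_B(2)$ above by the reciprocal of the optimization value $\mu:=\sup\lambda_{\max}\!\big(\sum_{i=1}^4 C_i\otimes X_i\big)$, the supremum running over all such $C$ and all $X=(X_1,\dots,X_4)\in\cD_B$; so it suffices to produce one admissible pair $(C,X)$ with $\lambda_{\max}\!\big(\sum_i C_i\otimes X_i\big)=\sqrt{13}/2$.

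Then I would set up the eigenvalue computation in Pauli coordinates. Decomposing $C_i=c_i^0 I+\vec c_i\cdot\vec\sigma$ and recording the Pauli/correlation data of a level-$n$ extreme point $X$ of $\cD_B$, the operator $\sum_i C_i\otimes X_i$ takes the form $\alpha_0 I+I\otimes(\vec p\cdot\vec\sigma)+(\vec q\cdot\vec\sigma)\otimes I+\sum_{k,l}N_{kl}\,\sigma_k\otimes\sigma_l$ with $N=\sum_i\vec c_i\vec x_i^{\,\mathrm T}$. Passing to a Bell-type basis diagonalizes the bilinear summand, so that $\lambda_{\max}$ is a closed-form function of the singular values of $N$ together with $\vec p$ and $\vec q$; in the purely traceless case it equals, up to a sign governed by $\det N$, the sum of the singular values of the $3\times 3$ matrix $N$. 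This turns the problem into maximizing that closed-form expression subject to $C$ being a tuple of contractions and $X\in\cD_B$.

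Finally, I would use Proposition~\ref{prop:extreme-points-four-lines} and the kernel-based classification of extreme points of Cartesian products of free simplices from Section~\ref{sec:CartesianExtreme} to identify the configuration at which this expression equals $\sqrt{13}/2$: a specific extreme point $C$ of the matrix cube at level two together with a matching extreme point $X$ of $\cD_B$. Substituting these into Theorem~\ref{thm:MaxEigIsMinBallInclusionConst} then yields $s_B(2)\le 2/\sqrt{13}$.

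The hard part is pinning down this extremal configuration, and then carrying out the eigenvalue computation for it. The natural highly symmetric candidates are suboptimal: the vertices of a regular tetrahedron, or three orthogonal axes together with an arbitrary fourth direction, only yield $1/\sqrt3\approx 0.577$, whereas $2/\sqrt{13}\approx 0.555$ is strictly smaller. So the witness must be extracted from the detailed structure of the extreme points of a Cartesian product of four free lines, which is precisely what Section~\ref{sec:CartesianExtreme} is designed to supply. Promoting the inequality to an equality would additionally require constructing an exact solution of the matching feasibility SDP of~\cite{helton_matricial_2013}, as in Theorem~\ref{thm:kSimplexPlusLineOptimum}; since that is not done here, only the upper bound is claimed.
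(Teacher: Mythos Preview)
Your approach is correct and is essentially the paper's: reduce to the eigenvalue optimization via Theorem~\ref{thm:MaxEigIsMinBallInclusionConst} and then exhibit an explicit pair achieving $\sqrt{13}/2$. Two remarks. First, the Pauli/Bell-basis analysis is an unnecessary detour---the paper simply writes down the explicit matrices of Proposition~\ref{prop:extreme-points-four-lines} and verifies the eigenvalue directly. Second, Section~\ref{sec:CartesianExtreme} will not hand you the winning configuration: for the matrix cube it only tells you to restrict to tuples with $C_i^2=I$ (Lemma~\ref{lem:ExtremeOfCartesian} combined with Remark~\ref{rem:extreme-points-free-simplex}), and the actual witness (one axis orthogonal to three equiangular lines, using complex entries) was found numerically, not derived from that theory.
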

The direct sum of $4$ real lines is known as the matrix diamond \cite{davidson2016dilations} and is the free dual polytope to the matrix cube.  We prove the theorem by finding exact extreme points of the matrix cube that achieve this bound in the relevant optimization problem. 

It was shown in \cite{bluhm2018joint, bluhm2020compatibility, bluhm2022tensor} that optimal inclusion constants at level $d$ for certain direct sums of free simplices correspond to the minimal amount of noise that is necessary to make any measurements on $d$-dimensional quantum systems with a fixed number of outcomes compatible. Therefore, the  previous results can be applied to study the robustness of measurement incompatibility. At the heart of our results on quantum information theory lies the realization that we can obtain the relevant optimal inclusion constants from non-commutative polynomial optimization problems, while it was previously unknown how to obtain inclusion constants in a systematic way. We show in Theorem \ref{thm:lambda-max-formulation} that  computing the minimum compatibility degree of measurements with fixed outcomes in dimension $d$, i.e., the minimal amount of noise that is necessary to make any measurements on $d$-dimensional quantum systems with a fixed number of outcomes compatible, is equivalent to solving a polynomial optimization problem.
We can thus use tools from non-commutative polynomial optimization such as convergent hierarchies of SDPs to find lower bounds on the maximal robustness of measurement incompatibility to noise. We are especially interested in two measurement settings, namely the minimum compatibility degree of one dichotomic and one measurement with $k+1$ outcomes and the minimum compatibility degree of $4$ dichotomic qubit measurements (i.e., on quantum systems of dimension $2$). We write the compatibility degree of $g$ measurements in dimension $d$ with outcomes $\mathbf{k}=(k_1, \ldots, k_g)$ as $s(d,g,\mathbf{k})$, see Definition \ref{def:incompatibility-degree}.

For one dichotomic and one measurement with $k+1$ outcomes, from \eqref{eq:line+simplex-main} we obtain the maximal noise robustness of incompatibility if we restrict to real quantum mechanics, i.e., to quantum states and measurements that can be described by real-valued matrices. The following is a restatement of Theorem \ref{thm:results-for-2-plus-k}: 
\begin{thm*}
    The minimum compatibility degree for one dichotomic and one measurement with $k+1$ outcomes in dimension $2$ when restricted to real quantum mechanics is 
    \begin{equation}
    s_{\mathbb R}(2,2,(2,k+1)) = \frac{k-1+\sqrt{k+1}}{2k} \, .
\end{equation}
   A pair of maximally incompatible measurements is a computational basis measurement together with a Hadamard basis measurement that is padded with zeroes.
\end{thm*}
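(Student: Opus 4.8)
The plan is to read the numerical value off the free-spectrahedral results already in hand and then to identify the extremal pair of measurements by unwinding the correspondence between compatibility and free spectrahedral inclusion. First I would invoke the reduction of Section~\ref{sec:opti-problems-from-qit}: by Theorem~\ref{thm:lambda-max-formulation} together with Definition~\ref{def:incompatibility-degree}, computing $s_{\R}(2,2,(2,k+1))$ is the same as computing the optimal inclusion constant, at level $2$ and over the reals, of the free polytope that governs compatibility of one dichotomic and one $(k+1)$-outcome measurement. By Example~\ref{ex:MatrixJewel} and the identification made around \eqref{eq:line-plus-simplex_dual-FS-2}, that free polytope is, up to the standard identification, the matrix jewel $\cD_{B(k)}$, i.e.\ the direct sum of the real matrix interval and the real free simplex in $k$ variables. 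Theorem~\ref{thm:kSimplexPlusLineOptimum} computes exactly this constant, giving $s_A(2)=\frac{k-1+\sqrt{k+1}}{2k}$ as in \eqref{eq:line+simplex-main}; hence $s_{\R}(2,2,(2,k+1))=\frac{k-1+\sqrt{k+1}}{2k}$.

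For the statement about the maximally incompatible pair I would trace the explicit optimizer constructed in the proof of Theorem~\ref{thm:kSimplexPlusLineOptimum}. Its upper-bound half, via Theorem~\ref{thm:MaxEigIsMinBallInclusionConst}, attains $s_A(2)$ at an explicit level-$2$ element of the \emph{dual} free polytope of $\cD_{B(k)}$, which is the Cartesian product of the matrix interval and a free $k$-simplex; this element is a real matrix extreme point, and by Theorem~\ref{thm:SimplexXIntervalRealMatEx} its two components are free extreme in the two factors --- exactly the structural fact that makes the real case tractable in closed form. Applying the POVM--observable correspondence of Section~\ref{sec:opti-problems-from-qit} to this element yields a pair of POVMs on $\C^2$, and the plan is to carry out this translation explicitly and recognise the pair as the computational basis measurement $\{|0\rangle\!\langle 0|,\ |1\rangle\!\langle 1|\}$ together with the Hadamard basis measurement $\{|+\rangle\!\langle +|,\ |-\rangle\!\langle -|\}$ padded with $k-1$ zero outcomes. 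As a sanity check, at $k=1$ this recovers the familiar pair of complementary qubit measurements, with noise tolerance $1/\sqrt{2}$.

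To make the claim self-contained I would then verify both halves of ``maximal incompatibility'' directly for this explicit pair, rather than only appealing to optimality of $s_A(2)$. For achievability of noise level $s^\star=\frac{k-1+\sqrt{k+1}}{2k}$: exhibit a joint (``mother'') POVM for the white-noise versions of the two measurements at parameter $s^\star$ --- equivalently, the feasible point of the inclusion semidefinite program of \cite{helton_matricial_2013} produced in the lower-bound half of Theorem~\ref{thm:kSimplexPlusLineOptimum}. For failure beyond this level: for every $s>s^\star$ the explicit extreme point of $\cD_{B(k)}$ from the upper-bound half provides a certificate that the $s$-noisy pair is incompatible. Combined with the value from the first step, this shows the pair is genuinely worst-case.

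The main obstacle is less a mathematical difficulty than careful bookkeeping in the translation. One must keep the normalisation between a $(k+1)$-outcome POVM and the free simplex consistent, handle the zero-padding convention so that it is the $(2,k+1)$-outcome jewel --- and not the $(2,2)$-outcome one --- whose inclusion constant is being used, and apply free polytope duality with care, since Theorem~\ref{thm:MaxEigIsMinBallInclusionConst} mixes a free polytope with its dual. A further point to confirm is that the restriction to real quantum mechanics is transported faithfully along every step of the dictionary (real POVMs $\leftrightarrow$ real observables $\leftrightarrow$ real free spectrahedron), since it is precisely this hypothesis --- via Theorem~\ref{thm:SimplexXIntervalRealMatEx} --- that yields the sharp value \eqref{eq:line+simplex-main} rather than only an inequality.
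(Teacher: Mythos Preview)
Your approach is essentially the same as the paper's: reduce to the free-spectrahedral optimization problem, invoke Theorem~\ref{thm:kSimplexPlusLineOptimum} for the value, and translate the explicit optimizer $X$ back through the POVM--observable dictionary to obtain the maximally incompatible pair. The paper's proof is just two sentences to this effect, citing the invariance of inclusion constants under invertible linear transformations \cite[Lemma~3.1]{passer2018minimal} to pass from the matrix jewel parametrisation to $\cD_{A(k)}$.

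One small correction: Theorem~\ref{thm:lambda-max-formulation} is stated only for $\mathbb{C}$, so it does not directly give you $s_{\R}$. The paper instead uses Theorem~\ref{thm:dual-free-polytopes}, which is formulated for general $\mathbb{F}$ and therefore covers the real case cleanly (the underlying real compatibility--inclusion correspondence is Corollary~\ref{cor:incompatibility-spectrahedra-real}). You correctly flag the need to track the real restriction, so this is a matter of citing the right result rather than a gap in the argument. Your additional plan to verify achievability and failure directly on the explicit pair is not needed once Theorem~\ref{thm:kSimplexPlusLineOptimum} is in hand, but it does no harm.
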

This result implies an upper bound for complex quantum mechanics, i.e., 
\begin{equation*}
    s_{\mathbb C}(2,2,(2,k+1)) \leq \frac{k-1+\sqrt{k+1}}{2k} \, .
\end{equation*}
In fact, we conjecture that the upper bound is optimal and holds for any $d \geq 2$:
\begin{conj*}
    We conjecture the minimum compatibility degree for one dichotomic and one measurement with $k+1$ outcomes in dimension $d$ to be 
    \begin{equation}
    s_{\mathbb C}(d,2,(2,k+1)) = \frac{k-1+\sqrt{k+1}}{2k} \, .
\end{equation}
\end{conj*}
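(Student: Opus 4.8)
Write $c := \frac{k-1+\sqrt{k+1}}{2k}$. The plan is to first cut the statement down to a single inequality. Since any measurement configuration on $\mathbb C^2$ embeds into $\mathbb C^d$ by direct-summing with a pair of deterministic (hence jointly measurable, and noise-stably so) measurements, the compatibility degree $s_{\mathbb C}(d,2,(2,k+1))$ is non-increasing in $d$; combined with the trivial inclusion of real configurations among complex ones and with Theorem~\ref{thm:results-for-2-plus-k}, this yields
\[
s_{\mathbb C}(d,2,(2,k+1)) \le s_{\mathbb C}(2,2,(2,k+1)) \le s_{\mathbb R}(2,2,(2,k+1)) = c
\]
for every $d \ge 2$, so the entire content of the conjecture is the reverse inequality $s_{\mathbb C}(d,2,(2,k+1)) \ge c$. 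Using monotonicity in $d$ once more, it suffices to treat the dimension-free limit: via the correspondence between the compatibility degree and inclusion constants of $\mathcal D_{B(k)}$ (\cite{bluhm2020compatibility}, cf.\ Theorem~\ref{thm:lambda-max-formulation}) together with the Effros--Winkler description of $\cWmin(\mathcal D_{B(k)}(1))$ as the intersection of all free polytopes whose first level contains $\mathcal D_{B(k)}(1)$, this amounts to proving $c\cdot\mathcal D_{B(k)} \subseteq \cWmin\!\big(\mathcal D_{B(k)}(1)\big)$ over $\mathbb C$, i.e.\ that the dimension-free minimal inclusion constant of the complexified free polytope $\mathcal D_{B(k)}$ equals $c$. (This already encodes the nontrivial assertion that the worst case is, in the limit, attained at dimension two; the numerics in Section~\ref{sec:conclusion} are consistent with it.)

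To establish the remaining inclusion I would mirror the proof of Theorem~\ref{thm:kSimplexPlusLineOptimum}. There the lower bound $s_A(2)\ge c$ came from exhibiting an exact solution of the feasibility SDP of \cite{helton_matricial_2013} certifying $c\cdot\mathcal D_{B(k)}(2)\subseteq\mathcal D_C(2)$; here one needs an exact certificate valid at every matrix level $n$ and over $\mathbb C$ -- equivalently, a unital completely positive map (or a dilation / Arveson-type extension datum) sending the defining tuple of an arbitrary valid complex $C$ to $c\cdot B(k)$. The first move is to dualize: by Theorem~\ref{thm:MaxEigIsMinBallInclusionConst} the obstruction to this inclusion being universal is governed by a maximal-eigenvalue (joint numerical range) optimization over the extreme points of $\mathcal D_{B(k)}$ and of its dual free polytope, and the latter is a Cartesian product of free simplices -- exactly the family dissected in Sections~\ref{sec:CartesianExtreme}--\ref{sec:optimizers-direct-sums}. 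The concrete target then becomes: enumerate the matrix extreme points at each level $d$ of this complex Cartesian product of free simplices, evaluate the relevant functional at each, show the maximum equals $1/c$ uniformly in $d$, and promote the optimizers to an explicit feasible point of the dual SDP.

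The step I expect to be the main obstacle is precisely this enumeration in the complex setting. Over $\mathbb R$ at level two the analysis collapses, because by Theorem~\ref{thm:SimplexXIntervalRealMatEx} every real matrix extreme point of a Cartesian product of free simplices has free-extreme components, leaving only finitely many highly structured tuples to inspect -- this is what makes the closed form $c$ fall out. By Theorem~\ref{thm:SimpleXIntervalComplexMatEx} this fails over $\mathbb C$: there are genuinely new matrix extreme points that are not free extreme, and nothing a priori confines the extremal configuration to level two. One therefore has to show that none of these extra complex extreme points -- at any level $d\ge 2$ -- witnesses a strictly smaller inclusion constant, which seems to demand either a structural normal form for complex matrix extreme points of Cartesian products of free simplices, or a genuinely dimension-free dilation argument bypassing the level-by-level enumeration. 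In practice I would first run the convergent non-commutative polynomial optimization hierarchy of \cite{burgdorf2016optimization,wang2021exploiting} at $d=2,3,4$ (as in Section~\ref{sec:conclusion}) to confirm numerically that the optimum stays at $c$ and to read off the structure of the optimal complex extreme points and of the associated certificate, then guess and verify a closed form by hand; turning that numerical pattern into a proof valid for all $d$ is the crux, and is why the statement is recorded as a conjecture rather than a theorem.
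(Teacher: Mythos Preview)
Your assessment is accurate, and you have correctly recognized that this statement is a \emph{conjecture}: the paper does not prove it either. Your reduction of the problem is right---the upper bound $s_{\mathbb C}(d,2,(2,k+1)) \le c$ follows exactly as you say from monotonicity in $d$ (Proposition~\ref{prop:bounds-incompat-degree}(4)) together with $s_{\mathbb C}(2,\cdot) \le s_{\mathbb R}(2,\cdot)$ and Theorem~\ref{thm:results-for-2-plus-k}, so the entire content is the lower bound, and you correctly locate the obstruction in Theorem~\ref{thm:SimpleXIntervalComplexMatEx}: over $\mathbb C$ the Cartesian product of free simplices acquires matrix extreme points that are not free extreme, destroying the finite reduction that drives the real $d=2$ proof.

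The paper's own evidence for the conjecture is essentially what you propose in your last paragraph: it runs the NPA hierarchy on the dimension-free eigenvalue problem~\eqref{eq:2pluskoptimaxqc} (rather than the Lasserre hierarchy at fixed $d=2,3,4$) for $k=3,4,5$, recovers the value $c$ numerically to several digits (Table~\ref{tab:2pluskknown}), and in the $k=3$ case even extracts a finite-dimensional optimizer via the GNS/flat-extension mechanism. So your outline and the paper's treatment coincide; neither constitutes a proof, and both identify the same missing structural ingredient---a usable description of the complex matrix extreme points (or a dilation argument that avoids them) at arbitrary level.
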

For small $k$, we support this conjecture by solving the corresponding non-commutative polynomial optimization problem using the NPA hierarchy (see Section \ref{sec:line_simplex_num}). A crucial ingredient here are our characterizations of extreme points of Cartesian products of free simplices in Section \ref{sec:CartesianExtreme}.

For the compatibility degree of $4$ dichotomic qubit measurements, we obtain the following result, which is a restatement of Theorem \ref{thm:results-for-four-qubits}:
\begin{thm*}
    The minimum compatibility degree of $4$ dichotomic qubit measurements is bounded from above by
    \begin{equation*}
        s_\C(2,4, (2,2,2,2)) \leq \frac{2}{\sqrt{13}} \, .
    \end{equation*}
    The upper bound is the compatibility degree of a computational basis measurement and three projective measurements that correspond to equiangular lines in the $X-Y$ plane in the Bloch representation, see Fig.~\ref{fig:plot-4-meas-Bloch}.
\end{thm*}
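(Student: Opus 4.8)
The plan is to deduce this from the analysis of the matrix diamond already carried out in the excerpt, via the free-spectrahedral reformulation of measurement compatibility. The first step is the identification $s_\C(2,4,(2,2,2,2)) = s_B(2)$: by Theorem~\ref{thm:lambda-max-formulation} together with \cite{bluhm2018joint,bluhm2020compatibility}, a $4$-tuple of dichotomic qubit measurements with observables $(A_1,\dots,A_4)$, $A_i\in SM_2(\C)$, is jointly measurable exactly when $(A_1,\dots,A_4)$ lies at level $2$ of the matrix diamond in $4$ variables; hence the minimum compatibility degree over all such tuples equals the optimal inclusion constant at level $2$ of $\cD_B$, the direct sum of $4$ real lines.

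The second step is to invoke the bound $s_B(2)\le 2/\sqrt{13}$ from Proposition~\ref{prop:extreme-points-four-lines}, which together with the first step yields the asserted inequality. Recall how that bound arises: Theorem~\ref{thm:MaxEigIsMinBallInclusionConst} controls the inclusion constant of a free polytope at level $m$ by a ``smallest enclosing ball'' optimization problem $\MinBallSDP{\gamma}{X}$ ranging over extreme points $X$, at level $m$, of the dual free polytope. For the matrix diamond in $4$ variables the dual free polytope is the matrix cube in $4$ variables, that is, the Cartesian product of $4$ matrix intervals; using the extreme-point analysis of Section~\ref{sec:CartesianExtreme} --- this is where the matrix extreme but not free extreme points of Proposition~\ref{prop:MatrixCartesianEuclidean} enter --- one exhibits an explicit level-$2$ extreme point of the matrix cube at which this problem evaluates to $2/\sqrt{13}$.

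The third step is to read off the extremal measurements. Translating that level-$2$ extreme point of the matrix cube back through the compatibility dictionary gives, up to unitary conjugation and relabelling of outcomes, the four qubit observables $A_1 = \operatorname{diag}(1,-1)$, the computational basis measurement, and, for $j=2,3,4$, the observables with Bloch vectors $n_j\in\R^3$ lying in the $XY$-plane of the Bloch ball, with the three lines $\R n_j$ pairwise at $60^\circ$, which are the equiangular lines of Figure~\ref{fig:plot-4-meas-Bloch}. One then confirms, either from the equality case of the enclosing-ball problem or by solving directly the joint-measurability SDP for the scaled family $(sA_1,\dots,sA_4)$, that this family is jointly measurable precisely for $s\le 2/\sqrt{13}$; hence the compatibility degree of this tuple equals $2/\sqrt{13}$, which is the asserted upper bound.

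The main obstacle is the computation inside Proposition~\ref{prop:extreme-points-four-lines}: one has to pin down the correct level-$2$ extreme point of the matrix cube in $4$ variables --- it is matrix extreme but not free extreme, so the simpler free-extreme description of boundary points does not locate it --- and then carry the resulting eigenvalue computation far enough to see that the enclosing-ball value is exactly $2/\sqrt{13}$, and not merely some algebraic number. A secondary difficulty is the bookkeeping of unitary and sign freedom needed to recognize the extremal data as precisely the computational-basis measurement together with three equiangular coplanar lines, and to verify that the bound is attained at this tuple rather than only majorized by it.
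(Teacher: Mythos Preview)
Your high-level route matches the paper's: invoke Proposition~\ref{prop:extreme-points-four-lines} to get the lower bound $\sqrt{13}/2$ on the optimization problem, translate via Theorem~\ref{thm:dual-free-polytopes} (equivalently Theorem~\ref{thm:lambda-max-formulation}) to obtain $s_\C(2,4)\le 2/\sqrt{13}$, and read off the measurements from the explicit optimizers. The paper's proof is literally one sentence doing exactly this.

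However, your elaboration of what happens inside Proposition~\ref{prop:extreme-points-four-lines} is incorrect in several places. First, the explicit tuple $X=(X_1,\dots,X_4)$ used there satisfies $X_i^2=I_2$ for every $i$; by the remark immediately following that proof (and by \cite[Proposition~7.1]{evert2018extreme}), for the matrix cube an irreducible tuple is matrix extreme if and only if it is free extreme if and only if each $X_i^2=I$. So the optimizer \emph{is} free extreme, contrary to your claim that ``it is matrix extreme but not free extreme, so the simpler free-extreme description of boundary points does not locate it.'' Second, Proposition~\ref{prop:MatrixCartesianEuclidean} plays no role here; that construction produces matrix-but-not-free-extreme points in Cartesian products with a genuine simplex factor, which is not the situation for a product of intervals. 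Third, $\MinBallSDP{\gamma}{X}$ is the feasibility SDP from equation~\eqref{eq:XMinBallSDP} used in the $k$-simplex analysis, not the optimization problem of Theorem~\ref{thm:MaxEigIsMinBallInclusionConst}; and the value achieved at the explicit $(X,Y)$ is $\sqrt{13}/2$, not $2/\sqrt{13}$. The actual content of Proposition~\ref{prop:extreme-points-four-lines} is just: write down $X\in\cD_A(2)$ and $Y\in\cD_A^\square(2)$ and compute $\lambda_{\max}\big(\sum_i X_i\otimes Y_i\big)=\sqrt{13}/2$.
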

We conjecture this bound to be optimal, i.e., that the measurements we found are among the most incompatible ones.
\begin{conj*}
    The minimum compatibility degree of $4$ dichotomic qubit measurements is
    \begin{equation*}
        s_\C(2,4, (2,2,2,2)) = \frac{2}{\sqrt{13}} \, .
    \end{equation*}
\end{conj*}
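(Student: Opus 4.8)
The plan is to obtain the upper bound on $s_B(2)$ from the max-eigenvalue optimization problem of Theorem~\ref{thm:MaxEigIsMinBallInclusionConst} and then to exhibit an explicit element of the associated dual free polytope at which the resulting bound is $\frac{2}{\sqrt{13}}$.

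First I would apply Theorem~\ref{thm:MaxEigIsMinBallInclusionConst} to the free polytope $\mathcal D_B$. Since $\mathcal D_B(1)$ is the $\ell_1$-unit ball in $\R^4$, its polar dual is the cube $[-1,1]^4$, and hence the dual free polytope in Theorem~\ref{thm:MaxEigIsMinBallInclusionConst} is the matrix cube in four variables, i.e.\ the Cartesian product of four copies of the matrix interval (the free simplex in one variable). Theorem~\ref{thm:MaxEigIsMinBallInclusionConst} then bounds $s_B(2)$ above by a largest-eigenvalue quantity computed over tuples in level two of this matrix cube: it suffices to produce one tuple $X^\star$ in level two of the matrix cube for which this quantity attains the value $\frac{\sqrt{13}}{2}$, since then $s_B(2)\le\frac{2}{\sqrt{13}}$, which is all the statement asserts.

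To locate the right $X^\star$ I would use the extreme point analysis of Section~\ref{sec:CartesianExtreme}. The optimized objective is convex, so its extremum over level two of the matrix cube is attained at an extreme point; and since the matrix cube is a Cartesian product of matrix intervals, the kernel-based classification of Section~\ref{sec:CartesianExtreme} pins such points down — each coordinate must be a $2\times 2$ self-adjoint matrix with spectrum $\{+1,-1\}$, a traceless Bloch reflection, and the tuple must be irreducible. Guided by this together with numerics, the candidate is
\[
X^\star=\bigl(\sigma_z,\,R_1,\,R_2,\,R_3\bigr),\qquad R_j=\cos\theta_j\,\sigma_x+\sin\theta_j\,\sigma_y,\quad \theta_j=\tfrac{2\pi(j-1)}{3},
\]
that is, a computational basis measurement together with three reflections whose Bloch vectors are equiangular lines in the plane spanned by $\sigma_x$ and $\sigma_y$. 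Note that for the asserted upper bound it is enough that $X^\star$ lie in level two of the matrix cube, which is immediate; its extremality — which we check via the criterion of Section~\ref{sec:CartesianExtreme}, with irreducibility clear since the four Bloch vectors span $\R^3$ and hence only scalars commute with all coordinates — is what makes it a candidate optimizer and underlies the conjectured sharpness.

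It then remains to evaluate the largest-eigenvalue quantity at $X^\star$, and here I expect the main obstacle: one must unwind the optimization problem of Theorem~\ref{thm:MaxEigIsMinBallInclusionConst} — whose objective may itself carry an inner optimization over the matrix diamond $\mathcal D_B$ — into an explicit symmetric pencil at the configuration $X^\star$, use the $\mathbb Z_3$-symmetry of the equiangular arrangement to block-diagonalise it, and verify that the resulting top eigenvalue is exactly $\sqrt{1+(3/2)^2}=\frac{\sqrt{13}}{2}$, the two contributions arising from the $\sigma_z$ coordinate and from the tight-frame constant $3/2$ of the three equiangular Bloch vectors. Taking reciprocals then gives $s_B(2)\le\frac{2}{\sqrt{13}}$. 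Optimizing the same quantity over Bloch configurations of this type confirms that the equiangular placement is the minimizer, which is the content of the conjectured equality; but the upper bound in the statement needs only the single evaluation above, with the Cartesian-product extreme point machinery of Section~\ref{sec:CartesianExtreme} providing the crucial hint of where to look.
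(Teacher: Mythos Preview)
The statement you are addressing is a \emph{conjecture} asserting the equality $s_\C(2,4)=\tfrac{2}{\sqrt{13}}$, and the paper does not prove it. Your proposal only targets the inequality $s_\C(2,4)\le\tfrac{2}{\sqrt{13}}$: you exhibit a specific tuple $X^\star$ in level~$2$ of the matrix cube and argue that the max-eigenvalue objective of Theorem~\ref{thm:MaxEigIsMinBallInclusionConst} evaluates to $\tfrac{\sqrt{13}}{2}$ there. That is precisely the content of the paper's Theorem~\ref{thm:results-for-four-qubits} (via Proposition~\ref{prop:extreme-points-four-lines}), not of the conjecture. The reverse inequality would require showing that \emph{every} matrix extreme point of the matrix cube at level~$2$ lies in $\tfrac{\sqrt{13}}{2}\,\cWmin(\cD_B(1))$, i.e.\ that no tuple in $\cD_B^\square(2)$ does better than $X^\star$. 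Your final paragraph's remark that ``optimizing the same quantity over Bloch configurations of this type confirms that the equiangular placement is the minimizer'' does not do this: restricting to configurations ``of this type'' is not the full search space, and in any case you give no argument for optimality even within that restricted class. The paper itself supplies only numerical evidence for the lower bound (Section~\ref{sec:four-qubits-numerics}, via the Lasserre hierarchy applied to the Bloch-vector reformulation \eqref{eq:g-qubits-Bloch-simpler}); it explicitly leaves the equality open.

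For the upper bound itself, your approach matches the paper's, though you leave the ``inner optimization over the matrix diamond $\cD_B$'' as an expected obstacle. The paper dispatches it directly: Proposition~\ref{prop:extreme-points-four-lines} writes down an explicit companion tuple $Y\in\cD_B(2)$ alongside your $X^\star$ (with $Y_1$ proportional to $\sigma_z$ and $Y_2,Y_3,Y_4$ proportional to the same three equiangular $R_j$) and simply computes $\lambda_{\max}\bigl(\sum X_i\otimes Y_i\bigr)=\tfrac{\sqrt{13}}{2}$. No block-diagonalisation or symmetry reduction is needed once both $X$ and $Y$ are in hand.
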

We support this conjecture by numerical lower bounds on the corresponding non-commutative polynomial optimization problem, obtained using the Lasserre hierarchy (see Section \ref{sec:four-qubits-numerics}).

In summary, our results demonstrate the usefulness of characterizing free extreme points for non-commutative polynomial optimization problems arising from quantum information theory.

\section{Preliminaries} \label{sec:preliminaries}

\subsection{Notation}
For simplicity, we will write $[n] := \{1, \ldots, n\}$. For $m$, $n \in \mathbb N$, the set of $n \times m$ matrices with entries from the field $\mathbb F = \mathbb {R}$ or $\mathbb{C}$ will be denoted by $M_{n, m}(\mathbb F)$ or $M_{n}(\mathbb F)$ if $m = n$. We will denote the symmetric $n \times n$ matrices by $SM_n(\mathbb R)$ and the Hermitian $n \times n$ matrices by $SM_n(\mathbb C)$. It will later be useful to write $M(\mathbb F) = \cup_{n \in \mathbb N} M_n(\mathbb F)$ and $SM(\mathbb F) = \cup_{n \in \mathbb N} SM_n(\mathbb F)$. Elements $X$, $Y \in M_n(\mathbb F)^g$  are unitarily equivalent if there exists a unitary $U \in M_n(\mathbb F)$ such that $(U A_1 U^\ast, \ldots, U A_g U^\ast) = (B_1, \ldots, B_g)$. We say that $A \in M(\mathbb F)^g$ is reducible over $\mathbb F$ if there exists $B$, $C \in M(\mathbb F)^g$ such that $(A_i)_{i \in [g]}$ is unitarily equivalent to $(B_i \oplus C_i)_{i \in [g]}$. Otherwise, $A$ is called irreducible over $\mathbb F$.

By writing $A \succeq 0$ for $A \in M_{n}(\mathbb F)$, we will mean that $A$ is positive semi-definite. For the identity matrix in dimension $n$, we will write $I_n$, where we will sometimes drop the index $n$. The Pauli matrices are
\begin{equation*}
    \sigma_X = \begin{pmatrix}
        0&1\\1&0
    \end{pmatrix}\,, \quad \sigma_Y = \begin{pmatrix}
        0&-i\\i&0
    \end{pmatrix}\,, \quad
\sigma_Z = \begin{pmatrix}
        1&0\\0&-1
    \end{pmatrix}\,.
    \end{equation*}
For a vector $x \in \mathbb F^n$, $\operatorname{diag}(x) \in M_n(\mathbb F)$ will be the diagonal matrix with $x$ on the diagonal. We write $\lambda_{\max}(A)$ for the largest eigenvalue of $A$.

\subsection{Matrix convex sets}

Matrix convex sets are special kinds of \textbf{graded sets} of the form $(S(n))_{n \in \mathbb N} \subseteq SM(\mathbb F)^g$ with $S(n) \subseteq SM_n(\mathbb F)^g$ for all $n \in \mathbb N$. To define them we first need a suitable generalization of the convex hull, see \cite{evert2018extreme,passer2018minimal}:
\begin{defi}[Matrix convex hull]
    Let $(S(n))_{n \in \mathbb N} \subseteq SM(\mathbb F)^g$ be a graded set. For all $i \in [s]$ and some $s \in \mathbb N$, let $X^{(i)} := (X^{(i)}_1, \ldots,  X^{(i)}_g) \in S_{n_i}$, i.e., $X_j^{(i)} \in SM_{n_i}(\mathbb F)$. Then, a \textbf{matrix convex combination} of $X^{(1)}, \ldots, X^{(s)}$ is an expression of the form 
    \begin{equation}\label{eq:matrix-convex-combination}
        \sum_{i\in [s]} V_i^\ast X^{(i)} V_i\, ,
    \end{equation}
    where $V_i \in M_{n_i, n}(\mathbb F)$ are matrices such that $\sum_{i \in [s]} V_i^\ast V_i = I_n$ and $V_i^\ast X^{(i)} V_i:= (V_i^\ast X^{(i)}_j V_i)_{j \in [g]}$. The \textbf{matrix convex hull} $\matco(S)$ of a graded set $S$ is the set of all matrix convex combinations of elements of $S$.
\end{defi}
With this, we can define matrix convex sets:
\begin{defi}
    A graded set $(S(n))_{n \in \mathbb N} \subseteq SM(\mathbb F)^g$ is \textbf{matrix convex} if it is closed under matrix convex combinations. 
\end{defi}
In particular, $\matco(S)$ is the smallest matrix convex set containing $S$. Alternatively, $S$ is a matrix convex set if and only if it is closed under direct sums and conjugations by isometries. A matrix convex set $S$ is called compact if and only if $S(n)$ is compact for all $n \in \mathbb N$. 

\subsubsection{Matrix convex sets closed under complex conjugation}
Say a matrix convex set $K \subset SM(\C)^g$ is \textbf{closed under complex conjugation} if for all $X \in K$, one has that $\overline{X} = (\overline{X}_1,\dots,\overline{X}_g) \in K$. This is equivalent to the assumption that $K$ is closed under transposition, i.e., $X \in K$ if and only if $X^T = (X_1^T,\dots,X_g^T) \in K$. The property of being closed under complex conjugation plays an important role in matrix convexity and leads to a cleaner theory. For example, as we shall see in Section \ref{sec:extreme-points}, closure under complex conjugation yields a cleaner theory of extreme points. 

Given any matrix convex set $K^\C$, one naturally obtains a real matrix convex set $(K^\C)_r$ by setting $(K^\C)_r: = K^\C \cap SM(\R)^g$. On the other hand, given a real matrix convex set $K^\R \subset SM(\R)^g$, the complexification of $K^\R$ is defined by
\[
(K^\R)_c:= \left\{X+iY \in SM(\C)^g: \ X,Y \in M(\R)^g \ \ \mathrm{and} \ \begin{pmatrix}
    X & -Y \\
    Y & X
\end{pmatrix} \in K^\R\right\}.
\]
As a consequence of \cite[Theorem 3.1]{blecher2025real}, $(K^\R)_c$ is a complex matrix convex set. Furthermore, $(K^\R)_c$ is equal to the complex matrix convex hull of $K^\R$. It is easy to see that $(K^\R)_c$ is closed under complex conjugation. 

\begin{lem}
\label{lem:RealPartComplexified}
    Let $K^\C \subset SM(\C)^g$ be a complex matrix convex set. Then $Z \in ((K^\C)_r)_c$ if and only if $Z,\overline{Z} \in K^\C$. As an immediate consequence, $((K^\C)_r)_c = K^\C \cap \overline{K^\C}$. Moreover, $((K^\C)_r)_c =K^\C$ 
    if and only if $K^\C$ is closed under complex conjugation. 
\end{lem}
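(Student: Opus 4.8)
The plan is to unfold the two definitions and match them. Recall $(K^\C)_r = K^\C \cap SM(\R)^g$, and the complexification of a real matrix convex set sends $X + iY$ (with $X,Y \in M(\R)^g$) into the set precisely when $\left(\begin{smallmatrix} X & -Y \\ Y & X \end{smallmatrix}\right) \in K^\C$. So $Z = X+iY \in ((K^\C)_r)_c$ iff $\left(\begin{smallmatrix} X & -Y \\ Y & X \end{smallmatrix}\right) \in (K^\C)_r$, i.e., iff that real block matrix tuple lies in $K^\C$. The key observation is that the unitary $U = \tfrac{1}{\sqrt 2}\left(\begin{smallmatrix} I & iI \\ iI & I \end{smallmatrix}\right)$ diagonalizes this block structure: a short computation gives $U^\ast \left(\begin{smallmatrix} X & -Y \\ Y & X \end{smallmatrix}\right) U = \left(\begin{smallmatrix} X+iY & 0 \\ 0 & X-iY \end{smallmatrix}\right) = Z \oplus \overline{Z}$. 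Since $K^\C$ is matrix convex, it is closed under conjugation by the (unitary, hence invertible isometry) $U$ in both directions, so $\left(\begin{smallmatrix} X & -Y \\ Y & X \end{smallmatrix}\right) \in K^\C$ iff $Z \oplus \overline{Z} \in K^\C$. Finally, a matrix convex set is closed under direct sums and under compression to isometrically embedded summands, so $Z \oplus \overline{Z} \in K^\C$ iff both $Z \in K^\C$ and $\overline{Z} \in K^\C$. This proves the first assertion.

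From the equivalence just established, $Z \in ((K^\C)_r)_c$ iff $Z \in K^\C$ and $\overline{Z} \in K^\C$; but $\overline{Z} \in K^\C$ is the same as $Z \in \overline{K^\C}$, so $((K^\C)_r)_c = K^\C \cap \overline{K^\C}$, giving the second assertion. For the last claim: if $K^\C$ is closed under complex conjugation then $\overline{K^\C} = K^\C$, so $K^\C \cap \overline{K^\C} = K^\C$; conversely, if $((K^\C)_r)_c = K^\C$, then $K^\C = K^\C \cap \overline{K^\C} \subseteq \overline{K^\C}$, and applying the same inclusion to $\overline{K^\C}$ (or simply taking conjugates) gives $\overline{K^\C} \subseteq K^\C$, so $K^\C = \overline{K^\C}$, i.e., $K^\C$ is closed under complex conjugation.

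The only genuinely substantive step is the block-diagonalization identity $U^\ast \left(\begin{smallmatrix} X & -Y \\ Y & X \end{smallmatrix}\right) U = Z \oplus \overline{Z}$ together with the remark that conjugation by a fixed unitary and passage to/from direct summands are legitimate operations inside a matrix convex set; everything else is bookkeeping. I do not expect any real obstacle here — one should just be careful that the operations "compress to a summand" and "conjugate by $U$" are applied in the correct direction (matrix convex sets are closed under isometric conjugation, and $U$ and the coordinate inclusions are isometries), and that $M(\R)^g$ versus $SM(\R)^g$ is handled consistently, noting $Z = X+iY$ Hermitian forces the block matrix to be Hermitian automatically.
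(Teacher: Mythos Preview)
Your proposal is correct and follows essentially the same approach as the paper's proof: unfold the definitions, observe that the real block matrix $W$ is unitarily equivalent to $Z \oplus \overline{Z}$, and use closure of matrix convex sets under unitary conjugation and direct sums/compressions. One tiny slip: with your specific $U$ the computation actually yields $\overline{Z}\oplus Z$ rather than $Z\oplus\overline{Z}$, but this is immaterial to the argument.
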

\begin{proof}
Let $Z = X+iY \in K^\C$ where $X$ and $Y$ are tuples of real matrices. Then by definition $Z \in ((K^\C)_r)_c$ if and only if 
\[
W:= \begin{pmatrix}
    X & -Y \\
    Y & X
\end{pmatrix} \in (K^\C)_r. 
\]
As $(K^\C)_r = K^\C \cap SM(\R)^g$, this 
is in turn equivalent to $W \in K^\C$. However, $W$ is unitarily equivalent to $Z \oplus \overline{Z}$, so it follows that $Z \in ((K^\C)_r)_c$ if and only if $Z,\overline{Z} \in K^\C$. 
\end{proof}

Throughout the article, all of the matrix convex sets considered will be closed under complex conjugation. Thus, Lemma \ref{lem:RealPartComplexified} allows us to naturally pass between the real part and the complexification. Under the assumption of closure under complex conjugation, much of the theory easily transfers from the complex to the real setting, or vice versa, via arguments along these lines. However, some aspects of the theory break down when passing between reals and complexes. See \cite{blecher2025real,blecher2025realII,evert2025matrix} for further discussion on the relationship between real and complex matrix convex sets. 

\subsubsection{Free spectrahedra and their inclusion constants} 
A special class of matrix convex sets that will be relevant for this article are the \textbf{free spectrahedra}. They are matricial relaxations of ordinary spectrahedra that are solution sets to LMIs that appear, for example, in optimization theory \cite{helton_matricial_2013}. 
\begin{defi}
    Let $g$, $d \in \mathbb N$,  $A \in SM_d(\mathbb F)^g$. Then, the \textbf{$n$-th level of the free spectrahedron defined by $A$} is defined as 
    \begin{equation*}
        \mathcal D^{\mathbb F}_A(n) := \left\{X \in SM_n(\mathbb F)^g : \sum_{i \in [g]} A_i \otimes X_i \preceq I_{nd} \right\} \, .
    \end{equation*}
    The \textbf{free spectrahedron defined by $A$} is the disjoint union of all levels, i.e.,
    \begin{equation*}
        \mathcal D^{\mathbb F}_A := \bigsqcup_{n \in \mathbb N} \mathcal D^{\mathbb F}_A(n) \, .
    \end{equation*}
    To keep notation light, we will drop the superscript ${\mathbb F}$ if the field is unimportant or clear from the context.
 \end{defi}
Thus, $\mathcal D_A(1)$ is the spectrahedron defined by $A$. If all $A_i$ are diagonal in the same basis, $\mathcal D_A$ is a \textbf{free polytope}. Furthermore, if $\mathcal{D}_A$ is a bounded free polytope and $d=g+1$, then $\mathcal{D}_A$ is a \textbf{free simplex}. We note that the defining pencil of a free polytope can be taken to be real. Therefore, every free polytope is closed under complex conjugation. That is, if $\cD_A$ is a free polytope and $X \in \cD_A$, then $\overline{X} \in \cD_A$. 

Of particular interest in this article are Cartesian products and direct sums of free spectrahedra. Given free spectrahedra $\cD_A \subset SM(\mathbb{F})^g$ and $\cD_B \subset SM(\mathbb{F})^h$, the $n$-th level of the Cartesian product of $\cD_A$ and $\cD_B$ is defined as
\[
(\cD_A \times \cD_B) (n):= \{ (X,Y) \in SM_n (\mathbb{F})^{g+h} : \ X \in \cD_A (n) \ \mathrm{and} \ Y \in \cD_B (n)\}.
\]
We then have $\cD_A \times \cD_B = \cup_n (\cD_A \times \cD_B) (n)$. As one may expect, the Cartesian product of free spectrahedra is a free spectrahedron, see Lemma \ref{lem:CartesianIsFreeSpec}. We define the direct sum of $\cD_A$ and $\cD_B$ to be the free spectrahedron $\cD_{(A \otimes I, I \otimes B)}$. In the case that $\cD_A$ and $\cD_B$ are free polytopes, the Cartesian product and direct sum of $\cD_A$ and $\cD_B$ are related to each other in a natural dual sense, see Proposition \ref{prop:CartesianPolyDualIsDirect}.

We frequently use the notation
\[
\Lambda_A (X) := \sum_{i \in [g]} A_i \otimes X_i \qquad \mathrm{and} \qquad L_A(X) := I-\Lambda_A(X).
\]
With this notation, $\mathcal{D}_A = \{X \in SM(\mathbb F)^g: L_A (X) \succeq 0\}$. We say $A$ is a \textbf{minimal defining tuple} for $\cD_A$ if the dimension $d$ of $A$ is as small as possible. That is, $A \in SM_d(\mathbb F)^g$  is minimal if and only if $B \in SM_{d_1}(\mathbb F)^g$ and $\cD_A = \cD_B$ implies $d \leq d_1$. See \cite{helton_matricial_2013,zalar2017operator} for an in-depth discussion of minimal defining tuples. 

Sometimes, it will be useful to look at a homogeneous version of a free spectrahedron:

\begin{defi}
    Let $g$, $d \in \mathbb N$,  $A \in SM_d(\mathbb F)^g$. Then, the \textbf{$n$-th level of the homogeneous free spectrahedron defined by $A$} is defined as 
    \begin{equation*}
        \mathcal H_{(I,A)}(n) := \left\{X \in SM_n(\mathbb F)^{g+1} : I_d \otimes X_0 + \Lambda_A (X) \succeq 0 \right\} \, .
    \end{equation*}
    The \textbf{homogeneous free spectrahedron defined by $A$} is the disjoint union of all levels, i.e.,
    \begin{equation*}
        \mathcal H_{(I,A)} := \bigsqcup_{n \in \mathbb N} \mathcal H_{(I,A)}(n) \, .
    \end{equation*}
\end{defi}
Note that due to the sign conventions used above, the free spectrahedron $\mathcal{D}_A$ is naturally associated to the homogeneous free spectrahedron $\cH_{(I,-A)}$.

Before we proceed, we give some examples of free spectrahedra.

\begin{ex}[{Matrix diamond \cite{davidson2016dilations}}] \label{ex:MatrixDiamond}
    The \textbf{matrix diamond} is the disjoint union $\mathcal D^{\mathbb F}_{\diamond, g} := \bigsqcup_{n \in \mathbb N} \mathcal D^{\mathbb F}_{\diamond, g}(n)$, where
\begin{equation*}
    \mathcal D^{\mathbb F}_{\diamond, g}(n) := \left\{X \in SM_n(\mathbb F)^g:\sum_{i = 1}^g \epsilon_i X_i \preceq I_n~\forall \epsilon \in \{\pm 1\}^g\right\}.
\end{equation*}
Note that we could have written $\mathcal D^{\mathbb F}_{\diamond, g}$ as $\mathcal D_A^{\mathbb F}$ with $A_i = \left(\bigotimes_{j \in [i-1]} I_2\right) \otimes \operatorname{diag}(+1, -1) \otimes \left(\bigotimes_{j = i+1}^g I_2 \right)$.
\end{ex}

\begin{ex}[Matrix jewel \cite{bluhm2020compatibility}]
\label{ex:MatrixJewel}
    The \textbf{matrix jewel} generalizes the matrix diamond. It is defined as $\mathcal D^{\mathbb F}_{\jewel, \mathbf k} := \bigsqcup_{n \in \mathbb N} \mathcal D^{\mathbb F}_{\jewel, \mathbf k}(n)$, where $\mathbf k = (k_1, \ldots, k_g) \in \mathbb N^g$ and 
\begin{align*}
   & \mathcal D^{\mathbb F}_{\jewel, \mathbf k}(n) := \\ &\left\{X \in SM_n(\mathbb F)^{(\sum_i (k_i-1) )}: \sum_{j = 1}^g \sum_{i = 1}^{k_j-1} \left[ \left(\bigotimes_{\ell \in [j-1]} I_{k_\ell} \right) \otimes \operatorname{diag}(v_i^{(k_j)}) \otimes \left(\bigotimes_{\ell = j+1}^g I_{k_{\ell}}\right) \right] \otimes X_{ij} \preceq I_{(k_1 \cdot \ldots \cdot k_g)n}\right\}\, .
\end{align*}
The vectors $v_i^{(k)} \in \mathbb R^{k}$ are defined for $i \in [k -1]$ as 
\begin{equation*}
    v^{(k)}_i(\ell) := - \frac{2}{k} + 2\delta_{\ell, i} \qquad \forall \ell \in [k]\, , 
\end{equation*}
where $\delta_{i,j}$ is the Kronecker delta. For $k = 2$, we get $v_1 = (1, -1)$. Therefore,
\begin{equation*}
    \mathcal D^{\mathbb F}_{\jewel, 2^{\times g}} = \mathcal D^{\mathbb F}_{\diamond, g}.
\end{equation*}
In general, $\mathcal D^{\mathbb F}_{\jewel, \mathbf{k}}$ is a direct sum of free simplices.
\end{ex}

\subsubsection{Inclusion constants}
For two free spectrahedra $\mathcal D_A$ and $\mathcal D_B$, we define the inclusion $\mathcal D_A \subseteq \mathcal D_B$ to mean $\mathcal D_A(n) \subseteq \mathcal D_B(n)$ for all $n \in \mathbb N$. Obviously, the implication $\mathcal D_A \subseteq \mathcal D_B$ thus implies $\mathcal D_A(1) \subseteq \mathcal D_B(1)$. The reverse implication is usually not true, but we can make it true by shrinking $\mathcal D_A$ sufficiently. Therefore, we define the set of inclusion constants as follows:
\begin{defi}[Inclusion constants]
Let $g \in \mathbb N$ and $A \in SM(\mathbb F)^g$, Then, the set of \textbf{inclusion constants} is defined as
    \begin{equation*}
    \Delta^{\mathbb F}_A(m) := \{s \in [0,1]: \mathcal D_A(1) \subseteq \mathcal D_B(1) \implies  s \cdot \mathcal D_A \subseteq \mathcal D_B ~\forall B \in SM_m(\mathbb F)^g\}.
\end{equation*}
In particular, we are interested in the largest element of this set, i.e., 
\begin{equation*}
    s^{\mathbb F}_A(m) := \sup \{s \in \Delta_A^{\mathbb F}(m)\},
\end{equation*}
which we will call the \textbf{maximal inclusion constant}. We will again drop the superscript ${\mathbb F}$ if the field is unimportant or clear from the context.
\end{defi}
For more about inclusion constants, we refer the reader to \cite{helton2019dilations}. 

For fixed $B \in \mathcal SM_D(\mathbb F)^g$ and $A \in SM_d(\mathbb F)^g$ such that $\mathcal D_A(1)$ is bounded, whether $\mathcal D_A \subseteq \mathcal D_B$ holds can be checked with a semidefinite program \cite{helton_matricial_2013}. In fact, it is sufficient to consider the level $D$ inclusion $\mathcal D_A(D) \subseteq \mathcal D_B(D)$, not all levels. We need to check whether the following SDP is feasible:
\begin{align} \label{eq:inclusion-free-spectra-SDP}
    &C:=(C_{pq})^d_{p,q =1} \succeq 0\, , \nonumber\\
    &\sum_{p = 1}^d C_{pp} = I_D\, , \\
    &\sum_{p,q=1}^d (A_i)_{pq} C_{pq} = B_i \qquad \forall i \in [g]\, ,\nonumber\\
    &C_{pq} \in SM_D(\mathbb F) \qquad \forall p, q \in [d] \, .\nonumber
\end{align}
The SDP amounts to checking whether there is a unital completely positive map which maps $A_i \mapsto B_i$. Here, $C$ is the Choi matrix corresponding to this completely positive map. The first line checks complete positivity, the second unitality, and the third whether indeed  $A_i \mapsto B_i$. One of the main results of \cite{helton_matricial_2013} is that inclusion of free spectrahedra is in one-to-one correspondence with the existence of such completely positive maps. We refer the reader to \cite{convexPsatz}\footnote{The paper \cite{helton_matricial_2013} was written and posted to arXiv \url{https://arxiv.org/abs/1003.0908} earlier than \cite{convexPsatz} \url{https://arxiv.org/abs/1102.4859} but got published later.} for an extension to unbounded $\cD_A$.

\subsubsection{Minimal and maximal matrix convex sets} Given a closed convex set $\mathcal{C} \subset \R^g$, there typically exist many matrix convex sets $K \subset SM(\mathbb F)^g$ such that $K(1) = \mathcal{C}$. Among these, the two most notable ones are the {\bf minimal} and {\bf maximal} matrix convex sets associated to $\mathcal{C}$. These are respectively the smallest and largest matrix convex sets whose first level is equal to $\mathcal{C}$. 

More precisely, the {\bf minimal matrix convex set} associated to $\mathcal{C}$, denoted $\cWmin (\mathcal{C})$, is simply the matrix convex hull of $\mathcal{C}$. That is,
\[
\cWmin_{\mathbb F}(\mathcal{C}) := \matco(\mathcal{C}).
\]
On the other hand, the {\bf maximal matrix convex set} associated to $\mathcal{C}$, denoted $\cWmax_{\mathbb F} (\mathcal{C})$ is the set of all matrix tuples $X$ that satisfy the linear inequalities satisfied by $\mathcal{C}$. That is,
\[
\cWmax_{\mathbb F}(\mathcal{C}) := \{ X \in SM (\mathbb F)^g : \Lambda_a(X) \preceq a_0 I \mathrm{ \ whenever \ } \Lambda_a(x) \leq a_0 \mathrm{\ for \ all \ } x \in \mathcal{C}\}. 
\]
In the case where $0$ is in the interior of $\mathcal{C}$, one sees that $a_0$ must be strictly positive, hence the above definition simplifies to
\[
\cWmax_{\mathbb F}(\mathcal{C}) := \{ X \in SM (\mathbb F)^g : L_a(X) \succeq 0 \mathrm{ \ whenever \ } L_a(x) \succeq 0 \mathrm{\ for \ all \ } x \in \mathcal{C}\}.
\]
As a consequence, if $\cD_A$ if a free polytope, then $\cWmax_{\mathbb F}(\cD^{\mathbb F}_A(1))= \cD^{\mathbb F}_A$. Again, we will most often drop the $\mathbb F$ for readability. It is straightforward to check that minimal and maximal matrix convex sets over a convex set $\mathcal{C} \subset \R^g$ are closed under complex conjugation.

Finally, we can also define inclusion constants for polytopes, namely as how much you have to shrink the maximal matrix convex set to make it fit inside the minimal matrix convex set.
\begin{defi}
    Let $\mathcal P \subset \mathbb R^g$. Then, the set of \textbf{inclusion constants} is defined as 
    \begin{equation*}
        \Delta^{\mathbb F}_{\mathcal P}(m) := \{s \in [0,1]: s\cdot \mathcal W_{\mathbb F}^{\min}(\mathcal P)(m) \subseteq \mathcal W_{\mathbb F}^{\max}(\mathcal P)(m)\} \, .
    \end{equation*}
    The \textbf{maximal inclusion constant} is defined as $s^{\mathbb F}_{\mathcal P}(m) := \sup\{s \in \Delta^{\mathbb F}_{\mathcal P}(m)\}$. We will again drop the superscript ${\mathbb F}$ if the field is unimportant or clear from the context.
\end{defi}

See \cite{davidson2016dilations,passer2018minimal} for further discussion of minimal and maximal convex sets. We mention that simplices are of particular note with respect to inclusion constants. Namely, \cite[Theorem 4.1]{passer2018minimal} shows that if $K \subset \mathbb{R}^g$ is a convex set, then $\mathcal W^{\min}( K) = \mathcal W^{\max}( K)$ if and only if $K$ is a simplex. Thus, the maximal inclusion constant for a simplex is $1$. 

\subsubsection{Polar duals}

Given a closed convex set $\mathcal{C}$, we let $\mathcal{C}^\bullet$ denote the (classical) {\bf polar dual} of $\mathcal{C}$. That is 
\[
\mathcal{C}^\bullet = \{y \in \R^g : \langle x,y \rangle \leq 1 \mathrm{\ for \ all \ } x \in \mathcal{C}\}.
\]
Also, for a closed matrix convex set $K$, we let $K^\circ$ denote the {\bf free polar dual} of $K$ \cite{helton2017tracial}, defined by
\[
K^\circ = \{Y \in SM(\mathbb F)^g : \Lambda_Y (X) \preceq I \mathrm{\ for \ all \ } X \in K\} = \cap_{X \in K} \cD_X.
\]
A straightforward check shows that $K(1)^\bullet = (K^\circ) (1)$. Additionally, if $0$ is in the interior of $K$, then $(K^\circ)^\circ = K$. 

From \cite[Theorem 4.7]{davidson2016dilations}, for a closed convex set $\mathcal{C}$, one has
\[
\cWmin(\mathcal{C})^\circ = \cWmax(\mathcal{C}^\bullet).
\]
Furthermore, if $0 \in \mathcal{C}$, then 
\begin{equation*}
    \cWmax(\mathcal{C})^\circ = \cWmin(\mathcal{C}^\bullet)\, .
\end{equation*}
In particular, if $\cD_A$ is a free polytope then we obtain
\begin{equation}
\label{eq:PolytopeDual}
\cD_A^\circ = \cWmin (\cD_A(1)^\bullet).
\end{equation}

Since \cite{davidson2016dilations} is originally proved over the complexes and plays an important role in our theory, we show that the result also holds over the reals. To do this, we will make use of the following lightly modified version of \cite[Proposition 4.3 (5,6)]{helton2017tracial} (see also \cite{Effros1997}).

\begin{lem}
\label{lemma:RealBipolar}
Let $K^\mathbb{F} \subset SM(\mathbb{F})^g$ be a closed matrix convex set. Then $(K^\mathbb{F})^{\circ \circ} = \matco (K^\mathbb{F} \cup \{0\})$
\end{lem}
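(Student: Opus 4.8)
## Proof Plan for Lemma \ref{lemma:RealBipolar}

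The plan is to mimic the standard bipolar argument for matrix convex sets (as in \cite{helton2017tracial, Effros1997}), being careful that every step is valid over $\mathbb{F} = \mathbb{R}$ as well as $\mathbb{F} = \mathbb{C}$. First I would establish the easy inclusion $\matco(K^{\mathbb{F}} \cup \{0\}) \subseteq (K^{\mathbb{F}})^{\circ\circ}$. Since $(K^{\mathbb{F}})^{\circ\circ}$ is, by definition of the free polar dual, a closed matrix convex set containing $0$ (because $\Lambda_0(X) = 0 \preceq I$ for all $X$), and since $K^{\mathbb{F}} \subseteq (K^{\mathbb{F}})^{\circ\circ}$ always holds (if $X \in K^{\mathbb{F}}$ and $Y \in (K^{\mathbb{F}})^\circ$, then $\Lambda_Y(X) \preceq I$, so $\Lambda_X(Y) = \Lambda_Y(X)^{\text{reshuffle}} \preceq I$ — here one uses that $\Lambda_Y(X) = \sum Y_i \otimes X_i$ and $\Lambda_X(Y) = \sum X_i \otimes Y_i$ are unitarily equivalent via the flip), the matrix convex hull of $K^{\mathbb{F}} \cup \{0\}$ is contained in $(K^{\mathbb{F}})^{\circ\circ}$. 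Note closure under complex conjugation is not needed for this direction.

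For the reverse inclusion $(K^{\mathbb{F}})^{\circ\circ} \subseteq \matco(K^{\mathbb{F}} \cup \{0\})$, I would argue by contraposition: suppose $Z \in SM_n(\mathbb{F})^g$ does not lie in the closed matrix convex set $L := \matco(K^{\mathbb{F}} \cup \{0\})$. The key step is a Hahn–Banach / Effros–Winkler type separation theorem for matrix convex sets, which produces a \emph{real} linear matrix inequality separating $Z$ from $L$ at level $n$: there exist $a_0 \in SM_n(\mathbb{F})$ and $a = (a_1,\dots,a_g) \in SM_n(\mathbb{F})^g$ such that $a_0 \otimes I + \sum_i a_i \otimes W_i \succeq 0$ for all $W \in L$ but this fails at $W = Z$. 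Because $0 \in L$, evaluating at $W = 0$ forces $a_0 \succeq 0$; one then normalizes (perturbing slightly if necessary so that $a_0 \succ 0$, then conjugating by $a_0^{-1/2}$) to reduce to the case $a_0 = I_n$, i.e. $L_a(W) = I - \Lambda_a(W) \succeq 0$ for all $W \in L$ while $L_a(Z) \not\succeq 0$. In particular $L_a(X) \succeq 0$ for all $X \in K^{\mathbb{F}}$, which says precisely $a \in (K^{\mathbb{F}})^\circ$ (after the flip identification), and $L_a(Z) \not\succeq 0$ says $Z \notin (K^{\mathbb{F}})^{\circ\circ}$. This proves the contrapositive, hence the lemma. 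The version of the separation theorem over $\mathbb{R}$ that I would invoke is \cite[Proposition 4.3]{helton2017tracial}, whose proof goes through verbatim once one works with the real inner product space $SM_n(\mathbb{R})$; alternatively one can cite the real Effros–Winkler theorem directly.

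The main obstacle I anticipate is the separation step over the reals. Over $\mathbb{C}$ the Effros–Winkler separation theorem and the bipolar theorem are standard, but one must check that the separating functional can be taken to be \emph{Hermitian} (self-adjoint) and that, when $\mathbb{F} = \mathbb{R}$, it can be taken \emph{symmetric real} — this is where closure under complex conjugation of $K^{\mathbb{F}}$ matters in the complex case, and where one uses that $SM_n(\mathbb{R})$ is already a real vector space with the trace inner product in the real case. Concretely, the issue is that a real-linear functional on $SM_n(\mathbb{C})$ separating a point need not a priori be of the form $W \mapsto \operatorname{tr}(a_0 W_0 + \sum a_i W_i)$ with Hermitian $a_i$ unless one symmetrizes; the symmetrization is legitimate because $L$ is conjugation-closed. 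I would therefore state the real separation result as a preliminary observation (or cite \cite{blecher2025real}, which is already used in the excerpt for exactly this kind of real/complex passage) and then run the bipolar argument above uniformly in $\mathbb{F}$.
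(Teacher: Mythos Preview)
Your overall strategy matches the paper's: invoke the Effros--Winkler matricial Hahn--Banach separation theorem, noting that its proof carries over to $\mathbb{R}$. However, there is a genuine gap. You call $L := \matco(K^{\mathbb{F}} \cup \{0\})$ a \emph{closed} matrix convex set and then separate from it, but you never justify that $L$ is closed. The standard bipolar theorem (e.g.\ \cite[Proposition 4.3 (5)]{helton2017tracial}) only gives $(K^{\mathbb{F}})^{\circ\circ} = \overline{\matco(K^{\mathbb{F}} \cup \{0\})}$, with a closure, and the matrix convex hull of a closed set need not be closed in general. The paper singles this out as one of the two modifications needed relative to \cite{helton2017tracial}: since both $K^{\mathbb{F}}$ and $\matco(\{0\})$ are closed matrix convex sets, \cite[Theorem 3.14]{evert2025matrix} guarantees that their matrix convex hull is already closed. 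Without this (or an equivalent argument), your separation step does not go through as stated.

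A secondary point: your worry about needing closure under complex conjugation to get a Hermitian separating pencil is misplaced. The space $SM_n(\mathbb{C})^{g+1}$ is a real vector space on which the trace pairing is a nondegenerate real inner product, so every real-linear functional is automatically of the form $W \mapsto \operatorname{tr}(a_0 W_0 + \sum_i a_i W_i)$ with Hermitian $a_i$; no symmetrization or conjugation-closure hypothesis is required. The lemma holds for arbitrary closed matrix convex sets, and the paper does not assume conjugation closure here. The only genuine issue in passing to $\mathbb{F}=\mathbb{R}$ is verifying that the Effros--Winkler separation theorem itself holds over $\mathbb{R}$, which the paper simply asserts is a routine check.
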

\begin{proof}
    There are only two small differences in our statement compared to \cite[Proposition 4.3 (5)]{helton2017tracial}. The first is that \cite[Proposition 4.3 (5)]{helton2017tracial} works over $\C$, while our result is stated over $\mathbb{F}$. The main tool in the proof of \cite[Proposition 4.3 (5)]{helton2017tracial} is an appeal to the Effros-Winkler matricial Hahn-Banach theorem \cite{Effros1997}, which while proved over $\C$, is easily shown to hold over $\R$, so the choice of $\R$ or $\C$ does not matter.
    
    The second difference is that we do not require the closure of the matrix convex hull. However, since we assume that $K^\mathbb{F}$ is closed and since the matrix convex hull of $\{0\}$ is evidently a closed matrix convex set, \cite[Theorem 3.14]{evert2025matrix} shows that $\matco (K^\mathbb{F} \cup \{0\})$ is closed. Thus, the closure is not needed in our setting.
\end{proof}

Now, as a consequence of \cite[Lemma 6.6, Proposition 6.8]{blecher2025real}, the complex matrix convex set $\cWmax_\C(\mathcal{C})$ is the complexification of the real matrix convex set $\cWmax_\R(\mathcal{C})$, and similarly $\cWmin_\C(\mathcal{C})= (\cWmin_\R(\mathcal{C}))_c$. Using this together with the following lemma shows that \cite[Theorem 4.7]{davidson2016dilations} holds in the real setting.

\begin{lem}
    Let $K_1^\C$ and $K_2^\C$ be closed complex matrix convex sets and assume that $K_2^\C$ is closed under complex conjugation. Let $K_1^\R:= (K_1^\C)_r$ and $K_2^\R:=(K_2^\C)_r$ denote the real parts of $K_1^\C$ and $K_2^\C$, respectively. Then $K_1^\C = (K_2^\C)^\circ$ if and only if $K_1^\R = (K_2^\R)^\circ$. 
\end{lem}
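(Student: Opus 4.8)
The plan is to prove both directions by using the closure-under-complex-conjugation hypothesis on $K_2^\C$ together with Lemma~\ref{lem:RealPartComplexified} and the fact that polar duals are always closed under complex conjugation. The key structural observation is that the real part functor $(\,\cdot\,)_r$ and the complexification functor $(\,\cdot\,)_c$ are mutually inverse on the category of complex matrix convex sets that are closed under complex conjugation, and moreover both of the sets entering the statement, $(K_2^\C)^\circ$ and $K_1^\C$ (once we know $K_1^\C = (K_2^\C)^\circ$), are closed under complex conjugation.

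First I would establish the auxiliary fact that for any closed complex matrix convex set $L^\C$ that is closed under complex conjugation, taking the real part and then complexifying recovers $L^\C$: this is exactly the last assertion of Lemma~\ref{lem:RealPartComplexified}. I would also record that $(K_2^\C)^\circ$ is closed under complex conjugation, which follows directly from the definition of the free polar dual together with the hypothesis that $K_2^\C$ is: if $\Lambda_Y(X) \preceq I$ for all $X \in K_2^\C$, then applying this to $\overline X \in K_2^\C$ and conjugating the inequality gives $\Lambda_{\overline Y}(X) \preceq I$ for all $X \in K_2^\C$. Next I would check the compatibility of the real part functor with polar duality, namely that $\big((K_2^\C)^\circ\big)_r = (K_2^\R)^\circ$, where the polar dual on the right is taken inside $SM(\R)^g$. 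The inclusion "$\subseteq$" is immediate since $K_2^\R \subseteq K_2^\C$. For "$\supseteq$", take $Y \in SM(\R)^g$ with $\Lambda_Y(X) \preceq I$ for all $X \in K_2^\R$; to deduce $\Lambda_Y(X) \preceq I$ for all $X \in K_2^\C$, write an arbitrary $X = X' + iX'' \in K_2^\C$ and use that $K_2^\C$ being closed under complex conjugation forces $\begin{pmatrix} X' & -X'' \\ X'' & X' \end{pmatrix} \in K_2^\R$ (via Lemma~\ref{lem:RealPartComplexified}); applying the real polar-dual inequality to this real tuple and unraveling the $2\times 2$ block structure, exactly as in the proof of Lemma~\ref{lem:RealPartComplexified}, yields $\Lambda_Y(X) \preceq I$ since $Y$ is real.

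With these pieces in place, the two implications are short. If $K_1^\C = (K_2^\C)^\circ$, then taking real parts and using the compatibility just proved gives $K_1^\R = \big((K_2^\C)^\circ\big)_r = (K_2^\R)^\circ$. Conversely, if $K_1^\R = (K_2^\R)^\circ$, then complexify: $(K_1^\R)_c = \big((K_2^\R)^\circ\big)_c$. The left side is $K_1^\C$ by Lemma~\ref{lem:RealPartComplexified} applied to $K_1^\C$ — but this requires knowing $K_1^\C$ is closed under complex conjugation. Here I would note that $(K_2^\R)^\circ$, being a real polar dual, is automatically closed under complex conjugation when complexified (indeed $\big((K_2^\R)^\circ\big)_c$ is closed under complex conjugation for the same reason any complexification is), and its real part is $(K_2^\R)^\circ = K_1^\R$; so $\big((K_2^\R)^\circ\big)_c = (K_1^\R)_c$, and it remains to identify $(K_1^\R)_c$ with $K_1^\C$. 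Since $K_1^\C$ is assumed closed under complex conjugation, Lemma~\ref{lem:RealPartComplexified} gives $((K_1^\C)_r)_c = K_1^\C$, i.e. $(K_1^\R)_c = K_1^\C$. Finally, to see $\big((K_2^\R)^\circ\big)_c = (K_2^\C)^\circ$, one uses again the compatibility $\big((K_2^\C)^\circ\big)_r = (K_2^\R)^\circ$ together with the fact that $(K_2^\C)^\circ$ is closed under complex conjugation, so complexifying its real part returns it. Combining, $K_1^\C = (K_2^\C)^\circ$.

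The main obstacle, and the step deserving the most care, is the compatibility identity $\big((K_2^\C)^\circ\big)_r = (K_2^\R)^\circ$ — specifically the nontrivial inclusion showing that a real tuple in the polar dual of the real part already lies in the polar dual of the full complex set. This is where the closure-under-complex-conjugation hypothesis on $K_2^\C$ is essential and where one must carefully reproduce the block-matrix unitary-equivalence argument from the proof of Lemma~\ref{lem:RealPartComplexified}: the point is that $\Lambda_Y\!\left(\begin{pmatrix} X' & -X'' \\ X'' & X' \end{pmatrix}\right)$ is, for real $Y$, unitarily equivalent to $\Lambda_Y(X) \oplus \Lambda_Y(\overline X)$, so positivity of the $2 \times 2$-block expression is equivalent to $\Lambda_Y(X) \preceq I$. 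Once this is set up cleanly, the rest is a formal diagram-chase through the two functors.
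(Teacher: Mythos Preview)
Your forward direction is correct and is essentially the paper's argument recast as a ``compatibility identity'' $\big((K_2^\C)^\circ\big)_r = (K_2^\R)^\circ$; the block-matrix unitary equivalence you describe is exactly what the paper uses.

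The reverse direction, however, has a genuine gap. You write ``Since $K_1^\C$ is assumed closed under complex conjugation, Lemma~\ref{lem:RealPartComplexified} gives $((K_1^\C)_r)_c = K_1^\C$''. But the hypotheses of the lemma place the closure-under-conjugation assumption only on $K_2^\C$, not on $K_1^\C$. Without it, the step $(K_1^\R)_c = K_1^\C$ fails: there can be closed complex matrix convex sets with the same real part as $(K_2^\C)^\circ$ that are strictly larger than $(K_2^\C)^\circ$ on genuinely complex tuples. Your argument correctly flags this as the point needing justification, but then resolves it by misreading the hypotheses. So as written, your reverse implication does not go through.

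The paper's reverse direction takes a different route: rather than complexifying, it invokes the bipolar theorem (Lemma~\ref{lemma:RealBipolar}) to write $K_1^\C = (K_3^\C)^\circ$ with $K_3^\C := (K_1^\C)^\circ$, applies the already-proved forward direction to this pair, and then compares $K_3^\R$ with $\matco(K_2^\R \cup \{0\})$ to pass back to $K_2^\C$. This avoids directly needing $(K_1^\R)_c = K_1^\C$. If you want to repair your approach, you would need an independent argument that $K_1^\C$ is closed under complex conjugation from the assumption $K_1^\R = (K_2^\R)^\circ$ alone, or else to follow the bipolar route.
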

\begin{proof}
    We first show that $K_1^\C = (K_2^\C)^\circ$ implies $K_1^\R = (K_2^\R)^\circ$. By definition $K_1^\C = (K_2^\C)^\circ$ means $K_1^\C = \cap_{Z \in K_2^\C} \ \cD^\C_Z$. It follows that
    \[
    K_1^\R = \left(\cap_{Z \in K_2^\C} \ \cD^\C_Z\right) \cap SM(\R)^g = \cap_{Z \in K_2^\C} \left(\cD^\C_Z\cap SM(\R)^g \right)  = \cap_{Z \in K_2^\C} \ \cD^\R_Z. 
    \]
    Now, since $K_2^\mathbb{C}$ is closed under complex conjugation, $Z \in K_2^\C$ if and only if $Z \oplus \overline{Z} \in K_2^\C$, so 
    \[
    \cap_{Z \in K_2^\C} \ \cD^\R_Z =  \cap_{Z \in K_2^\C} \ \cD^\R_{Z \oplus \overline{Z}}.
    \]
    Now, writing $Z = X+iY$ with $X,Y$ real, we have that $Z$ is unitarily equivalent to 
    \[
    W = \begin{pmatrix}
        X & -Y \\
        Y & X
    \end{pmatrix}, 
    \]
    It follows that $Z \in K_2^\C$ if and only if $W \in K_2^\R$ and also that $\cD_{Z \oplus \overline{Z}} ^\R = \cD_W^\R$. We now obtain that
    \[
    \cap_{Z \in K_2^\C} \ \cD^\R_{Z \oplus \overline{Z}} = \cap_{W \in K_2^\R} \ \cD_W^\R = (K_2^\R)^\circ, 
    \]
    which completes the forward implication.

     For the reverse implication, assume that $K_1^\R = (K_2^\R)^\circ$. Since $0$ is always an element of the free polar dual, Lemma \ref{lemma:RealBipolar} shows that this is equivalent to $K_1^\R = (\matco(K_2^\R \cup\{0\}))^\circ$. Furthermore, we have $0 \in K_1^\R \subset K_1^\C$, so Lemma \ref{lemma:RealBipolar} shows that $K_1^\C = (K_1^\C)^{\circ\circ}$. Set $K_3^\C := (K_1^\C)^\circ$, so that $K_1^\C = (K_3^\C)^\circ$. By the first part of the proof, $K_1^\R = (K_3^\R)^\circ$. 
    
    Again using Lemma \ref{lemma:RealBipolar}, we obtain $K_3^\R = \matco(K_2^\R \cup\{0\})$. Since the complexification of a matrix convex set can be obtained by taking the matrix convex hull of the set over $\C$, it is now straightforward to check that $K_3^\C = \matco(K_2^\C \cup\{0\})$. It follows that $(K_3^\C)^\circ = (K_2^\C)^\circ,$ hence  $(K_2^\C)^\circ= K_1^\C$. 
\end{proof}

Additionally, if $\cD_A$ is a free polytope, then $\cWmax(\cD_A(1)^\bullet)$ is a free polytope which we call the {\bf dual free polytope} of $\cD_A$. We let $\cD_A^\square$ denote the dual free polytope of $\cD_A$. 

We record a small observation that will be useful later:
\begin{prop} \label{prop:spectrahedron-mcset-equivalent}
    Let $\mathcal D_A \subset SM(\mathbb F)^g$ be a free polytope and $X \in SM(\mathbb F)^g$. Then,
    \begin{enumerate}
        \item $\mathcal D_A(1) \subseteq \mathcal D_X(1) \iff X \in \mathcal W^{\max}(\mathcal D_A(1)^\bullet)=\cD_A^\square$.
        \item $\mathcal D_A \subseteq \mathcal D_X \iff X \in \mathcal W^{\min}(\mathcal D_A(1)^\bullet)$.
    \end{enumerate}
\end{prop}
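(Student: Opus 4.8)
The plan is to prove the two equivalences directly from the definitions of the free polar dual, the maximal matrix convex set, and the characterization $\cWmax_{\mathbb F}(\cD_A(1)) = \cD_A$ for free polytopes recorded earlier in the preliminaries, together with equation \eqref{eq:PolytopeDual}. Both statements say, roughly, that a containment of $\cD_A$ (at level one, respectively at all levels) inside the single-tuple free spectrahedron $\cD_X$ is equivalent to membership of $X$ in a dual object; the content is just unwinding what $\cD_A(1) \subseteq \cD_X(1)$ and $\cD_A \subseteq \cD_X$ mean.

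For part (2), I would first observe that by definition of the free polar dual, $X \in \cD_A^\circ$ if and only if $\Lambda_X(Y) \preceq I$ for all $Y \in \cD_A$, which is precisely the statement $\cD_A \subseteq \cD_X$ (recall $\cD_X = \{Y : \Lambda_X(Y) \preceq I\}$, using that the roles of the defining tuple and the variable tuple in the pencil $\Lambda$ are symmetric in the containment condition; one should note $\Lambda_X(Y) = \sum_i X_i \otimes Y_i$ versus $\Lambda_A(Y) = \sum_i A_i \otimes Y_i$, so there is a harmless swap of tensor legs handled by the flip unitary). Hence $\cD_A \subseteq \cD_X \iff X \in \cD_A^\circ$. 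Now invoke \eqref{eq:PolytopeDual}, which gives $\cD_A^\circ = \cWmin(\cD_A(1)^\bullet)$ for a free polytope $\cD_A$, yielding exactly $\cD_A \subseteq \cD_X \iff X \in \cWmin(\cD_A(1)^\bullet)$, which is assertion (2).

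For part (1), I would argue similarly at level one: $\cD_A(1) \subseteq \cD_X(1)$ means $\langle a, y\rangle \le 1$-type conditions, i.e. $L_X(y) \succeq 0$ (a scalar inequality since $X$ is a tuple of scalars... no — $X$ is a tuple of matrices, so $L_X(y) = I - \sum_i y_i X_i \succeq 0$) for all $y \in \cD_A(1)$. By the definition of $\cWmax$, this is exactly the condition $X \in \cWmax(\cD_A(1))^{?}$ — more precisely, running the defining inequalities of $\cD_A(1)$ through the $\cWmax$ construction applied to the polar dual: $X$ satisfies $L_a(X) \succeq 0$ whenever $L_a(x) \succeq 0$ for all $x$ in $\cD_A(1)^\bullet$ is the same, after dualizing, as $L_X(x) \succeq 0$ for all $x \in \cD_A(1)$. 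The cleanest route is: $\cD_A(1) \subseteq \cD_X(1) \iff X \in (\cD_A(1))^\bullet$-induced maximal set; concretely, use that $\cD_X(1)^\bullet$ relates to the spectrum of $X$ and that $\cD_A(1) \subseteq \cD_X(1) \iff \cD_X(1)^\bullet \subseteq \cD_A(1)^\bullet$, combined with the fact that for any tuple $X$, $X \in \cWmax(\mathcal C)$ iff $\cD_X(1)^\bullet \subseteq \mathcal C$ when $\mathcal C$ is the polar dual — then set $\mathcal C = \cD_A(1)^\bullet$ and recall $\cWmax(\cD_A(1)^\bullet) = \cD_A^\square$ by definition of the dual free polytope. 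I expect the main obstacle (really the only subtlety) to be bookkeeping the two layers of polarity correctly and making sure the scalar-versus-matrix distinction in the pencils is handled, but no genuinely hard estimate is involved; it is a matter of carefully chaining the stated identities $\cWmax_{\mathbb F}(\cD_A(1)) = \cD_A$, \eqref{eq:PolytopeDual}, and the $\cWmin$–$\cWmax$ duality from \cite{davidson2016dilations}.
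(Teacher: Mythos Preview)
Your approach is correct and matches the paper's: part (2) via $\cD_A \subseteq \cD_X \iff X \in \cD_A^\circ$ together with \eqref{eq:PolytopeDual}, and part (1) by unwinding the definition of $\cWmax(\cD_A(1)^\bullet)$. The one place where your write-up meanders is part (1): the clean way to finish, which the paper does and which is hiding in your phrase ``after dualizing'', is to invoke the bipolar theorem $\cD_A(1)^{\bullet\bullet} = \cD_A(1)$ so that the linear inequalities cutting out $\cD_A(1)^\bullet$ are exactly $\langle y,\cdot\rangle \le 1$ for $y \in \cD_A(1)$, whence $\cWmax(\cD_A(1)^\bullet) = \{X : \sum_i y_i X_i \preceq I \text{ for all } y \in \cD_A(1)\}$, which is precisely $\cD_A(1) \subseteq \cD_X(1)$; your detour through $\cD_X(1)^\bullet \subseteq \cD_A(1)^\bullet$ and the appeal to the $\cWmin$--$\cWmax$ duality are unnecessary.
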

\begin{proof}
    Since $\mathcal D_A$ is a free spectrahedron, it is easy to see that the polytope $\mathcal D_A(1)$ is closed and contains $0$. Thus, by the bipolar theorem, $\mathcal D_A(1)^{\bullet \bullet} = \mathcal D_A(1)$. The relation $\mathcal D_A(1) \subseteq \mathcal D_X(1)$ is equivalent to 
    \begin{equation*}
        \sum_{i \in [g]} y_i X_i \leq I \qquad \forall y \in \mathcal D_A(1)\, .
    \end{equation*}
    Using $\mathcal D_A(1)^{\bullet \bullet} = \mathcal D_A(1)$ and $\cD_A(1)^{\bullet \bullet} = \{y \in \mathbb R^g : L_x(y) \succeq 0~ \forall x \in \cD_A(1)^{\bullet}\}$, we can likewise write 
    \begin{equation*}
\cWmax(\mathcal D_A(1)^\bullet) = \{ X \in SM (\mathbb F)^g : L_y(X) \succeq 0 ~\forall y \in \mathcal D_A(1)\}.
    \end{equation*}
This proves the first assertion. For the second assertion, it is enough to realize that $\mathcal D_A \subseteq \mathcal D_X$ is equivalent to $X \in (\mathcal D_A)^\circ$.  The assertion then follows from equation \eqref{eq:PolytopeDual}.
\end{proof}

\subsection{Measurement incompatibility} \label{sec:incompatibility}
The problem from quantum information theory we will be concerned with in this article is the incompatibility of measurements. First, we will need some notation from quantum information theory. We refer the reader to \cite{Heinosaari2011} for an introduction to the mathematical description of quantum systems. For quantum systems of dimension $d \in \mathbb N$, the set of quantum states is given as
\begin{equation*}
    \mathcal S(\mathbb C^d) = \left\{\rho \in M_d(\mathbb C) : \rho \succeq 0, \tr(\rho) = 1\right\}.
\end{equation*}
Thus, quantum states can be thought of as a non-commutative generalization of probability distributions. 

Measurements on these quantum states will be described as follows: A measurement with outcomes in a finite set $\Sigma$ is a tuple of matrices $(E_i )_{i \in \Sigma} \subseteq M_d(\mathbb C)$ such that
\begin{enumerate}
    \item $E_i \succeq 0$ for all $i \in \Sigma$;
    \item $\sum_{i \in \Sigma} E_i = I_d$. 
\end{enumerate}
Sets of matrices that satisfy the above are known as \textbf{positive operator-valued measures (POVMs)}. Without loss of generality, we will set $\Sigma = [n]$ for $n = |\Sigma|$, which can be thought of as a labeling of the outcomes of the measurement.

Now, we consider a set of measurements, i.e.,  $(E_{i|x})_{i \in [k_x]}$ is a POVM for every $x \in [g]$. Here, $g \in \mathbb N$ is the number of measurements and $k_x \in \mathbb N$ is the number of outcomes for the measurement with label $x$. 
\begin{defi}[Compatible measurements]
    Let $g \in \mathbb N$, $d \in \mathbb N$, and $k_x \in \mathbb N$ for all $x \in [g]$. Let $(E_{i|x})_{i \in [k_x]}$, $x \in [g]$ be a collection of $g$ $d$-dimensional POVMs. These measurements are \textbf{compatible} if there exists another $d$-dimensional POVM $(J_{i_1, \ldots, i_g})_{i_1 \in [k_1], \ldots, i_g \in [k_g]}$ such that 
    \begin{equation*}
        E_{i|x} = \sum_{\substack{i_y \in [k_y]\\y \neq x}} J_{i_1, \ldots, i_{x-1}, i, i_{x+1}, \ldots, i_{g}} \qquad \forall i \in [k_x], \forall x \in [g] \, .
    \end{equation*}
    If the measurements are not compatible, they are called \textbf{incompatible}.
\end{defi}
An equivalent way to write this condition is that there exist a finite set $\Lambda$, a POVM $(M_\lambda)_{\lambda \in \Lambda}$ and conditional probability distributions $(p_{i|\lambda, x})_{i \in [k_x]}$ for all $x \in [g]$ such that 
\begin{equation}\label{eq:joint-measurement}
    E_{i|x} = \sum_{\lambda \in \Lambda} p_{i|\lambda, x} M_\lambda \, .
\end{equation}

This can be interpreted as follows: The measurements $(E_{i|x})_i$ are compatible if we can obtain the same outcome statistics by performing a single measurement $(M_\lambda)_\lambda$ instead and randomly assigning its outcomes to the measurements $(E_{i|x})_i$ according to the probability distributions $(p_{i|\lambda, x})_{i}$. We call the latter process \textbf{classical post-processing} because it can be done on a normal computer and does not involve any quantum systems. For a general introduction to measurement incompatibility, we refer the reader to \cite{Heinosaari2016}.

In \cite{bluhm2018joint, bluhm2020compatibility}, it was realized that there measurement incompatibility and matrix convex sets are closely related. A key observation is the following:
\begin{prop} \label{prop:incompatibility-spectrahedra}
    Let $g$, $d \in \mathbb N$ and $k_x \in \mathbb N$ for all $x \in [g]$. Moreover, let $E_{i|x} \in SM_d(\mathbb C)$ for all $i \in [k_x]$, $x \in [g]$. We define $A_{i|x} = 2E_{i|x} - \frac{2}{k_i}I_d$ for all $i \in [k_x-1]$, $x \in [g]$ and $A = (A_{i|x})_{i \in [k_x-1], x \in [g]}$. Then, 
  \begin{enumerate}
      \item $\mathcal D^{\mathbb C}_{\jewel, \mathbf{k}}(1) \subseteq \mathcal D^{\mathbb C}_A(1)$ if and only if $(E_{i|x})_{i \in [k_x]}$ is a POVM for all $x \in [g]$.
      \item $\mathcal D^{\mathbb C}_{\jewel, \mathbf{k}} \subseteq \mathcal D^{\mathbb C}_A$ if and only if $(E_{i|x})_{i \in [k_x]}$ are compatible POVMs for all $x \in [g]$.
  \end{enumerate}
  Equivalently,
  \begin{enumerate}
        \item $A \in \mathcal W_{\mathbb C}^{\max}(\mathcal D^{\mathbb C}_{\jewel, \mathbf{k}}(1)^\bullet) $ if and only if $(E_{i|x})_{i \in [k_x]}$ is a POVM for all $x \in [g]$.
      \item $A \in \mathcal W_{\mathbb C}^{\min}(\mathcal D^{\mathbb C}_{\jewel, \mathbf{k}}(1)^\bullet) $ if and only if $(E_{i|x})_{i \in [k_x]}$ are compatible POVMs for all $x \in [g]$.
  \end{enumerate}
\end{prop}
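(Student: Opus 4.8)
The plan is to prove the two inclusion statements (1) and (2) in their spectrahedral form and then read off the ``Equivalently'' reformulations at once from Proposition~\ref{prop:spectrahedron-mcset-equivalent}: since $\mathcal D_{\jewel,\mathbf k}$ is a free polytope, part~(1) of that proposition (applied with its ``$A$'' the defining pencil of the jewel and its ``$X$'' our $A$) gives $\mathcal D_{\jewel,\mathbf k}(1)\subseteq\mathcal D_A(1)\iff A\in\cWmax(\mathcal D_{\jewel,\mathbf k}(1)^\bullet)$, and part~(2) gives $\mathcal D_{\jewel,\mathbf k}\subseteq\mathcal D_A\iff A\in\cWmin(\mathcal D_{\jewel,\mathbf k}(1)^\bullet)$. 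Here and below $E_{k_x|x}$ is read as the complement $I_d-\sum_{i<k_x}E_{i|x}$, so that ``$(E_{i|x})_{i\in[k_x]}$ is a POVM'' means precisely $E_{i|x}\succeq 0$ for $i\in[k_x-1]$ together with $\sum_{i<k_x}E_{i|x}\preceq I_d$.

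The common starting point is to identify the polar dual. Reading off the diagonal of the jewel pencil shows $\mathcal D_{\jewel,\mathbf k}(1)=\{t:\langle c^{(\vec a)},t\rangle\le 1\ \forall\, \vec a\in\prod_{\ell=1}^g[k_\ell]\}$, where the block of $c^{(\vec a)}$ corresponding to measurement $x$ is $w^{(x)}_{a_x}:=(v^{(k_x)}_i(a_x))_{i\in[k_x-1]}$ and depends only on $a_x$; hence $\{c^{(\vec a)}\}_{\vec a}=\prod_x\{w^{(x)}_1,\dots,w^{(x)}_{k_x}\}$, and since $\tfrac1{k_x}\sum_a w^{(x)}_a=0$ the origin lies in its convex hull. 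The bipolar theorem together with the identity $\operatorname{conv}\bigl(\prod_x S_x\bigr)=\prod_x\operatorname{conv}(S_x)$ for finite sets then yields $\mathcal D_{\jewel,\mathbf k}(1)^\bullet=\prod_{x=1}^g\Delta_x$, with $\Delta_x:=\operatorname{conv}\{w^{(x)}_1,\dots,w^{(x)}_{k_x}\}$ a simplex in $\mathbb R^{k_x-1}$ --- which also re-exhibits $\mathcal D_{\jewel,\mathbf k}$ as a direct sum of free simplices. A short computation in barycentric coordinates gives $y\in\Delta_x\iff y_i\ge-\tfrac2{k_x}$ for all $i\in[k_x-1]$ and $\sum_i y_i\le\tfrac2{k_x}$.

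For (1), $\mathcal D_{\jewel,\mathbf k}(1)\subseteq\mathcal D_A(1)$ means $\Lambda_A(t)\preceq I_d$ for all $t\in\mathcal D_{\jewel,\mathbf k}(1)$, equivalently $\langle\psi,\Lambda_A(t)\psi\rangle\le 1$ for every unit $\psi\in\mathbb C^d$ and every such $t$, equivalently $(\langle\psi,A_{i|x}\psi\rangle)_i\in\mathcal D_{\jewel,\mathbf k}(1)^\bullet$ for every $\psi$, equivalently --- by the product-of-simplices description --- $(\langle\psi,A_{i|x}\psi\rangle)_{i\in[k_x-1]}\in\Delta_x$ for every $\psi$ and every $x$. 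Substituting $A_{i|x}=2E_{i|x}-\tfrac2{k_x}I_d$ into the two inequalities defining $\Delta_x$ turns them into $\langle\psi,E_{i|x}\psi\rangle\ge0$ for $i\in[k_x-1]$ and $\langle\psi,(I_d-\sum_{i<k_x}E_{i|x})\psi\rangle\ge0$; letting $\psi$ range over all unit vectors and $x$ over $[g]$, this is exactly the assertion that each $(E_{i|x})_{i\in[k_x]}$ is a POVM.

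For (2), since $\mathcal D_{\jewel,\mathbf k}$ is a free polytope, $\mathcal D_{\jewel,\mathbf k}\subseteq\mathcal D_A\iff A\in\mathcal D_{\jewel,\mathbf k}^\circ=\cWmin(\mathcal D_{\jewel,\mathbf k}(1)^\bullet)$ by \eqref{eq:PolytopeDual}. Spelling out the matrix convex hull at level $d$, $A\in\cWmin\bigl(\prod_x\Delta_x\bigr)(d)$ iff $A=\sum_\lambda c^{(\lambda)}P_\lambda$ for finitely many points $c^{(\lambda)}\in\prod_x\Delta_x$ and positive semidefinite $P_\lambda\in M_d(\mathbb C)$ with $\sum_\lambda P_\lambda=I_d$ (decompose the isometry blocks of a matrix convex combination in one direction, spectrally decompose the $P_\lambda$ in the other). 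Writing each $c^{(\lambda)}$ in barycentric coordinates $c^{(\lambda)}_{\cdot\,x}=\sum_{a\in[k_x]}p_{a|\lambda,x}\,w^{(x)}_a$ with $(p_{a|\lambda,x})_a$ a probability vector, and substituting $A_{i|x}=2E_{i|x}-\tfrac2{k_x}I_d$, the relations $\sum_\lambda P_\lambda=I_d$ and $\sum_a p_{a|\lambda,x}=1$ collapse everything to $E_{i|x}=\sum_\lambda p_{i|\lambda,x}P_\lambda$, first for $i\in[k_x-1]$ and then, by completeness, for $i=k_x$; this is precisely \eqref{eq:joint-measurement} with joint POVM $(P_\lambda)_\lambda$ and post-processings $(p_{i|\lambda,x})_i$, hence solvable iff the $(E_{i|x})_{i\in[k_x]}$, $x\in[g]$, are compatible, and reversing the substitution gives the converse. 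I expect the only real obstacle to be bookkeeping rather than anything structural: keeping the grid / product-of-simplices description of $\mathcal D_{\jewel,\mathbf k}(1)^\bullet$ straight through the tensor indexing of the jewel pencil, and treating $E_{k_x|x}$ consistently as the complement so that the ``completeness'' facet $\sum_i y_i\le\tfrac2{k_x}$ of each $\Delta_x$ reproduces exactly $I_d-\sum_{i<k_x}E_{i|x}\succeq0$ in (1) and $E_{k_x|x}=\sum_\lambda p_{k_x|\lambda,x}P_\lambda$ in (2); everything else reduces to the bipolar theorem, the standard level-$d$ description of $\cWmin$ of a convex set, and linear algebra, and one may alternatively invoke the parallel computations in \cite{bluhm2020compatibility}.
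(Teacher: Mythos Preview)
Your argument is correct. The paper's own proof simply cites \cite{bluhm2020compatibility} for the first pair of statements and invokes Proposition~\ref{prop:spectrahedron-mcset-equivalent} (or \cite{bluhm2022tensor}) for the ``Equivalently'' reformulations, whereas you unpack the content of those references explicitly: identifying $\mathcal D_{\jewel,\mathbf k}(1)^\bullet$ as the product of simplices $\prod_x\Delta_x$, translating the facet inequalities of each $\Delta_x$ into the POVM conditions via $A_{i|x}=2E_{i|x}-\tfrac{2}{k_x}I_d$, and reading off the joint-measurement form \eqref{eq:joint-measurement} from the level-$d$ description of $\cWmin$ of a scalar convex set. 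Conceptually this is the same route taken in \cite{bluhm2020compatibility}, just carried out rather than cited; your use of Proposition~\ref{prop:spectrahedron-mcset-equivalent} for the second pair matches the paper exactly. Your explicit convention that $E_{k_x|x}:=I_d-\sum_{i<k_x}E_{i|x}$ is the right reading of the statement, since $A$ depends only on the first $k_x-1$ effects.
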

\begin{proof}
    The first two statements can be found in \cite{bluhm2020compatibility}. The last two statements follow from Proposition \ref{prop:spectrahedron-mcset-equivalent} or can directly be found in \cite{bluhm2022tensor}.
\end{proof}

In a similar way to noise destroying entanglement, which is part of what makes building quantum computers challenging, incompatible measurements can be made compatible by adding a sufficient amount of noise. In this article, we will consider uniform noise, also called white noise.
\begin{defi}
    Let $k \in \mathbb N$ and let $(E_i)_{i \in [k]}$ be a POVM. Let $s \in [0,1]$ be a noise parameter. Then, we define the POVM $(E_i(s))_{i \in [k]}$ with
    \begin{equation*}
        E_i(s) := s E_i + (1-s)\frac{I}{k}
    \end{equation*}
    as the noisy version of $(E_i)_{i \in [k]}$.
\end{defi}

Now, fixing the dimension of the quantum system, the number of measurements we want to consider, and their outcomes, we ask for the minimal amount of noise that makes any such measurements compatible. This is captured by our next definition. 
\begin{defi}[Compatibility degree]\label{def:incompatibility-degree}
Given a $g$-tuple of measurements $E=(E_{\cdot|x})_{x \in [g]}$ on a $d$-dimensional Hilbert space, having respectively $k_1, \ldots, k_g$ outcomes, define their \textbf{compatibility degree} as 
$$s_\C(E):=\max\{s \in [0,1]: \{(E_{i|x}(s))_{i \in [k_x]}\}_{x \in [g]}\text{ are compatible}\}.$$

Consider a measurement setting given by positive integers $d$, $g \in \mathbb N$ and $k_x \in \mathbb N$ for all $x \in [g]$. The \textbf{minimum compatibility degree} of this measurement setting is defined as  
\begin{align*}
    s_\C(d,g,(k_1, \ldots, k_g)) := \min&\{ s_\C(E) : \text{ $E$ is a $g$-tuple of measurements}\\
    &\text{on a $d$-dimensional Hilbert space with $k_1, \ldots, k_g$ outcomes}\}.
\end{align*}
    If $k_1 = \ldots = k_g = 2$, we write $s_\C(d,g)$ instead.
\end{defi}

We can use the results in Proposition \ref{prop:incompatibility-spectrahedra} to make a connection between the minimum compatibility degree and the maximal inclusion constant of the matrix jewel. The following has been proved in \cite{bluhm2020compatibility, bluhm2022tensor}:
\begin{prop} \label{eq:inclusion-constant-compatibility-degree}
    Let $d$, $g \in \mathbb N$ and $k_x \in \mathbb N$ for all $x \in [g]$. Moreover, let $A$ be the matrices defining $\mathcal D^{\mathbb C}_{\jewel, \mathbf{k}}(1)$ and $\mathcal P = \mathcal D^{\mathbb C}_{\jewel, \mathbf{k}}(1)^\bullet$.
    Then, 
    \begin{equation*}
        s^\C_A(d) = s_\C(d,g,\mathbf{k}) = s^\C_{\mathcal P}(d) \, .
    \end{equation*}
\end{prop}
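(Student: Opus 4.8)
The plan is to reduce all three quantities to statements about the two matrix convex sets $\cWmax(\mathcal P)$ and $\cWmin(\mathcal P)$ sitting over the polar dual polytope $\mathcal P = \mathcal D^{\mathbb C}_{\jewel,\mathbf k}(1)^\bullet$, and then to observe that the three numbers are literally the same quantity written in three ways. The dictionary we use is: since $\mathcal D^{\mathbb C}_{\jewel,\mathbf k}$ is a free polytope (a direct sum of free simplices), equation \eqref{eq:PolytopeDual} gives $(\mathcal D^{\mathbb C}_{\jewel,\mathbf k})^\circ = \cWmin(\mathcal P)$, while Proposition \ref{prop:spectrahedron-mcset-equivalent} identifies $\{B \in SM_d(\mathbb C)^g : \mathcal D^{\mathbb C}_{\jewel,\mathbf k}(1) \subseteq \mathcal D_B(1)\}$ with $\cWmax(\mathcal P)(d)$ and $\{B : \mathcal D^{\mathbb C}_{\jewel,\mathbf k}\subseteq \mathcal D_B\}$ with $\cWmin(\mathcal P)(d)$. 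Finally, by Proposition \ref{prop:incompatibility-spectrahedra}, the affine bijection $B_{i|x}\leftrightarrow E_{i|x}=\frac12 B_{i|x}+\frac1{k_x}I_d$ sets up a bijection between $\cWmax(\mathcal P)(d)$ and $g$-tuples of $d$-dimensional POVMs with $k_1,\dots,k_g$ outcomes, under which membership in $\cWmin(\mathcal P)(d)$ corresponds to compatibility.

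First I would prove $s^\C_A(d) = s^\C_{\mathcal P}(d)$. The one computational ingredient is the scaling identity $s\cdot \mathcal D_A \subseteq \mathcal D_B \iff \mathcal D_A \subseteq \frac1s \mathcal D_B = \mathcal D_{sB} \iff sB \in \mathcal D_A^\circ$, valid for $s \in (0,1]$ and trivial at $s=0$. Combining this with the dictionary, a scalar $s$ lies in $\Delta^\C_A(d)$ exactly when $sB \in \cWmin(\mathcal P)(d)$ for every $B \in \cWmax(\mathcal P)(d)$, i.e. exactly when $s\cdot\cWmax(\mathcal P)(d) \subseteq \cWmin(\mathcal P)(d)$, which is the defining condition for $s \in \Delta^\C_{\mathcal P}(d)$. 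Hence the two sets of inclusion constants coincide, and so do their suprema. Both suprema are in fact attained, since $\cWmin(\mathcal P)(d)$ is closed and contains $0$, so the admissible set of $s$ is a closed subinterval of $[0,1]$ containing $0$.

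Next I would prove $s_\C(d,g,\mathbf k) = s^\C_{\mathcal P}(d)$. A direct computation shows the white-noise map $E_{i|x}\mapsto E_{i|x}(s)=sE_{i|x}+(1-s)\frac{I}{k_x}$ corresponds, under the affine bijection above, to $B \mapsto sB$, since $2E_{i|x}(s)-\frac2{k_x}I_d = s\big(2E_{i|x}-\frac2{k_x}I_d\big)$. Thus, writing $B$ for the tuple attached to a POVM-family $E$, the compatibility degree equals $s_\C(E)=\max\{s\in[0,1] : sB \in \cWmin(\mathcal P)(d)\}$ (a genuine maximum, as $\cWmin(\mathcal P)(d)$ is compact and contains $0$), and by convexity this set of $s$ is an interval, so $s_\C(E)=\min(1,\phi(B))$ with $\phi(B):=\sup\{s\ge0:sB\in\cWmin(\mathcal P)(d)\}$. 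Minimising over all POVM-families, equivalently over all $B \in \cWmax(\mathcal P)(d)$, gives $s_\C(d,g,\mathbf k)=\min\big(1,\inf_{B\in\cWmax(\mathcal P)(d)}\phi(B)\big)$. On the other hand, by the previous paragraph $s^\C_{\mathcal P}(d)=\sup\{s\in[0,1]:s\cdot\cWmax(\mathcal P)(d)\subseteq\cWmin(\mathcal P)(d)\}$, and $s\cdot\cWmax(\mathcal P)(d)\subseteq\cWmin(\mathcal P)(d)$ holds iff $s\le\phi(B)$ for every $B$, i.e. iff $s\le\inf_B\phi(B)$; hence $s^\C_{\mathcal P}(d)=\min(1,\inf_B\phi(B))$ as well, and the two agree.

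The argument has no genuinely hard step: everything is a repackaging of Propositions \ref{prop:spectrahedron-mcset-equivalent} and \ref{prop:incompatibility-spectrahedra} together with the homogeneity of the defining pencils in the noise parameter $s$. The points that need care are purely bookkeeping: getting the direction of the scaling right in $s\cdot\mathcal D_A\subseteq\mathcal D_B\iff sB\in\mathcal D_A^\circ$, verifying that the relevant suprema are attained so that the compatibility degree is well defined as a maximum, and matching the polytope inclusion-constant condition ``shrink $\cWmax(\mathcal P)$ into $\cWmin(\mathcal P)$'' with the free-spectrahedron one. Since the statement already appears in \cite{bluhm2020compatibility,bluhm2022tensor}, one may alternatively just cite those references; the sketch above records the short self-contained derivation.
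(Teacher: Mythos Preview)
Your proposal is correct and is essentially the approach behind the references the paper cites; the paper itself gives no argument and simply defers to \cite{bluhm2020compatibility,bluhm2022tensor}. Your derivation is precisely the unpacking of Propositions \ref{prop:spectrahedron-mcset-equivalent} and \ref{prop:incompatibility-spectrahedra} together with the observation that white noise $E\mapsto E(s)$ corresponds to $B\mapsto sB$, which is the content of those references specialized to the symmetric noise model. One small remark: you correctly interpret $\Delta^\C_{\mathcal P}(d)$ as the set of $s$ with $s\cdot\cWmax(\mathcal P)(d)\subseteq\cWmin(\mathcal P)(d)$, matching the paper's verbal description (``shrink the maximal matrix convex set to fit inside the minimal'') rather than the displayed formula in the paper's definition, which has the roles of $\cWmin$ and $\cWmax$ interchanged and would make the set trivially equal to $[0,1]$.
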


In the following, we collect what is known about the minimum compatibility degree for different measurement settings. Some of the results make use of Proposition \ref{eq:inclusion-constant-compatibility-degree}.
\begin{prop} \label{prop:bounds-incompat-degree}
    Let $d$, $g \in \mathbb N$ and $k_x \in \mathbb N$ for all $x \in [g]$. Then,
    \begin{enumerate}
       \item For any permutation $\sigma$ on $g$ elements, $s_\C(d,g,(k_1, \ldots, k_g)) = s_\C(d,g,(k_{\sigma(1)}, \ldots, k_{\sigma(g)}))$.
        \item $s_\C(d,g,(k_1, \ldots, k_g, 1)) = s_\C(d,g,(k_1, \ldots, k_g))$.
        \item For $k^\prime_x \in \mathbb N$ such that $k_x^\prime \geq k_x$ $ \forall x \in [g]$, it holds that $s_\C(d,g,(k_1^\prime, \ldots, k_g^\prime)) \leq s_\C(d,g,(k_1, \ldots, k_g))$.
        \item For $d^\prime \in \mathbb N$ such that $d^\prime \geq d$, it holds that $s_\C(d^\prime,g,(k_1, \ldots, k_g)) \leq s_\C(d,g,(k_1, \ldots, k_g))$.
        \item For $g^\prime \in \mathbb N$ such that $g^\prime \geq g$, it holds that $s_\C(d,g^\prime) \leq s_\C(d,g)$.
        \item $s_\C(d,g) \geq 1/\sqrt{g}$.
        \item $s_\C(d,g) = 1/\sqrt{g}$ for $d \geq 2^{\lceil (g-1)/2\rceil}$.
        \item $s_\C(d,g) \geq \tau(d)$ with $\tau(d) = 4^{-n} \binom{2n}{n}$, $\mathrm{with~}n=\lfloor d/2 \rfloor$. Asymptotically, $\tau(d)$ behaves as $\sqrt{2/(\pi d)}$.
        \item $s_\C(d,g,(k_1, \ldots, k_g))\geq 1/g$.
        \item $s_\C(d,g,(k_1, \ldots, k_g))\geq \frac{g+k_{\max}d}{g(1+k_{\max}d)}$ for $k_{\max} = \max_{i \in [g]} k_i$.
        \item $s_\C(d,2,(k_1, k_2))\geq \frac{1}{2}\left(1+\frac{1}{\sqrt{k_1k_2}+1}\right)$
    \end{enumerate}
\end{prop}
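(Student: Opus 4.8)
The plan is to read Proposition~\ref{prop:bounds-incompat-degree} as a compendium. Items (1)--(5) and (9) are elementary facts about joint measurability and white noise, which I would prove directly. For (1), relabelling the $g$ measurements merely permutes the marginalisation constraints in the definition of a joint POVM and fixes white noise, so $s_\C$ is unchanged. For (2), a one-outcome POVM is forced to be $\{I_d\}$, is unchanged by the addition of white noise, and is jointly measurable with every family, so appending one alters nothing and the two minima coincide. For (3), pad each POVM with $k_x'-k_x$ zero effects; this realises every $(k_1,\dots,k_g)$-family as a $(k_1',\dots,k_g')$-family once one records how white noise on the enlarged outcome set restricts, so the minimum over the larger collection can only drop. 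For (4), direct-sum a $d$-dimensional family with a fixed compatible family on $\C^{d'-d}$ (e.g.\ trivial POVMs), which again only lowers the minimum; and (5) is (2) iterated -- append $g'-g$ trivial two-outcome POVMs $\{I/2,I/2\}$ -- together with the resulting monotonicity. Finally, (9) holds because $J_{i_1,\dots,i_g}:=(k_1\cdots k_g)^{-1}I$ is a joint POVM whose marginals are exactly the fully mixed effects, which forces $s_\C\ge 1/g$.

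For the dichotomic bounds (6)--(8) I would invoke Proposition~\ref{eq:inclusion-constant-compatibility-degree} together with Example~\ref{ex:MatrixJewel} to identify $s_\C(d,g)$ with the level-$d$ inclusion constant of the matrix diamond and its dual matrix cube. Then (6) is the Ben-Tal--Nemirovski / \cite{passer2018minimal,davidson2016dilations} bound that the matrix cube at level $d$ lies inside $\sqrt{g}$ times the matrix convex hull of the classical cube; (7) follows by exhibiting, once $d\ge 2^{\lceil(g-1)/2\rceil}$, a tuple of $g$ anticommuting self-adjoint unitaries (a Clifford representation) that saturates this bound; and (8) uses the sharper dimension-dependent dilation constants $\tau(d)=4^{-n}\binom{2n}{n}$ of \cite{bluhm2018joint}. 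Item (10) is an explicit joint-POVM construction of the kind in \cite{bluhm2020compatibility,bluhm2022tensor}: one writes $J_{i_1,\dots,i_g}$ as an affine combination of the effects $E_{i_x|x}$ and of $I$, verifies the marginal identities, and takes for $s$ whatever keeps every $J$ positive semidefinite, the dimension entering through that positivity constraint and producing the $k_{\max}d$ in the denominator.

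The final statement, (11), is the two-measurement refinement; being dimension-independent, it must come from a universal joint-POVM formula, and this is where the one genuinely new computation sits. My plan is the symmetrised ansatz: given POVMs $(E_i)_{i\in[k_1]}$ and $(F_j)_{j\in[k_2]}$ on $\C^d$, set $\alpha=\tfrac{1-s}{k_2}$, $\beta=\tfrac{1-s}{k_1}$, $\delta=\tfrac{2s-1}{2}$ and
\[
J_{ij}:=\alpha E_i+\beta F_j+\delta\,(E_iF_j+F_jE_i).
\]
A short computation using $\sum_i E_i=\sum_j F_j=I$ gives $\sum_j J_{ij}=sE_i+\tfrac{1-s}{k_1}I$ and $\sum_i J_{ij}=sF_j+\tfrac{1-s}{k_2}I$, which are exactly the white-noise effects. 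It then remains to verify $J_{ij}\succeq 0$. For a unit vector $v$, write $a=\langle v,E_iv\rangle$ and $b=\langle v,F_jv\rangle\in[0,1]$; since $0\preceq E_i,F_j\preceq I$ one has $E_i^2\preceq E_i$ and $F_j^2\preceq F_j$, hence $|\langle E_iv,F_jv\rangle|\le\sqrt{ab}$ and therefore $\langle v,J_{ij}v\rangle\ge\alpha a+\beta b-2|\delta|\sqrt{ab}$, which is nonnegative for all $a,b\ge 0$ precisely when $\alpha\beta\ge\delta^2$. Unwinding, $\alpha\beta\ge\delta^2$ is equivalent to $s\le\tfrac12\bigl(1+(\sqrt{k_1k_2}+1)^{-1}\bigr)$, which is the claimed bound (at equality $\delta\ge 0$ and $\alpha\beta=\delta^2$, so the estimate is tight). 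The main obstacle, as is typical for such compendia, is not any single long computation but choosing the right constructions: recognising that (11) needs the non-commuting cross-term $E_iF_j+F_jE_i$ together with the operator Cauchy--Schwarz estimate above, and importing from \cite{bluhm2018joint,bluhm2020compatibility,bluhm2022tensor} the Clifford and dilation inputs behind (7), (8) and (10).
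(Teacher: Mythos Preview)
Your approach is more self-contained than the paper's, which essentially just cites the relevant literature for each item; in particular your argument for (11) via the ansatz $J_{ij}=\alpha E_i+\beta F_j+\delta(E_iF_j+F_jE_i)$ and the operator Cauchy--Schwarz estimate is a clean direct proof of the Designolle bound.

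There is, however, a genuine error in your treatment of (9). The joint POVM $J_{i_1,\dots,i_g}=(k_1\cdots k_g)^{-1}I$ has $x$-th marginal $I/k_x$, i.e.\ the fully depolarised effect, which corresponds to $s=0$, not $s=1/g$. To get $s=1/g$ you need the ``measure a random one'' construction
\[
J_{i_1,\dots,i_g}=\frac{1}{g}\sum_{x=1}^{g}\Bigl(\prod_{y\ne x}\frac{1}{k_y}\Bigr)E_{i_x|x},
\]
whose $x$-th marginal is $\frac{1}{g}E_{i_x|x}+\frac{g-1}{g}\cdot\frac{I}{k_x}$, exactly $E_{i_x|x}(1/g)$.

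Two smaller points. For (3), zero-padding does work, but only after a post-processing step you leave implicit: the noisy padded family has marginals $sE_{i|x}+(1-s)I/k_x'$, and one must redistribute the $k_x'-k_x$ extra outcomes uniformly among the original $k_x$ to recover $sE_{i|x}+(1-s)I/k_x$. Your phrase ``once one records how white noise on the enlarged outcome set restricts'' hides this. For (8), the constant $\tau(d)$ is due to \cite{bluhm2022steering} (building on \cite{ben-tal2002tractable}), not \cite{bluhm2018joint}.
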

\begin{proof}
    The first two points are straightforward to prove. The third point has been proved in \cite{bluhm2020compatibility}. The fourth point  follows along the lines of \cite[Proposition 3.6]{bluhm2018joint}. The fifth point follows from the first three. The sixth and seventh point were proved in \cite{bluhm2018joint}, based on results by \cite{passer2018minimal}. The eight point was proved in \cite{bluhm2022steering}, using ideas from \cite{ben-tal2002tractable}. The ninth point can be found in \cite{Heinosaari2016}. The tenth point was proved in \cite[Proposition 6.7]{bluhm2020compatibility}, using an idea of \cite{Heinosaari2014}. The last bound has been proved in \cite[Section 3.2.3]{Designolle2019}.
\end{proof}

\begin{remark}
    In \cite{bluhm2025random}, the compatibility degree of random quantum measurements has been considered. It has been shown that, in many natural settings, generic random measurements are almost maximally incompatible: in the limit where the Hilbert space dimension goes to infinity, the compatibility degree of independent random quantum measurements approaches the minimum value $s_\C(d,g,\mathbf k)$ from Definition \ref{def:incompatibility-degree}. 
\end{remark}

\medskip 

We now introduce a real-valued version of measurement compatibility. First, note that real measurements $E_{i|x} \in SM_d(\R)$ are compatible if and only if there exists a joint measurement with \emph{real}-valued effects $M_\lambda$ satisfying equation \eqref{eq:joint-measurement}. Indeed, one can simply take for $M$ the real part of an a priori complex joint measurement. 

In order to obtain a real version of Proposition \ref{prop:incompatibility-spectrahedra} for measurements defined by real matrices, we will need the following result. It shows that for real matrices, the inclusion of the real free spectrahedra is equivalent to the inclusion of the  complex versions. 

\begin{lem} \label{lem:real-complex-inclusion}
    Let $A,B$ be $g$-tuples of real symmetric matrices. Then, for all $n \geq 1$, we have:
    $$\mathcal D_A^\R(2n) \subseteq \mathcal D_B^\R(2n) \implies \mathcal D^\C_A(n) \subseteq \mathcal D^\C_B(n) \implies \mathcal D_A^\R(n) \subseteq \mathcal D_B^\R(n).$$
    In particular,
    $$\mathcal D^\C_A \subseteq \mathcal D^\C_B \iff \mathcal D_A^\R \subseteq \mathcal D_B^\R.$$
    
\end{lem}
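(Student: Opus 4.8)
The plan is to prove the chain of implications and then deduce the ``in particular'' statement. The second implication in the chain is essentially trivial: if $X \in \mathcal D_A^\R(n)$ then $X \in SM_n(\R)^g \subseteq SM_n(\C)^g$ and $L_A(X) \succeq 0$ is the same condition over $\R$ and $\C$ since $A$ and $X$ are real, so $X \in \mathcal D_A^\C(n)$; applying the hypothesis $\mathcal D_A^\C(n) \subseteq \mathcal D_B^\C(n)$ gives $L_B(X) \succeq 0$, which again means $X \in \mathcal D_B^\R(n)$ because everything is real. So the real content is in the first implication.

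For the first implication, suppose $\mathcal D_A^\R(2n) \subseteq \mathcal D_B^\R(2n)$ and take $Z = X + iY \in \mathcal D_A^\C(n)$ with $X, Y \in M_n(\R)^g$ (necessarily $X$ symmetric and $Y$ antisymmetric, since $Z$ is Hermitian). The standard device is the real $2n \times 2n$ ``realification'' $\widetilde{Z}_j := \begin{pmatrix} X_j & -Y_j \\ Y_j & X_j \end{pmatrix}$, which is symmetric. I would check that $L_A(\widetilde{Z}) = \widetilde{L_A(Z)}$ under this block embedding (the map $M \mapsto \widetilde{M}$ is a $*$-algebra homomorphism $M_n(\C) \to M_{2n}(\R)$ that is compatible with $\otimes$ on the $A$-factor since $A$ is real), and that realification preserves positive semidefiniteness: $L_A(Z) \succeq 0$ in $M_{dn}(\C)$ iff $\widetilde{L_A(Z)} \succeq 0$ in $M_{2dn}(\R)$, because $\widetilde{M}$ is unitarily equivalent to $M \oplus \overline{M}$. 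Hence $\widetilde{Z} \in \mathcal D_A^\R(2n)$, so by hypothesis $\widetilde{Z} \in \mathcal D_B^\R(2n)$, i.e.\ $\widetilde{L_B(Z)} \succeq 0$, and reversing the equivalence gives $L_B(Z) \succeq 0$, i.e.\ $Z \in \mathcal D_B^\C(n)$. I expect the only mildly delicate point to be bookkeeping the tensor factors: one must apply realification only to the matrix-variable slot (size $n$ or $2n$) and leave the $A_i$ (size $d$) alone, and then observe $\widetilde{A_i \otimes Z_i} = A_i \otimes \widetilde{Z_i}$ because $A_i$ is real. This is routine but is the crux of the argument.

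For the ``in particular'' equivalence: the forward direction $\mathcal D_A^\C \subseteq \mathcal D_B^\C \Rightarrow \mathcal D_A^\R \subseteq \mathcal D_B^\R$ is immediate from the second implication applied at every level $n$. The reverse direction $\mathcal D_A^\R \subseteq \mathcal D_B^\R \Rightarrow \mathcal D_A^\C \subseteq \mathcal D_B^\C$ follows from the first implication: assuming the real inclusion at all levels, in particular at level $2n$, we get $\mathcal D_A^\C(n) \subseteq \mathcal D_B^\C(n)$ for every $n$. Combining both directions yields the stated biconditional.
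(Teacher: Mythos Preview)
Your proof is correct and follows essentially the same approach as the paper, which also reduces the nontrivial implication to the realification equivalence $M \succeq 0 \iff \begin{pmatrix} \Re M & \Im M \\ -\Im M & \Re M \end{pmatrix} \succeq 0$. You in fact supply more detail than the paper (the tensor-factor bookkeeping and the derivation of the biconditional), but the underlying idea is identical.
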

\begin{proof}
    The only non-trivial implication follows from the following observation: given a selfadjoint matrix $X$, we have
    $$X \succeq 0 \iff \begin{bmatrix}
        \Re X & \Im X\\
        -\Im X & \Re X
    \end{bmatrix} \succeq 0,$$
    where $\Re X$ and $\Im X$ are the real, resp.~imaginary parts of the matrix $X$.
\end{proof}
Using this lemma, we immediately obtain a real version of Proposition \ref{prop:incompatibility-spectrahedra}:
\begin{cor} \label{cor:incompatibility-spectrahedra-real}
       Let $g$, $d \in \mathbb N$ and $k_x \in \mathbb N$ for all $x \in [g]$. Moreover, let $E_{i|x} \in SM_d(\mathbb R)$ for all $i \in [k_x]$, $x \in [g]$. We define $A_{i|x} = 2E_{i|x} - \frac{2}{k_i}I_d$ for all $i \in [k_x-1]$, $x \in [g]$ and $A = (A_{i|x})_{i \in [k_x-1], x \in [g]}$. Then, 
  \begin{enumerate}
      \item $\mathcal D^\R_{\jewel, \mathbf{k}}(1) \subseteq \mathcal D^\R_A(1)$ if and only if $(E_{i|x})_{i \in [k_x]}$ is a POVM for all $x \in [g]$.
      \item $\mathcal D^\R_{\jewel, \mathbf{k}} \subseteq \mathcal D^\R_A$ if and only if $(E_{i|x})_{i \in [k_x]}$ are compatible POVMs for all $x \in [g]$.
  \end{enumerate}
\end{cor}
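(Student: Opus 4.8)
The plan is to read off both assertions from the complex statement of Proposition~\ref{prop:incompatibility-spectrahedra}, by transferring between real and complex spectrahedral inclusions. The key point is that the matrices defining $\mathcal D_{\jewel, \mathbf{k}}$ together with the matrices $A_{i|x} = 2E_{i|x} - \frac{2}{k_i}I_d$ are all real symmetric, so both free spectrahedra are closed under complex conjugation, and this is exactly the regime in which the real and complex pictures can be identified: level by level at the first level, and globally via Lemma~\ref{lem:real-complex-inclusion}.

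For the first assertion, observe that the level-one inclusion is a statement about subsets of $\mathbb R^g$ alone. Indeed, a real symmetric matrix is positive semidefinite over $\mathbb C$ if and only if it is positive semidefinite over $\mathbb R$, so $\mathcal D^\R_A(1) = \mathcal D^\C_A(1)$ and $\mathcal D^\R_{\jewel, \mathbf{k}}(1) = \mathcal D^\C_{\jewel, \mathbf{k}}(1)$ as subsets of $\mathbb R^g$. Hence $\mathcal D^\R_{\jewel, \mathbf{k}}(1) \subseteq \mathcal D^\R_A(1)$ is literally the same statement as $\mathcal D^\C_{\jewel, \mathbf{k}}(1) \subseteq \mathcal D^\C_A(1)$, and Proposition~\ref{prop:incompatibility-spectrahedra}(1) gives the desired equivalence with $(E_{i|x})_{i\in[k_x]}$ being a POVM for every $x \in [g]$.

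For the second assertion, first apply Lemma~\ref{lem:real-complex-inclusion} to the real symmetric tuples defining $\mathcal D_{\jewel, \mathbf{k}}$ and $\mathcal D_A$; this yields $\mathcal D^\R_{\jewel, \mathbf{k}} \subseteq \mathcal D^\R_A \iff \mathcal D^\C_{\jewel, \mathbf{k}} \subseteq \mathcal D^\C_A$. By Proposition~\ref{prop:incompatibility-spectrahedra}(2), the right-hand side is equivalent to the $(E_{i|x})_{i\in[k_x]}$ being compatible POVMs, which finishes the argument. The only point requiring care — and the one I would flag as the potential obstacle — is verifying that the notion of "compatible" is insensitive to whether the joint POVM is required to be real or merely complex: this is precisely the observation recorded just before the statement, namely that replacing an a priori complex joint measurement by its real part produces a real joint POVM with the same marginals. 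Everything else is a mechanical translation through Lemma~\ref{lem:real-complex-inclusion} and the coincidence of the first levels of real and complex free spectrahedra defined by real symmetric pencils, so no further work is needed.
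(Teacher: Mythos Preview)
Your proposal is correct and follows exactly the paper's approach: the paper states that the corollary is obtained immediately from Lemma~\ref{lem:real-complex-inclusion} applied to Proposition~\ref{prop:incompatibility-spectrahedra}, and you have spelled out precisely this argument, including the observation (made just before the corollary) that taking the real part of a complex joint measurement shows real and complex compatibility coincide for real POVMs.
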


We also introduce the real version of the minimum compatibility degree from Definition \ref{def:incompatibility-degree}:

\begin{align*}
        s_\R(d,g,(k_1, \ldots, k_g)) := \min&\{s \in [0,1]: \{(E_{i|x}(s))_{i \in [k_x]}\}_{x \in [g]}\text{ are compatible for all sets}\\ &~ \text{ of $g$ \emph{real valued} $d$-dimensional POVMs } \{(E_{i|x})_{i \in [k_x]}\}_{x \in [g]}\} \, .
    \end{align*}

\subsection{Non-commutative polynomial optimization}

In this section, we briefly present different hierarchies of semidefinite programs (SDPs) to solve problems in commutative and non-commutative polynomial optimization. We will focus on the problems they are designed to solve and refer the reader to the corresponding papers for details about the actual hierarchies.

\subsubsection{Lasserre hierarchy}
\label{sec:lasserre}
Let $p: \mathbb R^n \to \mathbb R$ and $g_i: \mathbb R^n \to \mathbb R$ be polynomials for all $i\in [r]$, $r \in \mathbb N$. Then, we are interested in the following optimization problem:
\begin{align}
    \mathrm{minimize} \qquad & \qquad p(x)\nonumber\\
    \mathrm{such~that} \qquad & \qquad g_i(x) \geq 0 \quad \forall i \in [r] \label{eq:lasserre} \\
    \qquad & \qquad x \in \mathbb R^n\nonumber
\end{align}
Here, the set $K$ defined by the polynomial inequalities $g_i(x)\geq0$ is assumed to be compact. Moreover, the technical Assumption 4.1 from \cite{lasserre2001global} is assumed to hold.

It was shown in \cite{lasserre2001global} that there exists a sequence of lower bounds $p_1 \leq p_2 \leq \ldots$ such that each $p_i$ can be computed by an SDP and such that $\lim_{i \to \infty} p_i = p^\ast$, where $p^\ast$ is the value of \eqref{eq:lasserre}. Moreover, \cite{lasserre2001global} gave optimality conditions of  Karush–Kuhn–Tucker (KKT) type. The underlying idea of the hierarchy is to express $p(x)-p^\ast $ as a sum of squares.

\subsubsection{NPA hierarchy}

For non-commutative polynomials, we are interested in the following problem: 
Let $p$, $g_i$ be non-commutative polynomials  in $n$ variables for all $i\in [r]$, $r \in \mathbb N$, $\mathcal H$ be a complex Hilbert space, 
$\mathcal{B}(\mathcal H)$ be the set of bounded linear operators on $\mathcal H$, 
$\psi$ be a unit vector in that Hilbert space and $(X_1, \ldots, X_n) \in \mathcal{B}(\mathcal H)^n$ be a vector of self-adjoint operators. Then, we want to solve:
\begin{align}
        \mathrm{minimize} \qquad & \qquad \langle \psi, p(X) \psi \rangle \nonumber\\
    \mathrm{such~that} \qquad & \qquad g_i(X) \succeq 0 \quad \forall i \in [r] \label{eq:npa} 
\end{align}
where the optimization runs over all $\mathcal H$, $\psi$ and $X$. Here, the polynomials $g_i$ have to be such that the positivity domain $K$, i.e., the set of  $X$ such that $g_i(X)\succeq 0$ for all $i \in [r]$, has an associated Archimedean quadratic module. 

In \cite{HM04,pironio2010convergent}, it was shown that again there exists a sequence of lower bounds $p_1 \leq p_2 \leq \ldots$ such that each $p_i$ can be computed by an SDP and such that $\lim_{i \to \infty} p_i = p^\ast$, where $p^\ast$ is the value of \eqref{eq:npa}. The SDP hierarchy is referred to as NPA hierarchy after Navascués, Pironio, and Acín, the authors of \cite{pironio2010convergent}. The authors also gave a 
sufficient criterion for the $i$-th relaxation to be optimal, i.e., $p_i = p^\ast$. It is based on the associated moment matrix having a flat extension. Recently, optimality conditions of KKT type for this hierarchy were given in \cite{araujo2311first}.

\if{
\subsubsection{Trace polynomial hierarchy}
A trace polynomial in symmetric non-commutative variables $(X_1, \ldots, X_n)$ is polynomial in the symmetric variables themselves and normalized traces of their products. A trace polynomial is pure if it consists only of traces. Let $p$ be a symmetric pure trace polynomial in $n$ symmetric non-commutative variables and $g_i$ be symmetric trace polynomials in $n$ symmetric non-commutative variables. Then, the optimization problem to solve is:
\begin{align}
        \mathrm{minimize} \qquad & \qquad \langle \psi, p(X) \rangle \nonumber\\
    \mathrm{such~that} \qquad & \qquad g_i(X) \succeq 0 \quad \forall i \in [r] \label{eq:trace-poly} 
\end{align}
Here, the optimization runs over finite von Neumann algebras with given tracial state and their elements $X$. The positivity domain is again assumed to have an associated Archimedean quadratic module. 

In \cite{klep2022optimization} two of the present authors together with a collaborator showed that again there exists a series of lower bounds $p_1 \leq p_2 \leq \ldots$ such that each $p_i$ can be computed by an SDP and such that $\lim_{p \to \infty} = p^\ast$, where $p^\ast$ is the value of \eqref{eq:trace-poly}. The underlying idea is to prove a cyclic Positivstellensatz for trace polynomials.
}\fi

\section{Extreme points of matrix convex sets} \label{sec:extreme-points}

We now discuss in detail the extreme points of matrix convex sets in addition to two optimization problems over the extreme points of matrix convex sets. 

\subsection{General results}
\subsubsection{Types of extreme points in matrix convex sets}

Matrix convex sets have several different notions of extreme points, which all have different uses. We will be interested in this article in Euclidean extreme points, matrix extreme points, Arveson extreme points, and free extreme points. We will start by reviewing their definitions. For more background on the different types of extreme points, we refer the reader to \cite{evert2018extreme, epperly2024matex} and the recent survey \cite{evert2024extreme}.

\begin{defi}[Euclidean extreme points]
    Let $(S(n))_{n \in \mathbb N}$ be a matrix convex set. Then, $X \in S(n)$ is called an \textbf{Euclidean extreme point} if it is an extreme point of the convex set $S(n)$. We write $\deleuc S$ for the Euclidean extreme points of $S$ and $\deleuc S(n)$ for those in $S(n)$.
\end{defi}
Before we can go on to define matrix convex points, we need to define a special class of matrix convex combinations.
\begin{defi}[Proper matrix convex combination]
    A matrix convex combination as in equation \eqref{eq:matrix-convex-combination} is \textbf{proper} if all the matrices $V_i \in M_{n_i, n}(\mathbb F)$ correspond to surjective linear transformations.
\end{defi}

\begin{defi}[Matrix extreme points]
     Let $(S(n))_{n \in \mathbb N} \subset SM(\mathbb F)^g$ be a matrix convex set. Then, $X \in S(n)$ is called a \textbf{matrix extreme point} if from any expression of $X$ as a proper matrix convex combination of points $X^{(i)} \in S(n_i)$, $i \in [s]$ for some $s \in \mathbb N$, it follows that $n_i=n$ and each $X^{(i)}$ unitarily equivalent to $X$ for all $i \in [s]$. We write $\delmat S$ for the matrix extreme points of $S$ and $\delmat S (n)$ for those in $S(n)$.
\end{defi}

\begin{defi}[Arveson extreme points]
Let $(S(n))_{n \in \mathbb N} \subset SM(\mathbb F)^g$ be a matrix convex set. Then, $X \in S(n)$ is called an \textbf{Arveson extreme point} if 
\begin{equation*}
    Y = \begin{pmatrix}
    X & \beta \\ \beta^\ast & \gamma
    \end{pmatrix} \in S_{n+m}
\end{equation*}
    implies $\beta =0$ for $\beta \in M_{n,m}(\mathbb F)^g$ and $\gamma \in SM_{m}(\mathbb F)^g$.
\end{defi}

In the above definition, the tuple $Y$ is called a \textbf{dilation} of $X$. 

\begin{defi}[Free extreme points]
    Let $(S(n))_{n \in \mathbb N}$ be a matrix convex set. Then, $X \in S(n)$ is called a \textbf{free extreme point} if from any expression of $X$ as a matrix convex combination of points $X^{(i)} \in S_{n_i}$, $i \in [s]$ for some $s \in \mathbb N$, it follows for each $i \in [s]$ that either $n_i = n$ and $X^{(i)}$ unitarily equivalent to $X$ or $n_i > n$ and there exists a $Z^{(i)} \in S$ such that $X^{(i)}$ is unitarily equivalent to $X \oplus Z^{(i)}$, where the direct sum is understood entry-wise. We write $\delfree S $ for the free extreme points of $S$ and $\delfree S(n)$ for those in $S(n)$.
\end{defi}
In \cite{evert2018extreme}, free extreme points are called absolute extreme points. From \cite[Theorem 1.1]{evert2018extreme} it follows that for a matrix convex set $(S(n))_{n \in \mathbb N}$, a tuple $X \in  (S(n))_{n \in \mathbb N}$ is free extreme if and only if it is an irreducible Arveson extreme point of $(S(n))_{n \in \mathbb N}$.

It can be seen that free extreme points are in particular matrix extreme points and matrix extreme points are in particular Euclidean extreme points \cite{evert2018extreme}.

\begin{remark} \label{rem:extreme-points-free-simplex}
    The extreme points of the free simplex $\mathcal D_A$ are simple to characterize: As a consequence of \cite[Theorem 6.5]{evert2018extreme} we have
    \[
    \delfree \mathcal D_A = \delmat \mathcal D_A = \deleuc \mathcal D_A(1)
    \]
    That is, all matrix and free extreme points of $\cD_A$ lie at level $1$ of $\cD_A$. Additionally, a tuple $X \in \cD_A$ is an Arveson extreme point of $\cD_A$ if and only if it is unitarily equivalent to a direct sum of points in $\deleuc \mathcal D_A(1)$. In particular, all Arveson extreme points of $\cD_A$ are tuples of commuting matrices. 
\end{remark}

\subsection{Kernel conditions for extreme points}

In the upcoming Section \ref{sec:CartesianExtreme}, we shall be studying in greater detail the extreme points of a Cartesian product of free spectrahedra. A key tool that we will use is a kernel-based method for classifying matrix extreme points of free spectrahedra.  

Given $g+1$ tuples of matrices $(A_0,A_1,\dots,A_g) = (A_0,A)$ and $(X_0,X_1,\dots,X_g)$, let 
\[
\Lambda_{(A_0,A)} (X_0,X) := A_0 \otimes X_0+A_1\otimes X_1 +\dots A_g \otimes X_g = A_0 \otimes X_0 + \Lambda_A (X). 
\]
We also extend this notation to tuples $(A_0,A,B)$ and $(X_0,X,Y)$ of compatible size in the natural way. That is,
\[
\Lambda_{(A_0,A,B)} (X_0,X,Y) = A_0 \otimes X_0 + \Lambda_A (X) + \Lambda_B (Y).
\]
With this notation observe that
\[
L_A (X) = I \otimes I - \Lambda_A (X) = \Lambda_{(I,-A)} (I,X). 
\]

Given a free spectrahedron $\cD_A$, recall that $\cH_{(I,-A)}$ is the corresponding homogeneous free spectrahedron which is defined by
\[
\cH_{(I,-A)} = \{(H,W) \in SM(\C)^{g+1}: \ \Lambda_{(I,-A)} (H,W) \succeq 0. 
\]
A tuple $(H,W)$ is said to be an extreme ray of $\cH_{(I,-A)}$ if $(H,W)$ cannot be expressed as a nontrivial conic combination of elements of $\cH_{(I,-A)}$. A key insight of \cite{kriel2019intro} is that matrix convex combinations of elements of $\cD_A(n)$ correspond to conic combinations of elements of $\cH_{(I,-A)}(n)$. A consequence of this insight is that the matrix extreme points of $\cD_A$ correspond to extreme rays of $\cH_{(I,-A)}$.

\begin{prop} \cite[Proposition 6.5 (c)]{kriel2019intro}
    Let $\cD_A \subset SM(\mathbb F)^g$ be a free spectrahedron and let $\cH_{(I,-A)}$ be the corresponding homogeneous free spectrahedron. Then a tuple $X \in SM(\mathbb F)^g$ is matrix extreme in $\cD_A$ if and only if $(I,X)$ is an extreme ray of $\cH_{(I,-A)}$.
\end{prop}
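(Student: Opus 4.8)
The plan is to exploit the tight dictionary between matrix convex combinations in $\cD_A$ and conic combinations in the homogeneous free spectrahedron $\cH_{(I,-A)}$. The starting point is the identity $(I_n,Y)\in\cH_{(I,-A)}(n)\iff Y\in\cD_A(n)$ together with the fact that $\cH_{(I,-A)}$ is a matrix convex \emph{cone}: it is closed under direct sums and under arbitrary conjugations $(H,W)\mapsto V^*(H,W)V$, since $(I_d\otimes V)^*\big(I_d\otimes H-\Lambda_A(W)\big)(I_d\otimes V)=I_d\otimes V^*HV-\Lambda_A(V^*WV)$. I would make the standing assumption that $A$ is a minimal defining tuple and $\cD_A$ is bounded. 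Boundedness of $\cD_A(1)$ makes $\cH_{(I,-A)}$ pointed at every level, so every $(H,W)\in\cH_{(I,-A)}(n)$ has $H\succeq0$; a null vector $v$ of $H$ then also satisfies $\langle W_jv,v\rangle=0$ for all $j$ (again by pointedness), whence $\big(I_d\otimes H-\Lambda_A(W)\big)(e\otimes v)=0$ for all $e\in\mathbb F^d$, and therefore $\sum_j A_j\otimes W_j(I-Q)=0$ with $Q$ the orthogonal projection onto $\operatorname{range}H$. Linear independence of the $A_j$ now gives $W_j(I-Q)=0$ for all $j$, so compressing any $(H,W)\in\cH_{(I,-A)}$ to $\operatorname{range}H$ stays inside $\cH_{(I,-A)}$ and leaves $V^*(H,W)V$ unchanged for every $V$.

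For the ``if'' direction, assume $(I_n,X)$ spans an extreme ray of $\cH_{(I,-A)}$ and let $X=\sum_{i\in[s]}V_i^*X^{(i)}V_i$ be a proper matrix convex combination, with $X^{(i)}\in\cD_A(n_i)$ and $\sum_i V_i^*V_i=I_n$. Then $(I_n,X)=\sum_i V_i^*(I_{n_i},X^{(i)})V_i$ writes $(I_n,X)$ as a proper conic combination of the nonzero tuples $(I_{n_i},X^{(i)})\in\cH_{(I,-A)}$. Since $(I_n,X)$ is an extreme ray, this decomposition must be trivial, i.e.\ each summand has the form $S_i^*(I_n,X)S_i$ for an invertible $S_i$, so $n_i=n$. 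Comparing first coordinates gives $I_n=S_i^*S_i$, so $S_i$ is unitary, and comparing the remaining coordinates gives $X^{(i)}=S_i^*XS_i$. Hence $X$ is a matrix extreme point of $\cD_A$.

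For the ``only if'' direction, assume $X\in\cD_A(n)$ is matrix extreme and let $(I_n,X)=\sum_{i\in[s]}V_i^*(H^{(i)},W^{(i)})V_i$ be a proper conic combination with each $(H^{(i)},W^{(i)})\in\cH_{(I,-A)}(n_i)$ nonzero. Compressing each $(H^{(i)},W^{(i)})$ to $\operatorname{range}H^{(i)}$ — which by the above changes neither the summand nor membership in $\cH_{(I,-A)}$ — we may assume each $H^{(i)}$ is invertible. Put $\widetilde V_i:=(H^{(i)})^{1/2}V_i$, which is still surjective, and $X^{(i)}:=(H^{(i)})^{-1/2}W^{(i)}(H^{(i)})^{-1/2}$; then $(I,X^{(i)})=(H^{(i)})^{-1/2}(H^{(i)},W^{(i)})(H^{(i)})^{-1/2}\in\cH_{(I,-A)}$, so $X^{(i)}\in\cD_A(n_i)$, while $\sum_i\widetilde V_i^*\widetilde V_i=\sum_i V_i^*H^{(i)}V_i=I_n$ and $\sum_i\widetilde V_i^*X^{(i)}\widetilde V_i=X$. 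Thus $X=\sum_i\widetilde V_i^*X^{(i)}\widetilde V_i$ is a proper matrix convex combination, so matrix extremality of $X$ forces $n_i=n$ and $X^{(i)}$ unitarily equivalent to $X$. Conjugating back, $(H^{(i)},W^{(i)})=(H^{(i)})^{1/2}(I_n,X^{(i)})(H^{(i)})^{1/2}$ is an invertible congruence of $(I_n,X)$, so the conic decomposition was trivial; therefore $(I_n,X)$ spans an extreme ray.

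The step I expect to require the most care is the reduction to an invertible first coordinate $H^{(i)}$: it rests on pointedness of $\cH_{(I,-A)}$, which is exactly where boundedness of $\cD_A$ and minimality (linear independence of the $A_j$) of the defining tuple are genuinely needed; dropping these hypotheses would force a more delicate analysis. A secondary, purely bookkeeping, point is to ensure that the notion of ``extreme ray'' / ``trivial conic combination'' used for the matrix convex cone $\cH_{(I,-A)}$ is the homogeneous counterpart of a matrix extreme point — namely, that every proper conic decomposition consists of congruences of the given tuple — after which the two implications above line up with it directly.
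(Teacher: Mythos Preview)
The paper does not prove this proposition --- it is cited from Kriel --- so there is no in-paper argument to compare against line by line. Your dictionary between matrix convex combinations in $\cD_A$ and conic decompositions in $\cH_{(I,-A)}$ is exactly Kriel's insight and the right starting point, and the reduction to invertible first coordinate via $\ker H^{(i)}\subseteq\ker W^{(i)}_j$ (using boundedness and linear independence of the $A_j$) is correct.

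The genuine gap is in what you flag as ``purely bookkeeping''. The paper (and Kriel) uses the \emph{classical} extreme-ray notion for the ordinary convex cone $\cH_{(I,-A)}(n)$: a decomposition $(I,X)=Z_1+Z_2$ is trivial only when each $Z_i=\alpha_i(I,X)$ is a \emph{scalar multiple}. Your ``only if'' argument, after compression and normalization, shows only that each $Z_i$ is an \emph{invertible congruence} $T_i^*(I,X)T_i$ of $(I,X)$ --- that is all matrix extremality of $X$ gives, since applying the definition to the resulting proper combination $X=\sum \widetilde V_i^{\,*} X^{(i)}\widetilde V_i$ forces $n_i=n$ and $X^{(i)}$ unitarily equivalent to $X$ but says nothing about $\widetilde V_i^{\,*}\widetilde V_i=H_i$ being scalar. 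Passing from ``congruence'' to ``scalar multiple'' is precisely the step Kriel handles via the Ramana--Goldman kernel characterization of extreme rays of a spectrahedral cone (and this is exactly how the paper then derives the kernel criterion in the next proposition); your direct combinatorial argument does not supply it. So the notion your proof actually establishes equivalence with is strictly weaker than the classical one, and bridging the two is the substantive step, not bookkeeping. Your ``if'' direction, by contrast, is fine once read under the classical notion: from $(I,X)=\sum V_i^*(I,X^{(i)})V_i$ classical extremality gives $V_i^*(I,X^{(i)})V_i=\alpha_i(I,X)$, hence $V_i^*V_i=\alpha_i I$ with $\alpha_i>0$, so $V_i/\sqrt{\alpha_i}$ is a surjective isometry and therefore a unitary.
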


Combining the above with \cite[Corollary 4]{ramana1995geometric} leads to the following kernel-based classification of the matrix extreme points of a free spectrahedron.

\begin{prop}
\cite[Theorem 2.6 (3)]{epperly2024matex}
    \label{prop:MatExker}
    Let $A \in SM_d (\C)^g$ and let $X \in \cD_A(n)$. Also, assume that $\cD_A$ is bounded. Then $X$ is a matrix extreme point of $\cD_A$ if and only if all tuples $(H,W) = (H,W_1,\dots,W_g) \in SM_n (\C)^{g+1}$ that satisfy
        \[
        \ker L_A (X) \subseteq \ker \Lambda_{(I,-A)} (H,W)
        \]
        are of the form $(H,W) = \alpha (I,X)$ for some $\alpha \in \R$.
\end{prop}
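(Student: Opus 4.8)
The plan is to derive the proposition by combining the homogenization correspondence recorded just above it (\cite[Proposition 6.5(c)]{kriel2019intro}) with the standard kernel description of the facial structure of a spectrahedral cone (\cite[Corollary 4]{ramana1995geometric}). First I would fix $z := (H,W) \in SM_n(\C)^{g+1}$, regarded as a point of the real vector space of Hermitian tuples, and write $\mathcal L(z) := \Lambda_{(I,-A)}(H,W) = I_d \otimes H - \Lambda_A(W) \in SM_{dn}(\C)$ for the associated linear matrix pencil, so that $\cH_{(I,-A)}(n) = \{z : \mathcal L(z) \succeq 0\}$ is a spectrahedral cone. The boundedness hypothesis on $\cD_A$ enters here: it forces $\cH_{(I,-A)}$ to be a pointed (line-free) closed convex cone, which is the setting in which both cited results apply cleanly.

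Next I would invoke \cite[Corollary 4]{ramana1995geometric}: for $z_0 \in \cH_{(I,-A)}(n)$, the linear span of the smallest face of $\cH_{(I,-A)}(n)$ containing $z_0$ equals $\{z \in SM_n(\C)^{g+1} : \ker \mathcal L(z_0) \subseteq \ker \mathcal L(z)\}$. Since this subspace always contains $\R z_0$, the content of the statement is the reverse inclusion, i.e.\ $z_0$ spans an extreme ray of $\cH_{(I,-A)}(n)$ if and only if every $z$ with $\ker \mathcal L(z_0) \subseteq \ker \mathcal L(z)$ is a real scalar multiple of $z_0$.

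Then I specialize to $z_0 = (I_n, X)$. The key identity, already recorded in the excerpt, is $\mathcal L(I_n,X) = \Lambda_{(I,-A)}(I_n,X) = I_{dn} - \Lambda_A(X) = L_A(X)$, while $\mathcal L(H,W) = \Lambda_{(I,-A)}(H,W)$ in general. Hence the extreme-ray criterion becomes: $(I_n,X)$ spans an extreme ray of $\cH_{(I,-A)}(n)$ if and only if every $(H,W) \in SM_n(\C)^{g+1}$ with $\ker L_A(X) \subseteq \ker \Lambda_{(I,-A)}(H,W)$ is of the form $(H,W) = \alpha(I_n,X)$ with $\alpha \in \R$. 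By \cite[Proposition 6.5(c)]{kriel2019intro}, $(I_n,X)$ spans an extreme ray of $\cH_{(I,-A)}(n)$ exactly when $X$ is matrix extreme in $\cD_A$, and the proposition follows.

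The main obstacle is not a deep one — the statement is essentially a dictionary between two known results — but a couple of points need care. First, one must check that the hypotheses of \cite[Corollary 4]{ramana1995geometric} are genuinely in force: the cone must be pointed (this is where boundedness of $\cD_A$ is used), and the pencil must be taken over $\R$, which is legitimate since $SM_n(\C)^{g+1}$ and $SM_{dn}(\C)$ are real vector spaces and $\mathcal L(z)\succeq 0$ is a genuine real LMI. Second, if $\cD_A$ were unbounded the cone $\cH_{(I,-A)}$ could contain a line, so the very notion of extreme ray — and hence the whole argument — would have to be reformulated; the stated hypothesis rules this out. Finally, it is worth recording explicitly that $\R(I_n,X) \subseteq \{(H,W) : \ker L_A(X) \subseteq \ker \Lambda_{(I,-A)}(H,W)\}$ always holds, so the real content of the criterion is the nontrivial inclusion in the opposite direction.
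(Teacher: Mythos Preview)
Your proposal is correct and follows exactly the derivation the paper indicates: the paper does not give its own proof of this proposition but instead cites \cite{epperly2024matex} and remarks that it is obtained by ``combining the above [\cite[Proposition 6.5(c)]{kriel2019intro}] with \cite[Corollary 4]{ramana1995geometric},'' which is precisely what you do. Your additional care about pointedness of the cone and the real-LMI setup is appropriate and fills in the details the paper leaves implicit.
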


There are also kernel-based classifications of Arveson and Euclidean extreme points of free spectrahedra, see \cite[Lemma 2.1 (3)]{evert2019span} and \cite[Corollary 2.3]{evert2018extreme}, respectively. We do not make use of these as the Arveson and Euclidean extreme points of a Cartesian product of free spectrahedra are straightforward to classify, see the upcoming Lemma \ref{lem:ExtremeOfCartesian} \eqref{it:ArvEucCartesian}. However, combining \cite[Corollary 2.3]{evert2018extreme} with Proposition \ref{prop:MatExker} leads to another useful corollary.

\begin{cor}
\label{cor:EucMatKer}
    Let $A \in SM_d (\C)^g$ and let $X \in \cD_A(n)$. Also assume that $\cD_A$ is bounded and that $X$ is an Euclidean extreme point of $\cD_A$. Then $X$ is a matrix extreme point of $\cD_A$ if and only if all tuples $(H,W) = (H,W_1,\dots,W_g) \in SM_n (\C)^{g+1}$ that satisfy
        \[
        \ker L_A (X) \subseteq \ker \Lambda_{(I,-A)} (H,W),
        \]
        also satisfy $H = \alpha I$ for some $\alpha \in \R$.
\end{cor}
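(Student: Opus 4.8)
The plan is to obtain this as a direct combination of Proposition~\ref{prop:MatExker} with the kernel characterization of Euclidean extreme points from \cite[Corollary 2.3]{evert2018extreme}. Recall that the latter states: a tuple $X \in \cD_A(n)$ is an Euclidean extreme point of $\cD_A$ if and only if the only $W \in SM_n(\C)^g$ with $\ker L_A(X) \subseteq \ker \Lambda_A(W)$ is $W = 0$. The key observation is that the conclusion ``$(H,W) = \alpha(I,X)$'' in Proposition~\ref{prop:MatExker} constrains both $H$ and $W$, whereas under the extra hypothesis that $X$ is Euclidean extreme, the $W$-part is already pinned down by $H$, so only the condition on $H$ is genuinely needed.

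For the forward implication, if $X$ is matrix extreme then Proposition~\ref{prop:MatExker} says that every $(H,W) \in SM_n(\C)^{g+1}$ with $\ker L_A(X) \subseteq \ker \Lambda_{(I,-A)}(H,W)$ equals $\alpha(I,X)$ for some $\alpha \in \R$, and in particular $H = \alpha I$; this direction does not even use Euclidean extremality (which is automatic for matrix extreme points anyway). For the reverse implication, assume every such $(H,W)$ has $H = \alpha I$ for some $\alpha \in \R$; by Proposition~\ref{prop:MatExker} it then suffices to show that in fact $W = \alpha X$. Fix such a tuple and write $H = \alpha I$. Using $L_A(X) = \Lambda_{(I,-A)}(I,X)$ and linearity of $\Lambda_{(I,-A)}$ in its pair of arguments,
\[
\Lambda_{(I,-A)}(H,W) = \Lambda_{(I,-A)}(\alpha I, \alpha X) + \Lambda_{(I,-A)}(0, W - \alpha X) = \alpha\, L_A(X) - \Lambda_A(W - \alpha X).
\]
Since $\alpha L_A(X)$ vanishes on $\ker L_A(X)$, the inclusion $\ker L_A(X) \subseteq \ker \Lambda_{(I,-A)}(H,W)$ forces $\ker L_A(X) \subseteq \ker \Lambda_A(W - \alpha X)$. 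As $X$ is an Euclidean extreme point of the (bounded) free spectrahedron $\cD_A$, \cite[Corollary 2.3]{evert2018extreme} yields $W - \alpha X = 0$, i.e.\ $W = \alpha X$. Hence $(H,W) = \alpha(I,X)$, and Proposition~\ref{prop:MatExker} shows that $X$ is matrix extreme.

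I expect essentially no real obstacle here: the mathematical substance is entirely contained in Proposition~\ref{prop:MatExker} and \cite[Corollary 2.3]{evert2018extreme}, and what remains is the short bookkeeping above that trades the $W$-degree of freedom for Euclidean extremality. The only points worth double-checking are bookkeeping ones — that the sign conventions in $\Lambda_{(I,-A)}$ and the splitting into $(\alpha I,\alpha X)$ and $(0, W-\alpha X)$ are handled correctly, and that the standing boundedness assumption on $\cD_A$ is precisely what is needed for Proposition~\ref{prop:MatExker} to apply, while \cite[Corollary 2.3]{evert2018extreme} requires only $X \in \cD_A(n)$.
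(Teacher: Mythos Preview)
Your proof is correct and follows essentially the same approach as the paper: both combine Proposition~\ref{prop:MatExker} with \cite[Corollary~2.3]{evert2018extreme} to recover the $W$-component from the $H$-component under the Euclidean extremality hypothesis. Your handling is arguably slightly cleaner in that you treat the cases $\alpha=0$ and $\alpha\neq 0$ uniformly via the subtraction $W-\alpha X$, whereas the paper splits into these two cases and invokes parts (ii) and (iii) of \cite[Corollary~2.3]{evert2018extreme} separately; but this is a cosmetic difference, not a genuinely different route.
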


\begin{proof}
    Suppose $X$ is an Euclidean extreme point of $\cD_A$ and that $(\alpha I,W) \in SM_n (\C)^{g+1}$ satisfies
    \[
    \ker L_A (X) \subseteq \ker \Lambda_{(I,-A)} (\alpha I,W),
    \]
    for some $\alpha \in \R$. If $\alpha = 0$, then is equivalent to 
    \[
    \ker L_A (X) \subseteq \ker \Lambda_{(A)} (W),
    \]
    in which case \cite[Corollary 2.3 (ii)]{evert2018extreme} shows that $W = 0$ since $X$ is Euclidean extreme. Similarly, if $\alpha \neq 0$, then we have
    \[
    \ker L_A (X) \subseteq \ker \Lambda_{(I,-A)} (I,W/\alpha) = \ker L_A (W/\alpha).
    \]
    In this case, the proof of \cite[Corollary 2.3 (iii)]{evert2018extreme} shows that $W/\alpha = X$. In either case we find that $(\alpha I,W)  = (\alpha I, \alpha X)$. The result then follows from Proposition \ref{prop:MatExker}.
\end{proof}

\subsection{Two equivalent viewpoints on optimization problems of free polytopes}

In this section, we will use extreme points of matrix convex sets to find an alternative formulation of an optimization problem that we will encounter in Section \ref{sec:opti-prob-for-any-number} in our study of compatibility.

Given a free polytope $\cD_A$, a major goal in this paper is to understand the optimization problem
\begin{equation}
\label{eq:SumEigMax}
  \begin{split}
   \sup \quad & \lambda_{\max} \left(\sum X_i \otimes Y_i\right) \\
    \mathrm{subject~to} \quad
    & X \in \cD_A, \quad Y \in \cD_A^\square.
    \end{split}
\end{equation}
In addition, we also consider a level fixed version of the above problem, where the dimensions of the matrices $X_i$ and $Y_i$ are fixed. That is,
\begin{equation}
\label{eq:SumEigMaxFixedn}
   \begin{split}
   \sup \quad & \lambda_{\max} \left(\sum X_i \otimes Y_i\right)  \\
    \mathrm{subject~to} \quad
    & X \in \cD_A(n), \quad Y \in \cD_A^\square (n),
    \end{split}
\end{equation}
where $n \in \mathbb{N}$ is a fixed positive integer. We can now prove a statement that is similar in spirit to Proposition \ref{prop:spectrahedron-mcset-equivalent}.

\begin{thm}
\label{thm:MaxEigIsMinBallInclusionConst}
    Let $\cD_A$ be a bounded free polytope. The objective value of the optimization problem \eqref{eq:SumEigMax} is  equal to the objective value of 
    \begin{equation}
    \label{eq:MinBallContainment}
      \begin{split}
   \mathrm{minimize} \quad & \gamma \\
    \mathrm{subject~to} \quad
    & \delfree\cD_A \subseteq \gamma \cWmin(\cD_A(1)).
    \end{split}
    \end{equation}
    Similarly, if $n \in \mathbb{N}$ is a fixed positive integer, then the objective value of \eqref{eq:SumEigMaxFixedn} is equal to the objective value of 
      \begin{equation}
      \label{eq:MinBallContainmentFixedn}
      \begin{split}
   \mathrm{minimize} \quad & \gamma \\
    \mathrm{subject~to} \quad
    & \delmat\cD_A (j) \subseteq \gamma \cWmin(\cD_A(1)) \qquad \mathrm{for \ all\ }  j \leq n.
    \end{split}
    \end{equation}

Furthermore, the objective value of \eqref{eq:SumEigMaxFixedn} is attained for $X$ and $Y$ that are a direct sum of matrix extreme points of $\cD_A$ and $\cD_A^\square$, respectively.
    On the other hand, in the optimization problem \eqref{eq:SumEigMax}, it is sufficient to consider $X \in \delfree \cD_A$ and $Y \in \delfree\mathcal D_A^\square$.
\end{thm}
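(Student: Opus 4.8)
The plan is to prove the theorem in three stages: first relating the optimization problem \eqref{eq:SumEigMaxFixedn} to the containment problem \eqref{eq:MinBallContainmentFixedn}, then handling the unconstrained-level problem \eqref{eq:SumEigMax} versus \eqref{eq:MinBallContainment} by a limiting/union argument, and finally sharpening the attainment statements to free and matrix extreme points. I would start from the elementary duality $\lambda_{\max}(\sum X_i \otimes Y_i) \le \gamma$ being equivalent (after rescaling) to $\gamma I - \sum X_i \otimes Y_i \succeq 0$, i.e.\ to $\gamma^{-1} X \in \mathcal D_Y$ for the given $Y$. The key translation is: $Y \in \cD_A^\square = \cWmax(\cD_A(1)^\bullet)$ is, by Proposition~\ref{prop:spectrahedron-mcset-equivalent}(1), exactly the condition $\cD_A(1) \subseteq \cD_Y(1)$. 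So the supremum over admissible $Y$ of "the smallest $\gamma$ with $\gamma^{-1} X \in \mathcal D_Y$ for all valid $X$" is governed by how $\cD_Y$ can shrink relative to $\cD_A$ while keeping $\cD_A(1)\subseteq\cD_Y(1)$ — which is precisely an inclusion-constant computation, and by Proposition~\ref{prop:spectrahedron-mcset-equivalent}(2) and \eqref{eq:PolytopeDual} this is the same as asking for the largest $\cWmin(\cD_A(1))$-ball contained in $\cD_X$ on the relevant levels. Assembling these equivalences carefully, with the roles of $X$ and $Y$ and the quantifier order tracked, yields that the value of \eqref{eq:SumEigMaxFixedn} equals the value of \eqref{eq:MinBallContainmentFixedn}: one direction picks, for a near-optimal $(X,Y)$, the inclusion witness; the other direction, given $\gamma$ with $\delmat\cD_A(j)\subseteq\gamma\cWmin(\cD_A(1))$ for $j\le n$, produces a separating/supporting $Y$ at a matrix extreme point of $\cD_A$ of size $\le n$ realizing the bound.

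For the fixed-$n$ attainment claim, the mechanism is that the objective $\lambda_{\max}(\sum X_i\otimes Y_i)$ is, for fixed $X$, a convex function of $Y$ (a supremum of linear functionals $v^*(\sum X_i\otimes Y_i)v$), hence its maximum over the compact convex set $\cD_A^\square(n)$ is attained at a Euclidean extreme point of $\cD_A^\square(n)$; symmetrically in $X$ over $\cD_A(n)$. One then upgrades "Euclidean extreme point of level $n$" to "direct sum of matrix extreme points of levels $\le n$": by the Carathéodory-type / decomposition structure for matrix convex sets (a Euclidean extreme point of $\cD_A(n)$ which is reducible splits as a direct sum, and an irreducible Euclidean extreme point that is not matrix extreme can be written as a proper matrix convex combination of strictly smaller tuples, against which the objective cannot decrease — so one of those smaller pieces does at least as well), repeated application terminates in a direct sum of matrix extreme points, each of size $\le n$. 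I would invoke Corollary~\ref{cor:EucMatKer} and the kernel description of Proposition~\ref{prop:MatExker} here if a direct decomposition argument needs sharpening. The only subtlety is that the tensor objective is genuinely bilinear, so I alternate: fix $Y$ optimally, push $X$ to an extreme configuration, then fix that $X$ and push $Y$; since both sets are compact this alternation can be replaced by directly invoking extreme-point attainment in each variable separately and then noting the joint supremum over the products of extreme configurations still equals the global supremum.

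For the unconstrained problem \eqref{eq:SumEigMax} one takes the increasing union over $n$: the value of \eqref{eq:SumEigMax} is $\sup_n$ of the value of \eqref{eq:SumEigMaxFixedn}, and correspondingly $\delfree\cD_A\subseteq\gamma\cWmin(\cD_A(1))$ holds iff $\delmat\cD_A(j)\subseteq\gamma\cWmin(\cD_A(1))$ for all $j$, using the general fact that every matrix extreme point sits inside a free extreme point of a possibly larger (but still finite) level — or, in the bounded free polytope case, appealing to the structure of the dual free polytope whose matrix extreme points we will have classified — so a containment statement about all matrix extreme points is equivalent to one about all free extreme points via direct sums. Finally, to reduce \eqref{eq:SumEigMax} to $X\in\delfree\cD_A$, $Y\in\delfree\cD_A^\square$, I combine the previous paragraph with the fact that a direct sum of matrix extreme points, after passing to irreducible Arveson dilations, is dominated in objective value by a genuine free extreme point (an irreducible Arveson extreme point, by \cite[Theorem 1.1]{evert2018extreme}); concretely, $\lambda_{\max}(\sum X_i\otimes Y_i)$ does not decrease when $X$ is replaced by a dilation of itself, and dilating a direct sum of matrix extreme points to its Arveson cover lands on a free extreme tuple. \textbf{The main obstacle} I anticipate is the last reduction: controlling what happens to $\lambda_{\max}(\sum X_i \otimes Y_i)$ under dilation of $X$ (and independently of $Y$) while staying inside $\cD_A$ and $\cD_A^\square$ respectively — one must check that the Arveson dilation of a matrix extreme point stays in the free polytope (true, since dilations of elements of $\cD_A$ are again in $\cD_A$ by the very definition of the free spectrahedron when compressions are involved, but here it is the other direction so one needs that free polytopes are closed under the relevant dilation, which holds because $\cWmax(\cD_A(1)) = \cD_A$) and that the eigenvalue can only go up, which follows from $\lambda_{\max}$ of a compression being at most $\lambda_{\max}$ of the full operator.
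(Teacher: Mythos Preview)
Your overall strategy of translating the eigenvalue problem into a containment problem via polar duality is the same as the paper's. However, your execution takes a considerably more circuitous route and contains a real gap.

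The paper's argument is short: it uses \eqref{eq:PolytopeDual} in the reversed form $(\cD_A^\square)^\circ = \cWmin(\cD_A(1))$ to see that $\cD_A \subseteq \gamma\cWmin(\cD_A(1))$ is literally the statement that $\Lambda_Y(X)\preceq\gamma I$ for all $X\in\cD_A$, $Y\in\cD_A^\square$, which is exactly \eqref{eq:SumEigMax}. For the fixed-$n$ case it fixes $X\in\cD_A(n)$, takes polar duals to get $X\in\gamma\cWmin(\cD_A(1))\iff \cD_A^\square\subset\gamma\cD_X$, and then invokes \cite[Theorem~3.5]{helton_matricial_2013}: because $X$ has size $n$, the full inclusion $\cD_A^\square\subset\gamma\cD_X$ is equivalent to the level-$n$ inclusion $\cD_A^\square(n)\subset\gamma\cD_X(n)$. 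This single citation is what makes the level-$n$ equivalence clean, and it is missing from your sketch; you gesture at ``inclusion-constant computations'' but never isolate the step that cuts down from all levels to level $n$.

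The genuine gap is your reduction to free extreme points. You propose to replace a direct sum of matrix extreme points by an Arveson dilation to land on a free extreme tuple, and you argue this stays inside $\cD_A$ ``because $\cWmax(\cD_A(1))=\cD_A$''. That reasoning is backwards: being a maximal matrix convex set tells you $\cD_A$ is closed under \emph{compressions}, not that arbitrary elements admit Arveson dilations inside $\cD_A$. What you actually need is the free Krein--Milman theorem for bounded free spectrahedra closed under complex conjugation, namely $\cD_A=\matco(\delfree\cD_A)$ from \cite[Theorem~1.1]{evert2019span}. The paper cites exactly this; it immediately gives $\delfree\cD_A\subseteq\gamma\cWmin(\cD_A(1))\iff\cD_A\subseteq\gamma\cWmin(\cD_A(1))$, and no convexity-plus-alternation or dilation argument is needed. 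Your alternating Euclidean-extreme $\to$ matrix-extreme descent for the fixed-$n$ attainment can be made to work, but it is superseded by the same mechanism: once you have the equivalence with \eqref{eq:MinBallContainmentFixedn}, the reduction to matrix extreme points is immediate from $\cD_A(n)=(\matco(\cup_{j\le n}\delmat\cD_A(j)))(n)$ \cite[Theorem~2.9]{hartz2021dilation}, and compactness of $\cD_A(n)$, $\cD_A^\square(n)$ gives attainment.
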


Before giving the proof, we expand on a technical point in the statement of Theorem \ref{thm:MaxEigIsMinBallInclusionConst}. As a consequence of \cite[Theorem 2.9]{hartz2021dilation} (see also \cite{webster1999krienmilman,kriel2019intro}) we have that $\cD_A = \matco (\delmat \cD_A),$ and moreover that 
\[
\cD_A (n) = \left(\matco (\cup_{j=1}^n \delmat \cD_A(j))\right)(n).
\] 
However, for $n \geq 2$, it may not be the case that $\cD_A (n) = \matco (\delmat \cD_A(n))$. In fact, $\delmat \cD_A (n)$ can be empty. For example, since a free simplex is the matrix convex hull of its free extreme points at level $1$, a free simplex has no matrix extreme points at levels greater than $1$. Another example is the matrix square in two variables, which has matrix extreme points at levels one and two but does not have matrix extreme points at any level greater than two. 

Intuitively, the issue is that matrix extreme points are required to be irreducible \cite{evert2018extreme}, hence a direct sum of matrix extreme points is not matrix extreme. Allowing direct sums of matrix extreme points in the optimization resolves the issue. 

We most commonly consider optimization problem \eqref{eq:MinBallContainmentFixedn} in the case where $n=2$. Of course, the optimal constant for $j=1$ is $\gamma=1$, so in this case it is sufficient to restrict to matrix extreme points at level $2$.

\begin{proof}[Proof of Theorem \ref{thm:MaxEigIsMinBallInclusionConst}]
First note that bounded free polytopes are closed under complex conjugation since their defining tuple can be taken to be real. It then follows from \cite[Theorem 1.1]{evert2019span} that $\matco(\delfree \cD_A)) = \cD_A$, hence $\delfree \cD_A \subseteq \gamma \cWmin(\cD_A (1))$ if and only if $\cD_A \subseteq \gamma \cWmin(\cD_A (1))$.
Next observe that if one exchanges the roles of $\cD_A$ and its dual in equation \eqref{eq:PolytopeDual} then one can obtain
\begin{equation}
    (\cD_A^\square)^\circ = \cWmin (\cD_A (1)). 
\end{equation}
We then have that $\delfree \cD_A \subseteq \gamma(\cD_A^\square)^\circ$ if and only if for all $X \in \cD_A$ and all $Y \in \cD_A^\square$, one has
\[
\Lambda_Y (X) = \sum Y_i \otimes X_i \preceq \gamma I. 
\]
This is in turn equivalent to stating that the maximum achieved by \eqref{eq:SumEigMax} is at most $\gamma$, from which the equivalence of \eqref{eq:SumEigMax} and \eqref{eq:MinBallContainment} follows. 

We now argue that the objective value of \eqref{eq:SumEigMaxFixedn} is equal to that of \eqref{eq:MinBallContainmentFixedn}. To this end, fix $X \in \cD_A (n)$. Then taking polar duals shows that $X \in \gamma \cW^{\min} (\cD_A (1))$ if and only if $\cD_A^\square \subset \gamma \cD_X$. Moreover, since $X$ is a tuple of $n \times n$ matrices, as a consequence of \cite[Theorem 3.5]{helton_matricial_2013}, we have $\cD_A^\square \subset \gamma \cD_X$ if and only if $\cD_A^\square (n)\subset \gamma \cD_X (n)$. This is in turn equivalent to 
\[
\Lambda_X (Y) = \sum X_i \otimes Y_i \preceq \gamma I \qquad \mathrm{for\ all\ } Y \in \cD_A^\square (n).
\]
We conclude that $\cD_A(n) \subset \gamma \cW^{\min} (\cD_A (1))$ if and only if the objective value of \eqref{eq:SumEigMaxFixedn} at most $\gamma$. The fact that it is sufficient to test on matrix extreme points in \eqref{eq:MinBallContainmentFixedn} is immediate from \cite[Theorem 2.9]{hartz2021dilation} which shows that $\cD_A(n) = \left(\matco(\cup_{j=1}^n \delmat \cD_A(j))\right)(n)$.

The fact that one need only test on free extreme points and matrix extreme points in optimization problems \eqref{eq:SumEigMax} and \eqref{eq:SumEigMaxFixedn}, respectively, follows easily as a consequence of the above arguments and the equivalence of these problems to \eqref{eq:MinBallContainment} and \eqref{eq:MinBallContainmentFixedn}. The fact that the maximum is attained in optimization problem \eqref{eq:SumEigMaxFixedn} is immediate from the compactness of $\cD_A (n)$ and $\cD_A^\square (n)$.
\end{proof}

\begin{cor}
    Let $\cD_A$ be a bounded free polytope and let $\cD_A^\square$ be its dual free polytope. The optimal inclusion constants for $\cD_A$ and $\cD_A^\square$ in optimization problem \eqref{eq:MinBallContainment} are equal. The same is true for optimization problem \eqref{eq:MinBallContainmentFixedn}.
\end{cor}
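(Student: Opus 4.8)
The plan is to reduce everything to the symmetry of the tensor product together with Theorem~\ref{thm:MaxEigIsMinBallInclusionConst} applied to \emph{both} $\cD_A$ and $\cD_A^\square$. The first step is to observe that $\cD_A^\square$ is again a bounded free polytope: since $\cD_A(1) = \{x : I - \Lambda_A(x) \succeq 0\}$ is bounded and contains $0$ in its interior (at $x = 0$ the pencil equals $I \succ 0$), its polar dual $\cD_A(1)^\bullet$ is again a bounded polytope with $0$ in its interior, so $\cD_A^\square = \cWmax(\cD_A(1)^\bullet)$ is a bounded free polytope. Using $\cD_A^\square(1) = \cD_A(1)^\bullet$, the bipolar theorem, and the fact that a free polytope equals the maximal matrix convex set over its first level, one computes $(\cD_A^\square)^\square = \cWmax\big((\cD_A(1)^\bullet)^\bullet\big) = \cWmax(\cD_A(1)) = \cD_A$.

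Next I would record the key symmetry: for any tuples $X$ and $Y$ of self-adjoint matrices of compatible sizes, $\sum_i X_i \otimes Y_i$ and $\sum_i Y_i \otimes X_i$ are unitarily equivalent via the flip operator, hence $\lambda_{\max}\big(\sum_i X_i \otimes Y_i\big) = \lambda_{\max}\big(\sum_i Y_i \otimes X_i\big)$. Consequently, the optimization problem \eqref{eq:SumEigMax} for $\cD_A$, namely $\sup \lambda_{\max}\big(\sum X_i \otimes Y_i\big)$ over $X \in \cD_A$ and $Y \in \cD_A^\square$, has the same objective value as the corresponding problem \eqref{eq:SumEigMax} for $\cD_A^\square$, which, using $(\cD_A^\square)^\square = \cD_A$, is $\sup \lambda_{\max}\big(\sum X_i \otimes Y_i\big)$ over $X \in \cD_A^\square$ and $Y \in \cD_A$: after renaming the optimization variables and flipping the tensor factors, these are literally the same supremum.

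Finally, Theorem~\ref{thm:MaxEigIsMinBallInclusionConst} identifies the objective value of \eqref{eq:SumEigMax} for $\cD_A$ with the optimal constant $\gamma$ in \eqref{eq:MinBallContainment} for $\cD_A$, and the same theorem applied to the bounded free polytope $\cD_A^\square$ identifies the objective value of \eqref{eq:SumEigMax} for $\cD_A^\square$ with the optimal constant in \eqref{eq:MinBallContainment} for $\cD_A^\square$; combining with the previous paragraph gives equality of the two optimal constants. For the level-fixed assertion I would run the identical argument with \eqref{eq:SumEigMaxFixedn} in place of \eqref{eq:SumEigMax}: at a fixed level $n$ the variables $X \in \cD_A(n)$ and $Y \in \cD_A^\square(n)$ are both $n \times n$, so the flip symmetry applies verbatim, and Theorem~\ref{thm:MaxEigIsMinBallInclusionConst} ties \eqref{eq:SumEigMaxFixedn} to \eqref{eq:MinBallContainmentFixedn} for both polytopes. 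I do not expect a genuine obstacle here; the only points needing a little care are verifying that $\cD_A^\square$ is once more a bounded free polytope, so that the theorem is applicable to it, and the identity $(\cD_A^\square)^\square = \cD_A$, both of which follow routinely from the polar-duality facts collected earlier in this section.
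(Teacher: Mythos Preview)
Your proof is correct and follows essentially the same approach as the paper: establish $(\cD_A^\square)^\square=\cD_A$, note the symmetry of the optimization problems \eqref{eq:SumEigMax} and \eqref{eq:SumEigMaxFixedn} under swapping $\cD_A$ and $\cD_A^\square$, and invoke Theorem~\ref{thm:MaxEigIsMinBallInclusionConst}. The paper's version is just terser, leaving the flip symmetry and the verification that $\cD_A^\square$ is again a bounded free polytope implicit.
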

\begin{proof}
    If $\cD_A$ is bounded, then it is straightforward to show that $(\cD_A^\square)^\square = \cD_A$. The proof is then immediate from Theorem \ref{thm:MaxEigIsMinBallInclusionConst}.
\end{proof}

\subsubsection{Optimality of optimization problem \eqref{eq:SumEigMax}}
    To the authors' knowledge, it is an open question if optimization problem \eqref{eq:SumEigMax} attains its maximum. A challenge there is in general no bound on the level one must check in order to test the containment of a free spectrahedron in a general matrix convex set.

    \begin{prop}
    \label{prop:DAnotinKj}
        Let $\mathcal D_{\square, 3} \subset SM(\C)^3$  be the matrix cube in three variables. Then for each $j \in \mathbb{N}$, there exists a matrix convex set $K_j \subset SM(\C)^3$ such that $\mathcal D_{\square, 3} (j) \subseteq K_j$ but $\mathcal D_{\square, 3}(j+1) \not\subseteq K_j$. 
    \end{prop}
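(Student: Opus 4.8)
The plan is to build, for each $j$, a matrix convex set $K_j$ whose level-$j$ restriction contains $\mathcal D_{\square,3}(j)$ but which fails to contain $\mathcal D_{\square,3}(j+1)$. The natural candidate is to take $K_j$ to be the matrix convex hull of the truncation of the matrix cube up to level $j$, i.e.\ $K_j := \matco\!\big(\bigcup_{i=1}^{j} \mathcal D_{\square,3}(i)\big)$. By construction $\mathcal D_{\square,3}(i) \subseteq K_j$ for all $i \le j$, and $K_j \subseteq \mathcal D_{\square,3}$ since the latter is matrix convex and contains all the generating levels. So the only thing to prove is the strict inclusion $K_j \subsetneq \mathcal D_{\square,3}$ at level $j+1$, that is, that some tuple in $\mathcal D_{\square,3}(j+1)$ is \emph{not} a matrix convex combination of tuples from levels $\le j$.

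First I would recall that $\mathcal D_{\square,3}$, being a bounded free polytope, satisfies $\mathcal D_{\square,3} = \matco(\delmat \mathcal D_{\square,3})$ and, more precisely by \cite[Theorem 2.9]{hartz2021dilation}, $\mathcal D_{\square,3}(n) = \big(\matco(\bigcup_{i=1}^{n}\delmat \mathcal D_{\square,3}(i))\big)(n)$. Thus it suffices to exhibit, for each $j$, a matrix extreme point of $\mathcal D_{\square,3}$ living at some level $m > j$. Indeed, if $X \in \delmat\mathcal D_{\square,3}(m)$ with $m \ge j+1$, then $X$ cannot be written as a proper matrix convex combination of tuples of smaller size, hence in particular $X \notin \big(\matco(\bigcup_{i=1}^{j}\mathcal D_{\square,3}(i))\big)(m) \supseteq K_j(m)$ would follow once one checks that membership in $K_j$ at level $m$ forces a representation using only summands of size $\le j$; the standard compression argument (compress any matrix convex combination to the range, as in the definition of proper combination) reduces a general representation to a proper one without increasing summand sizes, and irreducibility of the matrix extreme point $X$ rules it out. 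So the crux reduces to: \emph{the matrix cube in three variables has matrix extreme points at arbitrarily high levels.}

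This last fact is where the real content lies, and it is the main obstacle. For $g=1$ (matrix interval) and $g=2$ (matrix square) the matrix extreme points occur only at levels $1$ and $\le 2$ respectively, as noted in the discussion after Theorem~\ref{thm:MaxEigIsMinBallInclusionConst}; the phenomenon changes qualitatively at $g=3$. I would invoke (or cite) the known classification of free/matrix extreme points of the free cube: the matrix cube $\mathcal D_{\square,g}$ with $g \ge 3$ possesses free extreme points — equivalently irreducible Arveson extreme points — at arbitrarily large matrix size, a fact established in the free-spectrahedra literature (e.g.\ via the Clifford-algebra representations or the explicit families constructed in work on extreme points of the free cube / matrix diamond duality). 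Concretely, one can take irreducible tuples of self-adjoint unitaries (so each $X_i^2 = I$, forcing $X \in \mathcal D_{\square,3}$ and $\ker L_A(X) $ large) generating an irreducible representation of the relevant Clifford-type relations; such tuples exist in all sufficiently large dimensions and are free extreme, hence matrix extreme. Since free extreme points are matrix extreme, choosing one at level $m \ge j+1$ completes the argument.

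To assemble the proof I would: (1) define $K_j = \matco(\bigcup_{i=1}^j \mathcal D_{\square,3}(i))$ and verify $\mathcal D_{\square,3}(i) \subseteq K_j$ for $i \le j$ and $K_j \subseteq \mathcal D_{\square,3}$; (2) record that $K_j$ is closed and matrix convex, using \cite[Theorem 3.14]{evert2025matrix} (as in Lemma~\ref{lemma:RealBipolar}) if closedness is needed; (3) quote the existence of an irreducible Arveson (equivalently free, equivalently matrix) extreme point $X$ of $\mathcal D_{\square,3}$ at a level $m \ge j+1$, taking $m=j+1$ after padding if necessary; (4) argue that $X \notin K_j(m)$: any representation of $X$ as a matrix convex combination of elements of $\bigcup_{i \le j}\mathcal D_{\square,3}(i)$ can be compressed to a proper one with summand sizes still $\le j < m$, contradicting that $X$ is matrix extreme. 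The one point to be careful about in step (4) is that ``matrix extreme'' as defined here only excludes proper combinations of tuples of size $\le n$; but a combination with summands of size $< m$ is automatically ruled out since, after compressing to ranges, all surjective blocks have size $\le j < m = n$, and matrix extremality of $X$ forces every summand to have size exactly $m$ — contradiction. Hence $\mathcal D_{\square,3}(j+1) \not\subseteq K_j$, as desired.
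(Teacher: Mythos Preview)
Your approach is essentially the paper's: set $K_j = \matco(\mathcal D_{\square,3}(j))$ (your $\bigcup_{i\le j}$ gives the same set, since lower levels are compressions of level $j$) and use an extreme point at level $j+1$ to witness non-inclusion. Two simplifications are worth noting. First, the paper uses a \emph{free} extreme point $X \in \mathcal D_{\square,3}(j+1)$ directly: by the definition of free extreme, any matrix convex combination expressing $X$ with nonzero $V_i$ must have every summand of size $\ge j+1$, so $X \notin K_j$ is immediate and your compression-to-proper argument is unnecessary. Second, your ``padding if necessary'' remark is incorrect (direct-summing destroys irreducibility and hence both matrix and free extremity) and also unneeded: the paper cites \cite[Example 5.9]{evert2024extreme} for the fact that $\mathcal D_{\square,3}$ has free extreme points at \emph{every} level, so one simply takes $X$ at level exactly $j+1$.
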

    \begin{proof}
        Fix $j \in \mathbb{N}$ and set $K_j = \matco(\mathcal D_{\square, 3} (j))$. Then it is immediate that $\mathcal D_{\square, 3}(j) \subset K_j$. However, $\mathcal D_{\square, 3}$ has free extreme points at all levels, see \cite[Example 5.9]{evert2024extreme}. Let $X \in \mathcal D_{\square, 3}(j+1)$ be a free extreme point of $\mathcal D_{\square, 3}$. Then from the definition of a free extreme point, $X \notin \matco(\mathcal D_{\square, 3} (j)) = K_j$, hence $\mathcal D_{\square, 3}(j+1) \not\subseteq K_j$. 
    \end{proof}

    The barrier to testing containment in this way is that a $j$-positive map on a subspace of a matrix algebra may fail to have a completely positive extension. We direct the reader to \cite{Paulsen2002} for definitions related to completely positive maps. 

    \begin{prop}
    \label{prop:NoCPExtension}
    There exists a subspace $\mathcal{S} \subset M_6 (\C)$ such that for each $j \in \mathbb{N}$, there exists an $n_j \in \mathbb{N}$ and a $j$-positive map $\tau_j: \mathcal{S} \to M_{n_j} (\C)$ such that $\tau_j$ is not $(j+1)$-positive. As a consequence, if $j \geq 6$, then $\tau_j$ cannot have a $j$-positive extension to all of $M_6(\C)$.
    \end{prop}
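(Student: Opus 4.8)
The plan is to take $\mathcal{S}$ to be the operator system generated by a minimal defining tuple of the three‑variable matrix cube and to transport Proposition~\ref{prop:DAnotinKj} into the language of positive maps. Concretely, write $\mathcal D_{\square,3}=\mathcal D_A$ with $A=(A_1,A_2,A_3)\in SM_6(\C)^3$, where $A_i$ is the block‑diagonal matrix having the Pauli matrix $\sigma_X$ in its $i$-th $2\times 2$ diagonal block and $0$ elsewhere; this is a minimal defining tuple, of size $2\cdot 3=6$. Set $\mathcal{S}:=\operatorname{span}\{I_6,A_1,A_2,A_3\}\subset M_6(\C)$. The basic dictionary to establish is: for any $B\in SM_n(\C)^3$, the unital linear map $\tau\colon\mathcal{S}\to M_n(\C)$ with $\tau(A_i)=B_i$ is $k$-positive if and only if $\mathcal D_A(k)\subseteq\mathcal D_B(k)$. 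One direction is immediate by applying $\tau\otimes\operatorname{id}_k$ to the pencils $I_k\otimes I_6-\sum_i W^i\otimes A_i$ with $W\in\mathcal D_A(k)$. For the converse, a selfadjoint element of $M_k(\mathcal{S})$ has the form $P=L\otimes I_6+\sum_i X^i\otimes A_i$ with $L,X^i\in SM_k(\C)$; boundedness of $\mathcal D_A(1)$ together with $0$ lying in its interior forces $L\succeq 0$ whenever $P\succeq 0$, and then conjugating by $(L+\varepsilon I)^{-1/2}\otimes I$ and letting $\varepsilon\downarrow 0$ reduces $P\succeq 0$ (respectively $(\operatorname{id}_k\otimes\tau)(P)\succeq 0$) to the membership of a tuple in $\mathcal D_A(k)$ (respectively $\mathcal D_B(k)$). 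This is the level-$k$ refinement of the unital completely positive map description of free spectrahedron inclusions from \cite{helton_matricial_2013}, read off from the homogeneous free spectrahedra of \cite{kriel2019intro}.

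Now fix $j\in\mathbb N$ and let $K_j:=\matco(\mathcal D_{\square,3}(j))$, as in the proof of Proposition~\ref{prop:DAnotinKj}. Since $\mathcal D_{\square,3}(j)$ is compact, $K_j$ is a closed matrix convex set; since $\mathcal D_{\square,3}$ is matrix convex and every tuple of size $k\le j$ in $\mathcal D_{\square,3}$ dilates to one of size $j$ by padding with zeros, $K_j(k)=\mathcal D_{\square,3}(k)$ for all $k\le j$; and $0$ lies in the interior of every level of $K_j$, as $K_j\supseteq\cWmin(\mathcal D_{\square,3}(1))$. Choose a free extreme point $X\in\mathcal D_{\square,3}(j+1)$ of $\mathcal D_{\square,3}$, which exists by \cite[Example~5.9]{evert2024extreme}; being free extreme, $X$ cannot be written as a matrix convex combination of tuples of size at most $j$, so $X\notin K_j(j+1)$. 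By the Effros–Winkler matricial Hahn–Banach theorem \cite{Effros1997}, $K_j$ is the intersection of the free spectrahedra containing it, so there is a tuple $B=B^{(j)}\in SM_{n_j}(\C)^3$ of some finite size $n_j$ (in fact $n_j=j+1$ works) with $K_j\subseteq\mathcal D_B$ but $X\notin\mathcal D_B(j+1)$. Put $\tau_j\colon\mathcal{S}\to M_{n_j}(\C)$, $\tau_j(A_i)=B_i$. Since $\mathcal D_{\square,3}(j)=K_j(j)\subseteq\mathcal D_B(j)$, the dictionary shows $\tau_j$ is $j$-positive; since $X\in\mathcal D_{\square,3}(j+1)\setminus\mathcal D_B(j+1)$, we have $\mathcal D_{\square,3}(j+1)\not\subseteq\mathcal D_B(j+1)$, so $\tau_j$ is not $(j+1)$-positive.

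For the consequence, suppose $j\ge 6$ and that $\tau_j$ admitted a $j$-positive extension $\tilde\tau_j\colon M_6(\C)\to M_{n_j}(\C)$. A $j$-positive map is in particular $6$-positive when $j\ge 6$, and by Choi's criterion a $6$-positive map on $M_6(\C)$ is completely positive; hence $\tilde\tau_j$ would be completely positive, in particular $(j+1)$-positive, and therefore so would its restriction $\tau_j$, contradicting the previous paragraph. Thus no such extension exists. The step requiring most care is the dictionary between $k$-positivity of $\tau$ and the level-$k$ inclusion of the associated free spectrahedra — especially the reduction forcing $L\succeq 0$ — together with extracting a finite-dimensional separating pencil from Effros–Winkler; the reason an obstruction persists at every level $j$ rather than only near level $6$ is precisely Proposition~\ref{prop:DAnotinKj}, i.e.\ that the three-variable matrix cube has free extreme points at all levels.
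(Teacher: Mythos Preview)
Your proof is correct and follows essentially the same approach as the paper: both take $\mathcal S$ to be the operator system generated by a minimal defining tuple for the three-variable matrix cube, invoke Proposition~\ref{prop:DAnotinKj} together with Effros--Winkler separation to produce the separating pencil $B$, translate the level-$j$ inclusion $\cD_A(j)\subseteq\cD_B(j)$ into $j$-positivity of $\tau_j$, and finish with Choi's theorem. The paper cites \cite[Theorem~3.5]{helton_matricial_2013} for the $k$-positivity $\Leftrightarrow$ level-$k$ inclusion dictionary, whereas you sketch it directly via the homogeneous pencil; your parenthetical observation that $n_j=j+1$ suffices (from the precise form of Effros--Winkler, where the separating pencil lives at the level of the separated point) is a small refinement the paper does not state.
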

    \begin{proof}
        As in Lemma \ref{prop:DAnotinKj}, let $\mathcal D_{\square, 3}$ be the matrix cube in three variables, and let $A \in SM_6 (\R)^3$ be a minimal defining tuple for $\mathcal D_{\square, 3}$, so $\mathcal D_{\square, 3}=\cD_A$. Also fix a $j \in \mathbb{N}$ and let $K_j =  \matco(\cD_A (j))$. As $K_j$ is a closed bounded matrix convex set with $0$ in the interior of $K(1),$ using the Effros-Winkler matricial Hahn-Banach Theorem \cite{Effros1997,HMannals} shows that $K_j$ is a (possibly infinite) intersection of free spectrahedra. That is, there exists tuples $\{B^{(\alpha)}\}_\alpha \subset SM (\C)^g$ such that $K_j = \cap_\alpha \cD_{B^{(\alpha)}}$. 
        
       Since $\cD_A(j) \subset K_j(j)$ for each $j$, we must have $\cD_A(j) \subseteq \cD_{B^{(\alpha)}}(j)$ for each ${B^{(\alpha)}}$ in the collection. However, as $\cD_A(j+1) \not\subseteq K_j(j+1)$, there must exist some $\alpha_0$ such that if one sets $B := B^{{(\alpha_0)}}$, then $\cD_A(j+1) \not\subseteq \cD_{B}(j+1).$ Using \cite[Theorem 3.5]{helton_matricial_2013} now shows that the map $\tau_j: M_6(\C) \to M_{n_j} (\C)$ defined by
       \[
        \tau_j (I_6) = I_{n_j} \qquad \mathrm{and} \qquad \tau_{j} (A_\ell) = B_\ell \qquad \mathrm{for\ } \ell=1,2,3
       \]
       is $j$-positive but not $(j+1)$-positive. Here $n_j$ denotes the size of the matrix tuple $B \in SM_{n_j} (\R)^3$. 

       To complete the proof, note that if $j \geq 6$ and $\tau_j$ had a $j$-positive extension to all of $M_6(\C)$, then as a consequence of Choi's Theorem, e.g., see \cite[Theorem 3.14]{Paulsen2002}, the map $\tau_j$ would be completely positive, hence $\tau_j$ would be $(j+1)$-positive. 
    \end{proof}

    \begin{remark} It is not difficult to show that since $\mathcal D_{\square, 3}$ is bounded and closed under complex conjugation, each $\cD_{B^{(\alpha)}}$ can be assumed to be bounded and closed under complex conjugation. In particular, each $\cD_{B^{(\alpha)}}$ being closed under complex conjugation implies that each $B^{(\alpha)}$ can be taken to be a real matrix tuple. 

    Additionally, for context we note that it is well known that a positive map need not be completely positive, see \cite[Example 2.13]{Paulsen2002}. However, If $\tau$ is $n$-positive and maps into $M_n(\C)$ or if $\tau$ maps from $M_n(\C)$ and is defined on all of $M_n(\C)$, then $\tau$ necessarily is completely positive, see \cite[Theorem 3.7 and Theorem 3.14]{Paulsen2002}. The above proposition is intended to highlight that if one maps from a strict subspace of a finite-dimensional space and does not restrict the dimension of the target space, then $j$-positivity cannot be used to guarantee complete positivity. 
    \end{remark}

    While Propositions \ref{prop:DAnotinKj} and \ref{prop:NoCPExtension} are negative results toward optimality of optimization problem \eqref{eq:SumEigMax}, they do not rule out the possibility of optimality. They only highlight that for a given free spectrahedron $\cD_A$, there is no fixed $j$ such that for all matrix convex sets $K$, if $\cD_A(j) \subset K(j)$ then $\cD_A \subset K$. However it could be the case that for each matrix convex set $K$, there does exist some $j_K$ depending on $K$ such that if $\cD_A(j_K) \subset K(j_K)$ then $\cD_A \subset K$, which would imply that optimization problem \eqref{eq:SumEigMax} achieves its maximum. 

    Moreover, we are not considering containment of $\cD_A$ into a general matrix convex set when we study optimization problem \eqref{eq:MinBallContainment}. We only need consider $\cD_A \subset \gamma\cWmin (\cD_A (1))$. In this special case, we have numerical evidence that suggests that the containment can be checked by checking a sufficiently high level whose size depends only on the number of variables in $A$. Furthermore, we observe that, roughly speaking, the contribution of a free simplex to the level bound is very limited.

    \begin{question}
        Let $A \in SM(\R)^g$ and assume that $\cD_A$ is a bounded free spectrahedron. Additionally assume that 
        \[
        \cD_A (2^{g-1}) \subset \gamma\cWmin (\cD_A (1)).
        \]
        Does it hold that $\cD_A \subset \gamma\cWmin (\cD_A (1))$?

        Additionally, suppose 
        \[
        \cD_A = \cD_{A^1} \times \cD_{A^2} \times \dots \times \cD_{A^M}
        \]
        where each $\cD_{A^j}$ is a free simplex and assume that 
        \[
        \cD_A (2^{M-1}) \subset \gamma\cWmin (\cD_A (1)).
        \]
        Does it hold that $\cD_A \subset \gamma\cWmin (\cD_A (1))$?
    \end{question}  

We mention that the level bound $2^{g-1}$ is known to be optimal in the case $\cD_A$ is a matrix cube in $g$-variables, see \cite[Theorem 6.6]{passer2018minimal}.

\section{Extreme points of Cartesian products of free spectrahedra}\
\label{sec:CartesianExtreme}
We now explore the extreme points of Cartesian products of free spectrahedra. First note that the Cartesian product of free spectrahedra is indeed a free spectrahedron.

\begin{lem}
\label{lem:CartesianIsFreeSpec}
Let $A \in SM_{d_1}(\C)^g$ and let $B \in SM_{d_2}(\C)^{h}$. Then $\cD_A \times \cD_B$ is the free spectrahedron
\[
\cD_A \times \cD_B= \cD_{(\tilde{A},\tilde{B})}  \subset SM(\C)^{g+h}
\]
where $\tilde{A} = A \oplus 0\in SM_{d_1+d_2}(\C)^g $ and $\tilde{B} = 0 \oplus B \in SM_{d_1+d_2}(\C)^h$. Furthermore, $\cD_{A} \times \cD_{B}$ is bounded if and only if both $\cD_A$ and $\cD_B$ are bounded.
\end{lem}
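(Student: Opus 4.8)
The plan is to verify the two claims directly from the definitions. For the first claim, I would compute the defining pencil of $\cD_{(\tilde A,\tilde B)}$ and check it cuts out exactly $\cD_A\times\cD_B$ level by level. Fix $n$ and a tuple $(X,Y)\in SM_n(\C)^{g+h}$. Since $\tilde A_i = A_i\oplus 0$ and $\tilde B_j = 0\oplus B_j$ act on $\C^{d_1}\oplus\C^{d_2}$, the pencil splits as a block-diagonal matrix on $(\C^{d_1}\oplus\C^{d_2})\otimes\C^n$:
\[
I_{(d_1+d_2)n} - \sum_{i} \tilde A_i\otimes X_i - \sum_j \tilde B_j\otimes Y_j
= \big(I_{d_1 n} - \Lambda_A(X)\big) \oplus \big(I_{d_2 n} - \Lambda_B(Y)\big)
= L_A(X)\oplus L_B(Y).
\]
A block-diagonal Hermitian matrix is positive semidefinite if and only if each block is, so $(X,Y)\in\cD_{(\tilde A,\tilde B)}(n)$ iff $L_A(X)\succeq 0$ and $L_B(Y)\succeq 0$, i.e.\ iff $X\in\cD_A(n)$ and $Y\in\cD_B(n)$. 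This is exactly the definition of $(\cD_A\times\cD_B)(n)$, and since this holds for every $n$, the two free spectrahedra coincide.

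For the boundedness claim, I would argue by projecting onto the factors. If both $\cD_A$ and $\cD_B$ are bounded, then for each $n$ the set $(\cD_A\times\cD_B)(n) = \cD_A(n)\times\cD_B(n)$ is a product of bounded sets, hence bounded; so the Cartesian product is bounded. Conversely, if $\cD_A\times\cD_B$ is bounded, note that the inclusion map $\cD_A(n)\hookrightarrow (\cD_A\times\cD_B)(n)$ sending $X\mapsto (X,0)$ (valid since $0\in\cD_B(n)$ because $0$ always lies in a free spectrahedron) is isometric onto its image, so $\cD_A(n)$ is bounded for every $n$; similarly for $\cD_B$. Hence both factors are bounded. (Alternatively one may invoke the standard fact that a free spectrahedron $\cD_C$ is bounded iff $\cD_C(1)$ is bounded and its defining pencil is "bounded" in the sense of the $C_i$ spanning enough directions; but the elementary projection argument is cleaner here.)

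Neither step presents a genuine obstacle — the only mild care needed is the bookkeeping of which tensor leg the identity blocks sit on, and observing that $0$ belongs to every level of a free spectrahedron so the slice maps $X\mapsto(X,0)$ and $Y\mapsto(0,Y)$ make sense. If one wants to avoid even the slice argument for the converse, one can instead note that boundedness of $\cD_A\times\cD_B$ at level $1$ already forces boundedness of $\cD_A(1)$ and $\cD_B(1)$, and boundedness of a free spectrahedron is equivalent to boundedness of its first level together with the archimedean property of its pencil, which is inherited by each block summand. I would present the slice argument as the main line since it is self-contained.
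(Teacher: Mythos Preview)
Your proof is correct and is precisely the direct verification the paper has in mind; the paper's own proof consists of the single word ``Straightforward.'' Your block-diagonal computation of $L_{(\tilde A,\tilde B)}(X,Y)=L_A(X)\oplus L_B(Y)$ and the slice argument $X\mapsto(X,0)$ for boundedness are exactly the intended steps.
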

\begin{proof}
    Straightforward.
\end{proof}

As mentioned previously, the Cartesian product of free polytopes is related to the direct sum of free polytopes via the free polyhedral dual.

\begin{prop}
\label{prop:CartesianPolyDualIsDirect}
    Let $\cD_A \times \cD_B$ be a Cartesian product of bounded free polyhedra. Then the dual free polytope $(\cD_A \times \cD_B)^\square$ is the direct sum of the dual free polytopes $\cD_A^\square$ and $\cD_B^\square$. As an immediate consequence, the dual free polytope of a direct sum of free polyhedra is the Cartesian product of the corresponding dual free polytopes. 
\end{prop}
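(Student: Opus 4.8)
\textbf{Proof plan for Proposition \ref{prop:CartesianPolyDualIsDirect}.}

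The plan is to reduce the statement to the polar-dual machinery already assembled in the Preliminaries, specifically equation \eqref{eq:PolytopeDual} and the relation $\cD_A^\square = \cWmax(\cD_A(1)^\bullet)$, together with the classical fact that the polar dual turns products into (convex hulls of) free joins. Concretely, write $\cD_A \subset SM(\mathbb F)^g$ and $\cD_B \subset SM(\mathbb F)^h$ for bounded free polytopes, and let $P := \cD_A(1) \subset \R^g$ and $Q := \cD_B(1) \subset \R^h$, both polytopes containing $0$ in their interior. The first level of $\cD_A \times \cD_B$ is $P \times Q \subset \R^{g+h}$, so by definition $(\cD_A \times \cD_B)^\square = \cWmax\big((P \times Q)^\bullet\big)$. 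The heart of the argument is the elementary polytope identity
\begin{equation*}
(P \times Q)^\bullet = \operatorname{conv}\big((P^\bullet \times \{0\}) \cup (\{0\} \times Q^\bullet)\big),
\end{equation*}
which holds because $\langle (x,y),(u,v)\rangle = \langle x,u\rangle + \langle y,v\rangle \le 1$ for all $(x,y) \in P\times Q$ iff both $\langle x,u\rangle \le 1$ for all $x \in P$ and $\langle y,v\rangle\le 1$ for all $y\in Q$ (using $0 \in P$ and $0 \in Q$ to decouple the two inequalities), i.e.\ iff $u \in P^\bullet$ and $v \in Q^\bullet$; one then takes the bipolar, noting $\{(u,v): u\in P^\bullet, v\in Q^\bullet\}^\bullet = \operatorname{conv}(\dots)^\bullet$.

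Next I would translate this back through the $\cWmax$ functor. Since $(P\times Q)^\bullet$ is a polytope whose vertices are exactly the vertices of $P^\bullet\times\{0\}$ together with those of $\{0\}\times Q^\bullet$, the free polytope $\cWmax((P\times Q)^\bullet)$ is defined by the LMI whose diagonal defining tuple lists the functionals $(u,0)$ for $u$ a vertex of $P^\bullet$ and $(0,v)$ for $v$ a vertex of $Q^\bullet$. Spelling out what this means: a tuple $(X,Y) \in SM_n(\mathbb F)^{g+h}$ lies in $\cWmax((P\times Q)^\bullet)$ iff $\Lambda_u(X) \preceq I$ for every $u \in P^\bullet$ and $\Lambda_v(Y) \preceq I$ for every $v \in Q^\bullet$, i.e.\ iff $X \in \cWmax(P^\bullet) = \cD_A^\square$ and $Y \in \cWmax(Q^\bullet) = \cD_B^\square$. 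Comparing with the definition of the direct sum $\cD_{A'}\oplus\cD_{B'} = \cD_{(A'\otimes I, I\otimes B')}$, whose defining pencil is exactly the block-diagonal pencil built from the pencils of $\cD_{A'}$ and $\cD_{B'}$, one sees that this membership condition is precisely membership in the direct sum $\cD_A^\square \oplus \cD_B^\square$. This gives $(\cD_A \times \cD_B)^\square = \cD_A^\square \oplus \cD_B^\square$.

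For the ``immediate consequence'', I would apply the identity just proved to the free polytopes $\cD_A^\square$ and $\cD_B^\square$ (which are again bounded free polytopes, using Lemma \ref{lem:CartesianIsFreeSpec} and boundedness of $\cD_A,\cD_B$), obtaining $(\cD_A^\square \times \cD_B^\square)^\square = (\cD_A^\square)^\square \oplus (\cD_B^\square)^\square$, and then invoke the bipolar fact $(\cD_A^\square)^\square = \cD_A$ for bounded free polytopes (noted in the Corollary following Theorem \ref{thm:MaxEigIsMinBallInclusionConst}) to rewrite this as $(\cD_A^\square \times \cD_B^\square)^\square = \cD_A \oplus \cD_B$; taking $\square$ of both sides and using the bipolar fact once more yields $\cD_A^\square \times \cD_B^\square = (\cD_A \oplus \cD_B)^\square$, which is the claimed statement after relabeling. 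The only genuinely nontrivial point is the polytope polar-dual identity for products in the first paragraph; everything after that is bookkeeping with the $\cWmax$/direct-sum dictionary, though one should be slightly careful that $0$ lies in the interior of $P$ and $Q$ so that all the polar duals are themselves bounded and the decoupling of inequalities is valid — this follows from $\cD_A, \cD_B$ being bounded free polytopes, hence their first levels are full-dimensional polytopes containing $0$ in the interior (after, if necessary, passing to the affine span, but for free polytopes as defined here $0$ is automatically interior).
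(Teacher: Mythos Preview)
Your argument has a genuine error in the second paragraph, at the step where you pass from the polytope $(P\times Q)^\bullet$ to an LMI description of $\cWmax\big((P\times Q)^\bullet\big)$. For a polytope $R$ with $0$ in its interior, the defining functionals of the free polytope $\cWmax(R)$ are the \emph{facet normals} of $R$, equivalently the vertices of $R^\bullet$ --- not the vertices of $R$ itself. You correctly identify the vertices of $R=(P\times Q)^\bullet$ as the points $(u,0)$ and $(0,v)$, but these are the wrong data: the LMI for $\cWmax(R)$ is built from the vertices of $R^\bullet = P\times Q$, which are the \emph{pairs} $(p,q)$ with $p$ a vertex of $P$ and $q$ a vertex of $Q$. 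The resulting constraints are the coupled inequalities $\Lambda_p(X)+\Lambda_q(Y)\preceq I$ for all such pairs, and it is precisely this coupling that produces the direct-sum pencil $(\mathcal E_A\otimes I,\,I\otimes\mathcal E_B)$ rather than a block-diagonal Cartesian-product pencil.

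As written, your computation yields the decoupled conditions ``$\Lambda_u(X)\preceq I$ and $\Lambda_v(Y)\preceq I$'', which describe a Cartesian product, not a direct sum; your final sentence then conflates the two by asserting that the direct-sum pencil $(A'\otimes I, I\otimes B')$ is ``block-diagonal'', which it is not in the relevant sense. The polar-dual identity $(P\times Q)^\bullet=\operatorname{conv}\big((P^\bullet\times\{0\})\cup(\{0\}\times Q^\bullet)\big)$ that you prove is correct but is actually a red herring here: it is the right tool for showing $\cWmax(P\times Q)=\cWmax(P)\times\cWmax(Q)$, a different statement. The paper's proof avoids this detour entirely --- it observes that the defining pencil of $K^\square$ is the direct sum of the extreme points of $K(1)$, notes that the extreme points of $(\cD_A\times\cD_B)(1)$ are exactly pairs of extreme points, and reads off the tensor structure directly.
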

\begin{proof}
    For a general bounded free polytope $K$, it is straightforward to show that the dual free polytope $K^\square$ has defining pencil $\mathcal{E} = \oplus_{E \in \delfree \cD_A (1)} E$. That is, if $\mathcal{E}$ is a direct sum of the extreme points of $K (1)$, then $K^\square = \cD_\mathcal{E}$. 

    On the other hand, the extreme points at level $1$ of a Cartesian product of free spectrahedra are given by pairs of extreme points at level one of the respective free spectrahedra. That is $(X,Y) \in \delfree (\cD_A \times \cD_B) (1)$ if and only if $X \in \delfree \cD_A (1)$ and $Y \in \delfree \cD_B(1)$. Then $\mathcal{E}_A$ and $\mathcal{E}_B$ denote diagonal tuples equal to direct sums given by direct sums of the free extreme points of $\cD_A (1)$ and $\cD_B(1)$, respectively. It is then straightforward to check that 
    \[
    \mathcal{E}_{A \otimes B}:=(\mathcal{E}_A \otimes I, I \otimes \mathcal{E}_B)
    \]
    is a tuple that is given by a direct sum of the free extreme points of $(\cD_A \times \cD_B) (1)$. We conclude that
\[
(\cD_A \times \cD_B)^\square = \cD_{\mathcal{E}_{A \times B}} 
\]
is a direct sum of free polytopes. 
\end{proof}
\begin{remark}
    As we will soon see, the free extreme points of a Cartesian product of free spectrahedra are essentially Cartesian products of free extreme points, making them easy to work with. However, the situation for direct sums of free spectrahedra is less straightforward. At level one, a tuple $(X,Y)$ is a (free) extreme point of the direct sum of $\cD_A$ and $\cD_B$ if and only if either $X = 0$ and $Y$ is extreme in $\cD_B(1)$ or $Y=0$ and $X$ is extreme in $\cD_A(1)$. This does not extend to higher levels of direct sums of free spectrahedra. 
\end{remark}

We next give a kernel containment based classification of extreme points of a Cartesian product of free spectrahedra.

\begin{lem}
    \label{lem:ExtremeOfCartesian}
    Let $\cD_A$ and $\cD_B$ be bounded free spectrahedra and let $\cD_A \times \cD_B$ be the Cartesian product of $\cD_A$ and $\cD_B$. Then we have the following.
    \begin{enumerate}
        \item \label{it:ArvEucCartesian} $(X,Y) \in \cD_A \times \cD_B$ is an Arveson (Euclidean) extreme point of $\cD_A \times \cD_B$ if and only if $X$ and $Y$ are Arveson (Euclidean) extreme points of $\cD_A$ and $\cD_B$, respectively. 

        \item \label{it:MatExCartesian} $(X,Y)\in \cD_A \times \cD_B$ is a matrix extreme point of $\cD_A \times \cD_B$ if and only if all solutions $(H,W,Z)$ to the kernel containment
        \begin{equation}
\label{eq:CartesianMatExEquation}
\ker L_{A} (X) \subset \ker \Lambda_{(I,-A)} (H,W) \qquad \mathrm{and} \qquad \ker L_{B} (Y) \subset \ker \Lambda_{(I,-B)} (H,Z)
\end{equation}
have the form $(H,W,Z) = \alpha(I,X,Y)$ for some $\alpha \in \R$.
    \end{enumerate}
\end{lem}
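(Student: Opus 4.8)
The plan is to treat the Cartesian product $\cD_A \times \cD_B$ as the single free spectrahedron $\cD_{(\tilde A,\tilde B)}$ via Lemma \ref{lem:CartesianIsFreeSpec}, so that the general kernel-based classifications of extreme points can be applied directly and then decoded in terms of the two factors. For part \eqref{it:ArvEucCartesian}, I would first observe that $L_{(\tilde A,\tilde B)}(X,Y) = L_A(X) \oplus L_B(Y)$ because $\tilde A = A \oplus 0$ and $\tilde B = 0 \oplus B$; thus positivity of a dilation's pencil splits as a direct sum. For the Euclidean case, a dilation $\big(\begin{smallmatrix} X & \beta^{X} \\ (\beta^{X})^\ast & \gamma^{X}\end{smallmatrix}\big),\big(\begin{smallmatrix} Y & \beta^{Y} \\ (\beta^{Y})^\ast & \gamma^{Y}\end{smallmatrix}\big)$ lies in $\cD_A\times\cD_B$ iff each coordinate dilation lies in the respective factor, and a point of a convex set is Euclidean extreme iff it has no nontrivial such "block" perturbation; since $(X,Y)$ perturbs nontrivially iff $X$ or $Y$ does, the equivalence follows. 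The Arveson case is essentially the same argument restricted to dilations with $\gamma$-block of the stated off-diagonal form: $\beta = (\beta^X,\beta^Y) = 0$ iff $\beta^X = 0$ and $\beta^Y = 0$, and the pencil condition again splits, so $(X,Y)$ is Arveson extreme iff both $X$ and $Y$ are. I would phrase both at once by noting the direct-sum splitting of the pencil reduces the dilation problem to two independent dilation problems.

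For part \eqref{it:MatExCartesian}, I would apply Proposition \ref{prop:MatExker} to $\cD_{(\tilde A,\tilde B)}$: a point $(X,Y) \in \cD_A\times\cD_B$ at level $n$ is matrix extreme iff every tuple $(H,W,Z) \in SM_n(\C)^{1+g+h}$ satisfying
\[
\ker L_{(\tilde A,\tilde B)}(X,Y) \subseteq \ker \Lambda_{(I,-\tilde A,-\tilde B)}(H,W,Z)
\]
has the form $\alpha(I,X,Y)$. The key computational step is to unwind both kernels using the block structure. Since $L_{(\tilde A,\tilde B)}(X,Y) = L_A(X)\oplus L_B(Y)$, its kernel is $\ker L_A(X) \oplus \ker L_B(Y)$ (as a subspace of $\C^{d_1}\otimes\C^n \oplus \C^{d_2}\otimes\C^n$). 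On the other side, $\Lambda_{(I,-\tilde A,-\tilde B)}(H,W,Z) = (I\otimes H - \Lambda_A(W)) \oplus (I\otimes H - \Lambda_B(Z)) = \Lambda_{(I,-A)}(H,W) \oplus \Lambda_{(I,-B)}(H,Z)$, again block-diagonal with respect to the same decomposition. A vector $v = v_1 \oplus v_2$ in the direct sum lies in the kernel of a block-diagonal operator iff each $v_i$ lies in the kernel of the corresponding block; hence the single kernel containment above is equivalent to the pair of containments in equation \eqref{eq:CartesianMatExEquation}. Plugging this equivalence back into the statement of Proposition \ref{prop:MatExker} yields exactly the claimed characterization.

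The main obstacle — really the only point requiring care rather than bookkeeping — is making sure the direct-sum decomposition of kernels is applied correctly with the tensor factors in the right order. Concretely, one must track that the $n \times n$ "parameter" blocks $H, W, Z$ act on the same $\C^n$ tensor factor in both summands, so that the common matrix $H$ genuinely couples the two containments (this is why the conclusion is $(H,W,Z) = \alpha(I,X,Y)$ with a \emph{single} $H$, rather than two independent scalars in the two factors). I would also note explicitly that boundedness of $\cD_A$ and $\cD_B$ gives boundedness of $\cD_A\times\cD_B$ by Lemma \ref{lem:CartesianIsFreeSpec}, which is the hypothesis needed to invoke Proposition \ref{prop:MatExker}. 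No essentially new idea beyond the block-diagonal pencil observation is needed; the lemma is a direct specialization of the known kernel classifications to the Cartesian setting.
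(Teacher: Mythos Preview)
Your approach is correct and essentially identical to the paper's: both view $\cD_A \times \cD_B$ as the single free spectrahedron $\cD_{(\tilde A,\tilde B)}$ via Lemma \ref{lem:CartesianIsFreeSpec}, invoke the kernel-based classification (Proposition \ref{prop:MatExker}, i.e.\ \cite[Theorem 2.6]{epperly2024matex}), and then decode the resulting kernel containment using the block-diagonal splitting $L_{(\tilde A,\tilde B)}(X,Y) = L_A(X)\oplus L_B(Y)$ and $\Lambda_{(I,-\tilde A,-\tilde B)}(H,W,Z) = \Lambda_{(I,-A)}(H,W)\oplus\Lambda_{(I,-B)}(H,Z)$. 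The paper phrases this last step via a canonical shuffle unitary while you observe the direct sum is literal in the given tensor ordering, but this is cosmetic; your remark that the common $H$ is what couples the two factor containments is exactly the point.
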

\begin{proof}
 The proof of Item \eqref{it:ArvEucCartesian}  quickly follows using Lemma \ref{lem:CartesianIsFreeSpec} together with \cite[Theorem 2.6]{epperly2024matex} (see also \cite[Theorem 1.1]{evert2018extreme} and \cite[Proposition 6.5 (c)]{kriel2019intro}). 

Since we are most concerned with matrix extreme points, details are given for Item \eqref{it:MatExCartesian}. Using \cite[Theorem 2.6 (c)]{epperly2024matex}, we have that $(X,Y) \in \cD_A \times \cD_B = \cD_{(\tilde{A}, \tilde{B})}$ is a matrix extreme point if and only if all solutions to the kernel containment
\begin{equation}
    \label{eq:MatExKerEquation}
\ker L_{(\tilde{A}, \tilde{B})} (X,Y) \subset \ker \Lambda_{(I,-\tilde{A}, -\tilde{B})} (H,W,Z)
\end{equation}
have the form $(H,W,Z) = \alpha(I,X,Y)$. Set $\hat{A} = (A,0)$ and $\hat{B} = (0,B)$ so that 
$
(\tilde{A}, \tilde{B}) = \hat{A} \oplus \hat{B}$.
Then, by using canonical shuffles, we can construct a unitary $U$ such that for any $(H,W,Z)$ we have
\[
U^*\left(\Lambda_{(I, -\hat{A} \oplus \hat{B})} (H,W,Z)\right)U =\Lambda_{(I,- \hat{A})} (H,W,Z) \oplus \Lambda_{(I, -\hat{B})} (H,W,Z).
\] 
However, since $\hat{A} = (A,0)$ and $\hat{B} = (0,B)$ we have 
\[
\Lambda_{(I, -\hat{A})} (H,W,Z) =\Lambda_{(I,-A)} (H,W) \qquad \mathrm{and} \qquad \Lambda_{(I, -\hat{B})} (H,W,Z)=\Lambda_{(I,-B)} (H,Z).
\] 
We conclude that $(H,W,Z)$ is a solution to equation \eqref{eq:MatExKerEquation} if and only if $(H,W,Z)$ is a solution to equation \eqref{eq:CartesianMatExEquation}, from which the result follows.
\end{proof}

\begin{remark}
    Note that if $(X,Y) \in \cD_A \times \cD_B$ is free extreme, then it is also Arveson extreme, so the preceding lemma shows that $X$ and $Y$ are Arveson extreme in $\cD_A$ and $\cD_B$, respectively. However, it may not be the case that $X$ and $Y$ are free extreme points of $\cD_A$ and $\cD_B$. This is because the tuple $(X,Y)$ may be irreducible while $X$ and $Y$ themselves are reducible. 

    As an example the tuple
    \[
    (X,Y) = \left(\begin{pmatrix} 1 & 0 \\ 0 & -1 \end{pmatrix},
    \begin{pmatrix} 0 & 1 \\ 1 & 0 \end{pmatrix}\right) 
    \]
    is a free extreme point of the matrix square. However, neither $X$ nor $Y$ are free extreme points of the matrix interval, since they are reducible. 
\end{remark}

Lemma \ref{lem:ExtremeOfCartesian} shows that, up to reducibility, the behavior of free and Euclidean extreme points under Cartesian products mirrors the classical setting. However, as suggested by the lemma, the situation for matrix extreme points is more complicated. Indeed, a matrix extreme point in a Cartesian product need not be a Cartesian product of matrix extreme points.

\begin{prop}
\label{prop:MatrixCartesianEuclidean}
Let $\mathcal{D}_A$ and $\mathcal{D}_B$ be bounded free spectrahedra. Also let $X$ be an Euclidean extreme point of $\mathcal{D}_A$ and let $Y$ be a matrix extreme point of $\mathcal{D}_B$. Then $(X,Y)$ is a matrix extreme point of $\mathcal{D}_A \times \mathcal{D}_B$. 
\end{prop}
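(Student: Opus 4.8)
The plan is to verify the kernel-containment criterion for matrix extreme points of a Cartesian product provided by Lemma~\ref{lem:ExtremeOfCartesian}\eqref{it:MatExCartesian}. Since $\cD_A$ and $\cD_B$ are bounded, so is $\cD_A \times \cD_B$ by Lemma~\ref{lem:CartesianIsFreeSpec}, and that lemma applies. Accordingly, it suffices to show that every tuple $(H,W,Z)$ satisfying
\[
\ker L_{A}(X) \subseteq \ker \Lambda_{(I,-A)}(H,W) \qquad \mathrm{and} \qquad \ker L_{B}(Y) \subseteq \ker \Lambda_{(I,-B)}(H,Z)
\]
has the form $(H,W,Z) = \alpha(I,X,Y)$ for some $\alpha \in \R$.

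The first move uses only the second containment together with the hypothesis that $Y$ is matrix extreme in $\cD_B$. By the kernel characterization of matrix extreme points, Proposition~\ref{prop:MatExker} (applicable since $\cD_B$ is bounded), the inclusion $\ker L_{B}(Y) \subseteq \ker \Lambda_{(I,-B)}(H,Z)$ forces $(H,Z) = \alpha(I,Y)$ for some $\alpha \in \R$; in particular $H = \alpha I$ is a scalar multiple of the identity.

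The second move substitutes $H = \alpha I$ into the first containment and invokes that $X$ is Euclidean extreme in $\cD_A$. This is exactly the situation handled inside the proof of Corollary~\ref{cor:EucMatKer}: if $\alpha = 0$ the containment reads $\ker L_A(X) \subseteq \ker \Lambda_A(W)$, which by the kernel characterization of Euclidean extreme points forces $W = 0$; if $\alpha \neq 0$ it reads $\ker L_A(X) \subseteq \ker L_A(W/\alpha)$, and the same argument gives $W/\alpha = X$. In either case $W = \alpha X$, so $(H,W,Z) = \alpha(I,X,Y)$, and Lemma~\ref{lem:ExtremeOfCartesian}\eqref{it:MatExCartesian} then yields that $(X,Y)$ is matrix extreme in $\cD_A \times \cD_B$.

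The argument is short precisely because the two kernel containments decouple: the $B$-side containment alone pins down both $H$ (forcing it to be scalar) and $Z$, after which the $A$-side containment — now involving a scalar $H$ — pins down $W$. The one point demanding care, and the main obstacle, is the final step: since $X$ is only assumed \emph{Euclidean} extreme rather than matrix extreme, one cannot apply Proposition~\ref{prop:MatExker} directly to $X$, and must instead lean on the sharper kernel criterion for Euclidean extreme points as used in the proof of Corollary~\ref{cor:EucMatKer}. It also bears noting that boundedness of $\cD_A$ and $\cD_B$ is used by each of the cited extreme-point characterizations, and that these characterizations are stated over $\C$, with the real case following along the same lines.
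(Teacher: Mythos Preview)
Your proof is correct and follows essentially the same route as the paper: use the $B$-side kernel containment with Proposition~\ref{prop:MatExker} to force $(H,Z)=\alpha(I,Y)$, then feed $H=\alpha I$ into the $A$-side containment and invoke the argument from Corollary~\ref{cor:EucMatKer} to obtain $W=\alpha X$. The only cosmetic difference is that the paper first records that $(X,Y)$ is Euclidean extreme via Lemma~\ref{lem:ExtremeOfCartesian}\eqref{it:ArvEucCartesian}, but this observation is not actually needed for the kernel argument you (and the paper) then carry out.
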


\begin{proof}
Let $X$ be an Euclidean extreme point of $\cD_A$ and let $Y$ be a matrix extreme point of $\cD_B$. Then Lemma \ref{lem:ExtremeOfCartesian} \eqref{it:ArvEucCartesian} shows that $(X,Y)$ is an Euclidean extreme point of $\mathcal{D}_A \times \mathcal{D}_B$. Therefore, using Lemma \ref{lem:ExtremeOfCartesian} \eqref{it:MatExCartesian}, it is sufficient to show that all solutions 
to equation \eqref{eq:CartesianMatExEquation} are of the form $(H,W,Z) = \alpha (I,X,Y)$. 

Since $Y$ is a matrix extreme point in $\cD_B$, the only solutions to
\[
L_B (Y) \subset \ker \Lambda_{(I,-B)} (H,Z)
\]
are of the form $(H,Z) = \alpha (I,Y)$. Therefore, if $(H,W,Z)$ satisfies equation \eqref{eq:CartesianMatExEquation}, then we must have $(H,W,Z) = (\alpha I, W, \alpha Y)$ for some $\alpha \in \R.$ Next consider solutions to
\begin{equation}
\label{eq:EucPartOfMatEuc}
\ker L_A (X) \subset \ker \Lambda_{(I,-A)} (\alpha I,W).
\end{equation}
Using the same argument as in the proof of Corollary \ref{cor:EucMatKer} shows that $W = \alpha X$. We conclude that $(H,W,Z)$ must be equal to $\alpha(I,X,Y)$ for some $\alpha$ from which it follows that $(X,Y)$ is matrix extreme in $\cD_A \times \cD_B$. 
\end{proof}

\begin{remark}
    Considering the additional restrictions placed on free extreme points compared to matrix extreme points, it is natural to expect that matrix extreme points that are not free extreme are plentiful; however until recently no such examples were known for real free spectrahedra. Such points are a focus of \cite{epperly2024matex} in which a small number of exact examples of matrix extreme points that are not free extreme are constructed. \cite{epperly2024matex} also provides numerical evidence that matrix extreme points that are not free extreme are plentiful. In the article, it is illustrated that there can be computational challenges in constructing such tuples with exact arithmetic. 

    For the special case of Cartesian products of free spectrahedra, Proposition \ref{prop:MatrixCartesianEuclidean} provides new insight to this problem and essentially reduces the task to finding Euclidean extreme points that are not Arveson extreme.

\end{remark}

\begin{ex}
\label{ex:SimplexXSquareMatNotArv}
    Let $\cD_A$ be the free simplex defined by
    \[
    A = \big(\mathrm{diag}(1,0,-1),\mathrm{diag}(0,1,-1) \big)
    \]
    and let $\cD_B$ be the free matrix square. Set 
    \[
    X = \left(\begin{pmatrix} 1 & 0 \\ 0 & 0 \end{pmatrix},
    \begin{pmatrix}
        \frac{1}{2} & \sqrt{\frac{5}{6}} \\ 
        \sqrt{\frac{5}{6}} & -\frac{2}{3}
    \end{pmatrix}\right) \qquad \mathrm{and} \qquad     Y = \left(\begin{pmatrix} 1 & 0 \\ 0 & -1 \end{pmatrix},
    \begin{pmatrix} 0 & 1 \\ 1 & 0 \end{pmatrix}\right).
    \]
Then $(X,Y)$ is a matrix extreme point of $\cD_A \times \cD_B$ but not an Arveson extreme point of $\cD_A \times \cD_B$, which follows from Proposition \ref{prop:MatrixCartesianEuclidean} with Lemma \ref{lem:ExtremeOfCartesian} \eqref{it:ArvEucCartesian}. In particular, by checking the kernel containment conditions of \cite[Theorem 2.6]{epperly2024matex}, one can show that $X$ is an Euclidean extreme point of $\cD_A$ but not an Arveson extreme point of $\cD_A$. Also, $Y$ is a free extreme point, hence a matrix extreme point, of $\cD_B$.

Furthermore, as discussed in Remark \ref{rem:extreme-points-free-simplex}, all matrix extreme points of $\cD_A$ are elements of $\cD_A(1)$, so $X$ is not matrix extreme in $\cD_A$. It follows that $(X,Y)$ is a matrix extreme point of $\cD_A$ that is not a Cartesian product of matrix extreme points. Note that $X$ and $Y$ are both irreducible in this example, so this behavior is not caused by a simple failure of irreducibility. 
\end{ex}

\subsection{Real extreme points of Cartesian products with a free simplex}

We now focus on the special case of the Cartesian product where one of the free spectrahedra in the product is a free simplex. Throughout this subsection we let $\cD_S$ denote a bounded free simplex in $g$-variables.

 \begin{thm}
\label{thm:SimplexXIntervalRealMatEx}
Let $\mathcal{D}_S$ be a free simplex, let $\mathcal{D}_B$ be any free spectrahedron, and let $(X,Y) \in \left(\mathcal{D}_S \times \mathcal{D}_B\right) (2)$. Also assume that $(X,Y)$ is real valued.

\begin{enumerate}
\item
\label{it:SimpleXSpecRealMatEx}
If $(X,Y)$ is a matrix extreme point of $\cD_S \times \cD_B$ and $Y$ is not matrix extreme in $\cD_B$, then $X$ must be an Arveson extreme point of $\cD_S$. Equivalently, if $X$ is not Arveson extreme in $\cD_S$, then $Y$ must be matrix extreme in $\cD_B$. 

\item
\label{it:SimpleXSimplexRealMatEx}
If $\cD_B$ is also a free simplex, then $(X,Y)$ is a matrix extreme point of $\cD_S \times \cD_B(2)$ if and only if $(X,Y)$ is free extreme in $\cD_S \times \cD_B(2)$. That is, if $(X,Y)$ is a matrix extreme point of $(\cD_S \times \cD_B )(2)$, then $(X,Y)$ is irreducible and $X$ and $Y$ are Arveson extreme in $\cD_S$ and $\cD_B$, respectively. 
\end{enumerate}
\end{thm}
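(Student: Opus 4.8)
Throughout I work over $\R$. The plan is to derive Part~\eqref{it:SimpleXSpecRealMatEx} from the kernel classification of matrix extreme points of a Cartesian product (Lemma~\ref{lem:ExtremeOfCartesian}~\eqref{it:MatExCartesian} and Proposition~\ref{prop:MatExker}) together with one structural fact about level-two Euclidean extreme points of a free simplex, and then to obtain Part~\eqref{it:SimpleXSimplexRealMatEx} by applying Part~\eqref{it:SimpleXSpecRealMatEx} twice.

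For Part~\eqref{it:SimpleXSpecRealMatEx} I would argue the contrapositive: assuming $(X,Y)$ is matrix extreme in $\cD_S\times\cD_B$ but $X$ is not Arveson extreme in $\cD_S$, I would show $Y$ is matrix extreme in $\cD_B$. Since matrix extreme implies Euclidean extreme, Lemma~\ref{lem:ExtremeOfCartesian}~\eqref{it:ArvEucCartesian} gives $X\in\deleuc\cD_S(2)$ and $Y\in\deleuc\cD_B(2)$. The crux is the following claim: \emph{if $X\in\deleuc\cD_S(2)$ is not Arveson extreme in the bounded free simplex $\cD_S$, then the projection $\pi_H$ onto the first component of $\mathcal N_S(X):=\{(H,W)\in SM_2(\R)^{1+g}:\ker L_S(X)\subseteq\ker\Lambda_{(I,-S)}(H,W)\}$ equals all of $SM_2(\R)$.} Granting the claim, for any $(H_0,Z_0)$ with $\ker L_B(Y)\subseteq\ker\Lambda_{(I,-B)}(H_0,Z_0)$ I would pick $W_0$ with $(H_0,W_0)\in\mathcal N_S(X)$, so that $(H_0,W_0,Z_0)$ solves \eqref{eq:CartesianMatExEquation}; Lemma~\ref{lem:ExtremeOfCartesian}~\eqref{it:MatExCartesian} then forces $(H_0,W_0,Z_0)=\alpha(I,X,Y)$ for some $\alpha\in\R$, whence $(H_0,Z_0)=\alpha(I,Y)$, and since $(H_0,Z_0)$ was an arbitrary solution of the kernel containment for $Y$, Proposition~\ref{prop:MatExker} shows $Y$ is matrix extreme in $\cD_B$.

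To prove the claim I would use the ``matrix simplex'' model of $\cD_S$: writing $w^{(1)},\dots,w^{(g+1)}$ for the (affinely independent) vertices of the polytope $\cD_S(1)$, there is for each $X\in\cD_S(2)$ a unique tuple of symmetric matrices $(P_1,\dots,P_{g+1})$ with $\sum_l P_l=I_2$ and $X_i=\sum_l w_i^{(l)}P_l$; moreover $X\in\cD_S(2)$ iff all $P_l\succeq0$, and, via this affine bijection, $X$ is Euclidean extreme iff $(P_l)$ is an extreme point of $\{(Q_l):Q_l\succeq0,\ \sum_l Q_l=I_2\}$. A short computation using the vertex--facet incidences of the simplex shows that the $j$-th diagonal block of $L_S(X)$, namely $M_j:=I_2-\sum_i a_i^{(j)}X_i$, satisfies $M_j=(1-c_j)P_j$ with $1-c_j>0$, so $\ker M_j=\ker P_j$. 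Since the facet normals of a nondegenerate simplex are affinely independent, the substitution $(H,W)\leftrightarrow(R_j)_j$ with $R_j:=H-\sum_i a_i^{(j)}W_i$ is a linear bijection under which the kernel conditions decouple; hence $\mathcal N_S(X)$ is identified with $\prod_j V_j$, where $V_j$ is the space of symmetric matrices supported on $\operatorname{range}P_j$, and $H=\sum_j\kappa_j R_j$ with every $\kappa_j\ne0$ (again by nondegeneracy), so $\pi_H(\mathcal N_S(X))=\sum_j V_j$. I would then run the level-two case analysis: if some $P_j$ has rank $2$ then $V_j=SM_2(\R)$ and we are done; otherwise all $P_j$ have rank $\le1$ and $\sum_j V_j=SM_2(\R)$ unless their ranges lie on at most two lines. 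The cases of zero or one line are impossible because $\sum_l P_l=I_2$ has rank $2$; and if the ranges lie on exactly two lines, then all $P_l$ are diagonal in a common basis, so $X=x^{(1)}\oplus x^{(2)}$, and Euclidean extremeness of $X$ forces both $x^{(1)}$ and $x^{(2)}$ to be vertices of $\cD_S(1)$ (otherwise $X$ would be a nontrivial convex combination of the form $\tfrac12\big((u\oplus x^{(2)})+(u'\oplus x^{(2)})\big)$), so by Remark~\ref{rem:extreme-points-free-simplex} $X$ is Arveson extreme, contrary to hypothesis. This proves the claim.

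For Part~\eqref{it:SimpleXSimplexRealMatEx}, with $\cD_B$ a second free simplex, only the implication matrix extreme $\Rightarrow$ free extreme needs proof, the converse being automatic. Given $(X,Y)\in(\cD_S\times\cD_B)(2)$ matrix extreme, I would apply Part~\eqref{it:SimpleXSpecRealMatEx} once as stated and once with the two simplices interchanged; since a free simplex has no matrix extreme points above level one (Remark~\ref{rem:extreme-points-free-simplex}), the hypotheses ``$Y$ not matrix extreme in $\cD_B$'' and ``$X$ not matrix extreme in $\cD_S$'' both hold, so $X$ is Arveson extreme in $\cD_S$ and $Y$ is Arveson extreme in $\cD_B$, whence $(X,Y)$ is Arveson extreme in $\cD_S\times\cD_B$ by Lemma~\ref{lem:ExtremeOfCartesian}~\eqref{it:ArvEucCartesian}. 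A matrix extreme point is irreducible, so $(X,Y)$ is an irreducible Arveson extreme point and therefore free extreme by \cite[Theorem~1.1]{evert2018extreme}; this also yields the stated conclusions. The main obstacle is the claim, in particular the step ``at most two lines, or else $\sum_j V_j=SM_2(\R)$'', which rests on the fact that three symmetric rank-one $2\times2$ matrices with pairwise independent ranges span the three-dimensional space $SM_2(\R)$. Over $\C$ the space $SM_2(\C)$ is four-dimensional and this fails, which is precisely why the complex Cartesian product of two free simplices admits matrix extreme points at level two that are not free extreme; the argument likewise breaks down at levels $n\ge3$.
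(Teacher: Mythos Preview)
Your proof is correct and follows the same overall strategy as the paper: both argue Part~\eqref{it:SimpleXSpecRealMatEx} via the contrapositive and the kernel criterion of Lemma~\ref{lem:ExtremeOfCartesian}\eqref{it:MatExCartesian}, reducing to the claim that the $H$-projection of $\mathcal N_S(X)$ is all of $SM_2(\R)$ whenever $X\in\deleuc\cD_S(2)$ is not Arveson extreme, and then obtain Part~\eqref{it:SimpleXSimplexRealMatEx} by applying Part~\eqref{it:SimpleXSpecRealMatEx} twice together with Remark~\ref{rem:extreme-points-free-simplex}. The only substantive difference is in how the claim (the paper's Lemma~\ref{lem:exists-W}) is established. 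The paper writes $(I,X)$ as a convex combination of extreme rays of the homogeneous simplex $\cH_{(I,-S)}$ via Corollary~\ref{cor:SimplexExtremeRays}, notes that each summand $(W_0^\ell,W^\ell)$ automatically satisfies the kernel containment, and then argues that $\operatorname{span}\{W_0^\ell\}$ must be three-dimensional. You instead parametrize $X$ by barycentric coordinates $(P_j)$, observe that the $j$-th diagonal block of $L_S(X)$ equals $\lambda_j P_j$, and under the invertible change of variables $R_j=H-\sum_i a_i^{(j)}W_i$ identify $\mathcal N_S(X)$ exactly with $\prod_j V_j$, whence $\pi_H(\mathcal N_S(X))=\sum_j V_j$. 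Your route is a bit more elementary---it bypasses the homogeneous-spectrahedron machinery and gives an exact description of $\mathcal N_S(X)$ rather than just a sufficiently large subspace---but the concluding case analysis (three rank-one symmetric $2\times2$ matrices with pairwise independent ranges span $SM_2(\R)$; only two directions forces the lines orthogonal, $X$ diagonal, and hence Arveson extreme) is identical in content to the paper's.
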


We are particularly interested in the case that $\cD_B$ is the matrix interval. Note that the matrix interval is itself a free simplex, and therefore, the real matrix extreme points at level two of the Cartesian product of a free simplex with the matrix interval are completely characterized by Theorem \ref{thm:SimplexXIntervalRealMatEx} \eqref{it:SimpleXSimplexRealMatEx}.

We will need several lemmas before proving Theorem \ref{thm:SimplexXIntervalRealMatEx}. We begin by classifying the extreme rays of homogeneous free simplices. 

\begin{prop}
\label{prop:ExtremeRaysVsMatEx}
 Let $\cD_A$ be a bounded free spectrahedron in $g$ variables and let $\mathcal{H}_{(I,-A)}$ be the corresponding homogeneous free spectrahedron. Then $X=(X_0,X_1,\dots,X_g) \in \mathcal{H}_{(I,-A)} (n)$ is a nonzero extreme ray of $\mathcal{H}_{(I,-A)}$ if and only if there is a unitary $U$ such that
 \[
    X = U^* (W \oplus 0) U
 \]
 where $W=(W_0,W_1\dots,W_g) \in \mathcal{H}_{(I,-A)} (m)$ for 
 some $m \leq n$ with $W_0 \succ 0$ and where the tuple
 \[
W_0^{-1/2}(W_1,\dots,W_g)W_0^{-1/2}
 \]
 is a matrix extreme point of $\cD_A(m)$. 
\end{prop}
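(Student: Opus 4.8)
The plan is to peel off the kernel of $X_0$ so as to reduce to the case $W_0\succ 0$, and then to invoke the correspondence between matrix extreme points of $\cD_A$ and extreme rays of $\cH_{(I,-A)}$ from \cite[Proposition 6.5 (c)]{kriel2019intro}. All of the work lies in the reduction; once $W_0=I_m$, the cited result applies directly.

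First I would extract two structural facts from the boundedness of $\cD_A$. Since $\cD_A(1)$ is bounded, the $A_i$ are linearly independent: a relation $\sum_i c_iA_i=0$ with $c\neq 0$ would place the whole ray $\{tc:t\geq 0\}$ inside $\cD_A(1)$. Next, for any $Z=(Z_0,Z_1,\dots,Z_g)\in\cH_{(I,-A)}(p)$ one has $Z_0\succeq 0$ and $\ker Z_0\subseteq\bigcap_{i=1}^g\ker Z_i$. Indeed, compressing $\Lambda_{(I,-A)}(Z_0,Z_1,\dots,Z_g)\succeq 0$ to $\C^d\otimes w$ for a unit vector $w$ yields $\sum_i\langle w,Z_iw\rangle A_i\preceq\langle w,Z_0w\rangle I$; if $\langle w,Z_0w\rangle<0$, or if $\langle w,Z_0w\rangle=0$ while some $\langle w,Z_iw\rangle\neq 0$, then $y:=(\langle w,Z_iw\rangle)_i$ is a nonzero vector with $\sum_i y_iA_i\preceq 0$, giving an unbounded ray in $\cD_A(1)$ and contradicting boundedness. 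Hence $Z_0\succeq 0$, and whenever $Z_0w=0$ we also get $\langle w,Z_iw\rangle=0$ for all $i$; substituting this back shows $\langle v\otimes w,\Lambda_{(I,-A)}(Z_0,Z_1,\dots,Z_g)(v\otimes w)\rangle=0$ for all $v$, so $v\otimes w\in\ker\Lambda_{(I,-A)}(Z_0,Z_1,\dots,Z_g)$, i.e.\ $\sum_i(A_iv)\otimes(Z_iw)=0$ for all $v$, and expanding the vectors $Z_iw$ in an orthonormal basis of their span together with linear independence of the $A_i$ forces $Z_iw=0$.

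Applying the second fact to the given tuple $X$, the subspace $\ker X_0$ reduces $X$ and every component of $X$ vanishes on it, so with $m:=\operatorname{rank}X_0$ (necessarily $m\geq 1$, since $X_0=0$ would force $X=0$) and $U$ a unitary carrying $\operatorname{range}X_0$ onto the first $m$ coordinates we obtain $X=U^*(W\oplus 0)U$, where $W=(W_0,\dots,W_g)\in\cH_{(I,-A)}(m)$ is the compression of $X$ to $\operatorname{range}X_0$ and $W_0\succ 0$. The set $\cH_{(I,-A)}$ is stable under conjugation by $I_d\otimes U$, under appending zero blocks to every component, and under the coordinatewise congruence $(H,W_1,\dots,W_g)\mapsto W_0^{-1/2}(H,W_1,\dots,W_g)W_0^{-1/2}$, which is a linear automorphism of the cone $\cH_{(I,-A)}(m)$ (with inverse the congruence by $W_0^{1/2}$) carrying $W$ to $(I_m,\tilde W)$, where $\tilde W:=W_0^{-1/2}(W_1,\dots,W_g)W_0^{-1/2}$, and carrying membership in $\cH_{(I,-A)}(m)$ exactly to membership of $\tilde W$ in $\cD_A(m)$. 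Transporting the extreme ray property through these operations gives: $X$ is a nonzero extreme ray of $\cH_{(I,-A)}$ if and only if $W$ is, if and only if $(I_m,\tilde W)$ is; and by \cite[Proposition 6.5 (c)]{kriel2019intro} the last is equivalent to $\tilde W$ being a matrix extreme point of $\cD_A$, which is the assertion.

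The step requiring the most care — and the one I expect to be the main obstacle — is the equivalence "$X$ is an extreme ray if and only if $W$ is", that is, the invariance of the extreme ray property under appending zero blocks. Transporting a nontrivial conic decomposition of $W$ to one of $U^*(W\oplus 0)U$ is immediate by padding each map with zeros; for the converse one uses the boundedness facts a second time: in any conic decomposition of $U^*(W\oplus 0)U$ the zeroth component of each summand is positive semidefinite and vanishes on the appended block, so by the kernel-containment fact every summand splits as $(\,\cdot\,)\oplus 0$, after which tracking the defining maps shows that properness and nontriviality are preserved. Checking that the congruence by $W_0^{-1/2}$ preserves the (graded) extreme ray property is a similarly short verification, since it composes the defining maps with the invertible $W_0^{-1/2}$ without changing the summands. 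Everything else — the congruence computation and the appeal to \cite{kriel2019intro} — is routine.
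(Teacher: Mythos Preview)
Your proof is correct and follows essentially the same route as the paper: reduce to $W_0\succ 0$ by peeling off $\ker X_0$, then invoke Kriel's correspondence between extreme rays of $\cH_{(I,-A)}$ and matrix extreme points of $\cD_A$. The only difference is that the paper outsources the structural facts you prove by hand---it cites \cite[Lemma 2.2]{evert2021quadrilaterals} for $X_0\succeq 0$ and $\ker X_0\subseteq\ker X_i$, and \cite[Theorem 6.5 (c),(g)]{kriel2019intro} for both the congruence invariance of the extreme ray property and the final equivalence with matrix extremality---whereas you derive the kernel containment directly from boundedness and verify the invariance under padding and congruence explicitly.
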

\begin{proof}
    This result essentially follows from \cite[Theorem 6.5]{kriel2019intro} together with \cite[Lemma 2.2]{evert2021quadrilaterals}. More precisely, \cite[Lemma 2.2]{evert2021quadrilaterals} shows that $X_0 \succeq 0$ and gives the kernel containment $\ker X_0 \subseteq \ker X_i$ for each $i=1,\dots,g$, which allows us to write $X = U^* (W \oplus 0) U$ where $W_0 \succ 0$. The remainder of the proof follows from combining \cite[Theorem 6.5 (c),(g)]{kriel2019intro} which together show that 
    \[
    W_0^{-1/2}(W_0,W_1,\dots,W_g)W_0^{-1/2}
    \]
    is an extreme ray in $\mathcal{H}_{(I,-A)}$ and that if $(I,W')$ is an extreme ray in $\mathcal{H}_{(I,-A)}$, then $W' \in \cD_A$ is a matrix extreme point of $\cD_A$.
\end{proof}

\begin{cor}
\label{cor:SimplexExtremeRays}
 Let $\cD_S$ be a free simplex in $g$ variables and let $\mathcal{H}_{(I,-S)}$ be the corresponding homogeneous free simplex. Then $X=(X_0,X_1,\dots,X_g) \in \mathcal{H}_{(I,-S)} (n)$ is a nonzero extreme ray of $\mathcal{H}_{(I,-S)}$ if and only if there is a unitary $U$ such that
 \[
    X = U^* (W \oplus 0) U,
 \]
 where $W=(W_0,W_1\dots,W_g) \in \mathcal{H}_{(I,-S)} (1)$ with $W_0 > 0$ and where the tuple
 \[
(W_1,\dots,W_g)/W_0,
 \]
 is an ordinary extreme point of $\cD_S(1)$. 
\end{cor}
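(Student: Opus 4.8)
The plan is to deduce Corollary \ref{cor:SimplexExtremeRays} directly from Proposition \ref{prop:ExtremeRaysVsMatEx}, using the fact recorded in Remark \ref{rem:extreme-points-free-simplex} that every matrix extreme point of a free simplex lies at level one. For the forward direction, suppose $X = (X_0,X_1,\dots,X_g) \in \mathcal{H}_{(I,-S)}(n)$ is a nonzero extreme ray. Applying Proposition \ref{prop:ExtremeRaysVsMatEx} with $\cD_A = \cD_S$ produces a unitary $U$ and a tuple $W = (W_0,W_1,\dots,W_g) \in \mathcal{H}_{(I,-S)}(m)$ for some $m \leq n$ with $W_0 \succ 0$ such that $X = U^*(W \oplus 0)U$ and $W_0^{-1/2}(W_1,\dots,W_g)W_0^{-1/2}$ is a matrix extreme point of $\cD_S(m)$. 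Since $\cD_S$ is a free simplex, Remark \ref{rem:extreme-points-free-simplex} forces this matrix extreme point to sit at level one, i.e.\ $m = 1$. Then $W_0$ is a positive scalar, so $W_0^{-1/2}(W_1,\dots,W_g)W_0^{-1/2} = (W_1,\dots,W_g)/W_0$, and this tuple is an ordinary extreme point of $\cD_S(1)$, which is exactly the claimed form.

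For the converse, suppose $X = U^*(W \oplus 0)U$ with $W = (W_0,W_1,\dots,W_g) \in \mathcal{H}_{(I,-S)}(1)$, $W_0 > 0$, and $(W_1,\dots,W_g)/W_0$ an ordinary extreme point of $\cD_S(1)$. By Remark \ref{rem:extreme-points-free-simplex} the ordinary extreme points of $\cD_S(1)$ are precisely the matrix extreme points of $\cD_S$, so, $W_0$ being scalar, $W_0^{-1/2}(W_1,\dots,W_g)W_0^{-1/2} = (W_1,\dots,W_g)/W_0$ is a matrix extreme point of $\cD_S(1)$. The reverse implication of Proposition \ref{prop:ExtremeRaysVsMatEx} then applies verbatim and shows that $X$ is an extreme ray of $\mathcal{H}_{(I,-S)}$; it is nonzero because $W_0 > 0$ forces $W \neq 0$.

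Since both directions of the corollary reduce immediately to the corresponding directions of Proposition \ref{prop:ExtremeRaysVsMatEx}, I do not anticipate a substantial obstacle. The only point requiring care is recognizing that the matrix extreme point of $\cD_S(m)$ supplied by the proposition is forced to have $m = 1$ by the simplex structure, after which the scalar nature of $W_0$ makes the square-root conjugation collapse to ordinary division by a positive number.
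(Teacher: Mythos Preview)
Your proof is correct and follows exactly the same approach as the paper's own proof, which also reduces both directions to Proposition \ref{prop:ExtremeRaysVsMatEx} together with Remark \ref{rem:extreme-points-free-simplex}; you have simply spelled out the details that the paper leaves implicit.
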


\begin{proof}
As discussed in Remark \ref{rem:extreme-points-free-simplex}, the set of matrix extreme points of a free simplex coincides with the set of ordinary extreme points of level one of the free simplex. The result then follows from Proposition \ref{prop:ExtremeRaysVsMatEx}.
\end{proof}

\begin{lem} \label{lem:exists-W}
    Let $\cD_S$ be a bounded free simplex and let $X \in \cD_S(2)$ be a real Euclidean extreme point of $\cD_S(2)$ that is not Arveson extreme in $\cD_S$. Given any $H \in SM_2(\R)$ there exists a $W \in SM_2 (\R)^g$ such that 
    \[
    \ker L_S (X) \subseteq \ker \Lambda_{(I,-S)} (H,W).
    \]
\end{lem}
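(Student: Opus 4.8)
The plan is to exploit that a free simplex is the minimal matrix convex set over its first level, $\cD_S = \cWmin(\cD_S(1)) = \matco(\cD_S(1))$, so that the level‑two point $X$ can be written as $X_i = V^\ast \tilde X_i V$ for a \emph{real} isometry $V\colon \R^2 \to \R^m$ (i.e.\ $V^\ast V = I_2$) and a tuple $\tilde X = (\tilde X_1,\dots,\tilde X_g)$ of real diagonal $m\times m$ matrices which is a direct sum $\bigoplus_{\ell=1}^{m} x^{(\ell)}$ of points $x^{(\ell)}\in\cD_S(1)$. Writing $S_i = \operatorname{diag}(s^{(1)}_i,\dots,s^{(g+1)}_i)$ and $s^{(j)} = (s^{(j)}_1,\dots,s^{(j)}_g)$, the pencil block‑diagonalizes: $L_S(X) = \bigoplus_{j=1}^{g+1} M_j$ with $M_j = I_2 - \Lambda_{s^{(j)}}(X) = V^\ast D_j V$, where $D_j = I_m - \Lambda_{s^{(j)}}(\tilde X)$ is diagonal and positive semidefinite, and similarly $\Lambda_{(I,-S)}(H,W) = \bigoplus_j\big(H - \Lambda_{s^{(j)}}(W)\big)$. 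Hence the desired containment $\ker L_S(X) \subseteq \ker \Lambda_{(I,-S)}(H,W)$ is equivalent to the $g+1$ facet‑wise conditions $\ker M_j \subseteq \ker\big(H - \Lambda_{s^{(j)}}(W)\big)$.

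The first key step is that for $v \in \ker M_j$ the vector $Vv \in \R^m$ is supported on the coordinates $\ell$ with $\Lambda_{s^{(j)}}(x^{(\ell)}) = 1$ (since $M_j = V^\ast D_j V$ forces $D_j Vv = 0$), and on precisely those coordinates $\Lambda_{s^{(j)}}(\tilde X)$ acts as the identity; therefore $\Lambda_{s^{(j)}}(\tilde X)\,Vv = Vv$ for all $v \in \ker M_j$. Given an arbitrary real diagonal $\tilde H \in SM_m(\R)$, I would then set
\[
W_i := \tfrac12\,V^\ast\big(\tilde X_i \tilde H + \tilde H \tilde X_i\big)V \in SM_2(\R), \qquad H := V^\ast \tilde H V \in SM_2(\R),
\]
and check by a short computation—using $\Lambda_{s^{(j)}}(\tilde X)Vv = Vv$ and, for the same support reason, $\Lambda_{s^{(j)}}(\tilde X)\tilde H Vv = \tilde H Vv$—that $\big(H - \Lambda_{s^{(j)}}(W)\big)v = V^\ast\tilde H Vv - V^\ast\tilde H Vv = 0$ for every $v \in \ker M_j$ and every $j$. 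So this $(H,W)$ satisfies the required kernel containment.

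It then remains to show that the linear map $\tilde H \mapsto V^\ast \tilde H V$ from real diagonal $m\times m$ matrices onto $SM_2(\R)$ is surjective; equivalently, that the rank‑one matrices $r_k r_k^\ast$, where $r_k := V^\ast e_k \in \R^2$ is the $k$‑th row of $V$, span $SM_2(\R)$. Here I would use that $SM_2(\R)$ is only three‑dimensional and that a real binary quadratic form vanishes on at most a union of two lines through the origin: if $\{r_k r_k^\ast\}$ fails to span $SM_2(\R)$, all $r_k$ lie on two lines through $0$, and since $\sum_k r_k r_k^\ast = V^\ast V = I_2$ those two lines are forced to be orthogonal. Rotating coordinates so that they become the axes exhibits $X$ as a direct sum $y \oplus z$ of two tuples $y,z \in \cD_S(1)$; since $X$ is a Euclidean extreme point of $\cD_S(2)$, $y$ and $z$ must be extreme points of $\cD_S(1)$, so $X$ is unitarily equivalent to a direct sum of points of $\deleuc\cD_S(1)$ and hence, by Remark~\ref{rem:extreme-points-free-simplex}, an Arveson extreme point of $\cD_S$—contradicting the hypothesis. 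This contradiction forces surjectivity, after which choosing $\tilde H$ with $V^\ast\tilde H V = H$ for the given $H$ yields the desired $W$.

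The main obstacle is exactly this last surjectivity step: it is the only place where both the hypothesis "$X$ is not Arveson extreme'' and the restriction to level two are genuinely needed. Over $\C$ the analogue fails—the Hermitian $2\times 2$ matrices are four‑dimensional and real diagonal $\tilde H$ only sweep out $\operatorname{span}_\R\{r_k r_k^\ast\}$, which can be proper even for irreducible, non‑Arveson‑extreme $X$—and this is precisely the phenomenon behind the existence of matrix extreme points that are not free extreme in the complex setting (Theorem~\ref{thm:SimpleXIntervalComplexMatEx}). One should also verify the routine points that $V$ and $\tilde X$ may be taken real (clear since $\cD_S$, being a free polytope, and hence $\cWmin(\cD_S(1))$, is closed under complex conjugation), that all the $W_i$ produced are real symmetric, and that zero rows $r_k = 0$ may be discarded harmlessly.
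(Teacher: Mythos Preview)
Your proposal is correct and takes essentially the same approach as the paper. The paper phrases the decomposition via extreme rays $(W_0^\ell,W^\ell)$ of the homogeneous cone $\cH_{(I,-S)}$ (using Kriel's theory), while you use the matrix convex decomposition $X=V^\ast\tilde X V$; the two are equivalent, with your rank-one matrices $r_k r_k^\ast$ playing the role of the paper's $W_0^\ell$, and the heart of both proofs is the identical argument that if these rank-one matrices fail to span $SM_2(\R)$ then the rows of $V$ lie on two orthogonal lines, forcing $X$ to be a direct sum and hence Arveson extreme.
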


\begin{proof}
    From Remark \ref{rem:extreme-points-free-simplex}, we see that if $X \in \cD_S(2)$ is an Euclidean extreme point of $\cD_S(2)$ that is not Arveson extreme, then $X$ is irreducible and is not matrix extreme in $\cD_S$. Using \cite[Theorem 6.5 (c)]{kriel2019intro}, since $X$ is not matrix extreme of $\cD_S$, the tuple $(I,X)$ can be written as a classical convex combination of extreme rays of $\mathcal{H}_{(I,-S)}.$ That is, there exists a finite collection of extreme rays $\{(W_0^\ell,\dots,W_g^\ell)\}_{\ell=1}^k \subset SM_{2}(\R)^{g+1}$ and positive constants $\alpha_\ell > 0$ such that
\begin{equation}
\label{eq:ExtremeRayConvexCombo}
(I,X) = \sum_{\ell=1}^k \alpha_\ell (W_0^\ell,W_1^\ell,\dots,W_g^\ell) \qquad \mathrm{where} \qquad \sum_{\ell=1}^k \alpha_\ell = 1,
\end{equation}
and such that
    \[
    \ker L_S (X) \subseteq \ker \Lambda_{(I,-S)} (W_0^\ell,W_1^\ell,\dots,W_g^\ell) \qquad \mathrm{for \ each} \qquad \ell = 1,\dots,k.
    \]

Of course, linear combinations of solutions to the above kernel containment also satisfy the kernel containment, so it suffices to prove that $\mathrm{span} \{W_0^\ell\} = SM_2(\R)$. Using Corollary \ref{cor:SimplexExtremeRays}, the extreme rays of $\cD_S$ all have the form $W=(W_0,W_1,\dots,W_g)$ where the $W_i$ all have rank at most one and where $\mathrm{ran} (W_i) \subseteq \mathrm{ran} (W_0)$ for each $i$.  Combining this with equation \eqref{eq:ExtremeRayConvexCombo} implies that $\dim \mathrm{span} (\{W_0^\ell\}_{\ell=1}^k) \geq 2$, since $I_2$ is in the span of the rank one matrices ${W_0^\ell}$.

We now argue that if  $\dim \mathrm{span} \{W_0^\ell\} = 2$, then  $(I,X)$ is reducible. To this end assume WLOG that $\{W_0^1,W_0^2\}$ forms a basis for $\mathrm{span} (\{W_0^\ell\}_{\ell=1}^k).$ That is, all remaining $W_0^\ell$ are linear combinations of $W_0^1$ and $W_0^2$. On the other hand, each $W_0^\ell$ is symmetric and has rank $1$, so each $W_0^\ell$ is a constant multiple of either $W_0^1$ or $W_0^2$. 

Now, from $\sum \alpha_\ell W_0^\ell = I_2$, we obtain that there are some constants $\beta_1$ and $\beta_2$ such that $\beta_1 W_0^1 +\beta_2 W_0^2 =I_2$. Since $W_0^1$ and $W_0^2$ are both rank one, this implies that their ranges are orthogonal. That is, up to one's choice of orthonormal basis $\{e_1,e_2\}\subset \R^2$, we have $W_0^1 = \gamma_1 e_1 e_1^T$ and $W_0^2 = \gamma_2 e_2 e_2^T$. Now, as a consequence of the fact that each $W_0^\ell$ is a constant multiple of one either $W_0^1$ or $W_0^2$, we have that each $W_0^\ell = \gamma_\ell e_{k_\ell} e_{k_\ell}^T$, where $k_\ell = 1$ or $2$. Using this with the range containment $\mathrm{ran} (W_i^\ell) \subseteq \mathrm{ran} (W_0^\ell)$ tells us that the spans of $e_1$ and $e_2$ are reducing subspaces for each $(W_0^\ell,W_1^\ell,\dots,W_g^\ell)$. Finally, since the spans of $e_1$ and $e_2$ are reducing for each of these tuples, they are also reducing for a linear combination of them, hence for $(I,X)$. That is, $(I,X)$ is reducible if $\dim \mathrm{span} \{W_0^\ell\} = 2$,

By assumption, $(I,X)$ is irreducible, so we now obtain that $\dim \mathrm{span} (\{W_0^\ell\}_{\ell=1}^k) > 2.$ Since $SM_2 (\R)$ is three-dimensional and since $\mathrm{span} (\{W_0^\ell\}_{\ell=1}^k) \subseteq SM_2(\R)$, we conclude $\mathrm{span} (\{W_0^\ell\}_{\ell=1}^k)= SM_2(\R)$ from which the conclusion follows. 
\end{proof}

\subsubsection{Proof of Theorem \ref{thm:SimplexXIntervalRealMatEx}  }
\begin{proof}
We first prove item \eqref{it:SimpleXSpecRealMatEx}.  To this end assume that $(X,Y) \in \cD_S \times \cD_B$ is a tuple of real matrices but that $Y$ is not matrix extreme in $\cD_B$ and that $X$ is not Arveson extreme in $\cD_S$. It then follows from Proposition \ref{prop:MatExker} that there exists some tuple $(H,Z)$ such that 
\[
\ker L_B(Y) \subset \ker \Lambda_{(I,-B)} (H,Z)
\]
and such that $(H,Z) \neq \alpha (I,Y)$ for any $\alpha \in \R$. 

Now, if $X$ is not Euclidean extreme in $\cD_S$, then $(X,Y)$ is not Euclidean extreme in $\cD_S \times \cD_B$ by Lemma \ref{lem:ExtremeOfCartesian}, hence is not matrix extreme. On the other hand, since $X$ is not Arveson extreme by assumption, if $X$ is Euclidean extreme, then using Lemma \ref{lem:exists-W} shows that there exists some $W \in SM_2(\R)^g$ such that
\[
\ker L_S(X) \subset \ker \Lambda_{(I,-S)} (H,W).
\]
Furthermore, we have that $(H,W,Z) \neq \alpha(I,X,Y)$ for any $\alpha \in \R$ since $(H,Z) \neq \alpha (I,Y)$ for any $\alpha \in \R$. It follows from Lemma \ref{lem:ExtremeOfCartesian} \eqref{it:MatExCartesian} that $(X,Y)$ is not matrix extreme in $\cD_S \times \cD_B$.

Item \ref{thm:SimplexXIntervalRealMatEx} \eqref{it:SimpleXSimplexRealMatEx} follows quickly from item \eqref{it:SimpleXSpecRealMatEx}. In particular, if $\cD_S$ and $\cD_B$ are both free simplices and $(X,Y) \in (\cD_S \times \cD_B)(2)$, then, as discussed in Remark \ref{rem:extreme-points-free-simplex}, $X$ is not matrix extreme in $\cD_S$ and $Y$ is not matrix extreme in $\cD_B$. Therefore, if $(X,Y)$ is matrix extreme in $\cD_S \times \cD_B$, then applying Item  \ref{thm:SimplexXIntervalRealMatEx} \eqref{it:SimpleXSpecRealMatEx} shows that $X$ must be Arveson extreme in $\cD_S$ and that $Y$ must be Arveson extreme in $\cD_B$. Thus, by Lemma \ref{lem:ExtremeOfCartesian} $(X,Y)$ is Arveson extreme in $\cD_S \times \cD_B$ as well. Furthermore, the tuple $(X,Y)$ must be irreducible since matrix extreme points are irreducible. 

The reverse direction of Item \ref{thm:SimplexXIntervalRealMatEx} \eqref{it:SimpleXSimplexRealMatEx} is straightforward from the fact that free extreme points are always matrix extreme.
\end{proof}

\subsection{Complex extreme points of a free simplex Cartesian product the matrix interval}

We now examine the extreme points of the Cartesian product of a free simplex with the matrix interval when working over the complexes. As it turns out, the situation here is quite different from the real setting. 

\begin{thm}
    \label{thm:SimpleXIntervalComplexMatEx} Let $\cD_S$ be a free simplex and let $\cD_C$ be the matrix interval. Additionally, let $X \in \cD_S(2)$ be a real Euclidean extreme point of $\cD_S$ that is not Arveson extreme and let $Y \in \cD_C(2)$ be an Arveson extreme point of $\cD_C$ with nonzero complex part. Then we have the following: 
    
    \begin{enumerate}
    
        \item \label{it:RealXComplexIsMatEx} $(X,Y)$ is a matrix extreme point of $\cD_S \times \cD_C$. 

    \item \label{it:RealEmbeddingNotMatEx} Let $Y_\R$ and $Y_\C$ denote the real and complex parts of $Y$. The tuple
    \[
    \left(\begin{pmatrix} X & 0 \\
    0 & X
    \end{pmatrix},
    \begin{pmatrix} Y_\R & Y_\C \\
    Y_\C^T & Y_\R
    \end{pmatrix}\right)
    \]
    is never a matrix extreme point of the real free spectrahedron $\cD_S^{\R} \times \cD_C^{\R}$.

    \end{enumerate}
\end{thm}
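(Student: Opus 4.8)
We work throughout with the kernel description of matrix extreme points of a Cartesian product, Lemma~\ref{lem:ExtremeOfCartesian}\,\eqref{it:MatExCartesian}. For \eqref{it:RealXComplexIsMatEx} this reduces the claim to showing that any tuple $(H,W,Z)$ of Hermitian matrices with $\ker L_S(X)\subseteq\ker\Lambda_{(I,-S)}(H,W)$ and $\ker L_C(Y)\subseteq\ker\Lambda_{(I,-C)}(H,Z)$ equals $\alpha(I,X,Y)$ for some $\alpha\in\R$. First I would use the $\cD_C$-factor: an Arveson extreme point $Y$ of the matrix interval is a selfadjoint unitary, so $Y^2=I_2$, and its $\pm 1$-eigenvectors $v_\pm$ form an orthonormal basis of $\C^2$ with $\ker(I\mp Y)=\C v_\pm$. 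Substituting this into the second kernel containment and using that $H,Z$ are Hermitian forces $v_\pm$ to be common eigenvectors of $H$ and $Z$, whence $H=aI_2+bY$ and $Z=bI_2+aY$ with $a,b\in\R$. Next I would use the $\cD_S$-factor: since $X$ is real and a free simplex has a real defining pencil, $L_S(X)=\bigoplus_j M_j$ is real and block diagonal, so the first kernel containment splits into a real-symmetric and an imaginary-antisymmetric part. On every facet $j$ with $M_j$ singular, the antisymmetric $2\times 2$ block of $H-\sum_i s_i^{(j)}W_i$ is a scalar multiple of the invertible matrix $J=\left(\begin{smallmatrix}0&-1\\1&0\end{smallmatrix}\right)$ and therefore vanishes; since $X$ is Euclidean extreme, the collection of these vanishing conditions over all singular facets admits only the zero solution, so the antisymmetric (imaginary) part of $H$ is $0$. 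As $H=aI_2+bY$ and $Y$ has nonzero complex part, this forces $b=0$, i.e.\ $H=aI_2$. With $H$ scalar, the argument in the proof of Corollary~\ref{cor:EucMatKer} (again using Euclidean extremeness of $X$) gives $W=aX$, and $Z=bI_2+aY=aY$, so $(H,W,Z)=a(I,X,Y)$. The delicate step is the vanishing of the antisymmetric part of $H$, where the combinatorial structure of the facets of $\cD_S$ together with Euclidean extremeness of $X$ enters.

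For \eqref{it:RealEmbeddingNotMatEx}, note that $\bar Y$ satisfies the same hypotheses as $Y$ ($\bar Y^2=I_2$, $\overline{Y_\C}=-Y_\C\neq 0$), so \eqref{it:RealXComplexIsMatEx} applied to $(X,\bar Y)$ shows that both $(X,Y)$ and $(X,\bar Y)$ are matrix extreme points of $\cD_S\times\cD_C$ over $\C$. Moreover $X$ is irreducible over $\C$: by Remark~\ref{rem:extreme-points-free-simplex} every Arveson extreme point of a free simplex is a direct sum of level-one points, so a reducible Euclidean extreme point at level two would be Arveson extreme, contrary to hypothesis. Hence $(X,Y)$ and $(X,\bar Y)$ are not unitarily equivalent over $\C$ (an intertwiner commutes with the irreducible $X$, so is scalar, but would have to conjugate $Y$ to $\bar Y\neq Y$). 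Since $X$ is real, its realification is $X\oplus X$, so the tuple in \eqref{it:RealEmbeddingNotMatEx} is exactly the realification of $(X,\bar Y)$, and the realification of any complex tuple $T$ is unitarily equivalent over $\C$ to $T\oplus\overline T$. Thus, inside the complexification of $\cD_S^\R\times\cD_C^\R$, this tuple is unitarily equivalent to the direct sum $(X,\bar Y)\oplus(X,Y)$ of two inequivalent matrix extreme points, and a direct sum of matrix extreme points is reducible, hence not matrix extreme.

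It remains to pass to $\R$. By Lemma~\ref{lem:ExtremeOfCartesian}\,\eqref{it:MatExCartesian} applied to $(X,\bar Y)\oplus(X,Y)$, its kernel containments admit a Hermitian solution $(\check H,\check W,\check Z)$ with $\check H$ not scalar; in $2\times 2$ block form the diagonal blocks of $\check H$ are the scalars predicted by \eqref{it:RealXComplexIsMatEx} for the two summands, so the new information is carried by the off-diagonal block $\check H_{12}$. The plan is to choose this solution with $\check H_{12}$ nonzero and \emph{not} antisymmetric; then conjugating back by the realification unitary and replacing the solution by its real part (still a solution, since the tuple in \eqref{it:RealEmbeddingNotMatEx} is real) yields a real solution $(H,W,Z)$ of the kernel containments with $H$ not a scalar multiple of $I_4$, and Lemma~\ref{lem:ExtremeOfCartesian}\,\eqref{it:MatExCartesian} over $\R$ then shows the tuple is not matrix extreme in $\cD_S^\R\times\cD_C^\R$. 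The main obstacle is constructing such a $\check H_{12}$: the $\cD_S$-part of the cross-block conditions forces $\check H_{12}-\sum_i s_i^{(j)}(\check W_i)_{12}$ to be a multiple of the rank-one matrix $M_j(X)$ on each singular facet $j$, while the $\cD_C$-part confines $\check H_{12}$ to the two-dimensional span of $w_+v_-^\ast$ and $w_-v_+^\ast$, where $v_\pm$ are the eigenvectors of $Y$ and $w_\pm$ span $(\overline{v_\pm})^\perp$. Here the nonzero complex part of $Y$ is used crucially: it guarantees that $v_\pm$ are not real up to phase, so this two-dimensional space contains no nonzero antisymmetric matrix, and one checks it meets the space of admissible $\check H_{12}$ nontrivially, producing the required non-antisymmetric cross-block. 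Since this construction works for every admissible pair $(X,Y)$, the realified tuple is \emph{never} matrix extreme over $\R$; the phenomenon is invisible over $\C$ precisely because the extra kernel solutions responsible for it are of ``real type'', i.e.\ they survive symmetrization.
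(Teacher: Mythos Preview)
Your treatment of Item~\eqref{it:RealXComplexIsMatEx} is correct and matches the paper's approach, only reorganized: the paper first proves (Lemma~\ref{lem:SimplexHomogeneousSolutions}) that the simplex side forces $H$ to be real, and then (Lemma~\ref{lem:IntervalHomogeneousSolutions}) that the interval side forces a real $H$ to be scalar; you reverse the order, first obtaining $H=aI+bY$ from the interval and then $\mathrm{Im}(H)=0$ from the simplex. One caveat: the step ``since $X$ is Euclidean extreme, the collection of these vanishing conditions over all singular facets admits only the zero solution'' conflates two ingredients. That \emph{all} $g+1$ facets are singular is a consequence of Lemma~\ref{lem:SimplexEuc2Kernels} together with $X$ not being Arveson extreme (hence not scalar); that the resulting $(g{+}1)\times(g{+}1)$ linear system has only the trivial solution is a consequence of the boundedness of the simplex, not of Euclidean extremeness. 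Both are available, so the argument is fine.

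For Item~\eqref{it:RealEmbeddingNotMatEx} your route is genuinely different from the paper's and is almost complete, but has one gap. The paper works directly in the real coordinates: it picks an explicit $H\in SM_2(\R)$ with eigenvalues $\pm1$ commuting with $Y_\R$, builds a real $Z$ so that $\ker L_C(\hat Y)\subseteq\ker\Lambda_{(I,-C)}(H\oplus-H,Z)$, and then invokes Lemma~\ref{lem:exists-W} to produce $W_1,W_2$ for the simplex factor. Your approach passes to the block-diagonal model $(X,\bar Y)\oplus(X,Y)$ over $\C$, locates the admissible off-diagonal blocks $\check H_{12}$ in the two-dimensional span $V=\operatorname{span}\{\bar v_-v_-^{\,*},\,\bar v_+v_+^{\,*}\}$, notes (correctly) that every element of $V$ is complex \emph{symmetric}, and infers that after conjugating back and taking real parts one obtains a real $H$ that is non-scalar. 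What is missing is the claim you label ``one checks it meets the space of admissible $\check H_{12}$ nontrivially'': you have not shown that a nonzero $\check H_{12}\in V$ can be completed by some $(\check W_j)_{12}$ to satisfy the simplex cross-block conditions. A bare dimension count does not settle this, because the rank-one matrices $M_\ell(X)$ need not span all symmetric $2\times2$ matrices. The clean fix is exactly Lemma~\ref{lem:exists-W}: since $V$ consists of complex symmetric matrices, split $\check H_{12}$ into real and imaginary symmetric parts and apply Lemma~\ref{lem:exists-W} to each to obtain real symmetric $(\check W_j)_{12}$ parts; the left-kernel condition then comes for free by symmetry. With this invocation your argument goes through and yields the same conclusion as the paper, via a more conceptual but less explicit construction.
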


A few remarks are in order. First, as a consequence of Theorem \ref{thm:SimpleXIntervalComplexMatEx} \eqref{it:RealXComplexIsMatEx}, when working over the complexes there exist matrix extreme points $(X,Y) \in (\cD_S \times \cD_C) (2)$ such that $X$ is not an Arveson extreme point of $\cD_S$. For example, one can take $\cD_S$ and $X$ as in Example \ref{ex:SimplexXSquareMatNotArv} and let $Y$ be any self-adjoint $2 \times 2$ matrix with nonzero imaginary part such that $Y^2=I.$ 

One might expect that the real embedding of such a tuple gives an element of $(\cD_S \times \cD_C) (4)$ (viewed as a real free spectrahedron) that is matrix extreme but not Arveson extreme. However, Theorem \ref{thm:SimpleXIntervalComplexMatEx} \eqref{it:RealEmbeddingNotMatEx} shows that the real embedding of such a matrix extreme point is never matrix extreme. As a consequence, while such tuples could in principle achieve the maximum value of problem \eqref{eq:MinBallContainmentFixedn} when $n=2$, they need not be considered when working at level $n=4$ even when working over the reals. Currently, it is not known if there exist real matrix extreme points of $\cD_S \times \cD_C$ that are not Arveson extreme.

 Finally, Theorem \ref{thm:SimpleXIntervalComplexMatEx} illustrates a previously unknown difference between free extreme points and matrix extreme points of free spectrahedra. In particular, \cite[Theorem 2.4]{evert2024Compact} shows that if $A$ is a tuple of real matrices and $X \in \cD_A (n) \subset SM(\C)^g$ is a free extreme point of the complex free spectrahedron $\cD_A$ and if $\hat{X}$ is the standard embedding of $X$ into the reals, then $\hat{X}$ is a free extreme point of the real free spectrahedron $\cD_A$ so long as $\hat{X}$ is irreducible over $\R$. As a consequence of Theorem \ref{thm:SimpleXIntervalComplexMatEx} \eqref{it:RealEmbeddingNotMatEx}, this property does not extend to matrix extreme points.

We again collect several lemmas before proving the theorem. The first two lemmas together show that if $X \in \cD_S(2)$ is an Euclidean extreme point of the free simplex and $(H,W)$ is a solution to the matrix extreme point kernel containment for $X$, then $H$ is real.

\begin{lem}
\label{lem:SimplexEuc2Kernels}
    Let $\cD_S$ be a bounded free simplex in $g$ variables and for each $j=1,\dots,g$ write
    \[
    S_j = \mathrm{diag}(s_{1,j},\dots, s_{g+1,j}).
    \]
    If $X \in \cD_S(2)$ is an Euclidean extreme point of $\cD_S$, then either 
    \[
    \dim \ker \bigg(I_2 -\sum_{j=1}^g s_{\ell,j} X_j\bigg) \geq 1 \qquad \mathrm{for \ all} \qquad \ell=1,\dots,g+1,
    \]
    or there exists an $\ell_0 \in \{1,\dots,g+1\}$ such that
    \[
    I_2-\sum_{j=1}^g s_{\ell,j} X_j = 0 \qquad \mathrm{for \ all} \qquad \ell \in \{1,\dots,g+1\}\backslash\{\ell_0\}.
    \]
    In the second case, there exists an Euclidean extreme point $x = (x_1,\dots,x_g)$ of $\cD_A(1)$ such that
    \[
    X_j = x_j I_2 \qquad \mathrm{for \ all} \qquad j=1,\dots,g. 
    \]
\end{lem}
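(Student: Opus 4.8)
The plan is to analyze the structure of the linear matrix pencil $L_S(X) = I_{2(g+1)} - \Lambda_S(X)$ by exploiting that $S$ is a diagonal tuple. Since each $S_j = \mathrm{diag}(s_{1,j},\dots,s_{g+1,j})$, the pencil $L_S(X)$ decomposes as a direct sum of $g+1$ blocks, namely $L_S(X) = \bigoplus_{\ell=1}^{g+1} \big(I_2 - \sum_{j=1}^g s_{\ell,j} X_j\big)$. Write $P_\ell := I_2 - \sum_{j=1}^g s_{\ell,j} X_j$ for the $\ell$-th block, which is a $2\times 2$ positive semidefinite matrix since $X \in \cD_S(2)$. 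The kernel of $L_S(X)$ is the direct sum of the kernels of the $P_\ell$. So we are asking: if $X$ is an Euclidean extreme point of $\cD_S$, what are the possible rank patterns of the blocks $P_\ell$?

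**The key dichotomy via the kernel condition for Euclidean extreme points.**

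First I would recall the kernel characterization of Euclidean extreme points of a free spectrahedron, \cite[Corollary 2.3]{evert2018extreme}: $X \in \cD_S(2)$ is Euclidean extreme if and only if the only tuple $W = (W_1,\dots,W_g) \in SM_2(\R)^g$ with $\ker L_S(X) \subseteq \ker \Lambda_S(W)$ is $W = 0$. Because $S$ is diagonal, $\Lambda_S(W) = \bigoplus_{\ell=1}^{g+1}\big(\sum_{j=1}^g s_{\ell,j} W_j\big)$, so the kernel condition reads: for each $\ell$, $\ker P_\ell \subseteq \ker\big(\sum_j s_{\ell,j} W_j\big)$. Now suppose the first alternative fails, i.e.\ there is some $\ell_1$ with $\dim\ker P_{\ell_1} = 0$, meaning $P_{\ell_1} \succ 0$. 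Then the kernel constraint coming from block $\ell_1$ is vacuous. I would argue that the remaining blocks must then force $W = 0$. Concretely: $\dim SM_2(\R) = 3$, and each block with $\dim\ker P_\ell = 1$ imposes (generically) a $2$-dimensional constraint on the pair of linear functionals determining $\sum_j s_{\ell,j} W_j$; each block with $P_\ell = 0$ imposes the full constraint $\sum_j s_{\ell,j} W_j = 0$. The precise counting argument is that the map $W \mapsto (\sum_j s_{\ell,j} W_j)_{\ell}$ is injective on $SM_2(\R)^g$ (this is where boundedness and the simplex structure — $d = g+1$, $A$ diagonal with the $s$-vectors spanning appropriately — enter: the vectors $(s_{\ell,\cdot})_\ell$ together with a constant vector are in "general position"). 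So to kill $W$ via the constraints $\ker P_\ell \subseteq \ker(\sum_j s_{\ell,j}W_j)$, if one block $P_{\ell_1}$ contributes nothing, the rank deficits of the other blocks must be large; in fact at least all of them except possibly one must be $P_\ell = 0$. I would make this rigorous by linear algebra: if two distinct blocks $\ell_1, \ell_2$ both have $P_{\ell_i} \succ 0$, or if one has $P_{\ell_1}\succ 0$ and another $P_{\ell_2}$ has rank $1$ (kernel dimension $1$), then one can produce a nonzero $W$ satisfying all kernel containments, contradicting Euclidean extremality. This gives the clean dichotomy in the statement: either every $P_\ell$ is rank-deficient, or exactly one block $P_{\ell_0}$ is nonzero and all others vanish.

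**The degenerate case and conclusion.**

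In the second alternative, $P_\ell = I_2 - \sum_j s_{\ell,j}X_j = 0$ for all $\ell \neq \ell_0$. I would observe this says $\sum_j s_{\ell,j} X_j = I_2$ for each such $\ell$. Since there are $g$ such equations (indices $\ell \in \{1,\dots,g+1\}\setminus\{\ell_0\}$) and the vectors $\{(s_{\ell,1},\dots,s_{\ell,g})\}_{\ell \neq \ell_0}$ form a linearly independent set in $\R^g$ — again using that $\cD_S$ is a bounded free simplex, so the matrix of $s$-values has the right rank — these equations determine the tuple $(X_1,\dots,X_g)$ uniquely as a scalar tuple: there is a unique $x = (x_1,\dots,x_g) \in \R^g$ with $\sum_j s_{\ell,j} x_j = 1$ for all $\ell \neq \ell_0$, and then $X_j = x_j I_2$ solves the system, so by uniqueness $X_j = x_j I_2$. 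Finally, $x \in \cD_S(1)$ since $X \in \cD_S(2)$, and $x$ must be an Euclidean extreme point of $\cD_S(1)$: indeed a scalar tuple $x I_2$ is Euclidean extreme in $\cD_S(2)$ iff $x$ is an extreme point of $\cD_S(1)$ — this is a standard fact, or one can see it directly from the fact that $x$ saturates $g$ of the $g+1$ defining inequalities of the simplex $\cD_S(1)$, hence is a vertex.

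**Expected main obstacle.**

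The routine parts are the block decomposition and the final uniqueness argument. The technical heart — and the step I would be most careful with — is the linear-algebra dichotomy: showing that Euclidean extremality forces either all blocks rank-deficient or all-but-one blocks zero. The subtlety is correctly encoding the "general position" of the defining data of a free simplex (the fact that $d = g+1$ and the diagonal entries are such that no proper subcollection of the constraint hyperplanes of $\cD_S(1)$ is redundant) and turning "a nonzero $W$ exists" into an explicit construction when the rank pattern is wrong. I would handle this by working block by block: pick the block $\ell_1$ with $P_{\ell_1} \succ 0$, and for any candidate "bad" configuration of the other blocks, explicitly exhibit a nonzero $W \in SM_2(\R)^g$ — using that the constraints "$\sum_j s_{\ell,j} W_j$ annihilates a $1$-dimensional subspace" leave enough freedom — contradicting \cite[Corollary 2.3]{evert2018extreme}.
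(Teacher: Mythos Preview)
Your approach is correct and reaches the same conclusion as the paper, but the execution differs in a way worth noting. The paper does not pass through the kernel characterization \cite[Corollary~2.3]{evert2018extreme}; instead it works directly from the definition of Euclidean extreme point. After normalizing so that $s_{\ell,j}=\delta_{\ell,j}$ for $\ell,j\le g$, it takes the two distinguished blocks to be $\ell_1=1$ (with $I-X_1\succ0$) and $\ell_2=2$ (with $I-X_2\neq0$), picks a rank-one symmetric $\beta$ with range contained in the range of $I-X_2$ and small enough that $I-X_1\pm s_{g+1,2}\beta\succeq0$, and then exhibits the explicit perturbation
\[
X \;\pm\; \bigl(-s_{g+1,2}\beta,\; s_{g+1,1}\beta,\;0,\dots,0\bigr)\in\cD_S,
\]
where the coefficients are chosen so that the contribution to block $g{+}1$ cancels identically. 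This single construction handles every possible rank pattern of the remaining blocks at once, with no casework or dimension count.

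Your route through the kernel condition and a dimension-count (``the map $W\mapsto(\sum_j s_{\ell,j}W_j)_\ell$ is injective'', together with constraint-counting on the blocks) is a legitimate alternative: it works because any $g$ of the $g+1$ row vectors $(s_{\ell,\cdot})$ are linearly independent for a bounded simplex, so $3(g{-}1)$ constraints from the worst-case zero blocks plus at most $2$ from block $\ell_2$ leave at least one free direction in $SM_2(\R)^g\cong\R^{3g}$. What the paper's explicit construction buys is that it is shorter, avoids invoking the kernel characterization, and sidesteps the bookkeeping of which sub-collections of rows are independent. Your treatment of the degenerate case (all but one block vanish, forcing $X=xI_2$ for a vertex $x$) matches the paper's.
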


\begin{proof}
    Let $X \in \cD_S$ and note that this equivalent to 
    \[
    I_2 -\sum_{j=1}^g s_{\ell,j} X_j \succeq 0 \qquad \mathrm{for \ all} \qquad \ell=1,\dots,g+1.
    \]
    We now without loss of generality assume that
    \[
    \dim \ker \bigg(I_2 -\sum_{j=1}^g s_{1,j} X_j\bigg) = 0 \qquad \mathrm{and} \qquad     I_2-\sum_{j=1}^g s_{2,j} X_j \neq 0.
    \]
   Additionally, since $\cD_S$ is bounded, up to an invertible linear transformation, we may assume that 
    \[
    s_{\ell,j} = \delta_{\ell,j} \qquad \mathrm{for} \qquad \ell,j \in {1,\dots,g}
    \]
    and that $s_{g+1,j} < 0$ for all $j = 1,\dots,g$.

    With this setup, since $I-X_2 \succeq 0$ and has rank at least $1$ and since $I-X_1 \succ 0$, we can choose a rank one symmetric matrix $\beta$ such that $I-X_2 \pm s_{g+1,1} \beta \succeq 0$ and such that $I-X_1 \pm s_{g+1,2} \beta \succeq 0$. It is then straightforward to check that
    \[
    (X_1,X_2,X_3,\dots,X_g) \pm (-s_{g+1,2} \beta,s_{g+1,1} \beta,0,\dots,0) \in \cD_A,
    \]
    hence $X$ is not Euclidean extreme in $\cD_A$. 

    In the case that $g$ of the $g+1$ defining equations for the simplex evaluate to zero on $X$, it is straightforward to verify that $X$ has the form $X = (x_1 I,\dots, x_g I)$, where $x \in \cD_S (1)$ is the Euclidean extreme point of $\cD_S (1)$ that is determined by the same $g$ defining equations of the simplex vanishing on $x$. 
\end{proof}

\begin{lem}
\label{lem:SimplexHomogeneousSolutions}
    Let $\cD_S$ be a bounded free simplex and let $X = (X_1,\dots,X_g) \in \cD_S(2)$ be a real Euclidean extreme point of $\cD_S (2)$. Assume there is a $j \in \{1,\dots,g\}$ such that $X_j$ is not a constant multiple of the identity. If $(H,W) \in SM_2 (\C)^{g+1}$ is a solution to the kernel containment
    \[
    \ker L_A(S) \subset \ker \Lambda_{(I,-S)} (H,W),
    \]
    then $H$ is real.
\end{lem}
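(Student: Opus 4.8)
The plan is to use the fine structure of the defining pencil of a free simplex, together with Lemma \ref{lem:SimplexEuc2Kernels}, to pin down enough of the kernel of $L_S(X)$ to force $H$ to be real. Write $S_j = \operatorname{diag}(s_{1,j},\dots,s_{g+1,j})$ and, for $\ell=1,\dots,g+1$, set $P_\ell(X) := I_2 - \sum_{j=1}^g s_{\ell,j} X_j$, so that $L_S(X) = \bigoplus_{\ell=1}^{g+1} P_\ell(X)$ and $\ker L_S(X) = \bigoplus_\ell \ker P_\ell(X)$. Likewise, $\Lambda_{(I,-S)}(H,W) = \bigoplus_{\ell=1}^{g+1} Q_\ell(H,W)$ with $Q_\ell(H,W) = H - \sum_{j=1}^g s_{\ell,j} W_j$. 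Since $X$ is assumed to have at least one component that is not a scalar multiple of $I_2$, the degenerate case of Lemma \ref{lem:SimplexEuc2Kernels} (where $X = (x_1 I_2,\dots,x_g I_2)$) is excluded; hence we are in the first alternative, i.e. $\dim \ker P_\ell(X) \geq 1$ for \emph{every} $\ell = 1,\dots,g+1$.

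The kernel containment $\ker L_S(X) \subseteq \ker \Lambda_{(I,-S)}(H,W)$ decouples blockwise into $\ker P_\ell(X) \subseteq \ker Q_\ell(H,W)$ for each $\ell$. Pick, for each $\ell$, a unit vector $v_\ell \in \ker P_\ell(X)$; then $Q_\ell(H,W) v_\ell = 0$, so in particular $v_\ell^* H v_\ell = \sum_{j=1}^g s_{\ell,j}\, v_\ell^* W_j v_\ell$. The first step is to show the vectors $\{v_\ell\}$ are "spread out" enough in $\mathbb{C}^2$ that knowing $H v_\ell$ lies in a one-dimensional family for each of the $g+1$ values of $\ell$ constrains $H$ severely. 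The key point is dimensional: each $P_\ell(X)$ is a $2\times 2$ positive semidefinite matrix of rank $\geq 1$, and $\sum_\ell$ (a positive combination of) the $P_\ell(X)$ reconstructs $I_2$ (because $\sum_\ell \mu_\ell S_j = 0$ and $\sum_\ell \mu_\ell = $ const for suitable barycentric weights $\mu_\ell>0$ — this is exactly the simplex relation used in Lemma \ref{lem:exists-W}). Thus the ranges of the $P_\ell(X)$ cannot all coincide, so at least two of the $v_\ell$ are linearly independent; in fact, since $X$ is Euclidean extreme and irreducible-adjacent, one expects the $\ker P_\ell(X)$ to span all of $\mathbb{C}^2$ in a suitably generic way.

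Concretely, I would proceed as follows. After the invertible linear change of variables used in Lemma \ref{lem:SimplexEuc2Kernels} (so $s_{\ell,j} = \delta_{\ell,j}$ for $\ell,j \leq g$ and $s_{g+1,j} < 0$), the equations $Q_\ell(H,W)v_\ell = 0$ for $\ell = 1,\dots,g$ read $H v_\ell = W_\ell v_\ell$, and for $\ell = g+1$ read $H v_{g+1} = \big(\sum_j s_{g+1,j} W_j\big) v_{g+1}$. Now use that $X$ is \emph{real}: $\overline{X} = X$, so $\overline{v_\ell} \in \ker \overline{P_\ell(X)} = \ker P_\ell(X)$ as well, i.e. $\ker P_\ell(X)$ is a conjugation-invariant subspace of $\mathbb{C}^2$. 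A one-dimensional conjugation-invariant subspace of $\mathbb{C}^2$ is spanned by a real vector; a two-dimensional one is all of $\mathbb{C}^2$. In the first case we may take each $v_\ell$ real. Then $H - \overline{H}$ annihilates every $v_\ell$, and since (by the spanning argument above, using that the $v_\ell$ cannot all be parallel) the real vectors $\{v_\ell\}$ span $\mathbb{R}^2$, we get $H = \overline{H}$, i.e. $H$ is real. The case where some $\ker P_\ell(X)$ is two-dimensional means $P_\ell(X) = 0$, which, combined with Euclidean extremality via Lemma \ref{lem:SimplexEuc2Kernels}, would force the excluded scalar case — so it does not arise.

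The main obstacle I anticipate is making the "the $v_\ell$ span $\mathbb{R}^2$" step fully rigorous: it requires ruling out the possibility that all the one-dimensional kernels $\ker P_\ell(X)$ coincide (spanned by a single real line $\mathbb{R}w$), for in that degenerate configuration $H$ would only be pinned down on $\operatorname{span}\{w\}$ and could have a nonreal component off it. Here one must invoke Euclidean extremality of $X$ more carefully — if all $\ker P_\ell(X) = \mathbb{R}w$, then $w^\perp$ is a common eigendirection and one can perturb $X$ in the directions transverse to $w$ without leaving $\cD_S$ (an argument in the spirit of the $\beta$-perturbation in the proof of Lemma \ref{lem:SimplexEuc2Kernels}), contradicting extremality. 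Assembling this perturbation argument cleanly, and checking it interacts correctly with the hypothesis that $X$ has a non-scalar component, is the delicate part; the rest is linear algebra over the conjugation-invariant subspaces.
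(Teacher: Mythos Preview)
Your argument has a genuine gap at the step ``Then $H-\overline{H}$ annihilates every $v_\ell$.'' From $Q_\ell(H,W)v_\ell=0$ with $v_\ell$ real you may take complex conjugates to obtain $\overline{Q_\ell(H,W)}\,v_\ell=0$, hence $(Q_\ell-\overline{Q_\ell})v_\ell=0$; but $Q_\ell-\overline{Q_\ell}=2i\bigl(\mathrm{Im}\,H-\sum_j s_{\ell,j}\,\mathrm{Im}\,W_j\bigr)$, so what vanishes on $v_\ell$ is the \emph{combination} $\mathrm{Im}\,H-\sum_j s_{\ell,j}\,\mathrm{Im}\,W_j$, not $\mathrm{Im}\,H$ alone. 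Since the $W_j$ are not assumed real, the spanning argument for $\{v_\ell\}$ in $\mathbb{R}^2$ does not force $\mathrm{Im}\,H=0$.

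The idea you are missing, which is the key step in the paper's proof, is that a $2\times 2$ self-adjoint matrix with a nonzero \emph{real} vector in its kernel is automatically real (it has rank at most one, so equals $\lambda ww^*$ with $w\perp v_\ell$; since $v_\ell$ is real, $w$ may be taken real). This yields directly that each $Q_\ell(H,W)$ is real, i.e., $\mathrm{Im}\,H=\sum_j s_{\ell,j}\,\mathrm{Im}\,W_j$ for \emph{every} $\ell=1,\dots,g+1$. Because the imaginary part of a self-adjoint $2\times 2$ matrix lies in the one-dimensional space of real skew-symmetric matrices, this is a scalar linear system in the coefficients $(h,w_1,\dots,w_g)$, namely $h-\sum_j s_{\ell,j}w_j=0$ for all $\ell$; boundedness of $\cD_S$ forces $(h,w)=0$. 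The ``$v_\ell$ span $\mathbb{R}^2$'' line of attack and the perturbation argument you anticipate needing are unnecessary (and on their own would not close the gap).
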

\begin{proof}
   Write $S_j = \mathrm{diag}(s_{1,j},\dots, s_{g+1,j})$ for $j=1,\dots,g$. By assumption there is an $X_j$ that is not a constant multiple of the identity, so Lemma \ref{lem:SimplexEuc2Kernels} implies that the real matrix
   \[
    I_2 -\sum_{j=1}^g s_{\ell,j} X_j
   \]
   has a nontrivial (real) null space for each $\ell = 1,\dots,g+1$. Also note that, the kernel containment
   \[
    \ker L_S (X) \subseteq \ker \Lambda_{(I,-S)} (H,W)
   \]
   is equivalent to
      \[
    \ker I_2 - \sum_{j=1}^g s_{\ell,j} X_j \subset \ker H - \sum_{j=1}^g s_{\ell,j} W_j
   \]
    for all $\ell = 1,\dots,g+1$. Since each $X_j$ is real, it follows that each $H - \sum_{j=1}^g s_{\ell,j} W_j$ is a $2 \times 2$ self-adjoint matrix with nontrivial real kernel. It follows that each $H - \sum_{j=1}^g s_{\ell,j} W_j$ must be real, hence $\Lambda_{(I,-S)} (\mathrm{Im} (H,W)) = 0$. 
   
   Now, assume towards a contradiction that $\mathrm{Im} ((H,W) )\neq 0$. Then $\Lambda_{(I,-S)} (\mathrm{Im} (H,W)) = 0$ implies that there exists a nonzero vector $v = (h,w) \in \R^{g+1}$ such that 
   \[
    \Lambda_{(I,-S)} (h,w) = 0. 
   \]
   In the case that $h \neq 0$, we obtain that $I = \Lambda_S(w/h)$ which implies that
   \[
    L_S(-\alpha w/h) = I+\alpha \Lambda_S(w/h) = (1+\alpha)I,
   \]
   is positive semidefinite for all $\alpha \geq -1$. On the other hand, if $h = 0$, then $\Lambda_S (w) = 0$, which implies $L_S(\alpha w) = I \succeq 0$ for all $\alpha \in \R$. In either case, we find that $\cD_S(1)$ is unbounded which contradicts our assumptions. 
   \end{proof}

The next lemma shows that if $Y \in SM_2 (\C)$ is an Arveson extreme point of the matrix interval with nonzero imaginary part and $(H,Z)$ is a nontrivial solution to the matrix extreme point kernel containment for $X$, then $H$ has nonzero imaginary part. 

  \begin{lem}
  \label{lem:IntervalHomogeneousSolutions}
Let $C = 1 \oplus -1$ so that $\cD_C$ is the matrix interval, and let $Y \in \cD_C(2)$ be an Euclidean extreme point of $\cD_C$ with $\mathrm{Im} (Y) \neq 0$. If $(H,Z) \in SM_2(\C)^2$ is a nonzero solution to the kernel containment 
  \[
    \ker L_C (Y) \subset \ker \Lambda_{(I,-C)} (H,Z)
  \]
  with $\mathrm{tr} (H) = 0$, then $\mathrm{Im} (H) \neq 0$. 
  \end{lem}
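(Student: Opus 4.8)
The plan is to exploit the fully explicit block structure of the matrix interval, turning the kernel containment into two matrix identities from which the shape of $H$ (and $Z$) can be read off directly.

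\emph{Identifying $Y$.} Since $C = \operatorname{diag}(1,-1)$ one has $L_C(M) = (I_2-M)\oplus(I_2+M)$, so $\cD_C(2) = \{M \in SM_2(\C) : -I_2 \preceq M \preceq I_2\}$. I would first observe that an Euclidean extreme point $Y$ of $\cD_C(2)$ must have all eigenvalues in $\{1,-1\}$: if $\lambda \in (-1,1)$ were an eigenvalue with unit eigenvector $u$, then $Y \pm \epsilon\, uu^* \in \cD_C(2)$ for all small $\epsilon > 0$, contradicting extremality (one could instead invoke the kernel characterization \cite[Corollary 2.3]{evert2018extreme}). The hypothesis $\mathrm{Im}(Y) \neq 0$ rules out $Y = \pm I_2$, so $Y$ has eigenvalues $1$ and $-1$; in particular $Y^2 = I_2$ and $\mathrm{tr}(Y) = 0$. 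Let $P_\pm$ be the rank-one spectral projections of $Y$, so $P_+ + P_- = I_2$ and $Y = P_+ - P_-$.

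\emph{Decoding the kernel containment.} Because $C$ is diagonal, both pencils split along $\C^2 \oplus \C^2$: one has $L_C(Y) = (I_2-Y)\oplus(I_2+Y) = 2P_- \oplus 2P_+$ and $\Lambda_{(I,-C)}(H,Z) = I_2 \otimes H - C\otimes Z = (H-Z)\oplus(H+Z)$. Hence $\ker L_C(Y) = \mathrm{ran}\,P_+ \oplus \mathrm{ran}\,P_-$, and the containment $\ker L_C(Y) \subseteq \ker \Lambda_{(I,-C)}(H,Z)$ is equivalent to the pair of identities $HP_+ = ZP_+$ and $HP_- = -ZP_-$. Adding them and using $P_+ + P_- = I_2$ gives $H = Z(P_+ - P_-) = ZY$; taking adjoints (recall $H,Z,Y$ are self-adjoint) gives $H = YZ$, so $ZY = YZ$. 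Since $Y$ has two distinct eigenvalues with one-dimensional eigenspaces, this forces $Z = z_+ P_+ + z_- P_-$ with $z_\pm \in \R$, and substituting back into the two identities yields $H = z_+ P_+ - z_- P_-$.

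\emph{Concluding.} As the $P_\pm$ are rank one, $\mathrm{tr}(H) = z_+ - z_-$, so the hypothesis $\mathrm{tr}(H) = 0$ forces $z_+ = z_- =: z$; thus $H = zY$ and $Z = zI_2$. Since $(H,Z) \neq 0$ we must have $z \neq 0$, whence $\mathrm{Im}(H) = z\,\mathrm{Im}(Y) \neq 0$. I do not expect a genuine obstacle here, as the argument is finite-dimensional linear algebra; the two points requiring care are the classification of the Euclidean extreme points of $\cD_C(2)$ and the systematic use of self-adjointness of \emph{both} $H$ and $Z$ — it is exactly this that makes the scalars $z_\pm$ real and lets the conclusion drop out, and it is also why the hypothesis $\mathrm{tr}(H)=0$ is genuinely needed: without it, $(H,Z) = (I_2, Y)$ is a nonzero solution with $\mathrm{Im}(H) = 0$.
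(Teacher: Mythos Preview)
Your proof is correct and follows essentially the same approach as the paper: both arguments reduce the kernel containment to the conclusion that $H = zY$ for some real $z$, the paper by conjugating $Y$ to $\operatorname{diag}(1,-1)$ and computing directly in that basis, you by working basis-free with the spectral projections $P_\pm$. The two computations are equivalent, with your version being a slightly cleaner packaging of the same linear algebra.
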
 
  \begin{proof}
      Note that $Y \in \cD_C (2)$ is an Euclidean extreme point of $\cD_C$ if and only if $Y^2 = I$. Therefore, if $Y$ is an Euclidean extreme point of $\cD_C(2)$ and $\mathrm{Im} (Y) \neq 0$, then there must exist a unitary $U \in M_2 (\C)$ such that $ Y = U^*(1 \oplus -1) U$. It follows that $(H,Z)\in SM_2(\C)^2$ satisfies
      \[
    \ker L_C (Y) \subset \ker \Lambda_{(I,-C)} (H, Z )
      \]
      if and only if
            \[
    \ker L_C (1 \oplus -1) \subset \ker \Lambda_{(I,-C)} (U HU^* , UZU^* ).
      \]
     A direct computation shows that if 
     \[
     0 = \mathrm{tr} (H) = \mathrm{tr} (U^*H U)
     \]
     and $(U HU^* ,UZU^*) \in SM_2(\C)^2$ satisfies the second kernel containment above, then 
     \[
    (U HU^* , UZU^* ) = \left(\begin{pmatrix}
        h & 0 \\
        0 & -h
    \end{pmatrix},
    \begin{pmatrix}
        h & 0 \\
        0 & h
    \end{pmatrix}
    \right)
    = h(U Y U^*,I)
     \]
     for some nonzero $h \in \R$. Therefore, if $\mathrm{tr} (H) = 0$ and $(H,Z)$ satisfy the first kernel containment, then we must have $H = hY$ which has nonzero imaginary part by assumption. 
       \end{proof}

\subsubsection{Proof of Theorem \ref{thm:SimpleXIntervalComplexMatEx}}

\begin{proof}
 We first prove Item \ref{thm:SimpleXIntervalComplexMatEx} \eqref{it:RealXComplexIsMatEx}. To this end, suppose $X \in \cD_S (2)$ is a real Euclidean extreme point of $\cD_S$ which is not Arveson extreme in $\cD_S$ and that $Y \in \cD_C(2)$ is an Arveson extreme point of $\cD_C$ with nonzero imaginary part. Since $X$ is not Arveson extreme in $\cD_S$, there must be some $j\in \{1,\dots,g\}$ such that $X_j$ is not a constant multiple of the identity, see Remark \ref{rem:extreme-points-free-simplex}. It then follows from Lemma \ref{lem:SimplexHomogeneousSolutions} that every solution $(H,W)$ of the kernel containment 
\[
\ker L_S(X) \subset \ker \Lambda_{(I,-S)} (H,W)
\]
has $\mathrm{Im} (H) = 0$. On the other hand, as a consequence of Lemma \ref{lem:IntervalHomogeneousSolutions}, if $(H,Z)$ is a solution to 
\[
\ker L_C(Y) \subset \ker \Lambda_{(I,-C)} (H,Z)
\]
with $\mathrm{Im} (H) = 0$, then $H = \alpha I$ for some $\alpha \in \mathbb{R}$. Furthermore, since $X$ and $Y$ are both Euclidean extreme, using \cite[Corollary 2.3]{evert2018extreme} shows that the above kernel containments with $H= \alpha I$ force $W = \alpha X$ and $Z = \alpha Y$. It follows from Lemma \ref{lem:ExtremeOfCartesian} \eqref{it:MatExCartesian} that $(X,Y)$ is matrix extreme in $\cD_S \times \cD_C$.

    We now prove Item \ref{thm:SimpleXIntervalComplexMatEx} \eqref{it:RealEmbeddingNotMatEx}. We will show that there is a nontrivial solution to the kernel containment condition given in Lemma \ref{lem:ExtremeOfCartesian} \eqref{it:MatExCartesian}. To simplify notation in the proof we set
    \[
    \hat{Y}:=\begin{pmatrix} Y_\R & Y_\C \\
    Y_\C^T & Y_\R
    \end{pmatrix}
    \]
    We first claim that there exists an $H \in SM_2(\R)$ and a $Z \in SM_4(\R)$ such that $\mathrm{tr} (H) = 0$ and such that
    \[
    \ker L_C  (\hat{Y}) 
    \subseteq
    \ker \Lambda_{(I,-C)}  (H \oplus -H,Z).
    \]
    To this end, choose $H \in SM_2 (\R)$ so that the eigenvalues of $H$ are $1$ and $-1$ and so that 
    \[
    H Y_\R - Y_\R H = 0.
    \] 
    Using the fact that $H$ is a trace $0$ real symmetric $2 \times 2$ matrix and that $Y_\C$ is a real skew symmetric $2\times 2$ matrix, we have $Y_\C H + H Y_\C = 0$, from which it follows that $Y H - H \overline{Y} =0$. This then implies that
    \begin{equation}
    \label{eq:YYbarHskewCommuting}
    \begin{pmatrix}
        Y & 0 \\
        0 & \overline{Y}
    \end{pmatrix}
    \begin{pmatrix}
        0 & i H \\
        -i H^* & 0 
    \end{pmatrix} -
     \begin{pmatrix}
        0 & i H \\
        -i H^* & 0 
    \end{pmatrix}
  \begin{pmatrix}
        Y & 0 \\
        0 & \overline{Y}
    \end{pmatrix} = 0.
    \end{equation}
    
    Now, since $Y$ is an Arveson extreme point of $\cD_C$, its eigenvalues are $1$ and $-1$. Thus there is a unitary $V \in SM_4(\C)$ such that 
    \begin{equation}
    \label{eq:VunitaryDef}
    V^* (Y \oplus \overline{Y}) V = -I_2 \oplus I_2. 
    \end{equation}
    The commuting condition in equation \eqref{eq:YYbarHskewCommuting} then implies that there exists self-adjoint matrices $C_1,C_2 \in SM_2 (\C)$ such that
    \begin{equation}
    \label{eq:VBlockDiagonalizesHskew}
    V^* \begin{pmatrix}
        0 & i H \\
        -i H^* & 0 
    \end{pmatrix} V = C_1 \oplus C_2. 
    \end{equation}
    Furthermore, since the eigenvalues of $H$ are $1$ and $-1$, the eigenvalues of both $C_1$ and $C_2$ must be $1$ and $-1$, so $C_1$ and $C_2$ are nonzero. 
    
    Set $W = V (-C_1 \oplus C_2) V^*$ and observe that
    \[
    L_C (-I_2 \oplus I_2) = 2I_2 \oplus 0_4 \oplus 2I_2 \qquad \mathrm{and} \qquad \Lambda_{(I,-C)} (C_1\oplus C_2,-C_1\oplus C_2) = 2C_1 \oplus 0_4 \oplus 2C_2,
    \]
    hence
    \[
    \ker L_C (-I_2 \oplus I_2) \subseteq \ker \Lambda_{(I,-C)} (C_1\oplus C_2,-C_1\oplus C_2),
    \]
    from which it follows that
     \begin{equation}
     \label{eq:YYbarKernelContainment}
    \ker L_C (Y \oplus \overline{Y}) \subseteq \ker \Lambda_{(I,-C)} \left(\begin{pmatrix}
        0 & i H \\
        -i H^* & 0 
    \end{pmatrix} , W\right).
    \end{equation}

    Now, let $U \in SM_4 (\C)$ be the unitary
    \[
    U = \frac{\sqrt{2}}{2}\begin{pmatrix} I_2 & -i I_2 \\
    -i I_2 & I_2
    \end{pmatrix}.
    \]
    Then direct computations show
    \[
    U^* (Y \oplus \overline{Y}) U =  \hat{Y} \qquad \mathrm{and} \qquad U^* \begin{pmatrix}
        0 & i H\\
        -i H^* & 0 
    \end{pmatrix} U = H \oplus -H.
    \]
    Furthermore, it one sets $Z = U^* W U$, then equation \eqref{eq:YYbarKernelContainment} implies that
    \begin{equation}
    \label{eq:YemKerContain}
    \ker L_C (\hat{Y}) \subseteq \ker \Lambda_{(I,-C)} \left(H \oplus -H , Z\right),
    \end{equation}
     which proves the desired kernel containment. 
     
     To prove the initial claim, we must show that $Z$ is real. To this end, first observe that the kernel containment \eqref{eq:YemKerContain} implies
     \begin{equation}
    \label{eq:ZRealOnYemKer}
    \left(I \otimes (H \oplus -H)\right)  v = (C \otimes Z) v \qquad \mathrm{for \ all} \qquad v \in \ker \Lambda_{(I,-C)} (I_4,\hat{Y}) \subset \R^{8}
    \end{equation}
    Similarly, using equations \eqref{eq:VunitaryDef} and \eqref{eq:VBlockDiagonalizesHskew} together with the definition of $H$ shows that
   \[
    \ker L_C (-\hat{Y}) \subseteq \ker \Lambda_{(I,-C)} \left(H \oplus -H , -Z\right) \subset \R^{8},
    \]
    which implies that
    \begin{equation}
    \label{eq:ZRealOnMinusYemKer}
    \left(I_2 \otimes (H \oplus -H)\right) v = -(C \otimes Z) v \qquad \mathrm{for \ all} \qquad v \in \ker \Lambda_{(I_2,-B)} (I_4,-\hat{Y}) \subset \R^{8} 
    \end{equation}
    Using the fact that $\hat{Y} \in SM_4(\R)$ has eigenvalues $1$ and $-1$ each with multiplicity $2$, with have
    \[
     \R^{8} =\ker L_C (-\hat{Y}) \oplus \ker L_C (\hat{Y}).
    \]
    Therefore, since $\left(I \otimes (H \oplus -H)\right)$ is a real matrix, combining equations \eqref{eq:ZRealOnYemKer} and \eqref{eq:ZRealOnMinusYemKer} shows that $C \otimes Z$ is real, hence $Z$ is real. 

    Finally, to complete the proof, using Lemma \ref{lem:exists-W} shows that since $X$ is real Euclidean extreme point of $\cD_S(2)$ that is not Arveson extreme in $\cD_S$,  
     there exist $W_1,W_2 \in SM_2(\R)^g$ such that
    \begin{equation}
    \label{eq:EmbeddingBadMatSimplexSolution}
    \ker L_S (X \oplus X) \subseteq \ker \Lambda_{(I,-S)} (H \oplus -H,W_1 \oplus W_2).
    \end{equation}
    By construction, $H \oplus -H \neq \alpha I$ for any $\alpha \in \R$, so, using Lemma \ref{lem:ExtremeOfCartesian} \eqref{it:MatExCartesian} together with equations \eqref{eq:YemKerContain} and \eqref{eq:EmbeddingBadMatSimplexSolution} shows that $(X\oplus X,\hat{Y})$ is not matrix extreme in $\cD_S \times \cD_C$. 
\end{proof}

\section{Optimizers for some optimization problems involving Cartesian products of free simplices}\label{sec:optimizers-direct-sums}

This section is primarily dedicated to solving the $n=2$ optimization problem \eqref{eq:SumEigMaxFixedn} for the  real free spectrahedron $\cD_{A(k)} \subset SM(\R)^{k+1}$ that is defined as follows. Let $e_j(k) \in \mathbb{R}^{k}$ denote the $j$-th standard basis vector in $\R^k$ and set $E_j(k) = e_j(k) e_j(k)^T \in \mathbb{R}^{k \times k}$. The tuple $A(k)=(A_1(k),\dots,A_{k+1}(k))$ is given by
\begin{equation}
\label{eq:AkDefinition}
\begin{split}
A_j(k) &= E_j(k) \oplus \operatorname{diag}(-1,0,0) \in \R^{k+3 \times k+3} \quad \mathrm{for \ } j=1,\dots,k,  \\
 A_{k+1}(k) &= 0_{k+1} \oplus \operatorname{diag}(-1,1) \in \R^{k+3 \times k+3}.
 \end{split}
\end{equation}
In particular, the free spectrahedron $\cD_{A(k)}$ is a Cartesian product of a real free simplex in $k$ variables and the real matrix interval. Moreover, $\cD_{A(k)}$ is obtained by an invertible linear transformation of the real matrix jewel $D_{\jewel, \mathbf{k}}$ where $\mathbf{k}=(k,1)$. Notably, invertible linear transformations do not affect inclusion constants, see \cite[Lemma 3.1]{passer2018minimal}.

We warn the reader that while all bounded simplices are related by an invertible affine linear transformation, only specific pairs of simplices are related by an invertible linear transformation. As a consequence of the upcoming Theorem \ref{thm:kSimplexPlusLineOptimum}, invertible affine transformations do not preserve inclusion constants. 

\subsection{Cartesian product of line and simplex} 

\begin{thm}
\label{thm:kSimplexPlusLineOptimum} The maximum value achieved by
\begin{equation}
\label{eq:SumEigMaxFixedGenK}
\begin{split}
  \mathrm{max} \quad & \lambda_{\max} \sum_{i=1}^{k+3} \left(X_i \otimes Y_i\right) \\
    \mathrm{s.t.} \quad
    & X \in \cD^{\R}_{A(k)}(2), \quad Y\in \cD_{A(k)}^{\square, \R}(2)\, ,
    \end{split}
\end{equation}
is equal to 
\[
\gamma(k):=\gamk{k} .
\]
Furthermore, this maximum value is achieved when $X$ is given by
\[
X_1 = \begin{pmatrix}
1 & 0 \\
0 & -k
\end{pmatrix}, \quad 
X_2 = \begin{pmatrix}
    -k & 0 \\
    0 & 1
\end{pmatrix}, \quad
X_{k+1} = \begin{pmatrix}
    0 & 1 \\
    1 & 0
\end{pmatrix}
\]
and $X_j = I_2$ for all $j \neq 1,2,k+1$ and when $Y$ is given by 
\[
Y_1 = \begin{pmatrix}
\frac{1+\sqrt{1+k}}{1+k+2\sqrt{1+k}} & 0 \\
0 & -\frac{1}{1+k+2\sqrt{1+k}}
\end{pmatrix}, \ 
Y_2 = \begin{pmatrix}
    -\frac{1}{1+k+2\sqrt{1+k}} & 0 \\
    0 & \frac{1+\sqrt{1+k}}{1+k+2\sqrt{1+k}}
\end{pmatrix}, \ 
Y_{k+1} = \begin{pmatrix}
    0 & \frac{k}{k-1+\sqrt{1+k}} \\
   \frac{k}{k-1+\sqrt{1+k}} & 0
\end{pmatrix},
\]
and $Y_j = 0$ for all $j \neq 1,2,k+1$.
\end{thm}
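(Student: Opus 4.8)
The plan is to combine Theorem~\ref{thm:MaxEigIsMinBallInclusionConst} with the extreme point analysis of Sections~\ref{sec:extreme-points} and~\ref{sec:CartesianExtreme}. Write $\cD_{A(k)} = \cD_S \times \cD_C$, where $\cD_S \subset SM(\R)^k$ is the free simplex cut out by $X_j \preceq I$ for $j \in [k]$ together with $\sum_{j=1}^k X_j \succeq -I$, and $\cD_C$ is the matrix interval. By Theorem~\ref{thm:MaxEigIsMinBallInclusionConst} the maximum in \eqref{eq:SumEigMaxFixedGenK} is attained and equals the least $\gamma$ for which $\delmat \cD^\R_{A(k)}(j) \subseteq \gamma\, \cWmin(\cD^\R_{A(k)}(1))$ holds for $j=1,2$. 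The $j=1$ instance holds with $\gamma=1$, since the vertices of $\cD_{A(k)}(1)$ lie in $\cWmin(\cD_{A(k)}(1))$; as the construction below produces $\gamma \geq \gamk{k} \geq 1$, everything reduces to the $j=2$ instance.

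For the lower bound, that $\gamk{k}$ is attained, I would verify the candidate optimizers of the statement directly. The pencil $L_{A(k)}(X)$ is block diagonal, so $X \in \cD^\R_{A(k)}(2)$ follows by checking positive semidefiniteness block by block; by Proposition~\ref{prop:CartesianPolyDualIsDirect} the dual free polytope $\cD_{A(k)}^\square$ has an explicit diagonal defining pencil --- the direct sum of the pencils of $\cD_S^\square$ and $\cD_C^\square = \cD_C$ --- against which $Y \in \cD_{A(k)}^{\square,\R}(2)$ is verified similarly; and finally one computes that the $4\times4$ matrix $\sum_\ell X_\ell \otimes Y_\ell$ has largest eigenvalue exactly $\gamk{k}$. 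This step is a finite computation.

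For the upper bound I would first classify the real size-$2$ matrix extreme points of $\cD_{A(k)} = \cD_S \times \cD_C$. Since $\cD_C$ is a free simplex in one variable, Theorem~\ref{thm:SimplexXIntervalRealMatEx}\eqref{it:SimpleXSimplexRealMatEx} applies: any real size-$2$ matrix extreme point $(X',X'')$ is free extreme, hence $X'$ is Arveson extreme in $\cD_S$, $X''$ is Arveson extreme in $\cD_C$, and $(X',X'')$ is irreducible. By Remark~\ref{rem:extreme-points-free-simplex}, this forces --- up to unitary conjugation --- $X'$ to be the diagonal direct sum of two distinct vertices $v,w$ of $\cD_S(1)$ written in a common orthonormal basis $\{e_1,e_2\}$, and $X''$ to be a trace-zero involution that is not diagonal in $\{e_1,e_2\}$; in the real case, after a harmless sign normalization, $X'' = \cos\theta\,\sigma_Z + \sin\theta\,\sigma_X$ with $\theta\in(0,\pi)$. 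This reduces the problem to an explicit one-parameter family, indexed by the vertex pair and the angle $\theta$. For such an $X$, since $X$ has size $2$, the condition $X \in \gamma\,\cWmin(\cD_{A(k)}(1)) = \gamma\,(\cD_{A(k)}^\square)^\circ$ is equivalent to $\cD_{A(k)}^\square \subseteq \gamma\,\cD_X$, which by \cite[Theorem~3.5]{helton_matricial_2013} reduces to the level-$2$ inclusion $\cD_{A(k)}^\square(2)\subseteq \gamma\,\cD_X(2)$, and hence to the existence of positive semidefinite $2\times2$ matrices $(W_v)_v$, indexed by the vertices $v$ of $\cD_{A(k)}(1)$, with $\sum_v W_v = I_2$ and $X_\ell = \gamma \sum_v v_\ell\, W_v$ for all $\ell$ --- that is, to feasibility of the SDP \eqref{eq:inclusion-free-spectra-SDP} for the relevant pencils. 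This dualization is exactly what lets us avoid classifying the matrix extreme points of the dual polytope $\cD_{A(k)}^\square$, which is a direct sum of free simplices and so is not covered by Theorem~\ref{thm:SimplexXIntervalRealMatEx}.

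The crux, and the step I expect to be the main obstacle, is then to exhibit for every $X$ in this one-parameter family an explicit family of positive semidefinite weights $(W_v)$ with $\gamma = \gamk{k}$ --- equivalently, the Choi matrix of a unital completely positive map witnessing $\cD_{A(k)}^\square(2)\subseteq \gamk{k}\,\cD_X(2)$ --- and to check that for the distinguished $X$ of the statement the value $\gamk{k}$ is optimal, which is certified by the matching $Y$ from the lower-bound step. Assembling the cases then yields that the maximum equals $\gamk{k}$. I anticipate that the weights $(W_v)$ are located by first maximizing over $\theta$ and over the choice of vertex pair to identify the extremal $X$ --- which is where $\sqrt{1+k}$ enters --- and then reverse-engineering the certificate; the earlier reductions are essentially bookkeeping once Theorems~\ref{thm:MaxEigIsMinBallInclusionConst} and~\ref{thm:SimplexXIntervalRealMatEx} are in hand.
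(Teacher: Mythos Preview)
Your proposal is correct and follows essentially the same approach as the paper: reduce via Theorem~\ref{thm:MaxEigIsMinBallInclusionConst} to showing $\delmat \cD^\R_{A(k)}(2) \subseteq \gamma(k)\,\cWmin(\cD_{A(k)}(1))$, classify the real level-$2$ matrix extreme points using Theorem~\ref{thm:SimplexXIntervalRealMatEx}\eqref{it:SimpleXSimplexRealMatEx} as a one-parameter family $X(\theta)$, and then certify the containment by exhibiting explicit feasible points of the SDP~\eqref{eq:inclusion-free-spectra-SDP}. The paper carries out exactly this program, with two additional reductions you did not anticipate: a symmetry argument (Proposition~\ref{prop:TwoSimplexXIntervalMatEx}) that cuts the vertex-pair casework down to a single family $X(\theta)$ with $\theta\in[0,\pi/2]$, and for $k\geq 3$ a reduction (Proposition~\ref{prop:kVarsSimplexToTwoVar}) of the $k$-variable feasibility SDP to a fixed $2$-variable one, after which the explicit certificates are written down in closed form (Theorems~\ref{thm:SDPSolutions} and~\ref{thm:kSDPFeasible}).
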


\begin{proof} 
    By computing the classical extreme points of $\cD_{A(k)}(1)$, one finds that $\cD_{A(k)}^\square = \cD_{B(k)}$ where
\begin{equation} \label{eq:line-plus-simplex_dual-FS-1}
B_{j}(k) = I_2 \otimes \big(I_{k+1}-(k+1) E_{j}(k)\big) \qquad \mathrm{for} \qquad j=1,\dots,k
\end{equation}
and
\begin{equation} \label{eq:line-plus-simplex_dual-FS-2bis}
B_{k+1}(k) = \operatorname{diag} (1,-1) \otimes I_{k+1}.
\end{equation}
    It is then straightforward to check that the tuples $X=(X_1,X_2,\dots,X_{k+1})$ and $Y=(Y_1,Y_2,\dots,Y_{k+1})$ above satisfy $X \in \cD_{A(k)}(2)$ and $Y \in \cD_{A(k)}^\square (2)$ and
    \[
   \gamma(k) = \lambda_{\max} \left( \sum_{j=1}^k X_j \otimes Y_j \right),
    \]
    hence the maximum is at least $\gamma(k)$.
    
    Using Theorem \ref{thm:MaxEigIsMinBallInclusionConst}, to show $\gamma(k)$ is the maximum it is sufficient to show that 
    \[
    \delmat\cD_{A(k)} (2) \subset \cWmin(\gamma(k)  \cD_{A(k)}(1)).
    \]
     By definition, $\cWmin(\gamma(k) \cD_{A(k)}(1))$ is the matrix convex hull of the classical extreme points of $\gamma(k) \cD_{A(k)}(1)$. Since $\gamma(k) \cD_{A(k)}(1)$ only has finitely many extreme points, we can use \cite[Section 4.1]{helton_matricial_2013}  to reformulate the containment $X \in \cWmin(\gamma(k)  \cD_{A(k)}(1))$ as a feasibility SDP for fixed $X \in \delmat\cD_{A(k)} (2)$ (see also Proposition \ref{prop:spectrahedron-mcset-equivalent}).

     We use Theorem \ref{thm:SimplexXIntervalRealMatEx} to classify (up to unitary equivalence) the real matrix extreme points of $\cD_{A(k)}(2)$, see Proposition \ref{prop:TwoSimplexXIntervalMatEx}. Then, by studying the feasibility SDP in question, we show that it suffices to only consider a family of matrix extreme points $\{X(\theta)\}$ defined in terms of $\theta \in [0,\pi/2]$. Here 
     \[
 X(\theta) = (X_1,\dots,X_n,X_{n+1} (\theta)) \in \delmat\cD_{A(k)}(2)
    \]
   is defined by
\[
X_1 = \begin{pmatrix}
1 & 0 \\
0 & -k
\end{pmatrix}, \quad 
X_2 = \begin{pmatrix}
    -k & 0 \\
    0 & 1
\end{pmatrix}, \quad
X_{k+1}(\theta) = \begin{pmatrix}
    \cos(\theta) & \sin(\theta) \\
    \sin(\theta) & -\cos(\theta)
\end{pmatrix}
\]
 and $X_j = I_2$ for all $j =3,4,\dots,k$. The proof is completed by constructing exact solutions in terms of $k$ and $\theta$ for the feasibility SDP in question. This will be done in Sections \ref{sec:proof-k-2} and \ref{sec:proof-k-bigger-2}. 
\end{proof}

We in fact conjecture that the above result holds even if one does not restrict to level two of $\cD_{A(k)}$ and $\cD_{A(k)}^\square$. That is, we conjecture that the optimum is still $\frac{2k}{k-1+\sqrt{1+k}}$ even if one searches over all of $\cD_{A(k)}$ and $\cD_{A(k)}^\square$.

\begin{conj}
\label{conj:kSimplexPlusLineOptimum} Let $\cD_{A(k)}$ be as defined in Theorem \ref{thm:kSimplexPlusLineOptimum}. Then the maximum value achieved by
\begin{equation}
\label{eq:SumEigMaxFixedGenKbis}
\begin{split}
  \mathrm{max} \quad & \lambda_{\max} \sum \left(X_i \otimes Y_i\right) \\
    \mathrm{s.t.} \quad
    & X \in \cD_A, \quad Y \in \cD_A^\square.
    \end{split}
\end{equation}
is equal to 
\[
\gamma(k)=\frac{2k}{k-1+\sqrt{1+k}}.
\]
Furthermore, this maximum value is achieved when $X$ is defined by
\[
X_1 = \begin{pmatrix}
1 & 0 \\
0 & -k
\end{pmatrix}, \quad 
X_2 = \begin{pmatrix}
    -k & 0 \\
    0 & 1
\end{pmatrix}, \quad
X_{k+1} = \begin{pmatrix}
    0 & 1 \\
    1 & 0
\end{pmatrix}
\]
and $X_j = I_2$ for all $j \neq 1,2,k+1$ and when $Y$ is defined by 
\[
Y_1 = \begin{pmatrix}
\frac{1+\sqrt{1+k}}{1+k+2\sqrt{1+k}} & 0 \\
0 & -\frac{1}{1+k+2\sqrt{1+k}}
\end{pmatrix}, \ 
Y_2 = \begin{pmatrix}
    -\frac{1}{1+k+2\sqrt{1+k}} & 0 \\
    0 & \frac{1+\sqrt{1+k}}{1+k+2\sqrt{1+k}}
\end{pmatrix}, \ 
Y_{k+1} = \begin{pmatrix}
    0 & \frac{k}{k-1+\sqrt{1+k}} \\
   \frac{k}{k-1+\sqrt{1+k}} & 0
\end{pmatrix},
\]
and $Y_j = 0$ for all $j \neq 1,2,k+1$. 
\end{conj}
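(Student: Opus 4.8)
The plan is to deduce the conjecture from Theorem~\ref{thm:MaxEigIsMinBallInclusionConst} together with the already-established level-two statement of Theorem~\ref{thm:kSimplexPlusLineOptimum} (working throughout over $\R$). By Theorem~\ref{thm:MaxEigIsMinBallInclusionConst} the value of \eqref{eq:SumEigMaxFixedGenKbis} equals the least $\gamma$ for which $\cD_{A(k)}\subseteq\gamma\,\cWmin(\cD_{A(k)}(1))$, and moreover it suffices to test this inclusion on the free extreme points $\delfree\cD_{A(k)}$. The explicit level-two pair $(X,Y)$ produced in Theorem~\ref{thm:kSimplexPlusLineOptimum} lies in $\cD_{A(k)}$ and attains $\gamma(k)=\gamk{k}$, so the content of the conjecture is the reverse inequality $\cD_{A(k)}\subseteq\gamma(k)\,\cWmin(\cD_{A(k)}(1))$. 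Unlike at level two, there is no a priori level at which this inclusion can be certified (compare Propositions~\ref{prop:DAnotinKj} and~\ref{prop:NoCPExtension} and the discussion following Theorem~\ref{thm:MaxEigIsMinBallInclusionConst}), which is why the statement is only conjectural.

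The first step would be to describe $\delfree\cD_{A(k)}$. Writing $\cD_{A(k)}=\cD_S\times\cD_C$ with $\cD_S$ the free $k$-simplex and $\cD_C$ the matrix interval, Lemma~\ref{lem:ExtremeOfCartesian}\eqref{it:ArvEucCartesian} and the description of Arveson extreme points of a free simplex in Remark~\ref{rem:extreme-points-free-simplex} show that a free extreme point $(X,Y)$ is an irreducible tuple with $X$ unitarily equivalent to a direct sum $\bigoplus_\ell v^{(\ell)}I_{m_\ell}$ of vertices of $\cD_S(1)$ and $Y$ a self-adjoint involution, $Y^2=I$. For $k\geq 2$ such tuples do occur at arbitrarily high levels — for instance $X$ a direct sum of three distinct simplex vertices and $Y=2vv^\ast-I$ with $v$ a unit vector having all coordinates nonzero is already an irreducible Arveson extreme, hence free extreme, triple at level three — so the argument genuinely has to handle all levels at once.

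The core reduction is then purely linear-algebraic. For a fixed such $(X,Y)\in\cD_{A(k)}(n)$, Proposition~\ref{prop:spectrahedron-mcset-equivalent} together with \cite[Section~4.1]{helton_matricial_2013} recasts $(X,Y)\in\gamma(k)\,\cWmin(\cD_{A(k)}(1))$ as feasibility of an SDP with positive semidefinite unknowns $C_j^{\pm}$ indexed by the $2(k+1)$ vertices $(v_j,\pm1)$ of $\cD_S(1)\times[-1,1]$, subject to $\sum_{j,\pm}C_j^{\pm}=I_n$ and to moment constraints matching $X$ and $Y$. Since the simplex vertices $v_0,\dots,v_k$ are affinely independent (here $v_j=v_0-(k+1)e_j$), the sums $C_j:=C_j^{+}+C_j^{-}$ are forced: a short computation gives $C_i=\frac{1}{k+1}\bigl(I-\gamma(k)^{-1}X_i\bigr)$ for $i=1,\dots,k$ and $C_0=\frac{1}{k+1}\bigl(I+\gamma(k)^{-1}\sum_{i=1}^k X_i\bigr)$. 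Because $X\in\cD_S$ forces the eigenvalues of each $X_i$ into $\{1,-k\}$ and those of $\sum_i X_i$ into $\{k,-1\}$, all $C_j$ are automatically positive semidefinite, indeed $C_j\succeq\delta(k)I$ with $\delta(k)=\frac{1-\gamma(k)^{-1}}{k+1}>0$, and $\sum_j C_j=I_n$. Thus everything reduces to splitting each $C_j=C_j^{+}+C_j^{-}$, that is, to showing $\gamma(k)^{-1}Y\in\sum_{j=0}^{k}[-C_j,C_j]$, the Minkowski sum of the operator intervals $[-C_j,C_j]=\{M \text{ self-adjoint}:-C_j\preceq M\preceq C_j\}$.

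Establishing this last membership is the main obstacle, and it is exactly here that the exponent $\gamma(k)$ must be seen to be sharp. The naive bound $[-C_j,C_j]\supseteq[-\delta(k)I,\delta(k)I]$ only yields $\gamma(k)^{-1}Y\in[-(1-\gamma(k)^{-1})I,(1-\gamma(k)^{-1})I]\subseteq\sum_j[-C_j,C_j]$ when $\gamma(k)\geq 2$, and this narrowly fails since $\gamma(k)=\gamk{k}<2$ for every $k\geq 1$. One therefore has to exploit the anisotropy of the forced coefficients — each $C_j$ is dominant precisely on the $v_j$-eigenspace of $X$ and comparatively small elsewhere — together with $Y^2=I$. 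The route I would pursue is to put $Y$ into a normal form relative to the block projections onto the eigenspaces of $X$ via a simultaneous canonical-angle (cosine–sine) decomposition, reducing the splitting to scalar and $2\times2$ subproblems, and then to verify that the extremal $2\times2$ block is precisely the configuration from Theorem~\ref{thm:kSimplexPlusLineOptimum}; making this reduction compatible with all the operator-interval constraints simultaneously is the crux. A positive resolution would, in the case of a Cartesian product of two free simplices, also confirm the level-$2^{M-1}$ sufficiency with $M=2$ that is asked in the question following Theorem~\ref{thm:MaxEigIsMinBallInclusionConst}.
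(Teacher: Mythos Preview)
The statement you are attempting to prove is a \emph{conjecture}, and the paper does not prove it. In place of a proof the paper offers numerical evidence: for $k=2,3,4,5$ and $n=3,\dots,8$ the authors randomly generate many free extreme points of $\cD_{A(k)}(n)$, compute for each one the minimal $\gamma_X$ with $X\in\gamma_X\cWmin(\cD_{A(k)}(1))$, and observe that the largest such $\gamma_X$ never exceeds $\gamma(k)$ (Table~\ref{tab:gamkminusgamkn}). No analytic argument is given.

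Your proposal is a well-structured attack plan and correctly isolates the difficulty, but it is not a proof either, and you say as much. The reduction to the SDP feasibility problem via Proposition~\ref{prop:spectrahedron-mcset-equivalent} is sound, and your computation forcing $C_j=C_j^{+}+C_j^{-}$ from the simplex affine independence is correct. The step that remains --- showing $\gamma(k)^{-1}Y\in\sum_{j=0}^{k}[-C_j,C_j]$ for every free extreme $(X,Y)$ at arbitrary level $n$ --- is genuinely open. Your observation that the naive bound fails because $\gamma(k)<2$ is exactly the obstruction, and the cosine--sine decomposition you propose does not obviously reduce to a finite check: once $X$ involves three or more distinct simplex vertices, the simultaneous block structure of the $C_j$ and the involution $Y$ need not factor into independent $2\times2$ subproblems, since the operator intervals $[-C_j,C_j]$ interact across blocks in a way that a CS decomposition relative to one spectral projection does not control. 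This is precisely the phenomenon flagged in the paper's discussion after Proposition~\ref{prop:NoCPExtension}, and it is why the statement remains conjectural.
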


\evidence 
For the dimension free version of the problem where the $X_j$ and $Y_j$ are allowed to have any size $n$, we have run systematic computer experiments for $k=2,3,4,5$ and $n=3,4,5,6,7,8$. Theorem \ref{thm:kSimplexPlusLineOptimum} already gives a lower bound for the objective value, so it is appropriate to use the formulation from Theorem \ref{thm:MaxEigIsMinBallInclusionConst}. 

In particular, for each above pair $(k,n)$ use the Mathematica package NCSE \cite{evert2021NCSE} to randomly generate 100,000 real extreme points $X$ at level $n$ of the free spectrahedron $\cD_{A(k)}$. For each $X$ we record the smallest constant $\gamma_X \geq 1$ such that
\[
X \in \gamma_X \mathcal{W}^{\min} (\cD_{A(k)}(1)).
\]
For each $(k,n)$ we record $\gamma_{k,n}$, the maximum of the observed constants $\gamma_X$. If the conjecture holds, then one would expect to see that $\gamma_{k,n} \leq \gamma(k)$ for all such $k,n$, and this is indeed what we observe. Furthermore, we typically observe that $\gamma(k)-\gamma_{k,n}$ is small. We report the observed values of $\log_{10} (\gamma(k)-\gamma_{k,n})$  in Table \ref{tab:gamkminusgamkn}. The fact that $\gamma(k)-\gamma_{k,n}$ is small suggests our method has done a reasonably good job of exploring the possible choices for extreme points $X$. We do observe that $\gamma(k)-\gamma_{k,n}$ is typically larger when $k$ and $n$ are larger. Assuming Conjecture \ref{conj:kSimplexPlusLineOptimum} is true, this is unsurprising, as the search space becomes larger as $k$ and $n$ increase. 

The methodology used to generate extreme points of $\cD_{A(k)}$ is the same as is used in \cite{evert2023empirical}. In particular, as $\cD_{A(k)} (n)$ is itself a spectrahedron for fixed $k,n$, we can use semidefinite programming to optimize randomly generated linear functionals over $\cD_{A(k)}(n)$. As discussed in \cite{evert2023empirical}, these optimizers are typically observed to be free extreme points. The linear functionals are themselves generated by taking coefficients from a bounded uniform distribution. The choice to use a uniform distribution for coefficients instead of the normal distribution has empirically been observed to make little difference, see \cite[Section 3.1.2]{evert2023empirical}.
We additionally note that while these experiments are done over the reals, since $\cD_{A(k)}$ is closed under complex conjugation, any complex element implicitly considered complex elements of $\cD_{A(k)} (n)$ for $n=2,3,4$.

\begin{table}[]
    \centering
\begin{tabular}{c|c|c|c|c|c|c|c|c|}
 \multicolumn{2}{c}{ }  & \multicolumn{7}{c}{$n$} \\
    \cline{3-9}
  \multicolumn{2}{c|}{ } & 2  & 3 & 4 & 5 & 6 & 7 & 8 \\ \cline{2-9} 
    \multirow{4}{*}{$k$} 
     & 2 & -7.75  & -8.18 & -8.04  & -7.63 & -8.21 & -6.96 & -6.57 \\ \cline{2-9} 
     & 3 & -8.12  & -7.78 & -7.51  & -8.06 & -6.32 & -4.89 & -5.79 \\ \cline{2-9} 
     & 4 & -7.46  & -7.42 & -7.19  & -6.58 & -5.01 & -4.66 & -3.70 \\ \cline{2-9} 
     & 5 & -8.04  & -7.27 & -6.68  & -6.75 & -3.86 & -4.22 & -3.12 \\ \cline{2-9} 
\end{tabular}
    \caption{Observed values for $\log_{10} (\gamma(k)-\gamma_{k,n})$, where $\gamma_{k,n}$ is the largest numerically observed inclusion constant from optimization problem \eqref{eq:MinBallContainmentFixedn} across $100,000$ randomly generated extreme points of $\cD_{A(k)} (n)$. }
    \label{tab:gamkminusgamkn}
\end{table}

\subsubsection{How we obtained the form of the optimizers} We now discuss how the form of the optimizers $X$ and $Y$ appearing in Theorem \ref{thm:kSimplexPlusLineOptimum} was initially obtained. In the following, let $B(2)$ denote a minimal defining tuple for $\cD_{A(2)}^\square.$

To begin, we focused on the case $k=2$ and used NCSE \cite{evert2021NCSE} to randomly generate large numbers of extreme points of the free spectrahedra $\cD_{A(2)}$ and $\cD_{B(2)}$. Then, for each computed extreme point $X \in \cD_{A(2)}$ and $Y \in \cD_{B(2)}$, we computed the quantity $\lambda_{\max} \sum (X_i \otimes Y_i)$. We then by hand examined pairs that led to large values. This quickly led to the observation that taking
\[
X = \left( \begin{pmatrix}
    1 & 0 \\
    0 & -2 
\end{pmatrix},\begin{pmatrix}
    -2 & 0 \\
    0 & -1 
\end{pmatrix},
\begin{pmatrix}
    0 & 1 \\
    1 & 0 
\end{pmatrix}\right),
\]
as in the theorem tended to give larger values of $\lambda_{\max} \sum (X_i \otimes Y_i)$. 

We mention that this choice of $X$ is intuitively natural for at least two reasons. First, such a tuple is a Cartesian product of Arveson extreme points of the free simplex and matrix interval, hence is Arveson extreme in $\cD_A$. Second, given the characterization of the optimization problem in Theorem \ref{thm:MaxEigIsMinBallInclusionConst}, the optimization problem can be viewed as finding the non-commutative tuple $X \in \cD_{A(2)}$ that is ``most difficult" to embed into a commutative tuple $Z \in \gamma \cD_{A(2)}$ where $\gamma \geq 1$. As the Arveson extreme points of a free simplex are, up to unitary equivalence, tuples of diagonal matrices, taking an Arveson extreme point of the interval that is anti-diagonal intuitively makes the tuple X ``more non-commutative". 

Having a guess for the form of $X$, we turned out attention to $Y$. We first observed that, up to unitary equivalence, tuples $Y$ that were near optimal were approximately of the form
\[
Y = \left( \begin{pmatrix}
    a & 0 \\
    0 & b 
\end{pmatrix},\begin{pmatrix}
    b & 0 \\
    0 & a 
\end{pmatrix},
\begin{pmatrix}
    0 & c \\
    c & 0 
\end{pmatrix}\right),
\]
for some $a,b,c \in \R$. To find an exact tuple $Y$, we used the fact an optimal $Y$ must be an extreme point of $\cD_{B(2)}$ together with the observation that extreme points of a free spectrahedron tend to have large kernel dimension when evaluated in the corresponding linear pencil. See \cite{evert2023empirical} for a detailed discussion of this phenomenon. In particular, evaluating near optimal tuples $Y$ in the defining equation 
\[
I-Y_1+2Y_2-Y_3 \succeq 0
\]
led to the numerical observation that
\begin{equation}
\label{eq:DefiningEqFormGuess}
-Y_1+2Y_2-Y_3 =  \begin{pmatrix}
    -a+2b & -c \\
    -c & -b+2a
\end{pmatrix} = \begin{pmatrix}
    -c & -c \\
    -c & 1
\end{pmatrix},
\end{equation}
and that $c$ should be chosen so that this matrix has negative one as an eigenvalue. This leads to two possible choices for $c$, namely $c=-1+\sqrt{3}$ and $c = -1-\sqrt{3}$. However, adding appropriately chosen defining equations for $\cD_{B(2)}$ shows that if $Y \in \cD_{B(2)}$, then $I-Y_3 \succeq 0$, which is not the case if $c = -1-\sqrt{3}$. This led us to conjecture that $c =-1+
\sqrt{3}$. From here, one can solve for $a$ and $b$ in equation \eqref{eq:DefiningEqFormGuess}.

The above of course only treated the $k=2$ case of the problem. To generalize to other values of $k$, we repeated the above choice of $k$ for $k=2,3,4,5$ and $6$, obtain conjectured optimum values for each case. The conjectured form for the optimal $X$ is a straightforward generalization of the $k=2$ case, and, fortunately, we found the pattern 
\[
\gamma(k) = \frac{2k}{k-1+\sqrt{k+1}}.
\]
This then enabled us to compute a conjectured form for the optimal $Y$.

\subsubsection{Proof of Theorem \ref{thm:kSimplexPlusLineOptimum} for $k=2$} \label{sec:proof-k-2}

Given a tuple $X \in SM_2(\C^3)$ and a constant $\gamma \geq 0$, let   $\MinBallSDP{\gamma}{X} \subset SM_2(\C^6)$ denote those tuples $(C_1,C_2,\dots,C_6) \in SM_2(\C^6)$ that satisfy the following feasibility SDP
\begin{equation}
\label{eq:XMinBallSDP}
\begin{split}
&  \oplus_{i=1}^6 C_i \succeq 0,  \\
& C_1-2C_2+C_3+C_4-2C_5+C_6 =X_1, \\
& C_1+C_2-2C_3+C_4+C_5-2C_6 =X_2, \\
& C_1+C_2+C_3-C_4-C_5-C_6 =X_3, \\
& C_1+C_2+C_3+C_4+C_5+C_6 = \gamma I,
\end{split}
\end{equation}

\begin{prop}
\label{prop:HKMFeasibilitySDP}
    Let $\cD_A$ be the free simplex in two variables direct sum the interval and let $X \in SM_2(\C^3)$. Then $X \in \gamma \cWmin(\cD_A(1))$ if and only if $\MinBallSDP{\gamma}{X}$ is nonempty. 
\end{prop}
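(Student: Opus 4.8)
The plan is to recognize $\MinBallSDP{\gamma}{X}$ as, up to a rescaling by $\gamma$, the standard semidefinite description of membership in level two of the minimal matrix convex set over the polytope $\cD_A(1)$.

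First I would make $\cD_A(1)$ explicit. With $A=A(2)$ as in \eqref{eq:AkDefinition}, the pencil $I_5-\sum_i x_iA_i(2)$ is diagonal with entries $1-x_1$, $1-x_2$, $1+x_1+x_2$, $1+x_3$, $1-x_3$; hence $\cD_A(1)=T\times[-1,1]$, where $T=\{(x_1,x_2):x_1\le 1,\ x_2\le 1,\ x_1+x_2\ge -1\}$ is the triangle with vertices $(1,1)$, $(-2,1)$, $(1,-2)$. This triangular prism has exactly the six vertices
\[
\begin{array}{lll}
v^{(1)}=(1,1,1), & v^{(2)}=(-2,1,1), & v^{(3)}=(1,-2,1),\\[2pt]
v^{(4)}=(1,1,-1), & v^{(5)}=(-2,1,-1), & v^{(6)}=(1,-2,-1),
\end{array}
\]
and I would observe that $v^{(i)}$ is precisely the vector of coefficients of $C_i$ in the first three linear constraints of \eqref{eq:XMinBallSDP}.

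Next I would invoke the standard description of the minimal matrix convex set over a polytope (\cite[Section 4.1]{helton_matricial_2013}; cf.\ Proposition \ref{prop:spectrahedron-mcset-equivalent}): for $\mathcal P=\cD_A(1)$ with vertex set $\{v^{(i)}\}_{i=1}^{6}$,
\[
\cWmin(\mathcal P)(2)=\Big\{\big(\textstyle\sum_i v^{(i)}_1 D_i,\ \sum_i v^{(i)}_2 D_i,\ \sum_i v^{(i)}_3 D_i\big):\ D_i\in SM_2(\C),\ D_i\succeq 0,\ \sum_i D_i=I_2\Big\}.
\]
The inclusion ``$\supseteq$'' is clear, and ``$\subseteq$'' follows because an arbitrary matrix convex combination $\sum_k c^{(k)}D_k$ with $c^{(k)}\in\mathcal P$ can be rewritten using only vertices by expanding each $c^{(k)}$ in barycentric coordinates and regrouping, while every $D_i\succeq 0$ is of the form $V_i^\ast V_i$. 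Rescaling, $X\in\gamma\cWmin(\mathcal P)(2)$ if and only if there exist $D_i\succeq 0$ with $\sum_i D_i=I_2$ and $X_j=\gamma\sum_i v^{(i)}_j D_i$ for $j=1,2,3$. Putting $C_i:=\gamma D_i$ (and, in the trivial case $\gamma=0$, noting that both sides force $X=0$) converts this system verbatim into \eqref{eq:XMinBallSDP}: $\oplus_i C_i\succeq 0$ encodes $C_i\succeq 0$ for all $i$; $\sum_i C_i=\gamma I_2$ encodes $\sum_i D_i=I_2$; and the three remaining equations are $X_j=\sum_i v^{(i)}_j C_i=\gamma\sum_i v^{(i)}_j D_i$. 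Thus $\MinBallSDP{\gamma}{X}$ is nonempty if and only if $X\in\gamma\cWmin(\cD_A(1))$, which is the claim.

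The argument is essentially bookkeeping; the only point demanding care is matching conventions — checking that the six coefficient columns of \eqref{eq:XMinBallSDP} really are the vertices of $T\times[-1,1]$ (equivalently, that $T$ has exactly the three vertices listed), and tracking the factor $\gamma$ between the normalization $\sum_i D_i=I_2$ used in the definition of $\cWmin$ and the normalization $\sum_i C_i=\gamma I_2$ used in \eqref{eq:XMinBallSDP}. I do not anticipate a genuine obstacle.
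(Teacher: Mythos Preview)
Your argument is correct, but it takes a different route than the paper. The paper proceeds via duality: it uses Proposition~\ref{prop:spectrahedron-mcset-equivalent} to rewrite $X\in\gamma\cWmin(\cD_A(1))$ as the free-spectrahedral inclusion $\cWmax(\cD_A(1)^\bullet)=\cD_{A(2)}^{\square}\subseteq\gamma\cD_X$, and then invokes the Helton--Klep--McCullough feasibility SDP \eqref{eq:inclusion-free-spectra-SDP} for such inclusions; because the defining tuple $B(2)$ for $\cD_{A(2)}^{\square}$ is diagonal, the Choi-matrix SDP collapses to exactly \eqref{eq:XMinBallSDP}. You instead bypass duality and completely positive maps entirely, working directly from the definition of $\cWmin$ as the matrix convex hull of the polytope's vertices and matching the coefficient columns of \eqref{eq:XMinBallSDP} with the six vertices of the prism $T\times[-1,1]$. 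Your approach is more elementary and self-contained for this specific statement; the paper's approach has the advantage of plugging into the general machinery (and of making the connection with the dual free polytope $\cD_{A(2)}^{\square}$ and its explicit defining pencil \eqref{eq:line-plus-simplex_dual-FS-1}--\eqref{eq:line-plus-simplex_dual-FS-2bis} explicit), which is what the subsequent propositions build on.
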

\begin{proof}
    Using Proposition \ref{prop:spectrahedron-mcset-equivalent}, we infer that $X \in \gamma \cWmin(\cD_A(1))$ if and only if $ \mathcal W^{\max}(\mathcal D_A(1)^\bullet) \subseteq \gamma \mathcal D_X$. This inclusion can be checked using an SDP \cite[Section 4.1]{helton_matricial_2013}, namely the optimization problem in \eqref{eq:inclusion-free-spectra-SDP}, as $\mathcal W^{\max}(\mathcal D_A(1)^\bullet)=\mathcal D_{A(2)}^\square$ is a free spectrahedron defined by diagonal matrices (see \eqref{eq:line-plus-simplex_dual-FS-1} and \eqref{eq:line-plus-simplex_dual-FS-2bis}). For the situation at hand, we recover the SDP in \eqref{eq:XMinBallSDP}. Thus, $\MinBallSDP{\gamma}{X}$ is nonempty if and only if $\mathcal W^{\max}(\mathcal D_A(1)^\bullet) \subseteq \gamma \mathcal D_X$.
\end{proof}

\begin{prop}
\label{prop:TwoSimplexXIntervalMatEx}
    Up to unitary equivalence, every real matrix extreme point of $\cD_{A(2)} (2)$ has the form
    \[
X(\theta) = (X_1,X_2,X_3 (\theta)) = \left(\begin{pmatrix}
1 & 0 \\
0 & -2
\end{pmatrix},\begin{pmatrix}
    -2 & 0 \\
    0 & 1
\end{pmatrix},
\begin{pmatrix}
    \cos(\theta) & \sin(\theta) \\
    \sin(\theta) & -\cos(\theta)
\end{pmatrix}\right),
\]
or
\[
Y(\theta) = (Y_1,Y_2,Y_3 (\theta)) = \left(\begin{pmatrix}
1 & 0 \\
0 & 1
\end{pmatrix},
\begin{pmatrix}
    -2 & 0 \\
    0 & 1
\end{pmatrix},
\begin{pmatrix}
    \cos(\theta) & \sin(\theta) \\
    \sin(\theta) & -\cos(\theta)
\end{pmatrix}\right),
\]
or
\[
Z(\theta) = (Z_1,Z_2,Z_3 (\theta)) = \left(\begin{pmatrix}
-2 & 0 \\
0 & 1
\end{pmatrix},
\begin{pmatrix}
    1 & 0 \\
    0 & 1
\end{pmatrix},
\begin{pmatrix}
    \cos(\theta) & \sin(\theta) \\
    \sin(\theta) & -\cos(\theta)
\end{pmatrix}\right).
\]
for some $\theta \in [0,\pi]$. Moreover, if $\MinBallSDP{\gamma}{X(\theta)}$ is nonempty for all $\theta \in [0,\pi/2]$ and some $\gamma \geq 0$, then $\MinBallSDP{\gamma}{W}$ is nonempty for all real matrix extreme points $W$ of $\cD_A$. 
\end{prop}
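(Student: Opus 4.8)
The plan is to split the assertion into two parts: (i) classify, up to real unitary equivalence, the real matrix extreme points of $\cD_{A(2)}(2)$, showing they are exactly the three stated families; and (ii) use the symmetries of $\cD_{A(2)}$ to show that nonemptiness of $\MinBallSDP{\gamma}{X(\theta)}$ for $\theta\in[0,\pi/2]$ forces nonemptiness of $\MinBallSDP{\gamma}{W}$ for every real matrix extreme point $W$ of $\cD_A$.

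For (i), I would start from the fact that $\cD_{A(2)}=\cD_S\times\cD_C$, with $\cD_S$ the free triangle (ordinary extreme points $(1,1),(1,-2),(-2,1)$ at level one) and $\cD_C$ the matrix interval, which is itself a free simplex. Given a real matrix extreme point $W=(X,y)\in\cD_{A(2)}(2)$, Theorem~\ref{thm:SimplexXIntervalRealMatEx}\eqref{it:SimpleXSimplexRealMatEx} tells us $W$ is irreducible with $X$ Arveson extreme in $\cD_S$ and $y$ Arveson extreme in $\cD_C$, and Remark~\ref{rem:extreme-points-free-simplex} then gives that $X$ is unitarily equivalent to a direct sum of two triangle vertices and that $y^2=I_2$. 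I would conjugate $W$ by a real orthogonal $U$ simultaneously diagonalizing $X_1,X_2$; the $y$-entry then becomes $\pm I_2$ or a trace-zero real symmetric $2\times 2$ matrix with eigenvalues $\{1,-1\}$. Irreducibility rules out the two vertices coinciding (this would make $X$ scalar and permit diagonalizing $y$, hence a reduction) and rules out $y=\pm I_2$ (this would make the whole tuple diagonal), which leaves exactly the three choices of an unordered pair of distinct vertices, i.e. the three families. Finally I would spend the residual orthogonal freedom --- the sign flips $\operatorname{diag}(\pm1,\pm1)$ commuting with the now-distinct-diagonal $X_1$, and the coordinate-swap unitary $P$ --- to make the off-diagonal entry of $y$ nonnegative and to arrange the two vertices in the order matching $X(\theta)$, $Y(\theta)$, or $Z(\theta)$; writing that off-diagonal entry as $\sin\theta$ with $\theta\in[0,\pi]$ then yields exactly one of the claimed forms. (Note that $\theta\in\{0,\pi\}$ produces reducible tuples, so the genuine matrix extreme points occur at $\theta\in(0,\pi)$, but this is immaterial to the ``has the form'' statement.)

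For (ii), the two invariances I would use are: $\MinBallSDP{\gamma}{\cdot}$ is preserved under real unitary equivalence (if $(C_i)$ is feasible for $W$ then $(U^*C_iU)$ is feasible for $U^*WU$); and, by Proposition~\ref{prop:HKMFeasibilitySDP}, nonemptiness of $\MinBallSDP{\gamma}{X}$ is equivalent to $X\in\gamma\cWmin(\cD_{A(2)}(1))$, a condition invariant under any invertible linear $T$ with $T(\cD_{A(2)}(1))=\cD_{A(2)}(1)$, since such $T$ commutes with the matrix convex hull. Since the triangle $\cD_S(1)$ has centroid $0$, its dihedral symmetry group acts on $\R^2$ by linear maps, and extending these by the identity on the interval coordinate gives linear symmetries of $\cD_{A(2)}(1)$. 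The reflection of the triangle fixing the vertex $(1,-2)$ and the rotation cycling $(1,-2)\mapsto(-2,1)\mapsto(1,1)$, applied coordinatewise to $X(\theta)$, leave the third component $X_3(\theta)$ unchanged and carry the diagonal vertex-pattern of $X(\theta)$ to those of $Y(\theta)$ and $Z(\theta)$; hence $Y(\theta),Z(\theta)\in\gamma\cWmin(\cD_{A(2)}(1))$ whenever $X(\theta)$ is. And conjugating $X(\theta)$ by $P$ gives $(X_2,X_1,X_3(\pi-\theta))$, while the coordinate swap $x_1\leftrightarrow x_2$ is a linear reflection symmetry of the triangle; combining the two invariances then shows $\MinBallSDP{\gamma}{X(\theta)}$ is nonempty iff $\MinBallSDP{\gamma}{X(\pi-\theta)}$ is. So nonemptiness of $\MinBallSDP{\gamma}{X(\theta)}$ on $[0,\pi/2]$ spreads to all of $[0,\pi]$, then to the $Y$- and $Z$-families, and then, via the classification of (i) and unitary invariance, to every real matrix extreme point of $\cD_A$.

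The hard part, I expect, will be the bookkeeping in (i): using irreducibility to exclude the scalar-$X$ and $y=\pm I_2$ cases, and then tracking how the residual orthogonal freedom (the diagonal sign flips together with $P$, the latter trading $\theta$ for $\pi-\theta$) must be spent to reach precisely the normal forms $X(\theta),Y(\theta),Z(\theta)$ with $\theta$ confined to $[0,\pi]$. Part (ii) should be conceptually routine once one notices that $\MinBallSDP{\gamma}{\cdot}$ only sees membership in $\gamma\cWmin(\cD_{A(2)}(1))$, which is manifestly stable under the linear symmetries of the origin-centered triangle and under unitary conjugation.
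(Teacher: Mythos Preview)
Your proposal is correct. Part (i) is essentially the paper's argument: invoke Theorem~\ref{thm:SimplexXIntervalRealMatEx}\eqref{it:SimpleXSimplexRealMatEx} to force $(X_1,X_2)$ to be a direct sum of two triangle vertices and $X_3$ a real symmetric unitary, use irreducibility to rule out repeated vertices and $X_3=\pm I_2$, and spend the residual orthogonal freedom (diagonal sign flips and the permutation $P$) to reach one of the three normal forms.

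Part (ii) takes a genuinely different route. The paper works directly at the level of the feasibility SDP \eqref{eq:XMinBallSDP}: it observes the relation $-X_1-X_2=I_2$, and then exhibits explicit permutations of the variables $(C_1,\dots,C_6)$ that carry a feasible point for $X(\phi)$ to one for $Y(\phi)$ and $Z(\phi)$, and a different permutation together with a unitary conjugation to pass from $X(\theta)$ to $X(\pi-\theta)$. You instead invoke Proposition~\ref{prop:HKMFeasibilitySDP} to translate nonemptiness of $\MinBallSDP{\gamma}{\cdot}$ into membership in $\gamma\cWmin(\cD_{A(2)}(1))$, and then use that this set is invariant under unitary conjugation and under linear symmetries of $\cD_{A(2)}(1)$ (the centroid of the triangle is indeed the origin, so its dihedral group acts linearly). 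The reflection fixing $(1,-2)$ and the rotation cycling the vertices transport $X(\theta)$ to $Y(\theta)$ and $Z(\theta)$, and the coordinate swap combined with conjugation by $P$ gives the $\theta\leftrightarrow\pi-\theta$ symmetry. Your argument is shorter and more conceptual; the paper's has the practical advantage of displaying the explicit permutations of the $C_i$, which is useful when one later wants to read off concrete SDP solutions for the other families.
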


\begin{proof}
    We first prove that every matrix extreme point of $\cD_A$ is unitarily equivalent to one of the three above forms. From Theorem \ref{thm:SimplexXIntervalRealMatEx}, we know that any matrix extreme point of $X = (X_1, X_2,X_3) \in \cD_A$ satisfies that $(X_1,X_2)$ is an Arveson extreme point of the simplex and that $X_3$ is an Arveson extreme point of the interval. Since all Arveson extreme points of a simplex are, up to unitary equivalence, a direct sum of extreme points of $\cD_A (1)$, it follows that, up to unitary equivalence, we can take $(X_1,X_2)$ of the form 
    \[
    (X_1,X_2) = \left(\begin{pmatrix}
        a_1 & 0 \\
        0 & a_2
    \end{pmatrix},\begin{pmatrix}
        b_1 & 0 \\
        0 & b_2
    \end{pmatrix}\right)
    \]
    where $(a_j,b_j) \in \{(-2,1),(1,-2),(1,1)\}$ for $j=1,2$. Also, since $X$ is irreducible, we have $(a_1,b_1) \neq (a_2,b_2)$. This gives six possible choices for $(X_1,X_2)$.
    
   Next, since $X_3$ is an Arveson extreme point of the real matrix interval, we have that $X_3$ is a real valued self-adjoint unitary. Additionally, by conjugating $X$ by the unitary $1 \oplus -1$ if necessary, we can assume the off-diagonal entries of $X$ are positive. Furthermore, the determinant of $X_3$ must be $-1$, otherwise $X_3$ would be a constant multiple of the identity and $(X_1,X_2,X_3)$ would be reducible. It is straightforward to check that such a matrix must has the form
    \[
    X_3 = \begin{pmatrix}
    \cos(\theta) & \sin(\theta) \\
    \sin(\theta) & -\cos(\theta)
\end{pmatrix}, \qquad \qquad \mathrm{for \ } \theta \in [0,\pi]
    \]
    
    It remains to reduce from six choices of $(X_1,X_2)$ to the three. Note that for $\phi \in [0,\pi]$, one has
    \[
   \begin{pmatrix}
       0 & 1 \\
       1 & 0
   \end{pmatrix}  \left(\begin{pmatrix}
-2 & 0 \\
0 & 1
\end{pmatrix},\begin{pmatrix}
    1 & 0 \\
    0 & -2
\end{pmatrix},
\begin{pmatrix}
    \cos(\phi) & \sin(\phi) \\
    \sin(\phi) & -\cos(\phi)
\end{pmatrix}\right)
    \begin{pmatrix}
       0 & 1 \\
       1 & 0
   \end{pmatrix} = X(\pi -\phi). 
   \]
   Setting $\theta = \pi-\phi \in [0,\pi]$ shows that matrix extreme points of this form are unitarily equivalent to $X(\theta)$. Similar arguments can be made for the remaining possible choices of $(X_1,X_2)$.

   We now prove the second part of the proposition. From the first part, it is sufficient to show that if $\MinBallSDP{\gamma}{X(\theta)}$ is nonempty for all $\theta \in [0,\pi/2]$ then $\MinBallSDP{\gamma}{X(\phi)}$ and $\MinBallSDP{\gamma}{Y(\phi)}$ and $\MinBallSDP{\gamma}{Z(\phi)}$ are each nonempty for all $\phi \in [0,\pi]$. To this end, first assume that $\MinBallSDP{\gamma}{X(\phi)}$ is nonempty for some $\phi \in [0,\pi]$. That is, there exists a tuple $C=(C_1,C_2,C_3,C_4,C_5,C_6) \in SM_2 (\C^6)$ that is a solution to equation \eqref{eq:XMinBallSDP}, with $(X_1,X_2,X_3)$ in the equation equal to $X(\phi)$. Then by adding the negatives of the first and second equation of the feasibility SDP \eqref{eq:XMinBallSDPkSimplex}, one finds that
   \[
    -2C_1+C_2+C_3-2C_4+C_5+C_6 = -X_1 -X_2 = I_2.
   \]
   It follows that $(C_2,C_1,C_3,C_5,C_4,C_6) \in \MinBallSDP{\gamma}{Y(\phi)}$ and that $(C_2,C_3,C_1,C_5,C_6,C_4) \in \MinBallSDP{\gamma}{Z(\phi)}$.

   Next let $\phi \in [\pi/2,\pi]$. The using arguments similar to the above, we find that $X(\phi)$ is unitarily equivalent to $X'(\theta):=(X_1,X_2,-X_3(\theta))$ where $\theta = \pi-\phi \in [0,\pi/2]$. Thus $\MinBallSDP{\gamma}{X(\phi)}$ is nonempty if and only if $\MinBallSDP{\gamma}{X'(\theta)}$ is. Furthermore, it is straightforward to check that $(C_1,C_2,C_3,C_4,C_5,C_6) \in \MinBallSDP{\gamma}{X(\theta)}$ if and only if $(C_4,C_5,C_6,C_1,C_2,C_3) \in \MinBallSDP{\gamma}{X'(\theta)}$, which completes the proof. 
\end{proof}

\begin{remark}
    In fact, if $X$ is assumed to be a complex valued Arveson extreme point of $\cD_A(2)$, then $X$ is again unitarily equivalent to one of $X(\theta)$ or $ Y(\theta)$ or $Z(\theta)$ from Proposition \ref{prop:TwoSimplexXIntervalMatEx}. To see this, note that $(X_1,X_2)$ still must be Arveson extreme in the free simplex, so up to unitary equivalence we have the same six choices for $(X_1,X_2)$. The only barrier is that $X_3$ may have complex off-diagonal entries. These can be rotated to be real by conjugating $X$ by a unitary of the form $1 \oplus e^{i \theta}$ for some appropriately chosen $\theta$. 

    Thus, the barrier to extending Theorem \ref{thm:kSimplexPlusLineOptimum} to the complexes are those matrix extreme points in $\cD_{A(k)}$ which are not a Cartesian product of matrix extreme points. 
\end{remark}

\begin{thm} 
\label{thm:SDPSolutions}
    Set $\gamma:= \frac{4}{1+\sqrt{3}}$. Then $\MinBallSDP{\gamma}{X(\theta)}$ is nonempty for all $\theta \in [0,\pi/2]$. 
\end{thm}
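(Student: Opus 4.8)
The plan is to exhibit, for each $\theta\in[0,\pi/2]$, an explicit tuple $(C_1,\dots,C_6)$ of positive semidefinite $2\times 2$ matrices solving the feasibility SDP \eqref{eq:XMinBallSDP} with right-hand side $X(\theta)$ and $\gamma=\tfrac{4}{1+\sqrt 3}=2(\sqrt 3-1)$. First I would exploit the block structure of \eqref{eq:XMinBallSDP}: the first three of the six vertices of $\cD_{A(2)}(1)$ have third coordinate $+1$ and the last three have third coordinate $-1$, so adding and subtracting the last two equations of \eqref{eq:XMinBallSDP} forces
\[
C_1+C_2+C_3=\tfrac12\bigl(\gamma I_2+X_3(\theta)\bigr)=:P(\theta),\qquad
C_4+C_5+C_6=\tfrac12\bigl(\gamma I_2-X_3(\theta)\bigr)=:Q(\theta),
\]
both positive semidefinite since $\gamma\ge 1=\|X_3(\theta)\|$. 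Within each triple the remaining two matrix equations come from the coordinates $X_1,X_2$, which must be split as $X_1=R_1^{\mathrm{top}}+R_1^{\mathrm{bot}}$, $X_2=R_2^{\mathrm{top}}+R_2^{\mathrm{bot}}$; the three equations governing $C_1,C_2,C_3$ are then solved by
\[
C_1=\tfrac13\bigl(P+R_1^{\mathrm{top}}+R_2^{\mathrm{top}}\bigr),\quad
C_2=\tfrac13\bigl(P-R_1^{\mathrm{top}}\bigr),\quad
C_3=\tfrac13\bigl(P-R_2^{\mathrm{top}}\bigr),
\]
and similarly for $C_4,C_5,C_6$ with $P$ replaced by $Q$. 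Setting $K_j:=R_j^{\mathrm{top}}-(X_j-Q(\theta))$, a short computation shows $C_{1+j}=\tfrac13(\gamma I_2-X_j-K_j)$, $C_{4+j}=\tfrac13 K_j$ for $j=1,2$, $C_4=\tfrac13(3Q(\theta)-K_1-K_2)$ and $C_1=\tfrac13\bigl((\gamma-1)I_2-3Q(\theta)+K_1+K_2\bigr)$. Hence feasibility of $\MinBallSDP{\gamma}{X(\theta)}$ (equivalently, by Proposition~\ref{prop:HKMFeasibilitySDP}, membership $X(\theta)/\gamma\in\cWmin(\cD_{A(2)}(1))$) reduces to finding symmetric $2\times 2$ matrices $K_1,K_2$ with
\[
0\preceq K_1\preceq\operatorname{diag}(\gamma-1,\gamma+2),\qquad
0\preceq K_2\preceq\operatorname{diag}(\gamma+2,\gamma-1),\qquad
(1-\gamma)I_2+3Q(\theta)\preceq K_1+K_2\preceq 3Q(\theta),
\]
where the three ``widths'' $\gamma I_2-X_1$, $\gamma I_2-X_2$, $(\gamma-1)I_2$ are independent of $\theta$ and only the endpoints move with $\theta$.

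Next I would write down an explicit choice of $K_1(\theta),K_2(\theta)$ and hence of the $C_i(\theta)$. Guided by the near-optimal tuples found numerically, and by the fact that at the critical angle $\theta=\pi/2$ (where the value of the optimization problem in Theorem~\ref{thm:kSimplexPlusLineOptimum} is attained at $k=2$) several $C_i$ must be rank one, a natural ansatz is to take $K_1,K_2$ with equal diagonal parts and opposite off-diagonal parts in the eigenbasis of $X_3(\theta)$ (the basis diagonalizing $P(\theta)$ and $Q(\theta)$), which reduces the problem to a one- or two-parameter family; should no single closed formula cover all of $[0,\pi/2]$, one splits the interval at some $\theta_0$ and glues two such families. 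With the $C_i(\theta)$ recorded, the verification has two parts. The four linear matrix equations of \eqref{eq:XMinBallSDP} hold by construction. Positivity $C_i(\theta)\succeq 0$ for $i=1,\dots,6$ reduces, since each $C_i$ is $2\times 2$, to nonnegativity of its trace and its determinant; these are explicit trigonometric polynomials in $\cos\theta$ with coefficients in $\gamma$, and each is shown to be $\ge 0$ on $[0,\pi/2]$ using the defining relation $\gamma^2+4\gamma-8=0$, equivalently $\gamma=2(\sqrt 3-1)$. For instance, at $\theta=\pi/2$ the optimal choice gives $\det\bigl(\operatorname{diag}(\gamma-1,\gamma+2)-K_1\bigr)=\tfrac{3}{16}(\gamma^2+4\gamma-8)=0$, so $C_2$ is rank one exactly at this angle; away from $\pi/2$ the analogous determinants acquire strict positive slack.

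The main obstacle is the choice in the previous paragraph: exhibiting $K_1(\theta),K_2(\theta)$ that make all six positivity conditions hold \emph{simultaneously and uniformly in $\theta$}. The constraint on $K_1+K_2$ is a genuine two-sided Loewner sandwich, and for non-scalar bounds the sum of two Loewner intervals is strictly smaller than the Loewner interval of the sums, so the off-diagonal entries of $K_1,K_2$ cannot be chosen independently of their diagonals; it is precisely this obstruction that forces $\gamma$ down to $\tfrac{4}{1+\sqrt 3}$ and no smaller (the value for which $X(\pi/2)/\gamma$ first enters $\cWmin(\cD_{A(2)}(1))$). I would therefore treat $\theta=\pi/2$ first as a self-contained computation—there the constraints become tight and the construction is essentially unique up to unitary equivalence—then perturb it to a neighbourhood of $\pi/2$, and for $\theta$ bounded away from $\pi/2$ use a cruder interpolating construction, the determinant inequalities being slack there. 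Once the $C_i(\theta)$ are written out, the surviving determinant and trace checks are routine but somewhat lengthy polynomial manipulations, which is why the statement is established by direct verification rather than by an abstract argument.
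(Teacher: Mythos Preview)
Your overall strategy—reduce \eqref{eq:XMinBallSDP} to two free symmetric matrices, exhibit an explicit feasible choice depending on $\theta$, split the interval if necessary, and verify positivity of each $2\times 2$ block via nonnegativity of trace and determinant—is exactly the paper's approach. Your reduction to the pair $(K_1,K_2)$ with the six Loewner constraints is correct and is a tidy way to set things up; the paper instead fixes $C_1$ (a constant rank-one matrix) and $C_2$ (a rank-one matrix with a free scalar parameter) and solves for $C_3,\dots,C_6$, which is the same freedom in a different parametrization.

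The gap is that you never actually produce the feasible point. The entire content of the theorem lies in exhibiting $K_1(\theta),K_2(\theta)$ and verifying the six inequalities, and your proposal stops at ``a natural ansatz is to take $K_1,K_2$ with equal diagonal parts and opposite off-diagonal parts in the eigenbasis of $X_3(\theta)$'' followed by a description of what the verification would look like. Two concrete warnings from the paper's execution: first, the successful choice is \emph{not} symmetric in the way you suggest—the paper takes a fixed $C_1$ and a $\theta$-dependent rank-one $C_2$ with a parameter $\beta$ chosen to force $\det C_3=0$, and this asymmetry is essential; second, the final determinant check is far from ``routine''—the paper must show that $\det(C_{6,+})\det(C_{6,-})\le 0$ on $[\pi/8,\pi/2]$, which reduces to positivity of a trigonometric polynomial $h(\theta)$ of degree five in $\sin,\cos$ (not a polynomial in $\cos\theta$ alone, as you write), handled by a further subdivision $[\pi/8,\pi/4]\cup[\pi/4,\pi/2]$ and coarse termwise bounds. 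A separate, simpler constant choice of $C_2$ is then used on $[0,\pi/8]$. So the plan is right, but the proof is the construction and the verification, neither of which you have supplied.
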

\begin{proof}
    The proof is accomplished by explicitly constructing an element of $\MinBallSDP{\gamma}{X(\theta)}$ for arbitrary $\theta \in [0,\pi/2]$. It can be found in Appendix \ref{sec:Proof-of-SDPSolutions}.
\end{proof}

\subsubsection{Proof of Theorem \ref{thm:kSimplexPlusLineOptimum} for $k>2$} \label{sec:proof-k-bigger-2}

We now show that, for the level-fixed version of the optimization problem over the Cartesian product of the $k$-variable simplex and a line described in equation \eqref{eq:SumEigMaxFixedGenK} where $X$ and $Y$ are both tuples of $2 \times 2$ matrices, we can reduce to considering a simplex in two variables.

\begin{prop}
\label{prop:kVarsSimplexToTwoVar}
Fix $k \geq 3$ and let $S \subset SM(\C)^3$ be the tuple
\[
S_1 = \operatorname{diag}(1,0,-1/(k-1),0,0), \quad S_2= \operatorname{diag}(0,1,-1/(k-1),0,0), \quad S_3= \operatorname{diag}(0,0,0,-1,1),
\]
and let $A(k)$ be as defined in equation \eqref{eq:AkDefinition}. Then for $\gamma \in \R$ one has
\[
\left( \begin{pmatrix}
1 & 0 \\
0 & -k
\end{pmatrix},\begin{pmatrix}
-k & 0 \\
0 & 1
\end{pmatrix},\begin{pmatrix}
    \cos(\theta) & \sin(\theta) \\
    \sin(\theta) & -\cos(\theta)
\end{pmatrix}\right) \subset \gamma\cWmin(\cD_S(1)),\]
if and only if
\[
\cD_{A(k)}(2) \subset \gamma\cWmin(\cD_{A(k)}(1)).
\]
\end{prop}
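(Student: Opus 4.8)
The plan is to turn both membership conditions into the explicit feasibility problems over the vertices of the polytopes underlying $\cD_S$ and $\cD_{A(k)}$, and to move solutions back and forth by a projection/spreading argument. First record the polytope data: both free spectrahedra are Cartesian products with the real matrix interval $\cD_C$, $C=\operatorname{diag}(1,-1)$, namely $\cD_{A(k)}=\cD_{\sigma_k}\times\cD_C$ with $\cD_{\sigma_k}$ the free simplex over $\sigma_k:=\{x\in\R^k:\ x_j\le1\ (j\in[k]),\ \sum_jx_j\ge-1\}$, and $\cD_S=\cD_T\times\cD_C$ with $\cD_T$ the free simplex over $T:=\{x\in\R^2:\ x_1\le1,\ x_2\le1,\ x_1+x_2\ge-(k-1)\}$. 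Here $\sigma_k$ is a $k$-simplex with vertices $v^{(1)},\dots,v^{(k+1)}$, where $v^{(\ell)}_i=1-(k+1)\delta_{i\ell}$ for $\ell\le k$ and $v^{(k+1)}=(1,\dots,1)$, and $T$ is the triangle with vertices $p^{(1)}=(-k,1)$, $p^{(2)}=(1,-k)$, $p^{(3)}=(1,1)$. I will use throughout that $\cWmin$ of a polytope is the matrix convex hull of its vertices (cf.\ Proposition~\ref{prop:HKMFeasibilitySDP}): for a tuple $W$ of $2\times2$ self-adjoint matrices and $P=\operatorname{conv}\{u_1,\dots,u_N\}$, one has $W\in\gamma\cWmin(P)$ iff there exist positive semidefinite $C_1,\dots,C_N$ with $\sum_rC_r=\gamma I_2$ and $\sum_r(u_r)_iC_r=W_i$ for every coordinate $i$. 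Writing $X(\theta)=(X_1,X_2,I_2,\dots,I_2,X_{k+1}(\theta))$ (with $X_1=\operatorname{diag}(1,-k)$, $X_2=\operatorname{diag}(-k,1)$) for the tuple of the statement and $X'(\theta)=(X_1,X_2,X_{k+1}(\theta))$ for the associated $3$-tuple, a direct check on the defining pencil of $\cD_{A(k)}$ gives $X(\theta)\in\cD_{A(k)}(2)$. The core of the argument is the per-$\theta$ equivalence $X(\theta)\in\gamma\cWmin(\cD_{A(k)}(1))\iff X'(\theta)\in\gamma\cWmin(\cD_S(1))$; the proposition then follows by combining this with the (trivial) level-one containment, the classification of the real matrix extreme points of $\cD_{A(k)}(2)$ --- up to unitary equivalence and an SDP-reduction among the resulting shapes these are the $X(\theta)$, $\theta\in[0,\pi/2]$, exactly as in Proposition~\ref{prop:TwoSimplexXIntervalMatEx} for $k=2$ --- and the fixed-level case of Theorem~\ref{thm:MaxEigIsMinBallInclusionConst}.

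For the forward implication, from a feasible tuple $\{D_w\}_w$ for $X(\theta)$ indexed by the $2(k+1)$ vertices $w=(v^{(\ell)},\epsilon)$ of $\sigma_k\times[-1,1]$ I would push forward under the coordinate projection $(x,t)\mapsto(x_1,x_2,t)$, which sends $v^{(1)}\mapsto p^{(1)}$, $v^{(2)}\mapsto p^{(2)}$ and $v^{(3)},\dots,v^{(k+1)}\mapsto p^{(3)}$ and hence carries the vertices of $\sigma_k\times[-1,1]$ onto those of $T\times[-1,1]$: setting $C_u:=\sum_{w\mapsto u}D_w$ gives a feasible tuple for $X'(\theta)$, since this projection preserves all three coordinates of $X'(\theta)$ so that $\sum_uu_iC_u=\sum_ww_iD_w=X'(\theta)_i$. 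For the reverse implication I would lift. Given a feasible $\{C_{(p^{(j)},\epsilon)}\}$ for $X'(\theta)$, the unitality condition together with the coordinate-$1$ and coordinate-$2$ conditions force $\sum_\epsilon C_{(p^{(1)},\epsilon)}=\tfrac1{k+1}\operatorname{diag}(\gamma-1,\gamma+k)$, $\sum_\epsilon C_{(p^{(2)},\epsilon)}=\tfrac1{k+1}\operatorname{diag}(\gamma+k,\gamma-1)$ and $\sum_\epsilon C_{(p^{(3)},\epsilon)}=\tfrac{(k-1)(\gamma-1)}{k+1}I_2$. I then set $D_{(v^{(1)},\epsilon)}:=C_{(p^{(1)},\epsilon)}$, $D_{(v^{(2)},\epsilon)}:=C_{(p^{(2)},\epsilon)}$ and $D_{(v^{(\ell)},\epsilon)}:=\tfrac1{k-1}C_{(p^{(3)},\epsilon)}$ for $\ell=3,\dots,k+1$, thereby spreading the mass of $p^{(3)}$ evenly over the $k-1$ simplex vertices whose first two coordinates are both $1$, and verify that $\sum_wD_w=\gamma I_2$, that $\sum_ww_iD_w=X_1,X_2$ for $i=1,2$, that $\sum_ww_iD_w=\gamma I_2-(k+1)\cdot\tfrac1{k-1}\cdot\tfrac{(k-1)(\gamma-1)}{k+1}I_2=I_2$ for each $i=3,\dots,k$, and that the interval coordinate reproduces $X_{k+1}(\theta)$; hence $X(\theta)\in\gamma\cWmin(\cD_{A(k)}(1))$.

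The delicate step is the reverse (lifting) one, and it explains why the naive ``coordinatewise embedding'' of $\cD_S$ into $\cD_{A(k)}$ fails: since $\gamma=\gamk{k}>1$, the extra coordinates $X_j=I_2$ of $X(\theta)$ lie strictly inside the scaled interval $\gamma[-1,1]$, so any feasible tuple for $X(\theta)$ must place nonzero mass on the ``extra'' vertices $v^{(3)},\dots,v^{(k)}$ of $\sigma_k$, each of which carries a $-k$ in exactly one extra coordinate. The point to get right is that the mass the coordinate-$1$ and coordinate-$2$ conditions already pin onto the face $\{x_1=x_2=1\}$ of $\sigma_k$ is precisely the mass needed, once spread evenly, to pin every extra coordinate of $X(\theta)$ to $I_2$ --- the bookkeeping closes with no slack. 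Secondary points are to be explicit about the quantifier on $\theta$ (the equivalence is proved for each fixed $\theta$, and the left-hand side of the proposition is to be read over $\theta\in[0,\pi/2]$) and about invoking the general-$k$ analogue of the matrix-extreme-point classification of Proposition~\ref{prop:TwoSimplexXIntervalMatEx}.
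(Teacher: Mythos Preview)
Your proposal is correct and follows essentially the same route as the paper's own proof: both reduce the inclusion to the vertex-indexed feasibility SDP and then pass between the $k$-variable and two-variable problems by summing the weights over the fiber of the coordinate projection $(x,t)\mapsto(x_1,x_2,t)$ in one direction and spreading the $p^{(3)}$-mass evenly over $v^{(3)},\dots,v^{(k+1)}$ in the other, with the paper's ``hidden'' identity $\frac{1}{1-k}(X_1+X_2)=I_2$ playing exactly the role of your computation $\gamma I_2-(k+1)\cdot\frac{1}{k-1}\cdot\frac{(k-1)(\gamma-1)}{k+1}I_2=I_2$. Your framing in terms of pushforward/lift of measures on vertices is a bit more conceptual than the paper's direct SDP-variable manipulation, but the content is the same.
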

\begin{proof}
    Following similar arguments to those used in the proof of Proposition \ref{prop:TwoSimplexXIntervalMatEx} one can argue that, to show $\cD_{A(k)}(2) \subset \gamma\cWmin(\cD_{A(k)}(1))$ it is sufficient to show that
    \[
    X(\theta) = (X_1,\dots,X_n,X_{n+1} (\theta)) \in \gamma\cWmin(\cD_{A(k)}(1))
    \]
    for all $\theta \in [0,\pi/2]$ where 
\[
X_1 = \begin{pmatrix}
1 & 0 \\
0 & -k
\end{pmatrix}, \quad 
X_2 = \begin{pmatrix}
    -k & 0 \\
    0 & 1
\end{pmatrix}, \quad
X_{k+1}(\theta) = \begin{pmatrix}
    \cos(\theta) & \sin(\theta) \\
    \sin(\theta) & -\cos(\theta)
\end{pmatrix}
\]
and $X_j = I_2$ for all $j =3,4,\dots,k$. 

In particular, using Theorem \ref{thm:SimplexXIntervalRealMatEx} together with the fact that the extreme points of $\cD_A (1)$ are of the form $(x_1,\dots,x_k)$ where either $x_j = 1$ for all $j=1,\dots,k$ or where $x_\ell = -k$ for exactly one $\ell \in \{1,\dots,k\}$ and $x_j = 1$ for all $j \neq \ell$ shows that, up to unitary equivalence, there are only finitely many families of extreme points to consider. Each such family is parameterized by $X_{k+1}(\theta)$ of the above form with $\theta \in [0,\pi].$ As in the proof of the second part of Proposition \ref{prop:TwoSimplexXIntervalMatEx}, the fact that out of these finitely many families one  only needs to consider the tuples $X(\theta)$ can be proved by showing that solutions to the relevant SDP for any of one these finitely many families correspond to solutions to the SDP for $X(\theta)$. 

Continuing to follow the argument of Proposition \ref{prop:TwoSimplexXIntervalMatEx}, one can now show that  $\delmat\cD_{A(k)}(2) \subseteq \cWmin(\gamma \cD_{A(k)}(1))$ if and only if the SDP
\begin{equation}
\label{eq:XMinBallSDPkSimplex}
\begin{split}
&  C_1 \oplus \dots \oplus C_{k+1} \oplus D_1 \oplus \dots \oplus D_{k+1} \succeq 0,  \\
& C_1-kC_2+C_3+C_4+\dots+C_{k+1}+D_1-kD_2+D_3+D_4+\dots+D_{k+1} =X_1, \\
&  C_1+C_2-kC_3+C_4+\dots+C_{k+1}+D_1+D_2-kD_3+D_4+\dots+D_{k+1} =X_2, \\
& C_1+\dots+C_{k+1}-D_1-\dots-D_{k+1} =X_{k+1} (\theta), \\
& C_1+\dots+C_{k+1}+D_1+\dots+D_{k+1} = \gamma I, \\
& -k C_\ell +  \sum_{j \neq \ell} C_j-k D_\ell +  \sum_{j \neq \ell} D_j  = I \qquad \mathrm{for} \qquad \ell=4,\dots,k,
\end{split}
\end{equation}
is feasible for all $\theta \in [0,\pi/2]$. Similarly, one can show that $(X_1,X_2,X_{k+1} (\theta)) \subset \cWmin(\gamma \cD_S(1))$ if and only if
\begin{equation}
\label{eq:XMinBallSDPtwoSimplexScaled}
\begin{split}
&  C'_1 \oplus C'_2 \oplus C'_3 \oplus D'_1 \oplus D'_2 \oplus D'_3 \succeq 0,  \\
& C'_1-kC'_2+C'_3+D'_1-kD'_2+D'_3 =X_1, \\
&  C'_1+C'_2-kC'_3+D'_1+D'_2-kD'_3 =X_2, \\
& C'_1+C'_2+C'_3-D'_1-D'_2-D'_3 =X_{k+1} (\theta), \\
& C'_1+C'_2+C'_3+D'_1+D'_2+D'_3  = \gamma I,
\end{split}
\end{equation}
is feasible for all $\theta \in [0,\pi/2]$.

Suppose $(C_1,\dots,C_{k+1},D_1,\dots,D_{k+1})$ is a feasible point of SDP \eqref{eq:XMinBallSDPkSimplex}. It is then immediate that
\[
(C'_1,C'_2,C'_3,D'_1,D'_2,D'_3)  := (C_1+C_4+\dots+C_{k+1},C_2,C_3,D_1+D_4+\dots+D_{k+1},D_2,D_3)
\]
is feasible for SDP \eqref{eq:XMinBallSDPtwoSimplexScaled}. On the other hand, if $(C'_1,C'_2,C'_3,D'_1,D'_2,D'_3)$ is feasible for SDP \eqref{eq:XMinBallSDPtwoSimplexScaled}, setting
\[
C_2 = C'_2 \qquad C_3=C'_3 \qquad C_j = C'_1/(k-1) \mathrm{\ for \ } j \neq 2,3,
\]
and similarly
\[
D_2 = D'_2 \qquad D_3=D'_3 \qquad D_j = D'_1/(k-1) \mathrm{\ for \ } j \neq 2,3,
\]
gives that $(C_1,\dots,C_{k+1},D_1,\dots,D_{k+1})$ is a feasible point of SDP \eqref{eq:XMinBallSDPkSimplex}. In particular, with this choice, the only equation of \eqref{eq:XMinBallSDPkSimplex} that is nontrivial to check is the final equality. This can be verified by computing
\[
\frac{1}{1-k} (X_1+X_2) = I
\]
and simplifying the expressions for $X_1$ and $X_2$ given in \eqref{eq:XMinBallSDPkSimplex}. More precisely, for a fixed $\ell \geq 4$ we have
\[
\begin{split}
I=\frac{1}{1-k} (X_1+X_2) &= C_2 + C_3 + \sum_{j \neq 2,3} \frac{2}{1-k} C_j+ D_2 + D_3 + \sum_{j \neq 2,3} \frac{2}{1-k} D_j \\
& = C_2 + C_3 + \sum_{j \neq 2,3} \left(C_j-\frac{1-k-2}{1-k} C_j \right) +D_2 + D_3 + \sum_{j \neq 2,3} \left(D_j-\frac{1-k-2}{1-k} D_j \right)\\
& = C_2 + C_3 + \sum_{j \neq 2,3} \left(C_j-\frac{k+1}{k-1} C_\ell \right) + D_2 + D_3 + \sum_{j \neq 2,3} \left(D_j-\frac{k+1}{k-1} D_\ell \right)\\
& =   C_\ell-(k-1)\frac{k+1}{k-1} C_\ell+ \sum_{j \neq \ell} C_j + D_\ell-(k-1)\frac{k+1}{k-1}  D_\ell + \sum_{j \neq \ell} D_j \\
& = -k C_\ell +  \sum_{j \neq \ell} C_j-k D_\ell +  \sum_{j \neq \ell} D_j,
\end{split}
\]
from which the result follows. 
\end{proof}

\begin{remark}
    Fix $k \geq 3$ and let $\cD_S$ and $\cD_{A(k)}$ and $X_2$ and $X(\theta)$ be as in the statement of Proposition \ref{prop:kVarsSimplexToTwoVar}. Then the tuples $ (I,X_2,X(\theta))$ and $(I,X_2,I,\dots,I,X(\theta))$ are Arveson extreme points of $\cD_S$ and $\cD_{A(k)}$, respectively. However, we warn the reader that the optimum value achieved by the SDPs
\[
\begin{split}
& \max \gamma \ \ \mathrm{s.t.} \\
&  C_1 \oplus \dots \oplus C_{k+1} \oplus D_1 \oplus \dots \oplus D_{k+1} \succeq 0,  \\
& C_1-kC_2+C_3+C_4+\dots+C_{k+1}+D_1-kD_2+D_3+D_4+\dots+D_{k+1} = I, \\
&  C_1+C_2-kC_3+C_4+\dots+C_{k+1}+D_1+D_2-kD_3+D_4+\dots+D_{k+1} =X_2, \\
& C_1+\dots+C_{k+1}-D_1-\dots-D_{k+1} =X(\theta), \\
& C_1+\dots+C_{k+1}+D_1+\dots+D_{k+1} = \gamma I, \\
& -k C_\ell +  \sum_{j \neq \ell} C_j-k D_\ell +  \sum_{j \neq \ell} D_j  = I \qquad \mathrm{for} \qquad \ell=4,\dots,k,
\end{split}
\]
and
\[
\begin{split}
& \max \gamma \ \ \mathrm{s.t.} \\
&  C'_1 \oplus C'_2 \oplus C'_3 \oplus D'_1 \oplus D'_2 \oplus D'_3 \succeq 0,  \\
& C'_1-kC'_2+C'_3+D'_1-kD'_2+D'_3 = I, \\
&  C'_1+C'_2-kC'_3+D'_1+D'_2-kD'_3 =X_2, \\
& C'_1+C'_2+C'_3-D'_1-D'_2-D'_3 =X (\theta), \\
& C'_1+C'_2+C'_3+D'_1+D'_2+D'_3  = \gamma I,
\end{split}
\]
do not necessarily coincide. E.g., when $k=3$ our results show that the optimum achieved by the first SDP for $\theta = \pi/2$ is 3/2. Numerically we find that the optimum value achieved by the second SDP ?? for $\theta = \pi/2$ ?? is approximately 1.456. Thus the choice of extreme point plays an important role in the reduction of the $k$ variable case to the two variable case. 
\end{remark}

\begin{thm}
\label{thm:kSDPFeasible}
As in Theorem \ref{thm:kSimplexPlusLineOptimum}, define $\gamma(k)$ by
\[
\gamma(k)=\gamk{k}.
\]
Then the semidefinite program in \eqref{eq:XMinBallSDPtwoSimplexScaled},
    \begin{equation*}
\begin{split}
&  C_1 \oplus C_2 \oplus C_3 \oplus D_1 \oplus D_2 \oplus D_3 \succeq 0,  \\
& C_1-kC_2+C_3+D_1-kD_2+D_3 =X_1, \\
&  C_1+C_2-kC_3+D_1+D_2-kD_3 =X_2, \\
& C_1+C_2+C_3-D_1-D_2-D_3 =X_{3} (\theta), \\
& C_1+C_2+C_3+D_1+D_2+D_3  = \gamma(k) I,
\end{split}
\end{equation*}
is feasible for all $\theta \in [0,\pi/2]$.
\end{thm}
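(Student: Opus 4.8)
The plan is to prove the theorem constructively: for each $\theta\in[0,\pi/2]$ I will exhibit an explicit feasible point $(C_1,C_2,C_3,D_1,D_2,D_3)$ of the semidefinite program \eqref{eq:XMinBallSDPtwoSimplexScaled}, with every matrix a closed-form function of $k$ and $\theta$. This is the strategy already used for $k=2$ in Theorem \ref{thm:SDPSolutions}, whose explicit solution is built in Appendix \ref{sec:Proof-of-SDPSolutions}; the present statement is its extension to general $k$.

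First I would eliminate the linear constraints. Writing $s_i:=C_i+D_i$ and $P_i:=C_i-D_i$, the four matrix equations of \eqref{eq:XMinBallSDPtwoSimplexScaled} decouple: the two ``simplex'' equations together with the normalization $\sum_i(C_i+D_i)=\gamma(k)I$ pin down $s_1,s_2,s_3$ uniquely, while the remaining equation becomes $P_1+P_2+P_3=X_3(\theta)$. Using $X_1+X_2=(1-k)I$ one computes
\[
s_2=\frac{\gamma(k)I-X_1}{k+1},\qquad s_3=\frac{\gamma(k)I-X_2}{k+1},\qquad s_1=\frac{(k-1)(\gamma(k)-1)}{k+1}\,I,
\]
and substituting $t:=\sqrt{k+1}$, so that $\gamma(k)=\gamk{k}=\frac{2(t+1)}{t+2}$ and $\gamma(k)-1=\frac{t}{t+2}$, these collapse to the clean diagonal matrices
\[
s_1=\frac{t^2-2}{t(t+2)}\,I,\qquad s_2=\frac{1}{t(t+2)}\operatorname{diag}\bigl(1,(t+1)^2\bigr),\qquad s_3=\frac{1}{t(t+2)}\operatorname{diag}\bigl((t+1)^2,1\bigr).
\]
Since $C_i=\tfrac12(s_i+P_i)$ and $D_i=\tfrac12(s_i-P_i)$, the positivity block of \eqref{eq:XMinBallSDPtwoSimplexScaled} is equivalent to the sandwich conditions $-s_i\preceq P_i\preceq s_i$ for $i=1,2,3$. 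So the theorem reduces to producing, for each $\theta$, symmetric $2\times2$ matrices $P_1(\theta),P_2(\theta),P_3(\theta)$ summing to $X_3(\theta)=\cos\theta\,\sigma_Z+\sin\theta\,\sigma_X$ with $-s_i\preceq P_i(\theta)\preceq s_i$.

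Next comes the construction of this splitting. The guiding principle is that $s_1$ is a scalar multiple of the identity, so $P_1$ is constrained only in operator norm and one may take $P_1(\theta)=\frac{t^2-2}{t(t+2)}X_3(\theta)$; then $s_1\pm P_1=\frac{t^2-2}{t(t+2)}\bigl(I\pm X_3(\theta)\bigr)\succeq0$ holds automatically because $X_3(\theta)$ has eigenvalues $\pm1$. By contrast $s_2$ and $s_3$ are ``fat'' in the coordinate directions $e_2$ and $e_1$ respectively and ``thin'' in the other, so the residual $X_3(\theta)-P_1(\theta)$ (a scalar multiple of $X_3(\theta)$) must be split between $P_2$ and $P_3$ with carefully chosen, $\theta$-dependent diagonal and off-diagonal weights. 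With such an ansatz fixed, each of the six inequalities $s_i\pm P_i(\theta)\succeq0$ is a $2\times2$ positivity check, hence reduces by the standard criterion to nonnegativity of the two diagonal entries and of the determinant; writing $c:=\cos\theta\in[0,1]$ and $\sin^2\theta=1-c^2$, these become polynomial inequalities in $c$ with coefficients in $t\ge\sqrt2$, each provable by elementary means — concavity or convexity in $c$ together with the endpoint values $c=0,1$, or an explicit factorization of a cubic in $c$ that appears in one of the determinant conditions.

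The main obstacle is precisely this splitting: the division of $X_3(\theta)-P_1(\theta)$ between $P_2$ and $P_3$ must be engineered so that \emph{all} the determinant inequalities hold uniformly on $[0,\pi/2]$, and the obvious choices fail. For instance, splitting the residual evenly with $P_2,P_3\propto X_3(\theta)$ already violates $P_2\preceq s_2$ for small $\theta$, because the ``thin'' slack (the entry $1$ in $\operatorname{diag}(1,(t+1)^2)$) shrinks too fast to absorb the $\sin\theta$-sized off-diagonal terms once the ``fat'' slack has been consumed. Moreover the problem is rigid: since $\gamma(k)$ is the \emph{exact} optimum (Theorem \ref{thm:kSimplexPlusLineOptimum}), the constraints are active — at $\theta=\pi/2$ one is forced to take $P_1=\frac{t^2-2}{t(t+2)}\sigma_X$ and $P_2=P_3=\frac{t+1}{t(t+2)}\sigma_X$ with all three sandwich inequalities tight — so the $\theta$-dependent family must lie on the boundary at that endpoint while staying feasible throughout. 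Pinning down a closed-form family with this property, and then carrying out the (routine but lengthy) polynomial verifications above, is the heart of the proof and is done by a direct generalization of the $k=2$ computation in Appendix \ref{sec:Proof-of-SDPSolutions}.
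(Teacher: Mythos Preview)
Your reduction is correct and clarifying: writing $s_i=C_i+D_i$ and $P_i=C_i-D_i$, the affine constraints pin down the $s_i$ exactly as you compute, and feasibility becomes the existence of symmetric $P_1,P_2,P_3$ with $P_1+P_2+P_3=X_3(\theta)$ and $-s_i\preceq P_i\preceq s_i$. This is precisely the structure underlying the paper's proof. Your observation that the problem is tight at $\theta=\pi/2$ is also correct.

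The genuine gap is that you never produce $P_2$ and $P_3$. You call this ``the heart of the proof'' and then defer it to ``a direct generalization of the $k=2$ computation in Appendix~\ref{sec:Proof-of-SDPSolutions}'', but that appeal does not survive inspection. The $k=2$ construction in the appendix is ad hoc: a fixed rank-one $C_1$, a $C_2$ depending on a parameter $\beta$ chosen as a root of a $\theta$-dependent quadratic, and a case split on $[0,\pi/8]$ versus $[\pi/8,\pi/2]$. There is no visible pattern to generalize, and the paper does not attempt to. For $k\ge3$ it instead writes down a completely different and much cleaner closed form --- in your variables, $P_i=s_i^{1/2}X_3(\theta)\,s_i^{1/2}$ for $i=2,3$, making $s_i\pm P_i=s_i^{1/2}(I\pm X_3(\theta))s_i^{1/2}\succeq0$ automatic, with $P_1$ absorbing the residual; the resulting $P_1$ satisfies $-s_1\preceq P_1\preceq s_1$ exactly when $(k-1)^2\ge4$, i.e.\ $k\ge3$, which is why $k=2$ needs separate treatment. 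Your own choice $P_1=\frac{t^2-2}{t(t+2)}X_3(\theta)$ is different (same off-diagonal, different diagonal) and does not obviously complete: pairing it with the paper's $P_2$, for instance, forces $(s_3-P_3)_{22}=\frac{1-(t^2-1)\cos\theta}{t(t+2)}<0$ near $\theta=0$ for every $k\ge2$. So the step you leave open is not routine verification but the actual content of the theorem.
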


\begin{proof}
We have already shown that the SDP is feasible when $k=2$, so it is sufficient to consider $k \geq 3$. In this case, set
\[
\alpha(k):=\frac{1}{2 k+4 \sqrt{k+1}+2}
\]
and set
\begin{equation*}
\begin{split}
C_1(\theta) &= \alpha(k) \left(
\begin{array}{cc}
 k-1-2 \cos (\theta) & (k-1) \sin (\theta) \\
 (k-1) \sin (\theta) & k-1+2 \cos (\theta) \\
\end{array}
\right) \\ 
C_2(\theta) &= \alpha(k) \left(
\begin{array}{cc}
 1+\cos (\theta) & \left(\sqrt{k+1}+1\right) \sin (\theta) \\
 \left(\sqrt{k+1}+1\right) \sin (\theta) & \left(k+2 \sqrt{k+1}+2\right) (1-\cos (\theta)) \\
\end{array}
\right) \\
C_3(\theta) &= \alpha(k) 
\left(
\begin{array}{cc}
 \left(k+2 \sqrt{k+1}+2\right) (1+\cos (\theta)) & \left(\sqrt{k+1}+1\right) \sin (\theta) \\
 \left(\sqrt{k+1}+1\right) \sin (\theta) & 1-\cos (\theta) \\
\end{array}
\right)
\\
D_1(\theta) &= \alpha(k)
\left(
\begin{array}{cc}
 k-1+2 \cos (\theta) & -(k-1) \sin (\theta) \\
 -(k-1) \sin (\theta) & k-1-2 \cos (\theta) \\
\end{array}
\right)\\
D_2(\theta) &= \alpha(k)
\left(
\begin{array}{cc}
 1-\cos (\theta) & -\left(\sqrt{k+1}+1\right) \sin (\theta) \\
 -\left(\sqrt{k+1}+1\right) \sin (\theta) & \left(k+2 \sqrt{k+1}+2\right) (1+\cos (\theta)) \\
\end{array}
\right) \\
D_3(\theta) &= \alpha(k)
\left(
\begin{array}{cc}
 \left(k+2 \sqrt{k+1}+2\right) (1-\cos (\theta)) & -\left(\sqrt{k+1}+1\right) \sin
   (\theta) \\
 -\left(\sqrt{k+1}+1\right) \sin (\theta) & 1+\cos (\theta) \\
\end{array}
\right)
\end{split}
\end{equation*}
Then 
\begin{equation*}
\begin{split}
\tr(C_1 (\theta)) = \tr(D_1(\theta)) &= \alpha(k)(2k-2), \\
\tr(C_2 (\theta)) = \tr(D_3(\theta)) &= \alpha(k)\big((1+\cos(\theta)) + (2+k+2\sqrt{k+1})(1-\cos(\theta))\big) \\
\tr(C_3 (\theta)) = \tr(D_2(\theta)) &= \alpha(k)\big((1-\cos(\theta)) + (2+k+2\sqrt{k+1})(1+\cos(\theta))\big) \\
\end{split}
\end{equation*}
and
\begin{equation*}
\begin{split}
\det(C_1(\theta))=\det(D_1(\theta)) = \alpha(k)^2 (k-3)(k+1)\cos(\theta)^2\\
\det(C_2(\theta))=\det(C_3(\theta))=\det(D_2(\theta))=\det(D_3(\theta)) = 0.
\end{split}
\end{equation*}
Since the trace and determinant of all of these symmetric $2 \times 2$ matrices are each nonnegative for all $k \geq 3$ and all $\theta \in [0,2\pi]$, we conclude that all six of these matrices are positive semidefinite when $k \geq 3$. It is a straightforward computation to verify that the remaining equalities hold. 
\end{proof}

\begin{remark}
Note that the matrices used in the proof of Theorem \ref{thm:kSDPFeasible} do not give a feasible point of the SDP when $k=2$. This is due to the $(k-3)$ factor that appears in the determinant of $C_1$ and $D_1$. In particular, these matrices are not PSD when $k=2$ (unless $\theta = \pi/2$). 

We also point out that there are numerous symmetries that appear in the matrices. In particular, the tuple
$(C_1(\theta),C_2(\theta),C_3(\theta),D_1(\theta),D_2(\theta),D_3(\theta))$ has the form
\[
\left(\begin{pmatrix}
    a_1 & a_2 \\
    a_2 & a_3 
\end{pmatrix},
\begin{pmatrix}
    b_1 & b_2 \\
    b_2 & b_3 
\end{pmatrix},
\begin{pmatrix}
    c_1 & b_2 \\
    b_2 & c_3 
\end{pmatrix},
\begin{pmatrix}
    a_3 & -a_2 \\
    -a_2 & a_1 
\end{pmatrix},
\begin{pmatrix}
    c_3 & -b_2 \\
    -b_2 & c_1 
\end{pmatrix},
\begin{pmatrix}
    b_3 & -b_2 \\
    -b_2 & b_1
\end{pmatrix}
\right)
\]
for appropriately chosen values of the $a_i,b_i$ and $c_i$ (which depend on $k$ and $\theta$). 

Lastly we mention that for $k\geq 3$, the matrices above are in fact a feasible point of the SDP in question for all $\theta \in \R$, not just for $\theta \in [0,\pi/2]$. However, the solution we construct in the appendix for the $k=2$ case only works when $\theta \in [0,\pi/2]$. 
\end{remark}

\subsection{Cartesian product of lines}

\begin{prop}\label{prop:extreme-points-four-lines}
     Let $\cD_A$ be the matrix cube in four variables. We consider the optimization problem
    \begin{align*}
        \textrm{maximize} \quad & \quad \lambda_{\max}\left(\sum_{i \in [4]} X_i \otimes Y_i\right) \\
        \textrm{such~that} \quad & \quad X \in \mathcal D_A(2), Y \in \mathcal D_A^\square(2) \,. 
    \end{align*}
    Then, the optimal value is at least $\sqrt{13}/2$.
\end{prop}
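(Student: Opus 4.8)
The plan is to prove the lower bound by exhibiting an explicit feasible pair $(X,Y)$ for the optimization problem together with a unit vector $\psi\in\C^2\otimes\C^2$ witnessing
$\lambda_{\max}\bigl(\sum_{i\in[4]}X_i\otimes Y_i\bigr)\ge\langle\psi,(\sum_iX_i\otimes Y_i)\psi\rangle=\sqrt{13}/2$;
since only a lower bound is claimed, no optimality argument is needed. As a preliminary step I would record that the matrix cube in four variables is the Cartesian product of four matrix intervals, so by Proposition~\ref{prop:CartesianPolyDualIsDirect} (together with the self-duality of the matrix interval) its dual free polytope $\cD_A^\square$ is the direct sum of four lines, i.e.\ the matrix diamond of Example~\ref{ex:MatrixDiamond}; concretely, $Y\in\cD_A^\square(2)$ if and only if $\sum_{i=1}^4\epsilon_iY_i\preceq I_2$ for every $\epsilon\in\{\pm1\}^4$.

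For the construction I would take $X=(X_1,X_2,X_3,X_4)$ to be the four reflections appearing in the statement: $X_1=\sigma_Z$ (the computational basis) and $X_{j+1}=\vec u_j\cdot\vec\sigma$ for $j=1,2,3$, where $\vec u_1,\vec u_2,\vec u_3\in\R^3$ are unit vectors in the $x$--$y$ plane at mutual angle $120^\circ$, so that $\vec u_j\cdot\vec u_k=-\tfrac12$ for $j\ne k$ and $\sum_j u_j^au_j^b=\tfrac32\,\delta_{ab}$ for $a,b\in\{x,y\}$. Since $X_i^2=I_2\preceq I_2$ we get $X\in\cD_A(2)$ at once. For $Y$ I would take the matched ansatz $Y_1=c_1\sigma_Z$ and $Y_{j+1}=c_2\,\vec u_j\cdot\vec\sigma$ with $c_1=-2/\sqrt{13}$ and $c_2=3/(2\sqrt{13})$. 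To check $Y\in\cD_A^\square(2)$, note $\sum_i\epsilon_iY_i=\vec w\cdot\vec\sigma$ with $\vec w=\epsilon_1c_1\hat z+c_2\sum_j\epsilon_{j+1}\vec u_j$, whose eigenvalues are $\pm|\vec w|$; expanding and using $\hat z\perp$ the $x$--$y$ plane gives $|\vec w|^2=c_1^2+c_2^2\,\bigl|\sum_j\epsilon_{j+1}\vec u_j\bigr|^2$, and $\vec u_j\cdot\vec u_k=-\tfrac12$ forces $\bigl|\sum_j\epsilon_{j+1}\vec u_j\bigr|^2\in\{0,4\}$, hence $|\vec w|^2\le c_1^2+4c_2^2=\tfrac{4}{13}+\tfrac{9}{13}=1$ and $\sum_i\epsilon_iY_i\preceq I_2$ for all $\epsilon$.

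Finally I would compute the objective. Using $\sum_{j=1}^3(\vec u_j\cdot\vec\sigma)\otimes(\vec u_j\cdot\vec\sigma)=\tfrac32(\sigma_X\otimes\sigma_X+\sigma_Y\otimes\sigma_Y)$ yields
\[
\sum_{i=1}^4 X_i\otimes Y_i=c_1\,\sigma_Z\otimes\sigma_Z+\tfrac{3c_2}{2}\,(\sigma_X\otimes\sigma_X+\sigma_Y\otimes\sigma_Y).
\]
Taking $\psi=\tfrac{1}{\sqrt2}(0,1,1,0)^{\mathsf T}$, one checks $(\sigma_Z\otimes\sigma_Z)\psi=-\psi$ and $(\sigma_X\otimes\sigma_X+\sigma_Y\otimes\sigma_Y)\psi=2\psi$, so
\[
\Bigl\langle\psi,\Bigl(\sum_{i=1}^4 X_i\otimes Y_i\Bigr)\psi\Bigr\rangle=-c_1+3c_2=\frac{2}{\sqrt{13}}+\frac{9}{2\sqrt{13}}=\frac{\sqrt{13}}{2},
\]
which gives the claimed bound. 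The only non-routine part is locating the constants $c_1,c_2$: once one adopts the matched ansatz $Y_i\propto X_i$, the problem collapses to maximizing $-c_1+3c_2$ subject to the single active diamond constraint $c_1^2+4c_2^2\le1$, and the KKT condition gives $c_2^2=9/52$, $c_1^2=4/13$, and optimal value $\sqrt{13}/2$; everything else is Pauli-matrix bookkeeping, and the sixteen diamond inequalities are all handled uniformly by the computation above. I would also remark, though it is not needed for the inequality, that this $X$ is a Cartesian product of reflections, hence an Arveson (indeed free) extreme point of $\cD_A$ by Lemma~\ref{lem:ExtremeOfCartesian} and Remark~\ref{rem:extreme-points-free-simplex}, which explains why it is a natural candidate for the optimizer.
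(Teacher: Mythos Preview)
Your proof is correct and takes essentially the same approach as the paper: both exhibit the identical explicit pair $(X,Y)$ (up to an irrelevant sign in the $Y_{j+1}$, which the diamond's $\epsilon$-symmetry absorbs) and verify feasibility and the eigenvalue directly. Your Bloch-vector presentation makes the verification of $Y\in\cD_A^\square(2)$ and the computation of the witness eigenvector fully explicit, whereas the paper simply writes out the matrices and says ``one can verify''; but the underlying construction and argument are the same.
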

\begin{proof}
    Let us consider the matrices 
    \[
X_1 = \begin{pmatrix}
    1 & 0\\ 0 & -1
\end{pmatrix} \quad
X_2 = \begin{pmatrix}
    0 & 1\\ 1 & 0
\end{pmatrix}\quad
X_3 = \begin{pmatrix}
    0 & e^{2\pi i/3}\\ e^{-2\pi i/3} & 0
\end{pmatrix}
\quad
X_4 = \begin{pmatrix}
    0 & e^{-2\pi i/3}\\ e^{2\pi i/3} & 0
\end{pmatrix}
\]
and
\begin{align*}
Y_1 &= \begin{pmatrix}
    \frac{-2}{\sqrt{13}} & 0\\ 0 & \frac{2}{\sqrt{13}} 
\end{pmatrix} \qquad \qquad \qquad \quad \ \
Y_2 = \begin{pmatrix}
    0 & \frac{-3}{2\sqrt{13}} \\ \frac{-3}{2\sqrt{13}}  & 0
\end{pmatrix}\\
Y_3 &= \begin{pmatrix}
    0 & \frac{-3}{2\sqrt{13}}e^{2\pi i/3}\\ \frac{-3}{2\sqrt{13}}e^{-2\pi i/3}\ & 0
\end{pmatrix}
\quad
Y_4 = \begin{pmatrix}
    0 & \frac{-3}{2\sqrt{13}}e^{-2\pi i/3}\\ \frac{-3}{2\sqrt{13}}e^{2\pi i/3} & 0
\end{pmatrix} \,.
\end{align*}
One can verify that indeed $X \in \mathcal D_A(2)$, $Y \in \mathcal D_A^\square(2)$ and 
\begin{equation*}
    \lambda_{\max}\left(\sum_{i \in [4]} X_i \otimes Y_i\right) = \frac{\sqrt{13}}{2} \,. \qedhere
\end{equation*} 
\end{proof}
To give an intuition about how we found these points, we remark that for $X = (X_1,\dots,X_k)$ in the matrix cube, we know that $X$ is irreducible Euclidean extreme $\iff$ $X$ is matrix extreme $\iff$ $X$ is free extreme $\iff$ $X$ is both irreducible and satisfies that $X_i^2 = I$ for all $i=1,\dots,k$. This can be seen from Remark \ref{rem:extreme-points-free-simplex} and Lemma \ref{lem:ExtremeOfCartesian}.

We conjecture that the value $\sqrt{13}/2$ is actual optimal:
\begin{conj}\label{conj:extreme-points-four-lines}
      Let $\cD_A$ be the matrix cube in four variables. We consider the optimization problem
    \begin{align*}
        \textrm{maximize} \quad & \quad \lambda_{\max}\left(\sum_{i \in [4]} X_i \otimes Y_i\right) \\
        \textrm{such~that} \quad & \quad X \in \mathcal D_A(2), Y \in \mathcal D_A^\square(2) \,. 
    \end{align*}
    Then, the optimal value is $\sqrt{13}/2$.
\end{conj}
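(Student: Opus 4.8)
Since the value $\sqrt{13}/2$ is already attained in Proposition~\ref{prop:extreme-points-four-lines}, proving Conjecture~\ref{conj:extreme-points-four-lines} reduces to establishing the matching upper bound. The plan is to combine Theorem~\ref{thm:MaxEigIsMinBallInclusionConst} with an explicit parametrization of the level-two matrix extreme points of the matrix cube $\cD_A$ in four variables. By Theorem~\ref{thm:MaxEigIsMinBallInclusionConst} applied with $n=2$, the optimal value of the stated problem equals the least $\gamma$ for which $\delmat\cD_A(j)\subseteq\gamma\,\cWmin(\cD_A(1))$ holds for $j=1,2$; the case $j=1$ only forces $\gamma\ge1$, so the entire content of the conjecture lies at level two. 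By the remark following Proposition~\ref{prop:extreme-points-four-lines} (which combines Remark~\ref{rem:extreme-points-free-simplex} with Lemma~\ref{lem:ExtremeOfCartesian}), a tuple in $\delmat\cD_A(2)$ is, up to unitary equivalence, an irreducible $4$-tuple $X=(X_1,X_2,X_3,X_4)$ of $2\times2$ Hermitian matrices with $X_i^2=I_2$ for every $i$; since not all $X_i$ can equal $\pm I_2$, we may normalize $X_1=\sigma_Z$, after which each remaining $X_i$ is either $\pm I_2$ or a traceless Bloch involution, and the residual diagonal unitary freedom (rotations about the $z$-axis) together with the $S_3\times\mathbb Z_2$ symmetry of $\cD_A$ permuting and sign-flipping coordinates cut the parameter space down to a compact manifold of dimension at most five.

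For each such $X$, put $\gamma(X):=\min\{\gamma:\ X\in\gamma\,\cWmin([-1,1]^4)\}$. By Proposition~\ref{prop:spectrahedron-mcset-equivalent} together with the feasibility SDP \eqref{eq:inclusion-free-spectra-SDP}, $\gamma(X)$ is the value of the semidefinite program: minimize $\gamma$ over positive semidefinite matrices $Q_\varepsilon\in SM_2(\C)$, indexed by $\varepsilon\in\{\pm1\}^4$, subject to $\sum_\varepsilon Q_\varepsilon=\gamma I_2$ and $\sum_\varepsilon\varepsilon_i Q_\varepsilon=X_i$ for $i=1,\dots,4$; dually, $\gamma(X)=\sup\{\lambda_{\max}(\sum_i X_i\otimes Y_i):\ Y\in\cD_A^\square(2)\}$, where $\cD_A^\square$ is the matrix diamond in four variables. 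Thus the conjecture is equivalent to the assertion $\sup_X\gamma(X)=\sqrt{13}/2$, the supremum being over the family just described; the inequality $\ge$ is Proposition~\ref{prop:extreme-points-four-lines}, realized by $X_1=\sigma_Z$ together with the three equatorial involutions at $120^\circ$.

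The remaining and \emph{principal} task is to prove $\gamma(X)\le\sqrt{13}/2$ for every level-two matrix extreme point $X$. Eliminating the inner semidefinite variables turns this into a nonconvex polynomial optimization problem over the compact semialgebraic parameter set above, and I see two complementary routes. The first is a symmetrization argument: one would show that the objective can be pushed to an $S_3\times\mathbb Z_2$-invariant configuration without decreasing it, reducing to one or two free angles, and then finish by an explicit analysis, exhibiting for each such $X$ a feasible family $\{Q_\varepsilon\}$ with $\gamma\le\sqrt{13}/2$ — in the spirit of, though considerably more intricate than, the closed-form SDP solutions of Theorems~\ref{thm:SDPSolutions} and~\ref{thm:kSDPFeasible}, possibly aided by direct operator-norm estimates exploiting $X_i^2=I_2$. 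The second, if no clean symmetrization survives, is a computer-assisted proof: run the Lasserre hierarchy of Section~\ref{sec:lasserre} on the polynomial reformulation, certify exactness at a finite level via a flat-extension moment matrix (as in the numerics of Section~\ref{sec:four-qubits-numerics}), and rationalize the resulting sum-of-squares certificate. I expect this last step to be the real obstacle: the feasible region is genuinely nonconvex, the natural symmetrization is delicate because matrix extreme points do not combine linearly, and producing an exact rather than numerical certificate over a five-dimensional family is the main technical hurdle.
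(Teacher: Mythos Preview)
The statement you are addressing is a \emph{conjecture} in the paper, not a theorem. The paper offers no proof of it; immediately after stating Conjecture~\ref{conj:extreme-points-four-lines}, the authors write only that they ``did a search using a net over the set of free extreme points of the matrix cube at level~$2$,'' and later (Section~\ref{sec:four-qubits-numerics}) they report a numerical Lasserre-hierarchy bound of $1.8029$ as supporting evidence. So there is no ``paper's own proof'' to compare against.

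Your proposal, correspondingly, is not a proof but a research plan. The reduction you carry out in the first two paragraphs is correct and matches what the paper's machinery gives: Theorem~\ref{thm:MaxEigIsMinBallInclusionConst} reduces the upper bound to showing $\delmat\cD_A(2)\subseteq\tfrac{\sqrt{13}}{2}\,\cWmin([-1,1]^4)$, and the classification of level-two matrix extreme points as irreducible tuples of $2\times2$ involutions is exactly the content of the remark following Proposition~\ref{prop:extreme-points-four-lines}. But the third paragraph is where the actual work would have to happen, and you explicitly leave it open: you sketch a symmetrization route and a computer-assisted route, note that both are delicate, and identify the main obstacle without resolving it. That is an honest assessment of the state of affairs --- it is also precisely why this is a conjecture and not a theorem. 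No genuine proof exists here to evaluate; what you have written is a reasonable outline of how one might attack the problem, aligned with the paper's own perspective, but it does not close the gap.
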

To support this conjecture, we did a search using a net over the set of free extreme points of the matrix cube at level $2$. This means we could easily generate large collections of evenly spread out extreme points of the matrix cube to compare against.

\section{Noise robustness of measurement incompatibility as a polynomial optimization problem} \label{sec:opti-problems-from-qit}

\subsection{Optimization problem for any number of measurements and outcomes} \label{sec:opti-prob-for-any-number}
We will now use the formulation of measurement incompatibility in terms of free spectrahedra reviewed in Section \ref{sec:incompatibility} to find an optimization problem in non-commutative variables that we can tackle subsequently with methods from non-commutative polynomial optimization.

 \begin{thm}\label{thm:lambda-max-formulation}
     Let $g$, $d \in \mathbb N$ and $k_x \in \mathbb N$ for all $x \in [g]$. Let us consider the optimization problem
     \begin{align*}
         \mathrm{maximize} \quad & \quad \lambda_{\max}\left(\sum_{\substack{i \in [k_x-1]\\x \in [g]}}A_{i|x} \otimes X_{i|x}\right) & \\
         \mathrm{subject~to} \quad & \quad -\frac{k_x}{2} A_{i|x} \preceq I_d & \forall i \in [k_x -1], \forall x \in [g] \, , \\
         \quad & \quad \frac{k_x}{2} \sum_{i \in [k_x -1]} A_{i|x} \preceq I_d & \forall x \in [g] \, , \\
         \quad & \quad \sum_{x \in [g]} \sum_{i \in [k_x - 1]} \left(- \frac{2}{k_x} + 2 \delta_{i, j_x}\right) X_{i|x} \preceq I_d &\forall j \in [k_1] \times \ldots \times [k_g] \, ,\\
         \quad & \quad A_{i|x}, X_{i|x} \in SM_d(\mathbb C) & \forall i \in [k_x-1], \forall x \in [g] \, .
     \end{align*}
     Thus, there are $\sum_{x \in [g]} k_x$ many constraints on the $A_{i|x}$ and $\prod_{x \in [g]}k_x$ many constraints on the $X_{i|x}$. Then, the solution to this optimization problem is $1/s_\C(d,g,(k_1, \ldots, k_g))$.
 \end{thm}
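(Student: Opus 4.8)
The strategy is to recognize the feasible set of the stated problem as exactly the pairs $(A,X)$ with $A\in (\mathcal D^{\mathbb C}_{\jewel,\mathbf k})^\square(d)$ and $X\in\mathcal D^{\mathbb C}_{\jewel,\mathbf k}(d)$, so that the problem coincides with optimization problem \eqref{eq:SumEigMaxFixedn} of Theorem~\ref{thm:MaxEigIsMinBallInclusionConst} for the free polytope $\mathcal D^{\mathbb C}_{\jewel,\mathbf k}$ at level $n=d$; the value is then identified with $1/s_{\mathbb C}(d,g,\mathbf k)$ by unwinding the definition of the inclusion constant. Write $\mathcal J:=\mathcal D^{\mathbb C}_{\jewel,\mathbf k}$, a bounded free polytope (a direct sum of free simplices, Example~\ref{ex:MatrixJewel}). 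To decode the constraints on $A$, put $E_{i|x}:=\tfrac12 A_{i|x}+\tfrac1{k_x}I_d$ for $i\in[k_x-1]$ and $E_{k_x|x}:=I_d-\sum_{i\in[k_x-1]}E_{i|x}$, so that $A_{i|x}=2E_{i|x}-\tfrac2{k_x}I_d$ as in Proposition~\ref{prop:incompatibility-spectrahedra}. Direct substitution gives $-\tfrac{k_x}2 A_{i|x}\preceq I_d\iff E_{i|x}\succeq0$ and $\tfrac{k_x}2\sum_{i\in[k_x-1]}A_{i|x}\preceq I_d\iff E_{k_x|x}\succeq0$, so the first two families of constraints say precisely that each $(E_{i|x})_{i\in[k_x]}$ is a POVM; by Proposition~\ref{prop:incompatibility-spectrahedra} (in its $\cWmax$ formulation) this is equivalent to $A\in\cWmax(\mathcal J(1)^\bullet)=\mathcal J^\square$, i.e.\ $A\in\mathcal J^\square(d)$. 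For the constraints on $X$, unpack the defining pencil of $\mathcal J$ from Example~\ref{ex:MatrixJewel}: the operator $\sum_{x\in[g]}\sum_{i\in[k_x-1]}\big[\bigotimes_{\ell<x}I_{k_\ell}\otimes\operatorname{diag}(v^{(k_x)}_i)\otimes\bigotimes_{\ell>x}I_{k_\ell}\big]\otimes X_{i|x}$ is block diagonal with blocks indexed by $j=(j_1,\dots,j_g)\in[k_1]\times\cdots\times[k_g]$, the $j$-th block being $\sum_{x}\sum_i v^{(k_x)}_i(j_x)X_{i|x}=\sum_x\sum_i(-\tfrac2{k_x}+2\delta_{i,j_x})X_{i|x}$. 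Hence the third family of constraints is exactly $X\in\mathcal J(d)$.

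With this decoding, the optimization problem equals $\max\{\lambda_{\max}\big(\sum_{i,x}A_{i|x}\otimes X_{i|x}\big):A\in\mathcal J^\square(d),\ X\in\mathcal J(d)\}$. Since $\lambda_{\max}(\sum U_i\otimes V_i)=\lambda_{\max}(\sum V_i\otimes U_i)$ (the two operators are conjugate by the tensor flip), this is exactly problem \eqref{eq:SumEigMaxFixedn} for the free polytope $\cD_A:=\mathcal J$ at level $n=d$, with $X_{i|x}$ in the role of $X\in\mathcal J(d)$ and $A_{i|x}$ in the role of $Y\in\mathcal J^\square(d)=\cD_A^\square(d)$; the maximum is attained since $\mathcal J(d)$ and $\mathcal J^\square(d)$ are compact. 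By Theorem~\ref{thm:MaxEigIsMinBallInclusionConst} its value equals the optimal constant in \eqref{eq:MinBallContainmentFixedn} for $\mathcal J$ at level $d$, which (using $\cD_A(d)=(\matco(\cup_{j\le d}\delmat\cD_A(j)))(d)$, as recorded after the theorem) is $\min\{\gamma:\mathcal J(d)\subseteq\gamma\cWmin(\mathcal J(1))\}$; and by the Corollary immediately following Theorem~\ref{thm:MaxEigIsMinBallInclusionConst} this equals the optimal constant for $\mathcal J^\square$ in \eqref{eq:MinBallContainmentFixedn} at level $d$, namely $\min\{\gamma:\mathcal J^\square(d)\subseteq\gamma\cWmin(\mathcal J^\square(1))\}$. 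Since $\mathcal J^\square(1)=\mathcal J(1)^\bullet$, equation \eqref{eq:PolytopeDual} gives $\cWmin(\mathcal J^\square(1))=\cWmin(\mathcal J(1)^\bullet)=\mathcal J^\circ$, so the value of the problem equals $\min\{\gamma:\mathcal J^\square(d)\subseteq\gamma\mathcal J^\circ(d)\}$.

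It remains to identify this quantity with $1/s_{\mathbb C}(d,g,\mathbf k)$. By Proposition~\ref{eq:inclusion-constant-compatibility-degree}, $s_{\mathbb C}(d,g,\mathbf k)$ is the maximal inclusion constant at level $d$ of a defining tuple of $\mathcal J$; and in the definition of that inclusion constant the generic tuple $B\in SM_d(\mathbb C)^{\sum_x(k_x-1)}$ plays exactly the role of $A=(A_{i|x})$ here. Using Proposition~\ref{prop:spectrahedron-mcset-equivalent}(1), $\mathcal J(1)\subseteq\mathcal D_B(1)\iff B\in\mathcal J^\square(d)$, while $s\,\mathcal J\subseteq\mathcal D_B\iff\mathcal J\subseteq\mathcal D_{sB}\iff sB\in\mathcal J^\circ(d)$ by Proposition~\ref{prop:spectrahedron-mcset-equivalent}(2). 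Hence the set of inclusion constants at level $d$ is $\{s\in[0,1]:s\,\mathcal J^\square(d)\subseteq\mathcal J^\circ(d)\}$, and its supremum $s_{\mathbb C}(d,g,\mathbf k)$ satisfies $1/s_{\mathbb C}(d,g,\mathbf k)=\min\{\gamma:\mathcal J^\square(d)\subseteq\gamma\mathcal J^\circ(d)\}$, which is precisely the value computed in the previous paragraph. This completes the proof.

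I expect the main obstacle to be the combinatorial bookkeeping in the first paragraph — matching the $\prod_x k_x$ inequalities in the statement with the block-diagonal decomposition of the jewel's pencil over the multi-indices $(j_1,\dots,j_g)$, and pinning down the normalization constants relating $A_{i|x}$ to the POVM effects $E_{i|x}$ — together with making sure that in the second and third paragraphs the polar-dual and level-$d$ reductions are applied to tuples of the correct size, so that \cite[Theorem~3.5]{helton_matricial_2013} (and the equivalence with \eqref{eq:MinBallContainmentFixedn}) is legitimately invoked; the conceptual content itself is already supplied by Theorem~\ref{thm:MaxEigIsMinBallInclusionConst}, its Corollary, and Propositions~\ref{prop:spectrahedron-mcset-equivalent} and~\ref{eq:inclusion-constant-compatibility-degree}.
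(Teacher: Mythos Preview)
Your proof is correct. The identification of the feasible set in your first paragraph --- reading the constraints on $A$ as POVM conditions via the substitution $E_{i|x}=\tfrac12 A_{i|x}+\tfrac1{k_x}I_d$ (hence $A\in\mathcal J^\square(d)$ by Proposition~\ref{prop:incompatibility-spectrahedra}), and unpacking the block-diagonal pencil of $\mathcal J$ to read the constraints on $X$ as $X\in\mathcal J(d)$ --- matches the paper's argument exactly. Where you diverge is in evaluating the optimum: you route the problem through Theorem~\ref{thm:MaxEigIsMinBallInclusionConst} and its Corollary to obtain $\min\{\gamma:\mathcal J^\square(d)\subseteq\gamma\,\mathcal J^\circ(d)\}$, and then unwind the definition of $s_A^{\mathbb C}(d)$ via Proposition~\ref{prop:spectrahedron-mcset-equivalent} and Proposition~\ref{eq:inclusion-constant-compatibility-degree}. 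The paper argues more directly: for fixed $A$, maximizing over $X\in\mathcal J(d)$ yields $1/s$ where $s$ is the largest scalar with $\mathcal J(d)\subseteq\tfrac1s\,\mathcal D_A(d)$; since $sA_{i|x}=2E_{i|x}(s)-\tfrac2{k_x}I_d$ corresponds to the noisy effects, Proposition~\ref{prop:incompatibility-spectrahedra}(2) then immediately identifies the overall optimum with $1/s_{\mathbb C}(d,g,\mathbf k)$. Your route has the merit of exhibiting Theorem~\ref{thm:lambda-max-formulation} as a special case of the general framework of Section~\ref{sec:extreme-points} (and in effect proves Theorem~\ref{thm:dual-free-polytopes} along the way); the paper's route is shorter, stays closer to the quantum-information interpretation, and avoids the detour through the Corollary.
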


 \begin{proof}
     We want to make use of Proposition \ref{prop:incompatibility-spectrahedra}. From \cite[Definition 4.1 and Proposition 5.1]{bluhm2020compatibility}, we recall that $\mathcal D_{\jewel, \mathbf{k}}(1) \subseteq \mathcal D_A(1)$ is equivalent to $\mathcal D_{\jewel, k_x}(1) \subseteq \mathcal D_{(A_{i|x})_{i \in [k_x-1]}}(1)$ for all $x \in [g]$. This containment can be checked on the extreme points of $\mathcal D_{\jewel, k_x}(1)$. Using the expression for these extreme points in \cite[Lemma 4.3]{bluhm2020compatibility}, we obtain the constraints on $\{A_{i|x}\}_{i \in [k_x-1], x \in [g]}$. By the first point of Proposition \ref{prop:incompatibility-spectrahedra} and setting $A_{i|x} = 2E_{i|x} - \frac{2}{k_i}I_d$, we find that we are optimizing over all POVMs $(E_{i|x})_{i \in [k_x]}$, $x \in [g]$. The constraints on $\{X_{i|x}\}_{i \in [k_x-1], x \in [g]}$ are equivalent to $X \in \mathcal D_{\jewel, k_x}(d)$. Finally, it is readily verified that 
     \begin{equation*}
         \frac{1}{s}=\lambda_{\max}\left(\sum_{\substack{i \in [k_x-1]\\x \in [g]}}A_{i|x} \otimes X_{i|x}\right)
     \end{equation*}
     for $s \neq 0$ is equivalent to finding largest $s$ such that $X \in \mathcal D_{s\cdot A}(d) = \frac{1}{s} \mathcal D_{ A}(d)$. Hence, the optimization problem gives the largest $s$ such that
         $\mathcal D_{\jewel, \mathbf{k}} \subseteq \mathcal D_{s\cdot A}$ for all $A \in SM_d(\mathbb C)^g$ such that $\mathcal D_{\jewel, \mathbf{k}}(1) \subseteq \mathcal D_A(1)$. In order to make the connection to measurement incompatibility, note that using $A_{i|x} = 2E_{i|x} - \frac{2}{k_i}I_d$, we find that $sA_{i|x} = 2E_{i|x}(s) - \frac{2}{k_i}I_d$. Using the second point of Proposition \ref{prop:incompatibility-spectrahedra}, the assertion then follows.
 \end{proof}
If we do not fix the dimension $d$ in the optimization problem, we obtain $s^{\mathbb C}_{\mathrm{di}}(g,(k_1, \ldots, k_g))^{-1}$, where
\begin{equation*}
    s^{\mathbb F}_{\mathrm{di}}(g,(k_1, \ldots, k_g)) := \inf_{d \in \mathbb N}  s_{\mathbb F}(d,g,(k_1, \ldots, k_g)) \, .
\end{equation*}
For this dimension-independent problem, we can use the NPA hierarchy to compute its optimal value. We will do this in Section \ref{sec:line_simplex_num} for the example of one measurement with $2$ outcomes and one measurement with $k$ outcomes.

At the cost of introducing an additional variable, we can replace the maximal eigenvalue in the objective function by a trace:
 \begin{thm} \label{thm:trace-formulation}
     Let $g$, $d \in \mathbb N$ and $k_x \in \mathbb N$ for all $x \in [g]$. Let us consider the optimization problem
     \begin{align*}
         \mathrm{maximize} \quad & \quad \frac{1}{d} \sum_{\substack{i \in [k_x-1]\\x \in [g]}}\tr\left(A_{i|x} X_{i|x}\right) & \\
         \mathrm{subject~to} \quad & \quad -\frac{k_x}{2} A_{i|x} \preceq N & \forall i \in [k_x -1], \forall x \in [g] \, , \\
         \quad & \quad \frac{k_x}{2} \sum_{i \in [k_x -1]} A_{i|x} \preceq N & \forall x \in [g] \, , \\
         \quad & \quad \sum_{x \in [g]} \sum_{i \in [k_x - 1]} \left(- \frac{2}{k_x} + 2 \delta_{i, j_x}\right) X_{i|x} \preceq N&\forall j \in [k_1] \times \ldots \times [k_g] \, ,\\
         \quad & \quad N \succeq 0  \, ,& \\
         \quad & \quad \frac{1}{d} \tr(N) = 1 \, , & \\
         \quad & \quad A_{i|x}, X_{i|x} \in SM_d(\mathbb C) & \forall i \in [k_x-1], \forall x \in [g] \, , \\
         \quad & \quad N \in  SM_d(\mathbb C)\, . & 
     \end{align*}
    Then, the solution to this optimization problem is $1/s_\C(d,g,(k_1, \ldots, k_g))$.
 \end{thm}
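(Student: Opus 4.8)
The plan is to deduce the statement from Theorem~\ref{thm:lambda-max-formulation}, which already identifies the optimal value of the $\lambda_{\max}$-program with $1/s_\C(d,g,\mathbf{k})$. The one new ingredient is the variational identity $\lambda_{\max}(M)=\max\{\tr(M\rho):\rho\succeq 0,\ \tr\rho=1\}$ for Hermitian $M$: applied to $M=\sum_{i,x}A_{i|x}\otimes X_{i|x}$ it shows that $1/s_\C(d,g,\mathbf{k})$ is the supremum of $\tr\big((\sum_{i,x}A_{i|x}\otimes X_{i|x})\rho\big)$ over all tuples $(A_{i|x})$, $(X_{i|x})$ obeying the constraints of Theorem~\ref{thm:lambda-max-formulation} (equivalently, $A$ ranging over POVMs via Proposition~\ref{prop:incompatibility-spectrahedra} and $X$ over $\mathcal D_{\jewel,\mathbf{k}}(d)$) and all density operators $\rho$ on $\C^d\otimes\C^d$. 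The content of the trace formulation is that this auxiliary $d^2\times d^2$ state $\rho$ can be absorbed into the matrices $A_{i|x}$, $X_{i|x}$ at the price of a single $d\times d$ positive matrix $N$, which plays the role of $\Phi(I_d)$ for the channel $\Phi$ whose Choi matrix is $\rho$.

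Concretely, I would prove both inequalities. For the lower bound: take feasible $(A_{i|x}),(X_{i|x})$ and a unit top eigenvector $v$ of $M$, identify $v$ with a matrix $V\in M_d(\C)$ by vectorization, take a singular value decomposition $V=U_1\Sigma U_2^*$, conjugate $A_{i|x}$ by $U_1$ and $\overline{X_{i|x}}$ by $U_2$, sandwich both by $\Sigma^{1/2}$ after rescaling $\Sigma$ so that $\tfrac1d\tr\Sigma=1$, and put $N:=\Sigma$; the matrix-jewel inequalities are preserved because they are invariant under unitary conjugation and under the substitution sending an inequality $Z\preceq I_d$ to $\Sigma^{1/2}Z\Sigma^{1/2}\preceq\Sigma$, while closure of $\mathcal D_{\jewel,\mathbf{k}}$ under complex conjugation (it is a free polytope, see Example~\ref{ex:MatrixJewel} and the discussion preceding it) absorbs the transpose introduced by vectorization, and the resulting objective $\tfrac1d\sum_{i,x}\tr(A_{i|x}X_{i|x})$ dominates $\lambda_{\max}(M)$. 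For the upper bound: given feasible $(A_{i|x}),(X_{i|x}),N$ of the trace program with $N$ invertible, the normalized tuples $\hat A_{i|x}:=N^{-1/2}A_{i|x}N^{-1/2}$ and $\hat X_{i|x}:=N^{-1/2}X_{i|x}N^{-1/2}$ are feasible for Theorem~\ref{thm:lambda-max-formulation}, and one bounds $\tfrac1d\sum_{i,x}\tr(A_{i|x}X_{i|x})$ by $\lambda_{\max}\big(\sum_{i,x}\hat A_{i|x}\otimes\overline{\hat X_{i|x}}\big)$ by evaluating this pencil on the test vector built from the spectral data of $N$ and using $\tfrac1d\tr N=1$. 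If $N$ is singular, the jewel inequalities force every $A_{i|x}$ and $X_{i|x}$ to be block diagonal with vanishing block on $\ker N$, so the program compresses to $\operatorname{ran}N$, and one finishes using the dimension monotonicity $s_\C(d',g,\mathbf{k})\le s_\C(d,g,\mathbf{k})$ for $d'\ge d$ from Proposition~\ref{prop:bounds-incompat-degree}.

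The main obstacle is precisely the bookkeeping linking the $d^2$-dimensional picture (the state $\rho$, or the Choi matrix of a map $\Phi$ with $\Phi(I_d)=N$) to the $d$-dimensional one: one must track the normalization factors of $d$ versus $d^2$ coming from (un)normalized maximally entangled vectors, keep straight the transposes and complex conjugations produced by the identity $(A\otimes B)\operatorname{vec}(V)=\operatorname{vec}(BVA^T)$ (harmless only because the jewel is conjugation-invariant), and handle, in channel language, the discrepancy between $\Phi(I_d)$ and $\Phi^*(I_d)$ so that a single $N$ can serve both the $A$-constraints and the $X$-constraints — this is resolved by a symmetric choice of purification for $\rho$, and it is the crux of the argument. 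Once these normalizations are pinned down, the verifications that the transformed tuples satisfy the matrix-jewel inequalities are routine.
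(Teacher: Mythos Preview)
There is a genuine gap in your upper bound, and in fact the inequality you need there is false as the theorem is printed. Take $g=1$, $k_1=2$, $d=2$ and set $A_{1|1}=X_{1|1}=N=\operatorname{diag}(2-\epsilon,\epsilon)$; all the constraints hold, yet the objective equals $\tfrac12\tr(N^2)=2-2\epsilon+\epsilon^2\to 2$, while $1/s_\C(2,1,(2))=1$. Concretely, after your substitution $\hat A=N^{-1/2}AN^{-1/2}$, $\hat X=N^{-1/2}XN^{-1/2}$ the objective becomes $\tfrac1d\tr\!\big(N\hat A\,N\hat X\big)$. Any test vector reproducing this quadratic form, e.g.\ $v=(N^{1/2}\otimes\overline{N^{1/2}})\Omega$, satisfies $\|v\|^2=\tfrac1d\tr(N^2)\ge1$, so one only obtains $\tfrac1d\tr(AX)\le\tfrac{\tr(N^2)}{d}\,\lambda_{\max}$, the factor going the wrong way. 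The unit vector $v=(N^{1/2}\otimes I)\Omega$ that would exploit $\tfrac1d\tr N=1$ gives instead $v^*\big(\sum\hat A\otimes\overline{\hat X}\big)v=\tfrac1d\tr(A\hat X)$, not $\tfrac1d\tr(AX)$; so the step ``bound \dots\ by evaluating on the test vector built from the spectral data of $N$ and using $\tfrac1d\tr N=1$'' does not close. Your ``symmetric purification'' does \emph{not} resolve the $\Phi(I_d)$ vs.\ $\Phi^*(I_d)$ discrepancy in this direction; the counterexample shows it cannot be resolved.

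The paper's proof avoids this by only bending the $A$-side: it writes a top eigenvector as $(V\otimes I)\Omega$, sets $N=V^*V$ and $\tilde A_{i|x}=V^*A_{i|x}V$, and leaves the $X_{i|x}$ untouched. This establishes the statement with the $X$-constraints reading $\preceq I_d$ (as in Theorem~\ref{thm:lambda-max-formulation}); the printed $\preceq N$ on the $X$-side is a slip, refuted by the example above. Your symmetric SVD construction is a valid proof of the easy inequality for the problem as printed, but for the intended statement the correct move is \emph{asymmetric}: transform only $A$. After that the transpose/conjugation bookkeeping you flag is indeed the only remaining detail, and the paper handles singular $N$ via $N+\epsilon I=V^*V$ and $\epsilon\to0$ rather than by dimension monotonicity.
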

 \begin{proof}
     Clearly,
\begin{equation*}
    \lambda I_{d^2} - \sum_{\substack{i \in [k_x-1]\\x \in [g]}}A_{i|x} \otimes X_{i|x} \succeq 0 \iff \psi^\ast  \left(\lambda I -\sum_{\substack{i \in [k_x-1]\\x \in [g]}}A_{i|x} \otimes X_{i|x}\right) \psi \geq 0 \quad \forall \psi \in \mathbb C^{d^2},~\|\psi\| = 1.
\end{equation*}
Let 
\begin{equation*}
    \Omega = \frac{1}{\sqrt{d}}\sum_{i = 1}^d e_i \otimes e_i,
\end{equation*}
where $e_i$ is the standard basis of $\mathbb C^d$. Then, for any  $\psi \in \mathbb C^{d^2}$, $\|\psi\| = 1$, there exists a matrix $V \in  M_d(\mathbb C)$ with $\|V\|_2 = \sqrt{d}$ such that 
\begin{equation*}
    \psi = (V \otimes I_d) \Omega,
\end{equation*}
which follows from the Schmidt decomposition. Moreover, for any $a \in \mathbb R^k$, it holds that if
\begin{equation*}
    \sum_{x \in [g]} \sum_{i \in [k_x-1]} a_{i|x} A_{i|x} \preceq I_d \, ,
\end{equation*}
then 
\begin{equation*}
    V^\ast \left( \sum_{x \in [g]} \sum_{i \in [k_x-1]} a_{i|x} A_{i|x} \right) V \preceq V^\ast V.
\end{equation*}
Thus, we can define $\widetilde A_{i|x} := V^\ast A_{i|x} V $ for all $i \in [k_x]$, $x \in [g]$ and $ N := V^\ast V$. 
Moreover, it holds that 
\begin{equation*}
    \Omega^\ast \left(\sum_{\substack{i \in [k_x-1]\\x \in [g]}} \widetilde A_{i|x} \otimes X_{i|x}\right) \Omega = \frac{1}{d}\sum_{\substack{i \in [k_x-1]\\x \in [g]}} \tr[\widetilde A_{i|x}^T  X_{i|x}]\, .
\end{equation*}
Thus, the solution of the optimization problem in Theorem \ref{thm:trace-formulation} is larger or equal the one in Theorem \ref{thm:lambda-max-formulation}. Conversely, as $N \succeq 0$, we can write $N+\epsilon I$ as $N+\epsilon I=V^* V$ for some invertible $V \in M_d(\mathbb C)$ and some $\epsilon >0$. Then, we can reverse the above steps. We arrive at the conclusion that if $\lambda^\ast_1$ is the optimal value of the optimization problem in Theorem \ref{thm:trace-formulation} and $\lambda^\ast_2$ the optimal value of the optimization problem in Theorem \ref{thm:lambda-max-formulation}, then
\begin{equation*}
    \lambda^\ast_1 \leq (1+\epsilon)\lambda^\ast_2.
\end{equation*} 
As $\epsilon > 0$ was arbitrary, the assertion follows.
 \end{proof}

 Finally, we can give a version that is less good for numerical optimization but useful for mathematical analysis because it is of the form of the problems studied in Section \ref{sec:extreme-points}.

\begin{thm}\label{thm:dual-free-polytopes}
     Let $g$, $d \in \mathbb N$ and $k_x \in \mathbb N$ for all $x \in [g]$. Let us consider the optimization problem
     \begin{align*}
         \mathrm{maximize} \quad & \quad \lambda_{\max}\left(\sum_{\substack{i \in [k_x-1]\\x \in [g]}}A_{i|x} \otimes X_{i|x}\right) & \\
         \mathrm{subject~to} \quad & \quad A \in \mathcal D_{\jewel, \mathbf{k}}^{\square,\mathbb F}(d) \,,\\ 
         \quad & \quad X \in \mathcal D^{\mathbb F}_{\jewel, \mathbf{k}}(d) \,,
     \end{align*}
    Then, the solution to this optimization problem is $1/s_{\mathbb F}(d,g,(k_1, \ldots, k_g))$. 
 \end{thm}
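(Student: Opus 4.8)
The plan is to deduce Theorem~\ref{thm:dual-free-polytopes} from Theorem~\ref{thm:lambda-max-formulation} together with the dictionary between the matrix jewel, its dual free polytope, and the constraints appearing in the earlier optimization problem. First I would recall from Example~\ref{ex:MatrixJewel} that $\mathcal D^{\mathbb F}_{\jewel,\mathbf k}$ is a direct sum of free simplices, so that by Proposition~\ref{prop:CartesianPolyDualIsDirect} its dual free polytope $\mathcal D^{\mathbb F,\square}_{\jewel,\mathbf k}$ is the Cartesian product of the dual free polytopes of the individual free simplices. Unwinding Proposition~\ref{prop:spectrahedron-mcset-equivalent}~(1), a tuple $A=(A_{i|x})$ lies in $\mathcal D^{\mathbb F,\square}_{\jewel,\mathbf k}(d)=\mathcal W^{\max}_{\mathbb F}(\mathcal D^{\mathbb F}_{\jewel,\mathbf k}(1)^\bullet)(d)$ if and only if $\mathcal D^{\mathbb F}_{\jewel,\mathbf k}(1)\subseteq \mathcal D^{\mathbb F}_A(1)$, which by Proposition~\ref{prop:incompatibility-spectrahedra} (over $\C$) or Corollary~\ref{cor:incompatibility-spectrahedra-real} (over $\R$) is precisely the POVM condition, and which, using the description of the extreme points of $\mathcal D_{\jewel,k_x}(1)$ from \cite{bluhm2020compatibility} as in the proof of Theorem~\ref{thm:lambda-max-formulation}, is exactly the first two families of LMI constraints there. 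Symmetrically, $X\in\mathcal D^{\mathbb F}_{\jewel,\mathbf k}(d)$ is exactly the third family of constraints on the $X_{i|x}$.

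Next I would observe that the objective function is literally the same in both statements, $\lambda_{\max}\big(\sum A_{i|x}\otimes X_{i|x}\big)$, and that the preceding arguments show the feasible sets of the two optimization problems coincide. Hence their optimal values agree, and Theorem~\ref{thm:lambda-max-formulation} identifies this common value as $1/s_{\C}(d,g,\mathbf k)$ when $\mathbb F=\C$. This gives the complex case immediately. I would phrase the bookkeeping so that it is clear that Theorem~\ref{thm:dual-free-polytopes} is just Theorem~\ref{thm:lambda-max-formulation} rewritten in the language of dual free polytopes, which is the form needed to invoke the machinery of Section~\ref{sec:extreme-points} (in particular Theorem~\ref{thm:MaxEigIsMinBallInclusionConst}).

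For the real case $\mathbb F=\R$ I would run the same chain of equivalences but using the real versions of the ingredients: the reality of the defining pencil of a free polytope (so all the sets in play are closed under complex conjugation), Corollary~\ref{cor:incompatibility-spectrahedra-real} in place of Proposition~\ref{prop:incompatibility-spectrahedra}, and the real analogue of the containment-as-SDP reformulation. The key point is that the compatibility degree $s_{\R}(d,g,\mathbf k)$ was defined via compatibility of \emph{real-valued} POVMs, and, as noted just before Lemma~\ref{lem:real-complex-inclusion}, real measurements are compatible iff they admit a real-valued joint measurement, so $s_{\R}(d,g,\mathbf k)$ is governed by the inclusion $\mathcal D^{\R}_{\jewel,\mathbf k}\subseteq \mathcal D^{\R}_A$; combined with Proposition~\ref{prop:spectrahedron-mcset-equivalent}~(2) this again matches the objective $1/\lambda_{\max}$. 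The main obstacle I anticipate is purely bookkeeping: making sure the translation between the concrete LMI constraints of Theorem~\ref{thm:lambda-max-formulation} and the abstract membership conditions $A\in\mathcal D^{\square}_{\jewel,\mathbf k}(d)$, $X\in\mathcal D_{\jewel,\mathbf k}(d)$ is stated cleanly for both fields simultaneously, and that the degenerate normalization ($s=0$, equivalently $\lambda_{\max}=\infty$) is handled exactly as in the proof of Theorem~\ref{thm:lambda-max-formulation}. No genuinely new mathematics is required beyond what is already assembled above.
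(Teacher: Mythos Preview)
Your proposal is correct and takes essentially the same approach as the paper: both identify the constraints of Theorem~\ref{thm:lambda-max-formulation} with membership in $\mathcal D_{\jewel,\mathbf k}^{\square,\mathbb F}(d)$ and $\mathcal D_{\jewel,\mathbf k}^{\mathbb F}(d)$ via Proposition~\ref{prop:spectrahedron-mcset-equivalent} and Proposition~\ref{prop:incompatibility-spectrahedra}/Corollary~\ref{cor:incompatibility-spectrahedra-real}, and then read off the optimal value. The only point the paper makes more explicit than you do is the justification that it suffices to test $X$ at level $d$ rather than in all of $\mathcal D_{\jewel,\mathbf k}^{\mathbb F}$: the paper cites \cite[Lemma~2.3]{helton2019dilations} for $\mathbb F=\R$ and \cite[Corollary~4.6]{bluhm2018joint} for $\mathbb F=\C$, whereas you absorb this into ``the containment-as-SDP reformulation''; you should cite these (or \cite[Theorem~3.5]{helton_matricial_2013}) explicitly when you write up the real case.
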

\begin{proof}
    We proceed similarly to the proof of Theorem  \ref{thm:lambda-max-formulation}. By Proposition \ref{prop:incompatibility-spectrahedra} and Corollary \ref{cor:incompatibility-spectrahedra-real}, for $A$ such that $A_{i|x} \in SM_d(\mathbb F)$ for all $i \in [k_x-1]$, $x \in [g]$, $A$ corresponds to a tuple of compatible measurements (using the transformation $A_{i|x} = 2E_{i|x} - \frac{2}{k_i}I_d$) if and only if 
    \begin{equation*}
        \lambda_{\max}\left(\sum_{\substack{i \in [k_x-1]\\x \in [g]}}A_{i|x} \otimes X_{i|x}\right) \leq 1
    \end{equation*}
    for all $X \in \mathcal D^{\mathbb F}_{\jewel, \mathbf{k}}$. Using \cite[Lemma 2.3]{helton2019dilations} for $\mathbb F = \R$ and \cite[Corollary 4.6]{bluhm2018joint} for $\mathbb F = \C$, we can restrict to $X \in \mathcal D^{\mathbb F}_{\jewel, \mathbf{k}}(d)$. To optimize over all $A$ that correspond to tuples of valid measurements, we combine Proposition \ref{prop:incompatibility-spectrahedra} and Corollary \ref{cor:incompatibility-spectrahedra-real} with Proposition \ref{prop:spectrahedron-mcset-equivalent} to find that $A$ corresponds to a tuples of valid measurements described by matrices of $\mathbb F$ if and only if $A \in \mathcal D_{\jewel, \mathbf{k}}^{\square,\mathbb F}(d)$. As for $A_{i|x} = 2E_{i|x} - \frac{2}{k_i}I_d$, it holds that $sA_{i|x} = 2E_{i|x}(s) - \frac{2}{k_i}I_d$, we find that the optimal value is indeed $1/s_{\mathbb F}(d,g,(k_1, \ldots, k_g))$. 
\end{proof}

\begin{remark}
    Considering the form of $D_{\jewel, \mathbf{k}}^\square$, it is easy to see that $D_{\jewel, \mathbf{k}}^\square$ is a Cartesian product product of free simplices in $k_i-1$ variables given in a particular parametrization. See \cite{bluhm2020compatibility} for details.
\end{remark}

\subsection{One dichotomic and one arbitrary measurement}

In this section, we will look more closely at the situation in which we have one measurement with $2$ outcomes and one measurement with $k$ outcomes. In this case, we have the following bounds on the minimum compatibility degree from Proposition \ref{prop:bounds-incompat-degree}:
\begin{equation*}
    \frac{1}{2} < \frac{2+kd}{2(1+kd)} < \frac{1}{2} \left(1+\frac{1}{\sqrt{2k}+1}\right) \leq s_{\C}(d,2,(2,k)) \leq \frac{1}{\sqrt{2}} \,.
\end{equation*}
Thus, in particular
\begin{equation*}
   0.64 \approx \frac{1}{10}(4+\sqrt{6}) \leq  s_{\C}(2,2,(2,3)) \leq \frac{1}{\sqrt{2}} \approx 0.71 \,.
\end{equation*}
Using the results from Section \ref{sec:extreme-points}, we can in fact compute $s_{\R}(d,2,(2,k))$ exactly, i.e., the minimum compatibility degree if we restrict real measurements. 
\begin{thm} \label{thm:results-for-2-plus-k} Let $k \in \mathbb N$. Then, 
    \begin{equation*}
        s_{\R}(2,2,(2,k+1)) = \frac{k-1+\sqrt{1+k}}{2k}=\frac{1}{2}\left(1+\frac{1}{1+\sqrt{k+1}}\right) \, .
    \end{equation*}
A pair of maximally incompatible measurements is
\begin{align*}
    E &= \left( \frac{1}{2}\begin{pmatrix}
        1 & 1 \\ 1 & 1
    \end{pmatrix},  \frac{1}{2}\begin{pmatrix}
        1 & -1 \\ -1 & 1
    \end{pmatrix} \right)\,, \\
    F &= \left( \begin{pmatrix}
        0 & 0 \\ 0 & 1
    \end{pmatrix}, \begin{pmatrix}
        1 & 0 \\ 0 & 0
    \end{pmatrix}, \begin{pmatrix}
        0 & 0 \\ 0 & 0
    \end{pmatrix}, \ldots, \begin{pmatrix}
        0 & 0 \\ 0 & 0
    \end{pmatrix} \right),
\end{align*}
i.e, a measurement in the Hadamard basis and one in the computational basis, padded with outcomes that have zero probability to occur.
\end{thm}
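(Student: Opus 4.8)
The plan is to connect $s_{\R}(2,2,(2,k+1))$ to the inclusion constant computed in Theorem~\ref{thm:kSimplexPlusLineOptimum}. First I would invoke the real version of Proposition~\ref{eq:inclusion-constant-compatibility-degree}: combining Corollary~\ref{cor:incompatibility-spectrahedra-real} with Proposition~\ref{prop:spectrahedron-mcset-equivalent}, the minimum compatibility degree $s_{\R}(2,2,(2,k+1))$ equals the maximal inclusion constant $s^{\R}_A(2)$, where $A$ is a minimal defining tuple for the real matrix jewel $\mathcal D^{\R}_{\jewel,\mathbf{k}}$ with $\mathbf k = (2,k+1)$. Since the matrix jewel $\mathcal D^{\R}_{\jewel,(2,k+1)}$ is, up to an invertible linear transformation, the direct sum of a real matrix interval and a real free simplex in $k$ variables---precisely the free spectrahedron $\mathcal D_{B(k)}$ from the main theorem restatement, equivalently $\mathcal D^{\square}_{A(k)}$---and since invertible linear transformations preserve inclusion constants by \cite[Lemma 3.1]{passer2018minimal}, we get $s_{\R}(2,2,(2,k+1)) = s^{\R}_{B(k)}(2)$.

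Next I would translate $s^{\R}_{B(k)}(2)$ into the optimization problem \eqref{eq:SumEigMaxFixedn}. By Theorem~\ref{thm:MaxEigIsMinBallInclusionConst}, the reciprocal $1/s^{\R}_{B(k)}(2)$ equals the optimal value of the $n=2$ problem
\[
\sup\Big\{\lambda_{\max}\big(\textstyle\sum X_i \otimes Y_i\big) : X \in \mathcal D_{B(k)}^{\R}(2),\ Y \in (\mathcal D_{B(k)}^{\R})^{\square}(2)\Big\},
\]
and since $(\mathcal D_{B(k)})^{\square} = \mathcal D_{A(k)}$ (duality of Cartesian product and direct sum, Proposition~\ref{prop:CartesianPolyDualIsDirect}), this is exactly the problem solved by Theorem~\ref{thm:kSimplexPlusLineOptimum}, whose optimal value is $\gamma(k) = \frac{2k}{k-1+\sqrt{1+k}}$. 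Hence
\[
s_{\R}(2,2,(2,k+1)) = \frac{1}{\gamma(k)} = \frac{k-1+\sqrt{1+k}}{2k},
\]
and a short algebraic manipulation rewrites this as $\tfrac{1}{2}\big(1+\tfrac{1}{1+\sqrt{k+1}}\big)$, using that $\sqrt{k+1}^2 - 1 = k$. One subtlety to check carefully is that Theorem~\ref{thm:MaxEigIsMinBallInclusionConst} as stated gives the dimension-free inclusion constant via the free-extreme-point problem \eqref{eq:SumEigMax}; for the level-$2$ assertion one needs that for $\mathcal D_{B(k)}$ the containment $\mathcal D_{B(k)}(2)\subseteq \gamma\,\cWmin(\mathcal D_{B(k)}(1))$ is exactly what governs $s_{\R}(2,2,(2,k+1))$, which follows because the relevant measurements live in dimension $d=2$ and the SDP \eqref{eq:inclusion-free-spectra-SDP} for free spectrahedra defined by $2\times2$ matrices only needs to be tested at level $2$ by \cite[Theorem 3.5]{helton_matricial_2013}.

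Finally, for the maximally incompatible pair, I would unwind the optimizers $X$ and $Y$ from Theorem~\ref{thm:kSimplexPlusLineOptimum} through the change of variables $A_{i|x} = 2E_{i|x} - \frac{2}{k_i}I_d$ and the invertible linear transformation identifying $\mathcal D_{B(k)}$ with the matrix jewel. The tuple achieving the optimum corresponds to $E$ being the Hadamard-basis POVM $\big(\tfrac12\begin{pmatrix}1&1\\1&1\end{pmatrix}, \tfrac12\begin{pmatrix}1&-1\\-1&1\end{pmatrix}\big)$ and $F$ being the computational-basis measurement padded with $k-1$ zero effects; one checks directly that these are valid POVMs (the inclusion $\mathcal D^{\R}_{\jewel,\mathbf k}(1)\subseteq\mathcal D^{\R}_A(1)$) and that applying white noise with parameter exactly $\frac{k-1+\sqrt{1+k}}{2k}$ makes them compatible while any smaller parameter does not, by substituting into the feasibility SDP. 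The main obstacle is bookkeeping: carefully tracking the explicit invertible linear transformation between the parametrization of $\mathcal D_{A(k)}$ / $\mathcal D_{B(k)}$ used in Section~\ref{sec:optimizers-direct-sums} and the standard matrix-jewel parametrization, so that the abstract optimizers translate into the stated concrete measurements; the numerical value itself is immediate once this identification is pinned down.
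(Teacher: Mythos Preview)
Your proposal is correct and follows essentially the same route as the paper. The paper's proof is slightly more concise because it invokes Theorem~\ref{thm:dual-free-polytopes} directly (which already packages the identification of $1/s_{\mathbb F}(d,g,\mathbf k)$ with the eigenvalue optimization over $\cD_{\jewel,\mathbf k}^{\square}(d)\times\cD_{\jewel,\mathbf k}(d)$), whereas you rebuild that identification from Corollary~\ref{cor:incompatibility-spectrahedra-real}, Proposition~\ref{prop:spectrahedron-mcset-equivalent}, and Theorem~\ref{thm:MaxEigIsMinBallInclusionConst}; both then reduce to Theorem~\ref{thm:kSimplexPlusLineOptimum} via the same invertible linear transformation and \cite[Lemma~3.1]{passer2018minimal}, and both recover the explicit measurements by pushing the optimizers $X$ back through that transformation.
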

\begin{proof}
    It is enough to realize that the optimization problem in Theorem \ref{thm:dual-free-polytopes} is the same as in Theorem \ref{thm:kSimplexPlusLineOptimum}, up to an invertible linear transformation of the polytope, which does not affect the optimal value \cite[Lemma 3.1]{passer2018minimal}. The result then follows from Theorem \ref{thm:kSimplexPlusLineOptimum} and an application of the invertible linear transformation to the $X_i$ which achieve the maximum value.
\end{proof}
In the same way, Conjecture \ref{conj:kSimplexPlusLineOptimum} would imply that the theorem still holds if we replace the $d=2$ with any dimension $d \geq 2$ and if we allow complex measurements as well.
\begin{conj}\label{conj:conj-for-2-plus-k}
   Let $k \in \mathbb N$. Then,
    \begin{equation*}
        s_\C(d,2,(2,k+1)) = \frac{k-1+\sqrt{1+k}}{2k}=\frac{1}{2}\left(1+\frac{1}{1+\sqrt{k+1}}\right) \, .
    \end{equation*}
A pair of maximally incompatible measurements is
\begin{align*}
    E &= \left( \frac{1}{2}\begin{pmatrix}
        1 & 1 \\ 1 & 1
    \end{pmatrix},  \frac{1}{2}\begin{pmatrix}
        1 & -1 \\ -1 & 1
    \end{pmatrix} \right)\,, \\
    F &= \left( \begin{pmatrix}
        0 & 0 \\ 0 & 1
    \end{pmatrix}, \begin{pmatrix}
        1 & 0 \\ 0 & 0
    \end{pmatrix}, \begin{pmatrix}
        0 & 0 \\ 0 & 0
    \end{pmatrix}, \ldots, \begin{pmatrix}
        0 & 0 \\ 0 & 0
    \end{pmatrix} \right) \,.
\end{align*} 
\end{conj}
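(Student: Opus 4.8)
Since Conjecture~\ref{conj:conj-for-2-plus-k} is currently open, what follows is a strategy rather than a completed proof: it reduces the statement to Conjecture~\ref{conj:kSimplexPlusLineOptimum}, of which the real level-two instance is already settled by Theorem~\ref{thm:kSimplexPlusLineOptimum}. The upper bound $s_\C(d,2,(2,k+1)) \le \tfrac{k-1+\sqrt{1+k}}{2k}$ is essentially in hand: real POVMs are in particular complex POVMs, so $s_\C(2,2,(2,k+1)) \le s_\R(2,2,(2,k+1)) = \tfrac{k-1+\sqrt{1+k}}{2k}$ by Theorem~\ref{thm:results-for-2-plus-k}, and monotonicity in the Hilbert space dimension (Proposition~\ref{prop:bounds-incompat-degree}(4)) gives $s_\C(d,2,(2,k+1)) \le s_\C(2,2,(2,k+1))$ for every $d \ge 2$. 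The displayed pair $(E,F)$ consists of real matrices, so its compatibility degree is unchanged when complex joint measurements are allowed (pass to the real part of any joint POVM) and equals $\tfrac{k-1+\sqrt{1+k}}{2k}$ by Theorem~\ref{thm:results-for-2-plus-k}; hence $(E,F)$ realizes the minimum in dimension $2$ once the matching lower bound is proved.

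\textbf{Reduction of the lower bound.} By Proposition~\ref{eq:inclusion-constant-compatibility-degree} and Theorem~\ref{thm:dual-free-polytopes}, $s_\C(d,2,(2,k+1))^{-1}$ is the optimal value of $\lambda_{\max}\big(\sum A_{i|x}\otimes X_{i|x}\big)$ over $A \in \mathcal D^{\square,\C}_{\jewel,(2,k+1)}(d)$ and $X \in \mathcal D^{\C}_{\jewel,(2,k+1)}(d)$. By Example~\ref{ex:MatrixJewel}, $\mathcal D_{\jewel,(2,k+1)}$ is, up to an invertible linear transformation, the direct sum $\cD_{B(k)}$ of the matrix interval and a free $k$-simplex; such transformations do not change inclusion constants (\cite[Lemma 3.1]{passer2018minimal}), so, exactly as in the proof of Theorem~\ref{thm:results-for-2-plus-k}, the problem becomes the level-$d$ instance of optimization problem~\eqref{eq:SumEigMax} for $\cD_{A(k)} = \cD_{B(k)}^\square$, that is, the level-$d$ instance of Conjecture~\ref{conj:kSimplexPlusLineOptimum}. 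If Conjecture~\ref{conj:kSimplexPlusLineOptimum} holds, the optimum over all levels is $\gamma(k) = \tfrac{2k}{k-1+\sqrt{1+k}}$, so the level-$d$ value is $\le \gamma(k)$ for every $d$; conversely it is $\ge \gamma(k)$ whenever $d \ge 2$, since the real level-two optimizers of Theorem~\ref{thm:kSimplexPlusLineOptimum} attain $\gamma(k)$ and are admissible complex tuples. Thus the level-$d$ value equals $\gamma(k)$ and $s_\C(d,2,(2,k+1)) = 1/\gamma(k)$ for all $d \ge 2$, which is the claim.

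\textbf{Proving Conjecture~\ref{conj:kSimplexPlusLineOptimum}.} By Theorem~\ref{thm:MaxEigIsMinBallInclusionConst} this is equivalent to $\cD_{A(k)} \subseteq \gamma(k)\,\cWmin(\cD_{A(k)}(1))$, to be verified level by level, and two ingredients must be added to Theorem~\ref{thm:kSimplexPlusLineOptimum}. The first is a \emph{level bound}: one must show it suffices to test the containment at level $n=2$, matching the value $2^{M-1}=2$ predicted for a Cartesian product of $M=2$ free simplices in the open question following Theorem~\ref{thm:MaxEigIsMinBallInclusionConst}. A natural route is to combine Lemma~\ref{lem:ExtremeOfCartesian}~\eqref{it:MatExCartesian} with the homogeneous-simplex extreme-ray description of Corollary~\ref{cor:SimplexExtremeRays} and the span argument of Lemma~\ref{lem:exists-W}, showing that $\cD_{A(k)}$ has no irreducible matrix extreme points above level $2$, in parallel with the matrix square. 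The second is the \emph{complex level-two case}, not covered by Theorem~\ref{thm:kSimplexPlusLineOptimum}: by Theorem~\ref{thm:SimpleXIntervalComplexMatEx}~\eqref{it:RealXComplexIsMatEx} there exist complex matrix extreme points $(X,Y)\in\cD_{A(k)}(2)$ that are not Cartesian products of matrix extreme points, with $X$ a real Euclidean --- but not Arveson --- extreme point of the $k$-simplex and $Y$ an Arveson-extreme interval point with nonzero imaginary part. Up to unitary equivalence the $X$-part is the family $X(\theta)$ of Theorem~\ref{thm:kSimplexPlusLineOptimum} and the $Y$-part is a trace-zero self-adjoint $2\times 2$ matrix with eigenvalues $\pm 1$ and nonzero imaginary off-diagonal entry, so one would look for an explicit two-parameter family of Choi blocks generalizing Theorem~\ref{thm:kSDPFeasible} and its $k=2$ appendix counterpart, solving the feasibility SDP~\eqref{eq:inclusion-free-spectra-SDP} with constant $\gamma(k)$.

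\textbf{Main obstacle.} The crux is the complex level-two case. The SDP solutions constructed for Theorem~\ref{thm:kSimplexPlusLineOptimum} are tailored to real off-diagonal data and to matrix extreme points that split as Cartesian products, and it is not obvious that the extra phase coming from a complex Arveson-extreme interval point can be absorbed while keeping all Choi blocks positive semidefinite. The NPA experiments of Section~\ref{sec:line_simplex_num} indicate that the bound $\gamma(k)$ persists; turning this into an exact feasible point --- or, alternatively, showing by a unitary-conjugation/symmetry argument that such complex matrix extreme points never exceed the value attained on the real family $X(\theta)$ --- is the missing step. By comparison the level-bound step should be more tractable, being a direct extension of the simplex-factor kernel analysis developed in Section~\ref{sec:CartesianExtreme}.
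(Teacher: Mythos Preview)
The statement is a conjecture, and the paper does not prove it; it only remarks (just before stating the conjecture) that Conjecture~\ref{conj:kSimplexPlusLineOptimum} would imply it. Your write-up correctly identifies this status and makes the implication precise: the upper bound and the role of the real pair $(E,F)$ are established unconditionally via $s_\C \le s_\R$ and dimension monotonicity, and the lower bound is reduced to the level-$d$ instance of Conjecture~\ref{conj:kSimplexPlusLineOptimum} exactly as the paper indicates. On that level your strategy is the paper's strategy, spelled out in more detail.

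There is, however, a slip in your speculative roadmap for attacking Conjecture~\ref{conj:kSimplexPlusLineOptimum}. You write that, in the complex level-two case, ``up to unitary equivalence the $X$-part is the family $X(\theta)$ of Theorem~\ref{thm:kSimplexPlusLineOptimum}''. That is not what the paper says: in the family $X(\theta)$ the simplex component $(X_1,\dots,X_k)$ is \emph{Arveson} extreme (a direct sum of level-one vertices), whereas the new complex matrix extreme points supplied by Theorem~\ref{thm:SimpleXIntervalComplexMatEx}~\eqref{it:RealXComplexIsMatEx} have a simplex component that is merely Euclidean extreme and explicitly \emph{not} Arveson extreme. So the extra tuples you would have to push through the feasibility SDP are a genuinely different family from the one already treated, and Theorem~\ref{thm:SimpleXIntervalComplexMatEx} only \emph{exhibits} such points --- it does not classify all complex matrix extreme points of $\cD_{A(k)}(2)$. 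Likewise, your level-bound step (``no irreducible matrix extreme points above level $2$'') is plausible by analogy with the matrix square but is not asserted anywhere in the paper; the numerical evidence in Table~\ref{tab:gamkminusgamkn} supports the weaker statement that the inclusion constant stabilizes, not the absence of higher-level matrix extreme points. Your two-step plan is a reasonable heuristic and matches the open question posed after Proposition~\ref{prop:NoCPExtension}, but the concrete parametrization you sketch for the second step misidentifies the family of points to be handled.
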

\begin{remark}
It is interesting to note that the number $\frac{1}{2}\left(1+\frac{1}{1+\sqrt{k+1}}\right)$ is also the minimal noise level to make two mutually unbiased bases (MUBs) in dimension $k+1$ compatible \cite{Designolle2019}. Whether there is a deeper reason for this or whether it is merely a coincidence remains at present unclear, since the situation we consider seems unrelated to the one of $2$ MUBs.
\end{remark}

\subsection{Four dichotomic qubit measurements} 

In this section, we will consider the minimum compatibility degree of $4$ dichotomic qubit measurements. From Proposition \ref{prop:bounds-incompat-degree}, we know that 
\begin{equation*}
    \frac{1}{2} \leq s_\C(2,4) \leq  1/\sqrt{3} \approx 0.58.
\end{equation*}
Moreover, there are stronger numerical bounds: In \cite[Table I]{bavaresco2017most}, the authors find that $s_\C(2,4) \lessapprox 0.5547$ (they have more upper bounds for other values of $g$), which agrees with the numerical bounds two of the present authors later found in \cite{bluhm2022incompatibility}. In \cite[Table I]{bavaresco2017most}, the authors also give a lower bound for projective measurements of $0.5437$. However, this bound would only translate into a lower bound on $s_\C(2,4)$ if we could prove that the most incompatible measurements are projective, which is not known for the noise model we use (see also \cite{Designolle2019}). Note however that for the cases $s_\C(2,2)$ and $s_\C(2,3)$ where we know the minimum compatibility degree to be $1/\sqrt{g}$, projective measurements are indeed the most incompatible ones. The upper bounds in \cite{bavaresco2017most} were found using a see-saw algorithm (also known as alternate directions) and the lower bounds by outer polytope approximation. The authors in \cite{bavaresco2017most} consider a slightly different noise model, which coincides with ours for rank $1$ projective measurements.

For this situation, the optimization problem in Theorem \ref{thm:lambda-max-formulation} has the form
\begin{align}
    \mathrm{maximize} \quad & \lambda_{\max}\left(\sum_{i = 1}^4 A_i \otimes X_i\right) \nonumber\\
    \mathrm{subject~to} \quad
    & A_1 \preceq I_2, \quad-A_1 \preceq I_2,\nonumber \\
    & A_2 \preceq I_2, \quad-A_2 \preceq I_2,\nonumber \\
    & A_3 \preceq I_2, \quad-A_3 \preceq I_2,\label{eq:lambda-max-four-qubits} \\
    & A_4 \preceq I_2, \quad-A_4 \preceq I_2,\nonumber \\
    &\sum_{i=1}^4 \epsilon_i X_i \preceq I_2, \qquad \forall \epsilon \in \{\pm 1\}^4\nonumber\\
    & B_i,~X_i \in \mathcal SM_2(\mathbb C) \qquad \forall i \in \{1,2,3, 4\}\nonumber
\end{align}
Its optimal value is $1/s_\C(2,4)$.
\begin{lem}\label{lem:squared-equal-identity}
    In the above optimization problem, we can restrict to $A_i^2=I_2$ for all $i \in [4]$.
\end{lem}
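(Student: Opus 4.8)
The plan is to show that for any feasible point $(A,X)$ of the optimization problem \eqref{eq:lambda-max-four-qubits}, we can replace each $A_i$ by a self-adjoint unitary $A_i'$ with $(A_i')^2 = I_2$ while only increasing (or keeping equal) the objective value $\lambda_{\max}(\sum_i A_i \otimes X_i)$ and preserving feasibility. Since the constraints on the $A_i$ are exactly $-I_2 \preceq A_i \preceq I_2$ for each $i$ independently — i.e. $A_i$ ranges over $\cD_C(2)$ where $\cD_C$ is the matrix interval — the feasible set for $A$ is the Cartesian product $\cD_C(2)^{\times 4}$, which is a compact convex set in each coordinate. Fixing $X$, the objective $A \mapsto \lambda_{\max}(\sum_i A_i \otimes X_i)$ is convex in $A$ (it is a supremum of linear functionals $A \mapsto \langle \psi, (\sum_i A_i \otimes X_i)\psi\rangle$ over unit vectors $\psi$), so its maximum over the convex set $\cD_C(2)^{\times 4}$ is attained at an extreme point of that set.

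First I would invoke Lemma \ref{lem:ExtremeOfCartesian} \eqref{it:ArvEucCartesian} (or directly the elementary fact that the extreme points of a Cartesian product are tuples of extreme points of the factors): the Euclidean extreme points of $\cD_C(2)^{\times 4}$ are precisely the tuples $(A_1,\dots,A_4)$ with each $A_i$ a Euclidean extreme point of $\cD_C(2)$. By Remark \ref{rem:extreme-points-free-simplex} applied to the matrix interval (a free simplex in one variable), together with the characterization recalled in the proof of Lemma \ref{lem:IntervalHomogeneousSolutions}, a self-adjoint $2\times 2$ matrix $A_i$ is a Euclidean extreme point of $\cD_C(2)$ if and only if $A_i^2 = I_2$. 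Hence there exists a maximizer $(A',X)$ of the $A$-marginal problem with each $(A_i')^2 = I_2$, and this maximizer has objective value at least that of $(A,X)$; since the $X$-constraints are untouched, $(A',X)$ is still feasible for \eqref{eq:lambda-max-four-qubits}.

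Concretely, I would phrase it as: for fixed $X$, the function $f(A) = \lambda_{\max}(\sum_i A_i \otimes X_i)$ is convex and continuous on the compact convex set $K := \cD_C(2)^{\times 4}$, so by the Krein–Milman / Bauer maximum principle $\sup_{A \in K} f(A) = \max_{A \in \partial^{\mathrm{Euc}} K} f(A)$, and every element of $\partial^{\mathrm{Euc}} K$ has the claimed form. Therefore restricting the optimization to $A_i^2 = I_2$ does not change the optimal value, which is $1/s_\C(2,4)$. I do not anticipate a genuine obstacle here; the only point requiring a little care is the clean identification of the extreme points of $\cD_C(2)$ — but this is standard (a contraction on a $2$-dimensional Hilbert space is extreme in the operator unit ball iff it is unitary, i.e. here iff it is a self-adjoint unitary) and already implicit in the discussion around Remark \ref{rem:extreme-points-free-simplex} and Lemma \ref{lem:IntervalHomogeneousSolutions}.
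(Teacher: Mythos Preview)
Your argument is correct. Both your proof and the paper's reach the same conclusion by reducing to extreme points of the matrix cube at level~2, but the route is genuinely different. The paper invokes Theorem~\ref{thm:MaxEigIsMinBallInclusionConst} to restrict to (direct sums of) matrix extreme points of $\cWmax([-1,1]^4)$, and then appeals to the equality of matrix and free extreme points for the matrix cube together with Lemma~\ref{lem:ExtremeOfCartesian} and Remark~\ref{rem:extreme-points-free-simplex} to conclude $A_i^2=I_2$. Your approach is more elementary: you fix $X$, observe that $A\mapsto\lambda_{\max}(\sum_i A_i\otimes X_i)$ is convex as a supremum of linear functionals, and apply Bauer's maximum principle on the compact convex set $\cD_C(2)^{\times 4}$ to land on a Euclidean extreme point, whose components satisfy $A_i^2=I_2$. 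Your argument avoids the machinery of matrix/free extreme points entirely and would work verbatim for any convex objective in the~$A_i$; the paper's approach, on the other hand, is the natural instantiation of the framework developed in Section~\ref{sec:extreme-points} and simultaneously handles the reduction on the $X$-side. One small cosmetic point: your citation of Remark~\ref{rem:extreme-points-free-simplex} is not quite on target (that remark concerns matrix/free extreme points living at level~1, not Euclidean extreme points at level~2); the statement you actually need is exactly the one in the proof of Lemma~\ref{lem:IntervalHomogeneousSolutions}, or the standard fact that the extreme points of the self-adjoint operator unit ball are the self-adjoint unitaries.
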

\begin{proof}
Theorem \ref{thm:MaxEigIsMinBallInclusionConst} tells us that it is enough to consider matrix extreme points of $\mathcal W^{\max}([-1,1]^4)$, as the constraints correspond to $(A_1, \ldots, A_4) \in \mathcal W^{\max}([-1,1]^4)=\bigtimes_{i \in [4]} \mathcal W^{\max}([-1,1])$ (see also Theorem \ref{thm:dual-free-polytopes}). We can then combine Lemma \ref{lem:ExtremeOfCartesian} with Remark \ref{rem:extreme-points-free-simplex}, as the matrix and free extreme points coincide for the matrix cube (see \cite[Proposition 7.1]{evert2018extreme}, using the fact that matrix extreme points are irreducible).
\end{proof}

\begin{thm} \label{thm:results-for-four-qubits}
The following upper bound on the minimum compatibility degree of four dichotomic qubit measurements holds:
\begin{equation*}
    s_\C(2,4) \leq \frac{2}{\sqrt{13}} \approx 0.5547\,.
\end{equation*}
The upper bound is achieved for the measurements
\begin{align}
    E &= \left(\begin{pmatrix}
        1 & 0 \\ 0 & 0
    \end{pmatrix}, \begin{pmatrix}
        0 & 0 \\ 0 & 1
    \end{pmatrix}\right) \nonumber \\
    F &= \left(\frac{1}{2} \begin{pmatrix}
        1 & 1 \\ 1 & 1
    \end{pmatrix}, \frac{1}{2} \begin{pmatrix}
        1 & -1 \\ -1 & 1
    \end{pmatrix}\right) \label{eq:4qubitsexample} \\
    \quad   G &=  \left(\frac{1}{2}\begin{pmatrix}
    1 & e^{2\pi i/3}\\ e^{-2\pi i/3} & 1
\end{pmatrix},  \frac{1}{2}\begin{pmatrix}
    1 & e^{-\pi i/3}\\ e^{\pi i/3} & 1
\end{pmatrix}\right) \nonumber\\
    \quad   H &= \left( \frac{1}{2}\begin{pmatrix}
    1 & e^{-2\pi i/3}\\ e^{2\pi i/3} & 1
\end{pmatrix}, \frac{1}{2}\begin{pmatrix}
    1 & e^{\pi i/3}\\ e^{-\pi i/3} & 1
\end{pmatrix}\right) \, .\nonumber
\end{align}
\end{thm}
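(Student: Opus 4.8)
The plan is to read off the bound from the non-commutative reformulation of the compatibility problem and from the explicit extremal pair already constructed in Proposition~\ref{prop:extreme-points-four-lines}. First I would use Theorem~\ref{thm:lambda-max-formulation}, specialised to $g=4$, $d=2$ and $k_x=2$ (as displayed in~\eqref{eq:lambda-max-four-qubits}), which tells us that the optimal value of~\eqref{eq:lambda-max-four-qubits} equals $1/s_\C(2,4)$. Reading off its constraints, the feasible tuples $(A_1,\dots,A_4)$ are precisely those with $-I_2\preceq A_i\preceq I_2$, i.e.\ the level-two matrix cube $\mathcal D_{\square,4}(2)$, while the feasible tuples $(X_1,\dots,X_4)$ are precisely those with $\sum_i\epsilon_i X_i\preceq I_2$ for all $\epsilon\in\{\pm1\}^4$, i.e.\ the level-two matrix diamond $\mathcal D_{\diamond,4}(2)=\mathcal D_{\square,4}^\square(2)$. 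Hence, after renaming $A$ to $X$ and $X$ to $Y$, problem~\eqref{eq:lambda-max-four-qubits} is verbatim the optimization problem of Proposition~\ref{prop:extreme-points-four-lines}. That proposition exhibits tuples attaining $\lambda_{\max}\big(\sum_{i\in[4]}X_i\otimes Y_i\big)=\sqrt{13}/2$, so the optimal value of~\eqref{eq:lambda-max-four-qubits} is at least $\sqrt{13}/2$; since $s_\C(2,4)$ is its reciprocal, $s_\C(2,4)\le 2/\sqrt{13}$.

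Next I would identify the witnessing measurements. The tuple $X$ from Proposition~\ref{prop:extreme-points-four-lines} occupies the slot of the observable tuple $(A_1,\dots,A_4)$ in~\eqref{eq:lambda-max-four-qubits}; via the $k_x=2$ instance $A_i=2E_{1|i}-I_2$ of the substitution in Proposition~\ref{prop:incompatibility-spectrahedra}, it corresponds to the four dichotomic POVMs $\big(\tfrac12(I_2+X_i),\tfrac12(I_2-X_i)\big)$, $i=1,\dots,4$. A short computation, using the identities $-e^{\pm 2\pi i/3}=e^{\mp i\pi/3}$, shows that these are exactly the POVMs $E,F,G,H$ in~\eqref{eq:4qubitsexample}. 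Since $X_1=\operatorname{diag}(1,-1)$ one gets the computational basis measurement (Bloch vector $(0,0,1)$), while $X_2,X_3,X_4$ are anti-diagonal self-adjoint unitaries --- consistent with Lemma~\ref{lem:squared-equal-identity} --- corresponding to rank-one projective measurements with Bloch vectors $(1,0,0)$ and $(-\tfrac12,\mp\tfrac{\sqrt3}{2},0)$, i.e.\ three equiangular lines in the $X$-$Y$ plane as in Figure~\ref{fig:plot-4-meas-Bloch}. This establishes the theorem.

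The substantive work has already been carried out inside Proposition~\ref{prop:extreme-points-four-lines} --- producing a free-spectrahedral pair at level two of the matrix cube and its dual with $\lambda_{\max}=\sqrt{13}/2$; what remains in this proof is essentially bookkeeping, and the only point requiring care is keeping straight which of the two tuples in that proposition corresponds to the measurement observables and which to the matrix-diamond element, together with verifying the phase identities so that the $\tfrac12(I_2\pm X_i)$ coincide on the nose with the effect operators in~\eqref{eq:4qubitsexample}. What this argument does \emph{not} prove is sharpness: showing $s_\C(2,4)=2/\sqrt{13}$ would require proving that no pair in $\mathcal D_{\square,4}(2)\times\mathcal D_{\square,4}^\square(2)$ beats $\sqrt{13}/2$, which is Conjecture~\ref{conj:extreme-points-four-lines} and is only supported numerically.
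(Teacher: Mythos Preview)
Your proposal is correct and follows essentially the same approach as the paper's proof, which simply cites Proposition~\ref{prop:extreme-points-four-lines} and the passage from observables to POVMs. You spell out the bridge explicitly via Theorem~\ref{thm:lambda-max-formulation} (identifying the feasible sets as $\mathcal D_{\square,4}(2)$ and $\mathcal D_{\square,4}^\square(2)$) and verify the phase identities for the effect operators, whereas the paper compresses this into one sentence; the content is the same.
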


\begin{proof}
    This follows directly from Proposition \ref{prop:extreme-points-four-lines} and an application of the invertible linear transformation to the $X_i$ which achieve the maximum value.
\end{proof}

If one represents the above measurements in terms of their Bloch vectors, we get three equiangular lines in a plane and a fourth line perpendicular to them, see Figure \ref{fig:plot-4-meas-Bloch}. If Conjecture \ref{conj:extreme-points-four-lines} was true, we would have the following:
\begin{conj}
\label{conj:fourlines}
    We conjecture that 
    \begin{equation*}
    s_\C(2,4) = \frac{2}{\sqrt{13}}\,.
\end{equation*}
\end{conj}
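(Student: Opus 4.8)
The plan is to reduce Conjecture~\ref{conj:fourlines} to Conjecture~\ref{conj:extreme-points-four-lines} and then to attack the latter. By Theorem~\ref{thm:lambda-max-formulation} together with Lemma~\ref{lem:squared-equal-identity}, the reciprocal $1/s_\C(2,4)$ equals the optimal value of the optimisation problem~\eqref{eq:lambda-max-four-qubits} in which one may additionally impose $A_i^2 = I_2$ for all $i\in[4]$; after an invertible linear change of variables (which does not affect inclusion constants, \cite[Lemma~3.1]{passer2018minimal}) this is exactly the optimisation problem of Proposition~\ref{prop:extreme-points-four-lines} and of Conjecture~\ref{conj:extreme-points-four-lines}, namely $\max\{\lambda_{\max}(\sum_{i\in[4]} X_i\otimes Y_i) : X\in\mathcal D_A(2),\ Y\in\mathcal D_A^\square(2)\}$ with $\mathcal D_A=\mathcal D_{\square,4}$ the matrix cube in four variables. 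Since Proposition~\ref{prop:extreme-points-four-lines} already exhibits a feasible point of value $\sqrt{13}/2$, proving Conjecture~\ref{conj:fourlines} amounts to proving the matching \emph{upper} bound for this maximum; the explicit optimal measurements $E,F,G,H$ of Theorem~\ref{thm:results-for-four-qubits} then come along automatically, by transporting the optimisers of Proposition~\ref{prop:extreme-points-four-lines} through the same linear change of variables, exactly as in the proof of Theorem~\ref{thm:results-for-four-qubits}.

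For the upper bound I would invoke Theorem~\ref{thm:MaxEigIsMinBallInclusionConst}: the maximum above equals the least $\gamma$ with $\delmat\mathcal D_{\square,4}(j)\subseteq\gamma\,\cWmin(\mathcal D_{\square,4}(1))$ for all $j\le 2$, and the $j=1$ constant is $1$, so everything is concentrated at level $j=2$. By Remark~\ref{rem:extreme-points-free-simplex} and Lemma~\ref{lem:ExtremeOfCartesian} (as recorded just after Proposition~\ref{prop:extreme-points-four-lines}), the matrix extreme points of the matrix cube at level two are, up to unitary equivalence, precisely the irreducible tuples $X=(X_1,\dots,X_4)$ of $2\times2$ self-adjoint unitaries, so generically $X_i = \cos\theta_i\,\sigma_Z + \sin\theta_i(\cos\phi_i\,\sigma_X+\sin\phi_i\,\sigma_Y)$; using conjugation freedom one may also take $X_1 = \sigma_Z$ and $\phi_2 = 0$. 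For a fixed such $X$, the inner supremum over $Y$ at level two of the matrix diamond equals the least $\gamma(X)\ge 1$ with $\gamma(X)^{-1}X\in\cWmin(\mathcal D_{\square,4}(1))$, i.e.\ the least $\gamma(X)$ for which $\gamma(X)^{-1}X$ is a matrix convex combination of the $16$ vertices of the cube, which by \cite[Section 4.1]{helton_matricial_2013} is a feasibility question for the Choi matrix semidefinite program~\eqref{eq:inclusion-free-spectra-SDP} built from those $16$ vertices. Thus Conjecture~\ref{conj:extreme-points-four-lines} becomes the statement that $\sup_X \gamma(X) = \sqrt{13}/2$, the supremum running over the five-parameter (after gauge fixing) configuration space of the Bloch directions of $X_1,\dots,X_4$.

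I expect the global optimisation over this configuration space to be the main obstacle. Since $X\mapsto\gamma(X)$ is defined through an SDP it is only piecewise algebraic, so one must keep track of which face of $\cWmin(\mathcal D_{\square,4}(1))$ is active; a plain Lagrange multiplier computation will not by itself close the argument. I see two routes. The symmetry route: prove, by an averaging/exchange argument on the Bloch vectors, that a maximiser can be taken in the symmetric configuration of the statement --- three equiangular coplanar lines together with one orthogonal line --- thereby collapsing the problem to a one- or two-parameter family on which $\gamma$ is computable in closed form and equals $\sqrt{13}/2$, and then checking that the remaining finitely many critical configurations give no larger value. The certificate route: write Conjecture~\ref{conj:extreme-points-four-lines} as a single polynomial (or non-commutative polynomial) optimisation problem and produce an exact, degree-bounded sum-of-squares/moment certificate of value $\sqrt{13}/2$, upgrading the numerical Lasserre hierarchy data of Section~\ref{sec:four-qubits-numerics} to a rigorous proof by rationalising the observed optimal solution. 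I would lean on the certificate route as the more systematic one, but extracting a human-verifiable exact certificate at the level the hierarchy requires is precisely the delicate step. Either way, once the upper bound $\max\le\sqrt{13}/2$ is in hand, Conjecture~\ref{conj:fourlines} follows immediately from the reductions in the first paragraph.
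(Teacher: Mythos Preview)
This statement is a \emph{conjecture}, not a theorem: the paper does not prove it and offers only numerical evidence (the Lasserre hierarchy computation in Section~\ref{sec:four-qubits-numerics} yielding an upper bound of approximately $1.8029$ on $1/s_\C(2,4)$). Your reduction of Conjecture~\ref{conj:fourlines} to Conjecture~\ref{conj:extreme-points-four-lines} is correct and matches exactly what the paper says just before stating the conjecture (``If Conjecture~\ref{conj:extreme-points-four-lines} was true, we would have the following''). Your subsequent reformulation via Theorem~\ref{thm:MaxEigIsMinBallInclusionConst} and the parametrisation of level-two matrix extreme points of the matrix cube by Bloch vectors is also sound and tracks the paper's own simplifications in that subsection.

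Where your proposal is honest, and correctly so, is in flagging that the actual content---the upper bound $\sup_X\gamma(X)\le\sqrt{13}/2$ over the configuration space of four Bloch directions---remains open. Neither of your two suggested routes (a symmetry/exchange argument or an exact sum-of-squares certificate) is carried out, and the paper does not carry them out either; indeed, the paper's concluding Section~\ref{sec:conclusion} explicitly lists producing such a certificate as a direction for future work. So there is no ``gap'' in the sense of an error: you have correctly identified that the conjecture is open, localised the difficulty, and sketched plausible lines of attack that are consistent with the paper's own outlook. What you have written is a reasonable research plan, not a proof, and that is the appropriate status for this statement.
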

That would mean that the value computed in \cite{bavaresco2017most} is optimal up to numerical precision. 

\begin{figure}
    \centering
    \includegraphics[width=0.25\linewidth]{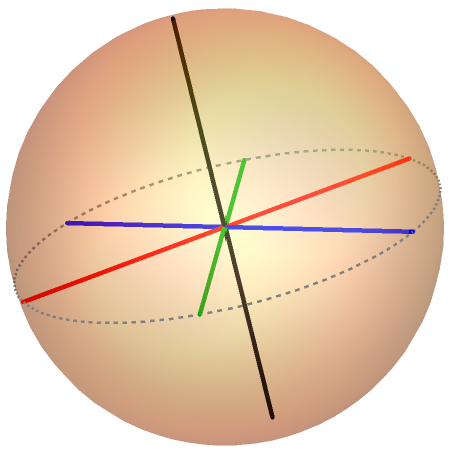}
    \caption{The conjectured most incompatible 4-tuple of dichotomic qubit measurements in Bloch representation. It consists of 3 equiangular antipodal segments on a plane and a fourth one perpendicular to the first three.}
    \label{fig:plot-4-meas-Bloch}
\end{figure}

\medskip

We will now simplify the optimization problem \eqref{eq:lambda-max-four-qubits}. If the maximum is attained on some pure state $\psi \in \mathbb C^d \otimes \mathbb C^d$, considering its partial trace $\rho$, we can write (considering $g$ dichotomic measurements in dimension $d$ instead of $4$ qubit measurements)
\begin{align*}
    \lambda_{\max}\left(\sum_{i = 1}^g A_i \otimes X_i\right) &=  \psi^\ast \left(\sum_{i = 1}^g A_i \otimes X_i \right) \psi  =  d\, \Omega^\ast\left(\rho^{1/2} \otimes I_d\left(\sum_{i = 1}^g A_i \otimes X_i\right)\rho^{1/2} \otimes I_d\right) \Omega \\
    &= \sum_{i=1}^g \tr(\rho^{1/2}A_i\rho^{1/2}X_i^\top).
\end{align*}
Here, $\Omega=1/\sqrt{d}\sum_{i \in [d]}e_i \otimes e_i$ and $e_i$ are the standard unit basis vectors. Optimizing over each $A_i$ gives
$$\max_{\|A_i\|_{\infty} \leq 1} \tr(\rho^{1/2}A_i\rho^{1/2}X_i^\top) = \|\rho^{1/2}X_i^\top\rho^{1/2} \|_1,$$
which is achieved for $A_i$ commuting with $\rho^{1/2}X_i^\top\rho^{1/2}$ and having as eigenvalues the signs of the corresponding eigenvalues of $\rho^{1/2}X_i^\top\rho^{1/2}$. This explains in particular the commutation relation $[A_i, X_i] = 0$ noticed in the case where the state achieving $\lambda_{\max}$ was the maximally entangled state $\Omega$, which corresponds to $\rho = I/d$. We have thus shown that

\begin{align*}
   1/s_\C(d,g)= \mathrm{maximize} \quad & \sum_{i = 1}^g \|\rho^{1/2}X_i\rho^{1/2}\|_1 \\
    \mathrm{subject~to} \quad
    &\sum_{i=1}^g \epsilon_i X_i \preceq I_d \qquad \forall \epsilon \in \{\pm 1\}^g\\
    &\rho \succeq 0, \quad \tr \rho =1\\
    & X_i \in SM_d(\mathbb C) \qquad \forall i \in [g]\\
    & \rho \in SM_d(\mathbb C).
\end{align*}

Since the Schatten $1$-norm $\|\cdot\|_1$ and the constraints on $X_i$ are unitarily invariant, we can diagonalize $\rho$ and further simplify our problem: 

\begin{align*}
   1/s_\C(d,g)= \mathrm{maximize} \quad & \sum_{i = 1}^g \|\operatorname{diag}(\sqrt{p}) X_i\operatorname{diag}(\sqrt{p})\|_1 \\
    \mathrm{subject~to} \quad
    &\sum_{i=1}^g \epsilon_i X_i \preceq I_d \qquad \forall \epsilon \in \{\pm 1\}^g\\
    &p_i \geq 0, \quad \sum_{i=1}^g p_i =1\\
    & X_i \in SM_d(\mathbb C) \qquad \forall i \in [g].
\end{align*}

\begin{remark}
    Numerical results indicate that we cannot assume in general that the optimal $p$ is flat (i.e.~$p_i = 1/d$). For example, if $g=2$ and $d=3$, numerics indicate that $1/s_\C(2,3) = \sqrt 2$, with $p=(1/2, 1/2, 0)$ and 
    $$X_1 = \frac{1}{\sqrt{2}} \sigma_X \oplus 0 \text{ and } X_2 = \frac{1}{\sqrt{2}} \sigma_Z \oplus 0.$$
\end{remark}

\begin{prop} In the case of qubits ($d=2$), we further have

\begin{align} 
   1/s_\C(2,g)= \mathrm{maximize} \quad & \sum_{i = 1}^g \|x^{(i)}\|  \nonumber \\
    \mathrm{subject~to} \quad
    &\Big\|\sum_{i=1}^g \epsilon_i x^{(i)} \Big \| \leq 1 \qquad \forall \epsilon \in \{\pm 1\}^g  \label{eq:3D-opt} \\
    & x^{(i)} \in \mathbb R^3\qquad \forall i \in [g]. \nonumber
\end{align}

Equivalently, 
    \begin{align*}
   1/s_\C(2,g)= \mathrm{maximize} \quad & \frac 1 2\sum_{i = 1}^g \|X_i\|_1\\ 
    \mathrm{subject~to} \quad
    &\sum_{i=1}^g \epsilon_i X_i \preceq I_2 \qquad \forall \epsilon \in \{\pm 1\}^g\\
    & \tr[X_i]=0  \qquad \forall i \in [g]\\
    & X_i \in SM_2(\mathbb C) \qquad \forall i \in [g].
\end{align*}
\end{prop}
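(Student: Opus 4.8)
The plan is to start from the already-established qubit reformulation
\[
1/s_\C(2,g)= \max\Big\{\textstyle\sum_{i=1}^g \|\operatorname{diag}(\sqrt p)X_i\operatorname{diag}(\sqrt p)\|_1 : \sum_i \epsilon_i X_i \preceq I_2 \ \forall \epsilon,\ p_i\ge 0,\ \sum_i p_i=1,\ X_i\in SM_2(\C)\Big\}
\]
and to show that one may take $p=(1/2,1/2)$, i.e.\ $\rho = I_2/2$. First I would observe that if $\rho$ has a zero eigenvalue (say $p=(1,0)$), then $\operatorname{diag}(\sqrt p)X_i\operatorname{diag}(\sqrt p)$ has rank at most one and the objective becomes $\sum_i |(X_i)_{11}|\le \sum_i\|X_i\|_\infty$; but the constraint $\epsilon$-sum inequalities force each $\|X_i\|_\infty\le 1$ (take $\epsilon$ with $X_j$ for $j\ne i$ contributing $0$ is not available, so instead average the two inequalities $X_i+\sum_{j\ne i}\epsilon_j X_j\preceq I$ and $-X_i+\sum_{j\ne i}\epsilon_j X_j\preceq I$ over the $\epsilon_j$, $j\ne i$, to get $\pm X_i \preceq I$, hence $\|X_i\|_\infty\le 1$). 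This shows the degenerate-$\rho$ case gives objective at most $g$, which is worse than the flat case already for $g\le$ the relevant small values, but more importantly it lets me restrict to $\rho\succ 0$, where I can write $X_i = \operatorname{diag}(\sqrt p)^{-1} M_i \operatorname{diag}(\sqrt p)^{-1}$ and absorb $\rho$; however this changes the constraint set, so instead I will argue directly at $d=2$ using the parametrization below.

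The core step is the passage from the $SM_2(\C)$ formulation to the $\R^3$ formulation \eqref{eq:3D-opt}. Writing a traceless $2\times 2$ Hermitian matrix as $X_i = x^{(i)}\cdot\vec\sigma$ with $x^{(i)}\in\R^3$ and $\vec\sigma=(\sigma_X,\sigma_Y,\sigma_Z)$, one has the eigenvalues of $X_i$ equal to $\pm\|x^{(i)}\|$, so $\|X_i\|_1 = 2\|x^{(i)}\|$ and $\tfrac12\|X_i\|_1 = \|x^{(i)}\|$. Likewise $\sum_i\epsilon_i X_i = \big(\sum_i\epsilon_i x^{(i)}\big)\cdot\vec\sigma$ has eigenvalues $\pm\big\|\sum_i\epsilon_i x^{(i)}\big\|$, so $\sum_i\epsilon_i X_i\preceq I_2$ is equivalent to $\big\|\sum_i\epsilon_i x^{(i)}\big\|\le 1$. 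This gives the equivalence of the ``traceless'' SDP form and \eqref{eq:3D-opt} immediately; the only thing to check is that restricting to traceless $X_i$ loses nothing, which I would handle as the next paragraph describes.

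To justify the traceless reduction and the choice $\rho=I_2/2$ simultaneously: in the general $d=2$ problem with $\rho = \operatorname{diag}(p_1,p_2)$, $p_1+p_2=1$, write $X_i = t_i I_2 + x^{(i)}\cdot\vec\sigma$. The matrix $\operatorname{diag}(\sqrt p)X_i\operatorname{diag}(\sqrt p)$ is again a $2\times 2$ Hermitian matrix whose $1$-norm is $|\,\mathrm{tr}\,| + \text{(something)}$; more cleanly, I would note the objective $\sum_i \mathrm{tr}(\rho^{1/2}A_i\rho^{1/2}X_i)$ with $A_i = \mathrm{sgn}(\rho^{1/2}X_i\rho^{1/2})$, and observe that replacing $X_i$ by $X_i - t_i I_2$ only shifts $\mathrm{tr}(\rho X_i)$; since $A_i$ can be taken traceless when $X_i$ is traceless and the feasible set for the $\epsilon$-sum constraints is unchanged under $X_i\mapsto X_i + c I_2$ up to translating $I_2$ on the right-hand side, one can symmetrize. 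The cleanest route, which I would take, is: the objective $\sum_i\|\rho^{1/2}X_i\rho^{1/2}\|_1$ is convex in $(X_1,\dots,X_g)$ and the constraint set is invariant under the sign flip $X_i\mapsto -X_i$ applied to any single $i$ together with $\epsilon_i\mapsto -\epsilon_i$; averaging a feasible point with its image under $X_i\mapsto -X_i$ for the index realizing $\mathrm{tr}(\rho X_i)\ne 0$ kills the trace part while not decreasing the objective (by convexity and the triangle inequality $\|\rho^{1/2}X_i\rho^{1/2}\|_1 = \tfrac12\|\rho^{1/2}(X_i)\rho^{1/2}\|_1+\tfrac12\|\rho^{1/2}(-X_i)\rho^{1/2}\|_1 \ge \|\rho^{1/2}(X_i-t_iI)\rho^{1/2}\|_1$ is false in general, so I instead use that the optimum is attained at an extreme point and invoke the already-proven fact that optimal $A_i$ satisfy $A_i^2=I$, hence are traceless exactly when the corresponding $X_i$ is, forcing $\mathrm{tr}(\rho X_i)=0$). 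Combining this with the Pauli parametrization yields \eqref{eq:3D-opt} and its SDP restatement. The main obstacle I anticipate is precisely making the ``$\mathrm{tr}[X_i]=0$'' reduction airtight: it requires carefully tracking that shifting $X_i$ by a multiple of $I_2$ can be compensated by scaling, and that at the optimum no objective value is sacrificed — I would resolve this by appealing to the extreme-point / $A_i^2=I$ structure established in Lemma \ref{lem:squared-equal-identity} and Theorem \ref{thm:MaxEigIsMinBallInclusionConst} rather than by a direct convexity estimate.
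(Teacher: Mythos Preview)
Your proposal has a genuine gap: the reduction to traceless $X_i$ (equivalently $\rho=I_2/2$) is never established, and this is the entire content of the proposition beyond the Pauli-basis bookkeeping. You correctly identify it as the hard step and try several lines of attack, but each fails for reasons you partly acknowledge. The sign-flip $X_i\mapsto -X_i$ preserves both objective and constraint set, so it produces nothing new to average against; convex averaging goes the wrong way for the Schatten $1$-norm, as you note; and Lemma~\ref{lem:squared-equal-identity} only gives $A_i^2=I_2$, which still permits $A_i=\pm I_2$ and in any case does not transfer a trace condition to the $X_i$. Concretely, writing $X_i=t_iI_2+x^{(i)}\!\cdot\vec\sigma$, the $\epsilon$-constraint reads $\sum_i\epsilon_i t_i+\big\|\sum_i\epsilon_i x^{(i)}\big\|\le 1$, while the $\rho=I_2/2$ objective is $\sum_i\max(|t_i|,\|x^{(i)}\|)$: nonzero $t_i$ simultaneously tightens the constraints and can enlarge the objective, so there is no monotone move that kills the trace parts.

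The paper does not attempt a direct argument here. It invokes \cite[Theorem~8.5]{bluhm2022incompatibility}, which shows that for \emph{centrally symmetric} state spaces the full inclusion-constant set coincides with its traceless version $\Pi'$. The qubit state space is the unit ball of $\R^3$, hence centrally symmetric, and that symmetry is precisely what drives the reduction at $d=2$; the Remark immediately preceding the proposition records that for $d=3$ the optimal $p$ is \emph{not} flat, so no dimension-free version of your argument can succeed. Your Pauli-parametrization paragraph (traceless $X_i\leftrightarrow x^{(i)}\in\R^3$, $\tfrac12\|X_i\|_1=\|x^{(i)}\|$, and the constraint rewriting) is correct and matches the paper's second paragraph; what is missing is the external central-symmetry result, for which you do not have a substitute.
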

\begin{proof}
The first reformulation follows from \cite[Theorem 8.5]{bluhm2022incompatibility}. Indeed, note that the value $s_\C(2,g)$ is the largest element of the form $(s,s,\ldots,s)$ of the set of (generalized) inclusion constants $\Pi(g,\mathcal M_2^{\textrm{sa}},\mathrm{PSD}_2)$ from \cite[Section 7]{bluhm2022incompatibility}. In \cite[Theorem 8.5]{bluhm2022incompatibility} it is shown that, for centrally symmetric state spaces, this inclusion constant set is equal to a ``traceless version'' $\Pi'$; this is indeed the case for qubits, where the state space is the unit ball of $\R^3$. Hence, using the definition of the set $\Pi'$ and the formulas for the injective (resp.~projective) tensor norms from \cite[Section 8]{bluhm2022incompatibility}, we recover the formulation with $g$ vectors from $\R^3$ as variables. 

The matrix formulation follows by expressing the matrices $X_i$ in the Pauli basis: 
$$X_i  = x^{(i)}_1 \sigma_X + x^{(i)}_2 \sigma_Y + x^{(i)}_3 \sigma_Z,$$
using the vectors $x^{(1)}, \ldots, x^{(g)} \in \mathbb R^3$. 
\end{proof}

Note that the optimization problem with $g$ vectors in $\R^3$ can be rephrased in terms of the Gram matrix $G \in M_g(\mathbb R)$ of the vectors $x^{(i)}$
$$G_{ij} = \langle x^{(i)},x^{(j)} \rangle \qquad i,j \in [g]$$
as follows (note the rank condition on the matrix $G$):
\begin{align*}
   1/s_\C(2,g)= \mathrm{maximize} \quad & \sum_{i = 1}^g \sqrt{G_{ii}}  \\
    \mathrm{subject~to} \quad
    &\epsilon^\ast G \epsilon \leq 1 \qquad \forall \epsilon \in \{\pm 1\}^g\\
    & G \geq 0\\
    & \rank G \leq 3\\
    & G \in SM_g(\R).
\end{align*}

\medskip

If we would rather optimize a trace as in Theorem \ref{thm:trace-formulation}, this becomes
\begin{align*}
    \mathrm{maximize} \quad & \frac{1}{2} \sum_{i=1}^4\operatorname{tr}[A_i X_i]\\
    \mathrm{subject~to} \quad
    & N \succeq 0, \quad \operatorname{tr}[N] = 2, \\
    & A_1 \preceq N, \quad-A_1 \preceq N, \\
    & A_2 \preceq N, \quad-A_2 \preceq N, \\
    & A_3 \preceq N, \quad-A_3 \preceq N, \\
    & A_4 \preceq N, \quad-A_4 \preceq N, \\
    &\sum_{i=1}^4 \epsilon_i X_i \preceq I_2, \qquad \forall \epsilon \in \{\pm 1\}^4\\
    & N,~A_i,~X_i \in \mathcal SM_2(\mathbb C) \qquad \forall i \in \{1,2,3, 4\}
\end{align*}

Instead of solving this directly, we can expand the variables in a basis of Hermitian operators. A convenient choice are the Pauli matrices together with the identity matrix. We obtain
\begin{align*}
    N &= n_0 I_2 + n_1 \sigma_X + n_2 \sigma_Y + n_3 \sigma_Z \,,\\
    A_i &= a^{(i)}_0 I_2 + a^{(i)}_1 \sigma_X + a^{(i)}_2 \sigma_Y + a^{(i)}_3 \sigma_Z \,, \\
    X_i &= x^{(i)}_0 I_2 + x^{(i)}_1 \sigma_X + x^{(i)}_2 \sigma_Y + x^{(i)}_3 \sigma_Z \,.
\end{align*}
Then, we get a polynomial optimization problem in commuting variables:
\begin{align}
    \mathrm{maximize} \quad & \sum_{i=1}^4 \sum_{j = 0}^3 a^{(i)}_j x^{(i)}_j \nonumber\\
    \mathrm{subject~to} \quad
    &1 - (n_1)^2 - (n_2)^2 - (n_3)^2 \geq 0, \nonumber\\
    &(1-a^{(i)}_0)^2 - (n_1 - a^{(i)}_1)^2 - (n_2 - a^{(i)}_2)^2 - (n_3 -  a^{(i)}_3)^2 \geq 0, \qquad \forall i \in \{1,2,3,4\}\nonumber\\
    &(1+a^{(i)}_0)^2 - (n_1 + a^{(i)}_1)^2 - (n_2 + a^{(i)}_2)^2 - (n_3 +  a^{(i)}_3)^2 \geq 0, \qquad \forall i \in \{1,2,3,4\} \label{eq:four-qubits-Bloch}\\
    & 1 \geq a_0^{(i)} \geq -1, \qquad \forall i \in \{1,2,3,4\}\nonumber\\
    &\left(1-\sum_{i=1}^4 \epsilon_i x^{(i)}_0\right)^2 \geq \sum_{j=1}^3 \left(\sum_{i=1}^4 \epsilon_i x^{(i)}_j\right)^2, \qquad \forall \epsilon \in \{\pm 1\}^4\nonumber\\
    & 1 \geq \sum_{i=1}^4 \epsilon_i x^{(i)}_0, \qquad \forall \epsilon \in \{\pm 1\}^4\nonumber\\
    & a^{(i)}_j,~x^{(i)}_j \in \mathbb R  \qquad \forall i, j \nonumber
\end{align} 

This form can be further simplified if we know more about the optimizers. Starting from the optimization problem in equation \eqref{eq:3D-opt}, we can use the fact that $\|x\| = \sup_{\|a\| \leq 1} \langle a, x \rangle$ to find a simpler polynomial optimization problem in commuting variables:
\begin{align}
    \mathrm{maximize} \quad & \sum_{i=1}^g \sum_{j = 1}^3 a^{(i)}_j x^{(i)}_j \nonumber\\
    \mathrm{subject~to} \quad
    &1 - \sum_{j=1}^3 (a^{(i)}_j)^2 \geq 0, \qquad \forall i \in \{1, \ldots ,g\}\nonumber\\
    &1-\sum_{j=1}^3 \left(\sum_{i=1}^g \epsilon_i x^{(i)}_j\right)^2 \geq 0, \qquad \forall \epsilon \in \{\pm 1\}^g \label{eq:g-qubits-Bloch-simpler}\\
    & a^{(i)}_j,~x^{(i)}_j \in \mathbb R  \qquad \forall i \in \{1, \ldots, g\}, j \in \{1, 2, 3\} \nonumber
\end{align}
This formulation will be useful for numerical experiments, as we can compute bounds with the Lasserre hierarchy (see Section \ref{sec:four-qubits-numerics}).

\section{Concluding numerical experiments and perspectives} 
\label{sec:conclusion}
In this section, we will present numerical results that support our Conjectures \ref{conj:conj-for-2-plus-k} and \ref{conj:fourlines} , obtained vie techniques to solve the polynomial optimization problems in the previous section.

\subsection{Numerics for one dichotomic and one arbitrary measurement} \label{sec:line_simplex_num}
Using Theorem \ref{thm:lambda-max-formulation}, we can compute $1/s_\C(d,s,(2,k))$ with the following optimization problem:
\begin{align}
    \mathrm{maximize} \quad & \lambda_{\max}\left(A_{1|1} \otimes X_{1|1}+\sum_{i = 1}^{k-1} A_{i|2} \otimes X_{i|2}\right) \nonumber\\
    \mathrm{subject~to} \quad
    & A_{1|1} \preceq I_d, \quad-A_{1|1} \preceq I_d, \nonumber\\
    & -\frac{k}{2} A_{i|2} \leq I_d,\quad \forall i \in [k-1] \nonumber\\
    & \frac{k}{2} \sum_{j = 1}^{k-1} A_{j|2} \preceq I_d, \label{eq:2pluskoptimax}\\
    &\pm X_{1|1} - \frac{2}{k} \sum_{j = 1}^{k-1}  X_{j|2} + 2 X_{i|2} \preceq I_d, \quad \forall i \in [k-1]\nonumber\\
    &\pm X_{1|1} - \frac{2}{k} \sum_{j = 1}^{k-1}  X_{j|2} \preceq I_d, \nonumber\\
    & A_{1|1}, X_{1|1}, A_{i|2},~X_{i|2} \in M_d(\mathbb C) \qquad \forall i \in [k-1].\nonumber
\end{align}
Unlike the problem for $4$ dichotomic measurements, this problem is still interesting if we optimize over $d$ as well, thus computing $s^\mathbb C_{\mathrm{di}}(g,(2,k))$. 
The advantage of not fixing the dimension is that we can compute lower bounds of $s^\mathbb C_{\mathrm{di}}(g,(2,k))$ using the NPA hierarchy of semidefinite programs to solve the following non-commutative eigenvalue maximization problem:
\begin{align}
    \mathrm{maximize}_{\mathcal{H}} \quad & \lambda_{\max}\left(A_{1|1}  X_{1|1}+\sum_{i = 1}^{k-1} A_{i|2} X_{i|2}\right) \nonumber\\
    \mathrm{subject~to} \quad
    & A_{1|1} \preceq I, \quad-A_{1|1} \preceq I, \nonumber\\
    & -\frac{k}{2} A_{i|2} \leq I,\quad \forall i \in [k-1] \nonumber\\
    & \frac{k}{2} \sum_{j = 1}^{k-1} A_{j|2} \preceq I, \label{eq:2pluskoptimaxqc}\\
    &\pm X_{1|1} - \frac{2}{k} \sum_{j = 1}^{k-1}  X_{j|2} + 2 X_{i|2} \preceq I, \quad \forall i \in [k-1]\nonumber\\
    &\pm X_{1|1} - \frac{2}{k} \sum_{j = 1}^{k-1}  X_{j|2} \preceq I, \nonumber\\
    & A_{1|1}, X_{1|1}, A_{i|2},~X_{i|2} \in \mathcal{B}(\mathcal H)  \qquad \forall i \in [k-1],\nonumber
\end{align}
where $I$ is the identity element of $\mathcal{B}(\mathcal H)$.

Optimal values of all studied semidefinite programs were computed on one of the CALMIP\footnote{\url{https://www.calmip.univ-toulouse.fr/}} servers with 2 Intel Xeon Gold 6140 CPUs @ 2.30 GHz, with 72 logical processors (18 cores per socket, 2 threads per core) and 800GB of RAM. 
All the instances have been modeled thanks to the Julia library \texttt{NCTSSOS}\footnote{\url{https://github.com/wangjie212/NCTSSOS}} \cite{wang2021exploiting}, and their corresponding semidefinite relaxations have been solved with Mosek \cite{andersen2000mosek}. 
The results reported in Table \ref{tab:2pluskknown} can be reproduced with online scripts\footnote{\url{https://wangjie212.github.io/NCTSSOS/dev/inclusion/}}. 
We now provide more detailed explanation for these results. 

For $k=3$, we could rely on the theory of extreme points to add the following equality constraints to the above problem \eqref{eq:2pluskoptimaxqc}: $A_{1|1}^2=I, [A_{1|2}, A_{2|2}]=0, (A_{i|2}+2/3)(A_{i|2}-4/3)=0$, for all $i=1,2$. 
At level 4 of the NPA hierarchy, we found a numerical bound of $0.683012 \lesssim (1+\sqrt{3})/4$ and we could successfully extract a tuple of finite-dimensional optimizers (of size $12$) thanks to the linear algebra procedure implementing the Gelfand-Naimark-Segal (GNS) construction; see \cite[Algorithm~1.2]{burgdorf2016optimization} for more details. 
The extracted solution reads as follows:
\begin{align*}
X_{1|1} & = (\sqrt{3}-1) \begin{pmatrix}
\begin{pmatrix}
0 & 1   \\
1 & 0
\end{pmatrix} \otimes I_2 &  \\
 & I_2 \otimes \begin{pmatrix}
0 & 1   \\
1 & 0
\end{pmatrix} \otimes \operatorname{diag}(1, -1)
\end{pmatrix},\\
X_{1|2} & = \frac{\sqrt{3}}{2}  \begin{pmatrix}
-  \operatorname{diag}(1, -1) \otimes I_2 &  \\
 & \operatorname{diag}(1, -1) \otimes \operatorname{diag}(\sqrt{3}-2, \sqrt{3}- 1) \otimes I_2
\end{pmatrix},\\
X_{2|2} & = \frac{\sqrt{3}}{2} \begin{pmatrix}
\operatorname{diag}(2-\sqrt{3}, 1-\sqrt{3}) \otimes I_2 &  &  \\    
 & - \operatorname{diag}(1, -1) \otimes I_2 &  \\
 &  & \operatorname{diag}(1-\sqrt{3}, \sqrt{3}-2) \otimes I_2
\end{pmatrix},\\
A_{1|1} & = - \begin{pmatrix}
 I_2 \otimes \begin{pmatrix}
0 & 1   \\
1 & 0
\end{pmatrix} & \\
 &  I_2 \otimes \operatorname{diag}(1, -1) \otimes \begin{pmatrix}
0 & 1   \\
1 & 0
\end{pmatrix}
\end{pmatrix}, \\
A_{1|2} & = - \frac{2}{3} \begin{pmatrix}
I_2 \otimes \operatorname{diag}(1, -2) & & \\
& I_4 & \\
& & I_2 \otimes \operatorname{diag}(-2, 1)
\end{pmatrix}, \\
A_{2|2} & = - \frac{2}{3} \begin{pmatrix}
I_4 & \\
 & I_4 \otimes \operatorname{diag}(1, -2)
\end{pmatrix}. 
\end{align*}

For $k=4$, we added the following equality constraints to the above problem \eqref{eq:2pluskoptimaxqc}: $A_{1|1}^2=I, [A_{1|2}, A_{2|2}]= [A_{2|2}, A_{3|2}]= [A_{1|2}, A_{3|2}] = 0, (A_{i|2}+1/2)(A_{i|2}-3/2)=0$, for all $i=1,2,3$. 
At levels 3 and 4 of the NPA hierarchy, we found a numerical bound of $0.666666 \lesssim 2/3$ but unfortunately we could not extract a tuple of optimizers. 

For $k=5$, we similarly found a numerical bound of $0.654508 \lesssim (3+\sqrt{5})/8$ at level 3 of the NPA hierarchy without being able to perform extraction. 

These numerical results support Conjecture \ref{conj:conj-for-2-plus-k}.

\begin{table}[]
    \centering
\begin{tabular}{|c|c|c|c|}
\hline
    $k$& lower bound & upper bound & comment  \\ \hline
    2 & $1/\sqrt{2}$ & $1/\sqrt{2}$ & follows, e.g., from \cite{bluhm2018joint} \\ \hline
    3 & $0.683012$ & $(1+\sqrt{3})/4$ & NPA level 4 \\ \hline
    4 & $0.666666$ & $2/3$ & NPA level 4\\ \hline
    5 & $ 0.654508$ & $(3+\sqrt{5})/8$ & NPA level 3\\ \hline
\end{tabular}
    \caption{Upper and lower bounds for $s^\C_{\mathrm{di}}(2,(2,k))$. The upper bounds are the ones from Theorem \ref{thm:results-for-2-plus-k}. The lower bounds come from applying the NPA hierarchy to \eqref{eq:2pluskoptimaxqc}, thus they hold for any $d$. 
    }
    \label{tab:2pluskknown}
\end{table}

\subsection{Numerics for four dichotomic qubit measurements} \label{sec:four-qubits-numerics}

For four dichotomic qubit measurements, we consider the optimization problem \eqref{eq:g-qubits-Bloch-simpler}. 
To obtain lower bounds on $s_\C(4,2)$, we can run the Lasserre hierarchy recalled in Section \ref{sec:lasserre}. 
The results below can be reproduced with online scripts\footnote{\url{https://wangjie212.github.io/TSSOS/dev/inclusion/}}. 

The upper bound on $1/s_\C(2,4)$ of $1.8029 \gtrsim \sqrt{13}/2 \simeq 1.8028$ is obtained with the \texttt{TSSOS}\footnote{\url{https://github.com/wangjie212/TSSOS}}\cite{wang2021tssos}   solver at the 3rd relaxation order after 2 rounds of exploiting term sparsity and $10^5$ iterations with the first-order SDP solver COSMO \cite{garstka2021cosmo}. 
Even if it is currently unknown how to extract minimizers while using term sparsity, this upper bound provides us with a strong numerical evidence that supports Conjecture \ref{conj:fourlines}. 

\subsection{Discussion and open questions} 
In this work, we have studied extreme points of Cartesian products of free simplices to compute inclusion constants for free spectrahedra relevant to computing the noise robustness of measurement incompatibility in quantum information theory. 

For one dichotomic and one measurement with $k+1$ outcomes, we have proved that
\begin{equation*}
        s_{\R}(2,2,(2,k+1)) = \frac{k+\sqrt{1+k}}{2k}=\frac{1}{2}\left(1+\frac{1}{1+\sqrt{k+1}}\right) \, .
    \end{equation*}
An important open question is to prove Conjecture \ref{conj:conj-for-2-plus-k}, stating that 
\begin{equation*}
    s_{\R}(2,2,(2,k+1))= s_{\C}(2,2,(2,k+1))
\end{equation*}
for all $k \in \mathbb N$. The conjecture is supported by numerical evidence we have gathered using techniques from non-commutative polynomial optimization.

 To make progress towards this conjecture, one way would be to modify the SDP hierarchies that we have been using such that we can find a flat extension at reasonably low level. This would give us a certificate that we have indeed found the optimal value for the problem instead of merely computing a lower bound. 

 To prove the conjecture analytically, we would need a better understanding of the extreme points of Cartesian products of complex free simplices. Theorem \ref{thm:SimpleXIntervalComplexMatEx} indicates however that this is a difficult problem. Moreover, it would be very interesting to understand the extreme points of direct sums of free simplices, which seems to require new techniques.

 For the case of four dichotomic qubit measurements we have proved that
 \begin{equation*}
     s_\C(2,4) \leq \frac{2}{\sqrt{13}}
 \end{equation*}
 and conjectured that equality holds (Conjecture \ref{conj:fourlines}). 
 To prove the conjecture, we would likely need both a deeper understanding of the optimizers in the associated optimization problem, enabling us to construct a flat extension at a reasonably low level, and a method to generate sums of squares certificates for the optimal bounds.
One promising avenue for further investigation is to leverage the recently developed high-precision SDP solver \cite{leijenhorst2024solving}, which includes a rounding feature to produce sums of squares decompositions with coefficients in a field extension of the rationals.
\medskip\\
\textbf{Acknowledgments}. 
\noindent
 A.B. was supported by the French National Research Agency in the framework of the “France 2030” program (ANR-11-LABX-0025-01) for the LabEx PERSYVAL and by the ANR project PraQPV, grant number ANR-24-CE47-3023. I.N.~was supported by the ANR projects \href{https://esquisses.math.cnrs.fr/}{ESQuisses} grant number ANR-20-CE47-0014-01 and \href{https://www.ceremade.dauphine.fr/dokuwiki/anr-tagada:start}{TAGADA} grant number ANR-25-CE40-5672. This work was supported by the European Union’s HORIZON–MSCA-2023-DN-JD programme under under the Horizon Europe (HORIZON) Marie Skłodowska-Curie Actions, grant agreement 101120296 (TENORS), the project COMPUTE, funded within the QuantERA II Programme that has received funding from the EU's H2020 research and innovation programme under the GA No 101017733 {\normalsize\euflag}.
This work was partially performed using HPC resources from CALMIP (Grant 2023-P23035). 

\bibliographystyle{alpha}
\bibliography{lit}

\appendix

\section{Proof of Theorem \ref{thm:SDPSolutions}} \label{sec:Proof-of-SDPSolutions}
For convenience, we restate Theorem \ref{thm:SDPSolutions} before proving it.
\begin{thm} 
    Set $\gamma:= \frac{4}{1+\sqrt{3}}$. Then $\MinBallSDP{\gamma}{X(\theta)}$ is nonempty for all $\theta \in [0,\pi/2]$. 
\end{thm}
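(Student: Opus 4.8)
The plan is to exhibit an explicit feasible point of the SDP $\MinBallSDP{\gamma}{X(\theta)}$ for every $\theta \in [0,\pi/2]$, mirroring the construction already carried out for $k \geq 3$ in the proof of Theorem~\ref{thm:kSDPFeasible}, but accounting for the fact that the $k=2$ matrices in that proof fail to be positive semidefinite (the determinant of $C_1,D_1$ carries a factor $k-3$). So I would first recall the constraints: with $X(\theta) = (X_1,X_2,X_3(\theta))$ as in Proposition~\ref{prop:TwoSimplexXIntervalMatEx} (so $X_1 = \mathrm{diag}(1,-2)$, $X_2 = \mathrm{diag}(-2,1)$, $X_3(\theta) = \left(\begin{smallmatrix}\cos\theta & \sin\theta \\ \sin\theta & -\cos\theta\end{smallmatrix}\right)$) and $\gamma = \tfrac{4}{1+\sqrt 3}$, I need $2\times 2$ self-adjoint $C_1,\dots,C_6$ with $\oplus C_i \succeq 0$ satisfying the four linear equations in \eqref{eq:XMinBallSDP}. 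I would look for a solution respecting the same symmetry observed in the $k\geq 3$ case, namely $C_i$ and the corresponding $D$-block related by the flip $\left(\begin{smallmatrix}a & b \\ b & c\end{smallmatrix}\right) \mapsto \left(\begin{smallmatrix}c & -b \\ -b & a\end{smallmatrix}\right)$, and with the first two of the three ``$C$'' matrices built so that $-X_1 - X_2 = I$ forces the simplex constraint equations to reduce to consistency conditions. Concretely, one writes $C_1(\theta),\dots,C_6(\theta)$ as explicit trigonometric-polynomial matrices with coefficients in $\mathbb{Q}(\sqrt 3)$, chosen so that the off-diagonal pattern matches \eqref{eq:DefiningEqFormGuess}, i.e.\ the value $c = -1+\sqrt 3$ that made that matrix have eigenvalue $-1$.

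The verification then splits into two routine parts. First, the linear equalities: substituting the candidate matrices into the four equations of \eqref{eq:XMinBallSDP} and simplifying should reduce each to an identity; this is mechanical and I would just state that it is a direct computation (as is done for $k\geq 3$). Second, positivity: for each of the six $2\times 2$ self-adjoint matrices $C_i(\theta)$ I would compute $\operatorname{tr} C_i(\theta)$ and $\det C_i(\theta)$ and show both are $\geq 0$ for all $\theta \in [0,\pi/2]$. The point of restricting to $[0,\pi/2]$ is precisely that $\sin\theta \geq 0$ and $\cos\theta \geq 0$ there, so that traces stay nonnegative; the determinant conditions will amount to elementary trig inequalities such as $\cos^2\theta \leq 1$ or $(1\pm\cos\theta)(\text{positive}) \geq (\text{const})\sin^2\theta$, each dispatched by hand. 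I expect a couple of the determinant inequalities to be genuine (not identically zero as in the $k\geq 3$ case), which is where the restriction to $[0,\pi/2]$ is actually used — this is the only substantive place where $k=2$ differs from $k>2$.

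Once feasibility of $\MinBallSDP{\gamma}{X(\theta)}$ is established for all $\theta \in [0,\pi/2]$, the theorem follows, and combined with Proposition~\ref{prop:TwoSimplexXIntervalMatEx} (which reduces all real matrix extreme points of $\cD_{A(2)}(2)$ to the family $X(\theta)$, $\theta\in[0,\pi/2]$) and Proposition~\ref{prop:HKMFeasibilitySDP} it gives $\delmat \cD_{A(2)}(2) \subseteq \cWmin(\gamma\, \cD_{A(2)}(1))$, completing the $k=2$ case of Theorem~\ref{thm:kSimplexPlusLineOptimum}.

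\medskip

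The main obstacle is purely a matter of \emph{finding} the right closed-form matrices $C_i(\theta)$: there is a three-parameter family (trace and determinant constraints versus six $2\times 2$ blocks with four matrix equations, plus the symmetry ansatz) and one must guess coefficients in $\mathbb{Q}(\sqrt 3)$ that simultaneously solve the linear system exactly and keep all six blocks PSD on the full interval $[0,\pi/2]$. In practice I would obtain these by the same numerical-to-exact route described in the paper — generate the feasible region numerically for several values of $\theta$, observe the symmetry pattern and the kernel structure (each $C_i(\theta)$ should be rank one or rank two with a predictable null direction, since optimal/extreme configurations have large kernels), fit rational-plus-$\sqrt 3$ coefficients, and then verify symbolically. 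The actual write-up in the appendix would therefore consist of: (i) a display giving the six matrices $C_i(\theta)$ explicitly; (ii) the statement of their traces and determinants with the one-line observation that these are nonnegative on $[0,\pi/2]$; and (iii) the remark that the four linear equalities are verified by direct substitution. No conceptual difficulty remains beyond this bookkeeping.
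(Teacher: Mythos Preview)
Your overall strategy---construct an explicit feasible tuple $(C_1,\dots,C_6)$ for each $\theta$ and verify the linear equalities and PSD conditions---is indeed the paper's approach. But your expectations about the \emph{form} of the solution and the difficulty of the verification do not match what actually happens, and the specific ansatz you propose is one the paper implicitly rejects.

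First, the symmetric ansatz inherited from the $k\geq 3$ construction (the flip $(C_i,D_i)$ correspondence) is precisely what fails for $k=2$: the paper remarks after Theorem~\ref{thm:kSDPFeasible} that those matrices are not feasible at $k=2$ because of the $(k-3)$ factor in $\det C_1$, and the actual $k=2$ construction abandons that symmetry entirely. The paper instead fixes $C_1$ to be a specific rank-one matrix, uses the four linear equations to solve for $C_3,C_4,C_5,C_6$ in terms of $C_1,C_2$, and then chooses $C_2 = \alpha\left(\begin{smallmatrix}1 & \beta \\ \beta & \beta^2\end{smallmatrix}\right)$ with a parameter $\beta$ determined by forcing $\det C_3=0$. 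This $\beta$ is a \emph{root of a quadratic} in $\theta$-dependent coefficients, so the resulting matrices involve $\sqrt{\text{trigonometric expression}}$---they are \emph{not} trigonometric polynomials with coefficients in $\mathbb{Q}(\sqrt 3)$ as you anticipate.

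Second, the positivity check is far from ``elementary trig inequalities dispatched by hand.'' Because there are two roots $\beta_\pm$, the paper must show that for each $\theta$ at least one choice yields $C_6\succeq 0$; this is done by proving $\det(C_{6,+})\det(C_{6,-})\leq 0$, which reduces to showing a degree-five trigonometric polynomial $h(\theta)$ with large $\mathbb{Q}(\sqrt 3)$ coefficients is nonnegative on $[\pi/8,\pi/2]$. That in turn requires a further sub-split into $[\pi/8,\pi/4]$ and $[\pi/4,\pi/2]$ with crude numerical bounds on each sine and cosine term. Moreover, this construction only works on $[\pi/8,\pi/2]$; a separate, simpler choice of $C_2$ (constant diagonal) is used on $[0,\pi/8]$. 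So a single uniform construction covering all of $[0,\pi/2]$ is not what the paper achieves.

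In short: your methodology is right, but your sketch substantially understates the technical work, and the specific structural guesses (flip symmetry, trig-polynomial entries, one-shot construction, one-line determinant checks) are all contradicted by the paper's proof.
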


\begin{proof}
    The proof is accomplished by explicitly constructing an element of $\MinBallSDP{\gamma}{X(\theta)}$ for arbitrary $\theta \in [0,\pi/2]$. To begin, note that by solving for $C_3,C_4,C_5,C_6$ in equation \eqref{eq:XMinBallSDP}, we find
\begin{equation}
\label{eq:CiEquations}
\begin{array}{lll}
& C_3 = - C_1 -  C_2  + X_3(\theta)/2 +\gamma I/2, \\
& C_4 = -C_1 + X_1/3 +X_2/3 + \gamma I/3, \\
& C_5 = -C_2 - X_1/3 +\gamma I/3, \\
& C_6 = C_1 + C_2 - X_2/3-X_3(\theta)/2- \gamma I /6.
\end{array}
\end{equation}
Our goal now is to choose $C_1$ and $C_2$ so that the expressions for $C_3,C_4,C_5,C_6$ given by equation \eqref{eq:CiEquations} are all positive semidefinite. 

To simplify the computations that occur in the proof, we now separately consider constructions for $\theta \in [0,\pi/8]$ and $\theta \in [\pi/8,\pi/2]$. We begin with the interval $[\pi/8,\pi/2]$, as this is the more computationally intensive region to analyze. In this case, choose 
\[
C_1 = \left(\frac{1}{\sqrt{3}}-\frac{1}{2}\right) \begin{pmatrix}
    1 & 1 \\
    1 & 1
\end{pmatrix} \qquad \qquad \mathrm{and} \qquad \qquad C_2 = \alpha \begin{pmatrix} 1 & \beta \\
\beta & \beta^2
\end{pmatrix},
\]
where
\[
\alpha = \frac{2}{6+4\sqrt{3} +\sqrt{3} \beta^2},
\]
and where $\beta \in \mathbb{R}$ is a parameter depending on $\theta$ that will be chosen later. Then $C_1 \succeq 0$. Additionally, since $\det(C_2) = 0$ and $\mathrm{tr} (C_2) = \alpha(1+\beta^2)$, we have $C_2 \succeq 0$ since $\alpha > 0$. Furthermore, from equation \eqref{eq:CiEquations}, we obtain
\[
C_4 = \frac{1}{6} \begin{pmatrix}
    2\sqrt{3}- 3 & 3 -2 \sqrt{3}\\
    3 - 2 \sqrt{3} & 2\sqrt{3}-3
\end{pmatrix} \succeq 0. 
\]
Additionally we have
\[
C_5 = \frac{1}{3\alpha}\left(
\begin{array}{cc}
 \frac{3}{2} \left(2-\sqrt{3}\right) \beta^2 & -3 \beta \\
 -3 \beta & 6 \left(2+\sqrt{3}\right) \\
\end{array}
\right),
\]
from which we obtain $\det(C_5) = 0$ and 
\[
\mathrm{tr} (C_5) = \frac{\left(2-\sqrt{3}\right) \beta^2 + 6 \left(2+\sqrt{3}\right)}{3\alpha} > 0,
\]
since $\alpha > 0$. We conclude that $C_5 \succeq 0$. 

We now choose $\beta$ so that the determinant of
\[
C_3 = \frac12\left(
\begin{array}{cc}
 \cos (\theta)-2 \alpha+\frac{4}{\sqrt{3}}-1 & \sin (\theta)-2 \alpha \beta-\frac{2}{\sqrt{3}}+1 \\
 \sin (\theta)-2 \alpha \beta-\frac{2}{\sqrt{3}}+1 & -\cos (\theta)-2 \alpha \beta^2+\frac{4}{\sqrt{3}}-1 \\
\end{array}
\right)
\]
is equal to zero. The choices for $\beta$ that give $\det (C_3) = 0$ are 
\[
\beta_+ := \frac{\zeta_1 + \zeta_2}{\zeta_3} \qquad \mathrm{or} \qquad
\beta_- := \frac{\zeta_1 - \zeta_2}{\zeta_3}
\]
where
\begin{align*}
\zeta_1 & = 12 \left(2+\sqrt{3}\right) \sin (\theta)-4 \sqrt{3} \\
\zeta_2 & = \sqrt{6} \sqrt{8 \eta_1 \sin (\theta)+6 \eta_1 \sin (2
   \theta)+6 \eta_2 \cos (\theta)+6 \left(2+\sqrt{3}\right) \cos (2 \theta)+181
   \sqrt{3}+318} \\
   \zeta_3 & = -6 \sin (\theta)+12 \left(2+\sqrt{3}\right) \cos (\theta)+14 \sqrt{3}+21
\end{align*}
and where
\[
\eta_1 = 12+7 \sqrt{3} \qquad \mathrm{and} \qquad \eta_2 = 54+31 \sqrt{3}.
\]
With these choices of $\beta$, noting that 
\[
\mathrm{tr}(C_3) = -\alpha \left(\beta^2+1\right)+\frac{4}{\sqrt{3}}-1 = \frac{ \left(\sqrt{3}-1\right) \beta^2+4 \left(5+3 \sqrt{3}\right)}{\left(3+\sqrt{3}\right)
   \beta^2+2 \left(9+5 \sqrt{3}\right)} \geq 0,
\]
we have that $C_3 \succeq 0$, since $\det(C_3) = 0$ and $\mathrm{tr}(C_3) \geq 0$. 

At this point we have shown that $C_3$, $C_4$, and $C_5$ are all positive semidefinite when $\beta= \beta_+$ or $\beta=\beta_-$. Now, set
\[
C_{6,+} = \frac12 \left(
\begin{array}{cc}
 -\cos (\theta)+2 \alpha+1 & -\sin (\theta)+2 \alpha \beta_+ +\frac{2}{\sqrt{3}}-1 \\
 -\sin (\theta)+2 \alpha \beta_+ +\frac{2}{\sqrt{3}}-1 & \cos (\theta)+2 \alpha \beta_+^2-1 \\
\end{array}
\right)
\]
and
\[
C_{6,-} =\frac12 \left(
\begin{array}{cc}
 -\cos (\theta)+2 \alpha+1 & -\sin (\theta)+2 \alpha \beta_- +\frac{2}{\sqrt{3}}-1 \\
 -\sin (\theta)+2 \alpha \beta_- +\frac{2}{\sqrt{3}}-1 & \cos (\theta)+2 \alpha \beta_-^2-1 \\
\end{array}
\right)
\]
We will argue that for any choice of $\theta \in [\pi/8,\pi/2]$, at least one of $C_{6,+}$ and $C_{6,-}$ is positive semidefinite. Since
\[
\mathrm{tr}(C_{6,-}) =  \alpha(1+\beta_{-}^2) \geq 0 \qquad \mathrm{and} \qquad   \mathrm{tr}(C_{6,+}) =  \alpha(1+\beta_{+}^2) \geq 0,
\]
it is sufficient to show that either
\[
\det(C_{6,-}) \geq 0 \qquad \mathrm{or} \qquad \det(C_{6,+}) \geq 0.
\]
To prove this we instead prove that
\[
\frac{24\det(C_{6,-}) \det(C_{6,+})}{\alpha^2} \leq 0
\]
for all $\theta \in [\pi/8,\pi/2]$. Using Mathematica to simplify, we have
\begin{equation}
\label{eq:C6detProd}
\frac{24\det(C_{6,-}) \det(C_{6,+})}{\alpha^2} = \frac{-54 (1-\sin (\theta)) h(\theta)}{\left(-6 \sin (\theta)+12 \left(2+\sqrt{3}\right) \cos (\theta)+14 \sqrt{3}+21\right)^4}
\end{equation}
where
\[
\begin{array}{rcl}
h(\theta) & = & 3 \left(1659159+957244 \sqrt{3}\right) \sin (\theta)+24 \left(108048+62413 \sqrt{3}\right) \sin
   (2 \theta) \\
   & &-18 \left(84547+48802 \sqrt{3}\right) \sin (3 \theta)-36 \left(48096+27769
   \sqrt{3}\right) \sin (4 \theta) \\
   & &-81 \left(5307+3064 \sqrt{3}\right) \sin (5 \theta)+48
   \left(11401+6598 \sqrt{3}\right) \cos (\theta)\\
   & &+54 \left(5341+3100 \sqrt{3}\right) \cos (2
   \theta)-36 \left(7538+4359 \sqrt{3}\right) \cos (3\theta ) \\
   & &-108 \left(3469+2003 \sqrt{3}\right) \cos
   (4 \theta)-324 \left(362+209 \sqrt{3}\right) \cos (5 \theta)+62 \left(3963+2266 \sqrt{3}\right).
\end{array}
\]

Note that the denominator in equation \eqref{eq:C6detProd} is always positive and that the coefficient $-54(1-\sin(\theta))$ is always negative on $[\pi/8,\pi/2]$, so it is sufficient to show that $h(\theta) \geq 0$ on $[\pi/8,\pi/2]$. 
To accomplish this, we further break the interval into two parts, $[\pi/8,\pi/4]$ and $[\pi/4,\pi/2]$. For $\theta \in [\pi/4,\pi/2]$ the following inequalities hold:
\begin{align*}
\sin(\theta) \geq \frac{1}{\sqrt{2}}, \qquad & \sin(2\theta) \geq 0, \qquad  &-\sin(3\theta) \geq \frac{-1}{\sqrt{2}},\qquad  &-\sin(4\theta) \geq 0, \qquad &-\sin(5t) \geq -1 \\
\cos(\theta)  \geq 0, \qquad & \cos(2\theta) \geq -1, \qquad &-\cos(3\theta) \geq 0, \qquad &-\cos(4\theta) \geq -1, \qquad & -\cos(5\theta) \geq -1.
\end{align*}
It follows that for $t \in [\pi/4,\pi/2]$ we have
\[
\begin{array}{rcl}
h(\theta) & \geq & 3 \left(1659159+957244 \sqrt{3}\right) \frac{1}{\sqrt{2}}+24 \left(108048+62413 \sqrt{3}\right) 0\\
   & &-18 \left(84547+48802 \sqrt{3}\right) \frac{1}{\sqrt{2}}-36 \left(48096+27769
   \sqrt{3}\right) 0\\
   & &-81 \left(5307+3064 \sqrt{3}\right)+48
   \left(11401+6598 \sqrt{3}\right) 0\\
   & &+54 \left(5341+3100 \sqrt{3}\right) (-1)-36 \left(7538+4359 \sqrt{3}\right) 0 \\
   & &-108 \left(3469+2003 \sqrt{3}\right) -324 \left(362+209 \sqrt{3}\right) +62 \left(3963+2266 \sqrt{3}\right),
\end{array}
\]
which simplifies to
\[
h(\theta) \geq -964515+\frac{3455631}{\sqrt{2}}-559132 \sqrt{3}+996648 \sqrt{6} > 0.
\]

Similarly, for $\theta \in [\pi/8,\pi/4]$, we have
\begin{align*}
\sin(\theta) \geq \sin(\frac{\pi}{8}), \quad & \sin(2\theta) \geq \frac{1}{\sqrt{2}}, \quad  &-\sin(3\theta) \geq -1,\quad  &-\sin(4\theta) \geq -1, \quad &-\sin(5t) \geq \cos(\frac{\pi}{8}) \\
\cos(\theta)  \geq \frac{1}{\sqrt{2}}, \quad & \cos(2\theta) \geq 0, \quad &-\cos(3\theta) \geq -\sin(\frac{\pi}{8}), \quad &-\cos(4\theta) \geq 0, \quad & -\cos(5\theta) \geq \sin(\frac{\pi}{8}),
\end{align*}
which in turn gives
\begin{align*}
h(\theta) \geq & 4 \left(-751899+392550 \sqrt{2}-434407 \sqrt{3}+226827 \sqrt{6}\right)\\& +3
   \left(1607799+927508 \sqrt{3}\right) \sin \left(\frac{\pi }{8}\right)
   -81 \left(5307+3064
   \sqrt{3}\right) \cos \left(\frac{\pi }{8}\right)  > 0.
\end{align*}
We conclude that the tuple $(C_1,C_2,\dots,C_6)$ constructed above is an element of $\MinBallSDP{\gamma}{X(\theta)}$ for all $\theta \in [\pi/8,\pi/2]$, so $\MinBallSDP{\gamma}{X(\theta)}$ is nonempty for $\theta$ in this interval.

For $\theta \in [0,\pi/8]$, we can use a simpler choice of $C_1$ and $C_2$. Here it suffices to take 
\[
C_1 = \left(\frac{1}{\sqrt{3}}-\frac{1}{2}\right) \begin{pmatrix}
    1 & 1 \\
    1 & 1
\end{pmatrix} \qquad \qquad \mathrm{and} \qquad \qquad C_2 = \begin{pmatrix} \frac{1}{10} & 0 \\
0 & \frac{7}{50}
\end{pmatrix}.
\]
Since $C_1$ is the same as before, and since $C_4$ does not depend on $C_2$, we immediately have that $C_1,C_2$ and $C_4$ are all positive semidefinite. Additionally, we have
\[
C_5 = \left(
\begin{array}{cc}
 \frac{4}{1+\sqrt{3}}-\frac{13}{10} & 0 \\
 0 & \frac{79}{50}+\frac{4}{1+\sqrt{3}} \\
\end{array}
\right) \succeq 0,
\]
so we only need to check that $C_3$ and $C_6$ are positive semidefinite. 

To accomplish this we show that the trace and determinate of both these matrices are positive. Observe that
\[
\mathrm{tr} (C_3) = -\frac{62}{25}+\frac{8}{\sqrt{3}} > 0  \qquad \mathrm{and} \qquad \mathrm{tr}
(C_6) = 6/25 > 0.
\]
Next, we have 
\[
\det(C_3) = \frac{2}{375} \left(125 \left(2 \sqrt{3}-3\right) \sin (\theta)-15 \cos (\theta)-370
   \sqrt{3}+663\right).
\]
Since $-\cos(\theta)$ and $\sin(\theta)$ are both increasing on $[0,\pi/8]$, this function attains its minimum in $[0,\pi/8]$ at $\theta =0$ and has a minimum value of 
\[
\frac{2}{375} \left(648-370 \sqrt{3}\right) >0. 
\]
We conclude $C_3$ is positive semidefinite on $[0,\pi/8]$. Finally, we have
\[
\det(C_6) = \left(\frac{1}{\sqrt{3}}-\frac{1}{2}\right) \sin (\theta)+\frac{12 \cos
   (\theta)}{25}+\frac{1}{\sqrt{3}}-\frac{787}{750}.
\]
The second derivative with respect to $\theta$ of $\det(C_6)$ is
\[
-\left(\frac{1}{\sqrt{3}}-\frac{1}{2}\right) \sin (\theta)-\frac{12 \cos (\theta)}{25},
\]
which is strictly negative on $[0,\pi/8]$. It follows that $\det(C_6)$ attains its minimum either at $\theta = 0$ or $t = \pi/8$. Evaluating at these points, we find that the minimum value $\det(C_6)$ takes on $[0,\pi/8]$ is 
\[
\frac{1}{\sqrt{3}}-\frac{427}{750} > 0. 
\]
We conclude that $C_6$ is positive semidefinite for all $\theta \in [0,\pi/8]$, which completes the proof. 
\end{proof}

\begin{remark}
    In fact, numerically we observe that $C_{6,-}$ in the proof above is always PSD on the interval $[\pi/8,\pi/2]$. Additionally, while one can show that $C_{6,-}$ is not always PSD on the full interval $[0,\pi/2]$, we numerically observe that $\det(C_{6,-}) \det(C_{6,+}) \leq 0$ on $[0,\pi/2]$, which would imply that for each $\theta \in [0,\pi/2]$, one of $C_{6,-}$ and $C_{6,+}$ is PSD. Our restriction to particular intervals was primarily done to simplify computations.

    We further note if $h(\theta)$ were positive for all $\theta$, then one could in principal compute a Fej\'er-Riesz factorization of $h$ to show it is positive. However, one can compute that 
\[
h\left(\frac{3 \pi}{2}\right) = -32 (202713 + 117038 \sqrt{3}) < 0. 
\]

\end{remark}

\subsection{Alternative point that is conjectured to be feasible}

We conjecture that if one instead takes 
\[
C_1 = \left(\frac{1}{\sqrt{3}}-\frac{1}{2}\right) \begin{pmatrix}
    1 & 1 \\
    1 & 1
\end{pmatrix} 
\qquad \mathrm{and} \qquad  C_2=\frac{1}{12} \left(
\begin{array}{cc}
3 \cos (\theta)+8 \sqrt{3}-9-\beta & 
   3 \sin (\theta)-2 \sqrt{3}+3 \\
 3 \sin (\theta)-2 \sqrt{3}+3 &  -3 \cos (\theta)+8 \sqrt{3}-3 - \beta \\
\end{array}
\right)
\]
where
\[
\beta = \sqrt{3} \sqrt{\left(6-4 \sqrt{3}\right) \sin (\theta)+6 \cos (\theta)-4 \sqrt{3}+13},
\]
then the resulting point is feasible for $\theta \in [0,\pi/2]$, where $C_3,C_4,C_5$, and $C_6$ are defined as in equation \eqref{eq:CiEquations}. As before one can attempt to argue that the traces and determinants of the $C_i$ are positive for $\theta \in [0,\pi/2]$ for each $i=1,\dots,6$. The challenge we encounter with this approach is proving that 
\[
\det(C_2) = \frac{1}{144} \left(\beta\left(\beta-16 \sqrt{3}+12\right)+6 \left(2 \sqrt{3}-3\right) \sin (\theta)+18 \cos (\theta)-84 \sqrt{3}+189\right)
\]
is positive for $\theta \in [0,\pi/2].$

\end{document}